%%%%%%%%%%%%%%%%%%%%%%%%%%%%%%%%%%%%%%%%%%%%%%%%%%%%%%%%%%%% 
% This is the official template for theses and seminar papers from the Chair for Information Systems for Sustainable Society (IS3) at the UoC

%
%PREAMBLE
%%%%%%%%%%%%%%%%%%%%%%%%%%%%%%%%%%%%%%%%%%%%%%%%%%%%%%%%%%%%%

%\documentclass[a4paper, twoside, 12pt]{article}
\documentclass{report}
\ifdefined\directlua
  \usepackage{fontspec}
\else
  \usepackage[utf8x]{inputenc} 
  \usepackage[nomath]{lmodern}
\fi

%\DeclareUnicodeCharacter{"2061}{}	

%%%%%%%%%%%%%%%%%%%%%%%%%%%%%%
% Bibliography (added myself)
\usepackage[english]{babel} 
%\usepackage[utf8]{inputenc}

%\usepackage{biblatex}
%\addbibresource{Bibliography.bib}
\usepackage{longtable}
%\usepackage[nottoc]{tocbibind}

%\usepackage[nottoc]{tocbibind}
%
%%%%%%%%%%%%%%%%%%%%%%%%%%%%%
\usepackage[T1]{fontenc}
\usepackage{graphicx}
\usepackage[pagebackref=false,colorlinks,linkcolor=blue,citecolor=red]{hyperref}
\usepackage{caption}

\usepackage{amssymb}

%%%%%%%%%%%%%%%%%%%%%%%%%%%%%%%%%MINE%%%%%%%%%%%%%%%%%%%%%%%%%%%%%%%%%
%\newtheorem{theorem}{Theorem}

%\newtheorem{theorem}{Theorem}[section]
%\newtheorem{corollary}{Corollary}[theorem]
%\newtheorem{lemma}[theorem]{Lemma}
%\newtheorem*{remark}{Remark}

%\theoremstyle{definition}
%\newtheorem{definition}{Definition}[section]

%\theoremstyle{remark}
%\newtheorem*{remark}{Remark}

%\newtheorem{theorem}{Theorem}[section]
%\newtheorem{corollary}{Corollary}[theorem]
%\newtheorem{lemma}[theorem]{Lemma}

\usepackage{amsthm}
\usepackage{amsmath}
\usepackage{mathtools}

\usepackage{csquotes}
\usepackage{bm}

%ALGORITHM%
\usepackage{xcolor}
\usepackage{algorithm,algorithmic}
\usepackage{caption}
%\usepackage{algorithmic}
%\usepackage{hyperref}
%%math start%%

%\mathcode`*=\string"8000
%\begingroup
%\catcode`*=\active
%\xdef*{\noexpand\textup{\string*}}
%\endgroup
%%

%%ALGORITHM INDENTATION%%

%\newlength\myindent
%\setlength\myindent{2em}
%\newcommand\bindent{%
%  \begingroup
%  \setlength{\itemindent}{\myindent}
%  %\setlength{\itemindent}{}
%  \addtolength{\algorithmicindent}{\myindent}
%  %\addtolength{\algorithmicindent}{}
%}
%\newcommand\eindent{\endgroup}  
  
%\newcommand{\INDSTATE}[1][1]{\STATE\hspace{#1\algorithmicindent}}

%\newcommand\mycommfont[1]{\footnotesize\ttfamily\textcolor{blue}{#1}}
%\SetCommentSty{mycommfont}

\newcommand{\INDSTATE}[1][1]{\STATE\hspace{#1\algorithmicindent}}

%%ALGORITHM COMMENT%%
%\algnewcommand{\algorithmicforeach}{\textbf{for each}}
%\algdef{SE}[FOR]{ForEach}{EndForEach}[1]
%  {\algorithmicforeach\ #1\ \algorithmicdo}% \ForEach{#1}
%  {\algorithmicend\ \algorithmicforeach}% \EndForEach

%\newcommand\mycommfont[1]{\footnotesize\ttfamily\textcolor{blue}{#1}}

%\SetCommentSty{mycommfont}

%%comment%%
\usepackage{eqparbox,array}

%%centering%%
%\usepackage{sectsty}
%\allsectionsfont{\centering}
 \usepackage{parskip}
 \usepackage[a4paper, total={6in, 8in}]{geometry}

 %%%gif%%%
 %\usepackage{animate}   
 
 \def\code#1{\texttt{#1}}
 
 \usepackage{enumitem}
\newlist{steps}{enumerate}{1}
\setlist[steps, 1]{label = Step \arabic*:}

\usepackage{float}

\usepackage{lipsum}
\usepackage{mwe}
\usepackage{graphicx}

\usepackage{multirow}
\usepackage{makecell}

%%%%%%%%%%%%%%%%%%%ADDITIONAL FOR FARSI%%%%%%%%%%%%%%%%

\usepackage{titlesec}
\newcommand{\nocontentsline}[3]{}
\newcommand{\tocless}[2]{\bgroup\let\addcontentsline=\nocontentsline#1{#2}\egroup}

\usepackage{pdfpages}

%%%%%%%%%%%%%%%%%%%%%%%%%%%%%%%MINE-F%%%%%%%%%%%%%%%%%%%%%%%%%%%%%%%%%%%

% set margins for double-sided printing
%\usepackage[left=2.5cm, right=2.5cm, top=2.5cm, bottom=2.5cm, %bindingoffset=1.5cm, head=15pt]{geometry} 
%\usepackage{setspace}
%\onehalfspacing
% set headers

\usepackage{fancyhdr}
\pagestyle{fancy}
%\fancyhead{}
%\fancyfoot{}
\fancyhf{}
\fancyhead[RE,LO]{\textsl{\leftmark}}
%\fancyhead[RE,LO]{\thesisauthor}
\fancyfoot[C]{\thepage}
%\renewcommand{\headrulewidth}{0.4pt}
%\renewcommand{\footrulewidth}{0pt}

% set APA citation style
%\usepackage{apacite}
%\usepackage[numbib,notlof,notlot,nottoc]{tocbibind}
\pagenumbering{gobble}

%%%%%%%%%%%%%%%%%%%%%%%%%%%%%%%%%%%%%%%%%%%%%%%%%%%%%%%%%%%%%
%THESIS Parameters 
%%%%%%%%%%%%%%%%%%%%%%%%%%%%%%%%%%%%%%%%%%%%%%%%%%%%%%%%%%%%%
\title{Asset Price Forecasting using Recurrent Neural Networks}

\newcommand{\thesisauthor}{Hamed Vaheb} %input name
%\newcommand{\studentID}{999999999} %input student ID
 % Set either to Bachelor or Master

%\usepackage[pass,letterpaper]{geometry}
%\newcommand{\cosupervisor}{Max Mustermann}%

%%%%%%%%%%%%%%%%%%%%%%%%%%%%%%%%%%%%%%%%%%%%%%%%%%%%%%%%%%%%%
%DOCUMENT
%%%%%%%%%%%%%%%%%%%%%%%%%%%%%%%%%%%%%%%%%%%%%%%%%%%%%%%%%%%%%

\begin{document}

%\centering
%%%%%%%%%%%%%%%%%%%%%%%%%%%%%%%%%MINE%%%%%%%%%%%%%%%%%%%%%%%%%%%%%%%%%%%%%
\newcommand\norm[1]{\left\lVert#1\right\rVert}
\newcommand\myeq{\stackrel{\mathclap{\normalfont\mbox{def}}}{=}}

%%star%%
\newcommand{\oldoptimal}[1]{{#1}^*}
\newcommand{\newoptimal}[1]{#1^*}
%%

%\newcounter{xmpl}
%\newenvironment{example}
%   {\noindent
%     \refstepcouter{xmpl}
%     \textbf{Example \thexmpl }
%    }{\par\noindent%
%      \ignorespacesafterend}
%%%%%%%%%%%%%%

%%%%%%%%%%%%%%%%%%%%%%%%%%%%%%%%%%%%%%%%%%%%%%%%%%%%%%%%%%%%%
%TITLE PAGE (Pre-defined, just change parameters above)
%%%%%%%%%%%%%%%%%%%%%%%%%%%%%%%%%%%%%%%%%%%%%%%%%%%%%%%%%%%%%
\input{Template/Title.tex}

%%%%%%%%%%%%%%%%%%%%%%%%%%%%%%%%%%%%%%%%%%%%%%%%%%%%%%%%%%%%%
%SOOA
%%%%%%%%%%%%%%%%%%%%%%%%%%%%%%%%%%%%%%%%%%%%%%%%%%%%%%%%%%%%%
%\input{Template/SOOA.tex}

%%%%%%%%%%%%%%%%%%%%%%%%%%%%%%%%%%%%%%%%%%%%%%%%%%%%%%%%%%%%%
%TOC,TOF,TOT
%%%%%%%%%%%%%%%%%%%%%%%%%%%%%%%%%%%%%%%%%%%%%%%%%%%%%%%%%%%%%

\clearpage

%%%%%%%%%%%%%%%%%%%%%%%%%%%%%%%%%%%%%%%%%%%%%%%%%%%%%%%%%%%%%
%ABSTRACT
%%%%%%%%%%%%%%%%%%%%%%%%%%%%%%%%%%%%%%%%%%%%%%%%%%%%%%%%%%%%%
\pagenumbering{gobble}
\clearpage
\thispagestyle{plain}
\includepdf{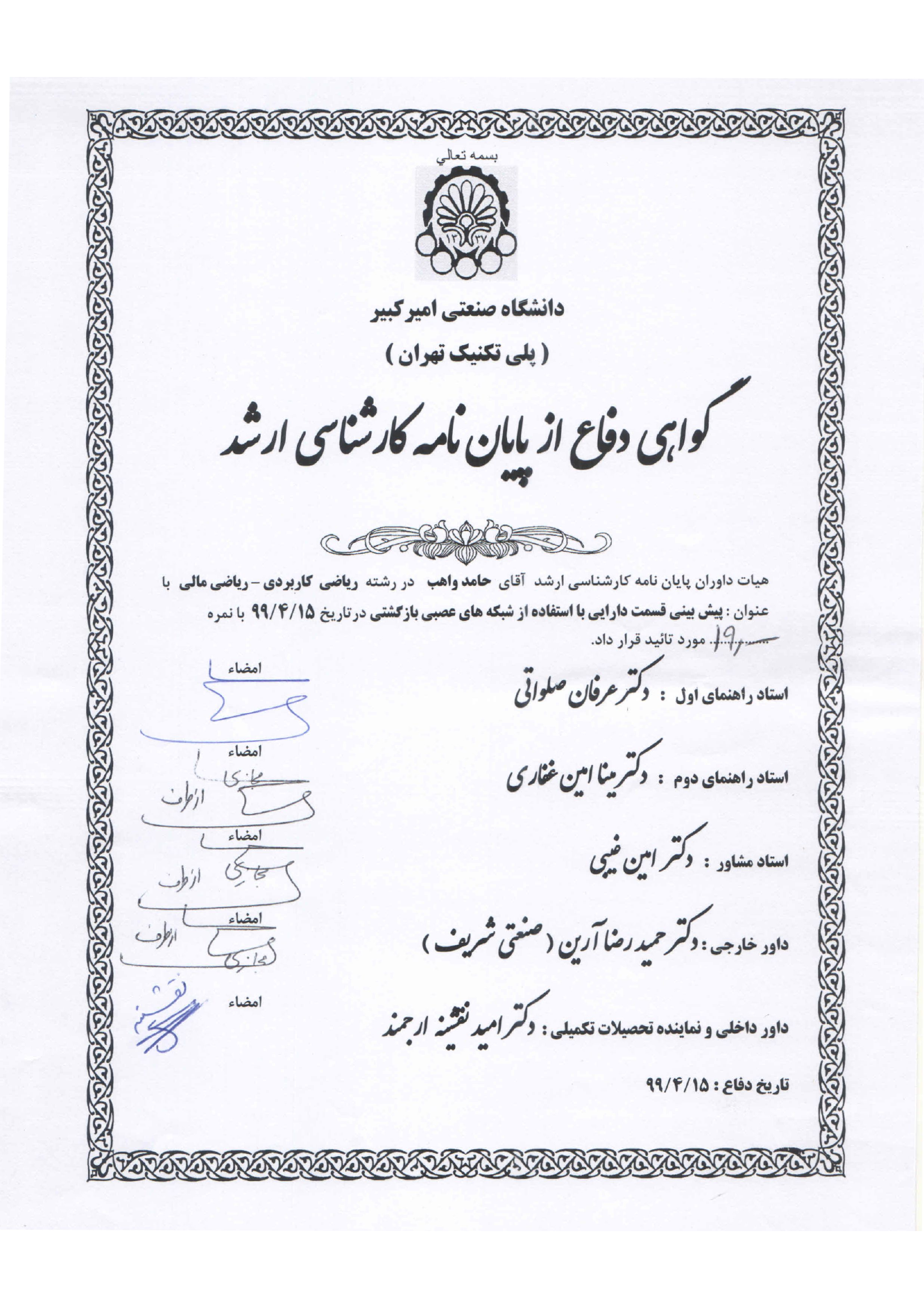}
\clearpage
\clearpage
\thispagestyle{empty}
\section*{Statement of Authorship}
\label{sec:SOOA}

% Statement of original authorship - Needs to be in German
% see also here: https://www.wiso.uni-koeln.de/sites/fakultaet/dokumente/PA/formulare/eidesstattliche_erklaerung.pdf

I hereby declare that I am the sole author of this master thesis and that I have not used any sources other than those listed in the
bibliography and identified as references. I further declare that I have not submitted
this thesis at any other institution in order to obtain a degree.

\vspace{1cm}

\vspace{3cm}
\begin{center}

\textbf{\thesisauthor{}} 
\end{center}
\vspace{0.5cm}

\clearpage
\thispagestyle{plain}
\pagenumbering{Roman}

\medskip
\medskip
\medskip
\begin{center}
\Large
\textbf{Abstract}
\end{center}
\vspace{1cm}
This thesis serves three primary purposes, first of which is to forecast two prices from S\& P500,  Goldman Sachs (GS) and General Electric (GE). \\
In recent years, financial market dynamics forecasting has been a focus of economic research. 
Prediction of stock market prices is considered a monumental task and has garnered considerable attention since if it is done accurately, it leads to lucrative decisions such as constructing a portfolio or investing in profitable stocks. Stock market prediction is quite challenging due to non-stationary, non-linear blaring, and chaotic data. Artificial Neural Networks (ANN) are models that are currently used in a wide variety of fields thanks to their superior performance; the field of finance and the task of time series forecasting and particularly, stock prices forecasting is no exception.
In this thesis, in order to forecast stock prices, we used a certain type of Recurrent Neural Network (RNN), called long short-term memory (LSTM) model in which we inputted the prices of two other stocks that lie in rather close correlation with GS. On the other hand, GE was inputted with its historical data designated as the sole feature. Other models such as ARIMA were used as benchmark. Empirical results manifest the practical challenges when using LSTM for forecasting stocks. One of the main upheavals was a recurring delay in forecast, which we explained under the term of forecasting lag. We concluded that even though LSTM is theoretically perhaps the best candidate among machine learning and neural network models for modeling and predicting stock prices, it seems quite cumbersome to approach prices with LSTM, in a practical sense. \\ \\
The second purpose of this thesis goes beyond attempting a single time series forecasting task. Rather, it is to develop a more general and objective perspective on the task of time series forecasting so that it could be applied to assist in an arbitrary that of forecasting by ANNs. Thus, attempts are made for distinguishing previous works by certain criteria (introduced by a review paper written by Ahmed Tealab) so as to summarise those including effective information. The summarised information is then unified and expressed through a common terminology that can be applied to different steps of a time series forecasting task. Hopefully, this information can serve as a methodical approach that requires less subjective experience and intervention and aid of the researcher.\\ \\ 
The last but not least purpose of this thesis is to elaborate on a mathematical framework on which ANNs are based. At this point we do not have a full rigorous understanding of why ANNs perform well, and how exactly to construct neural networks that works out for a specific problem. We are going to use the framework introduced in the book "Neural Networks in Mathematical Framework" by Anthony L. Caterini in which the structure of a generic neural network is introduced and the gradient descent algorithm (which incorporates backpropagation) is introduced in terms of their described framework. In the end, we use this framework for a specific architecture, which is recurrent neural networks on which we concentrated and our implementations are based. The book proves its theorems mostly for classification case. Instead, we proved theorems for regression case, which is the case of our problem.
\vfill

\textit{Keywords}: Neural Networks, Recurrent Neural Networks, LSTM, Long short-term memory, ARIMA, Forecasting, Prediction, Time Series, Stock Price, Sequential Data

\clearpage
\tableofcontents
\clearpage
\listoffigures
\clearpage
\listoftables
\clearpage
\newpage

\pagenumbering{arabic}
%%%%%%%%%%%%%%%%%%%%%%%%%%%%%%%%%%%%%%%%%%%%%%%%%%%%%%%%%%%%%
%MAIN PART
%%%%%%%%%%%%%%%%%%%%%%%%%%%%%%%%%%%%%%%%%%%%%%%%%%%%%%%%%%%%%

% CHAP1
\clearpage
\chapter{Literature Review}
\clearpage
In this chapter, after introducing the terminology and definitions used in the problem of time series forecasting, previous works are presented from two different perspectives, one of which accounts for the studies that have been carried out specifically for the task of stock market prediction, from which we will concentrate on those including recurrent neural networks (RNNs). Another perspective delves into works that have employed Artificial Neural Networks (ANNs) methods for the task of time series forecasting and distinguish papers based on certain criteria introduced by Ahmed Tealab. At the end of the section, based on the introduced criteria, essential information of the distinguished papers are extracted for facilitating further research.

\section{Terminology}

In what follows, the problem of time series forecasting is explained thoroughly in addition to its technical terms.\\ \\

\subsection{Concept of Time} \label{subsec:timeconcept}

There are a plethora of phenomena in the universe that behave in accordance with passage of time, such as lunar phases. But the concept of time can be a nebulous one. There is a matter of contention whether time is a subjective illusion that is only perceived through humans (and some other animals) or it is an objective and inherent quality within the universe. Perhaps it is merely our mind that sees the world in chronological order rather than the world itself incorporating the role of time as a factor of its existence. Many people, including philosophers and scientists, support this rather odd latter viewpoint.
Kant, for instance, maintained that time is a pure \textit{form of intuition}, a term he coined to describe sensory aspects (e.g., space is another aspect) of our experience of the world, i.e. the way in which we passively receive data through sensibility. This passiveness implies that we can not be certain how much the reality outside of our mind resembles our perception of it. Another scientist who had a somewhat similar perspective is Einstein. Many physicists developed his theory of “the block universe” with which he came up soon after his special theory of relativity. The block universe theory states that the present is not the sole form of time that exists from which past and present derive. In effect, not only has the past not been gone and forgotten, but the future has not yet to be decided. Instead, all times coexist. Therefore, it may not matter how adept or inept one would be in considering or predicting future outcomes since every single outcome that he picks already exists and will do so perpetually. There is a growing consensus among physicists that this theory is really the case and the way things really are. All in all, despite the fact the time might not be a factor contributing to the formation of reality or that one's intention of predicting future events would not necessarily defy our ineffectiveness and alter the possible fact that our future might be set in stone, the sense of prediction and hence control restore a man's majesty to his life. As a great philosopher once said, "To a man devoid of blinders, there is no finer sight than that of intelligence at grips with a reality that transcends it."

\subsection{Time Series}
Whether one intends to predict the trend in financial markets or electricity consumption, time is an important factor that should be considered in models since trends and salient patterns in data can be detected by their proportion of change over time. It would be desirable that not only would one be aware of the amount that a stock will increase, but also when it will move up. Time series is a suitable entity for modelling data with respect to its alteration over time whether these data represent a deterministic phenomenon (such as the dynamic of an object after a push within an environment with no external force) or a stochastic one which is called a stochastic process (such as stock prices). The attribution of a stochastic process to a time series resembles that of a random variable to a number.\\
\begin{quote}
"A \textit{time series} is a sequence of observations taken sequentially in time." \cite{tsar3}
\end{quote}
\begin{quote}
    "There is, however, a key feature that distinguishes financial time series analysis from other time series analysis. Both financial theory and its empirical time series contain an element of uncertainty. For example, there are various definitions of asset volatility, and for a stock return series, the volatility is not directly observable." \cite{fts}
\end{quote}

\subsection{Describing vs Predicting}
One would have different goals depending on whether he is interested in understanding and describing a dataset or making predictions about future observations.

Understanding a dataset, called \textit{time series analysis}, may assist in refining future predictions. However, it is not a necessity and also it may consume time and energy and require expertise that is not directly in line with the desired task, which is forecasting the future. \\

\begin{quote}
In \textit{descriptive modelling}, or \textit{time series analysis}, a time series is modeled to determine its components in terms of seasonal patterns, trends, relation to external factors, and the like. … In contrast, time series forecasting uses the information in a time series (perhaps with additional information) to forecast future values of that series \cite{ptsr}
\end{quote}

\subsection{Time Series Analysis}
In classical statistics, the primary concern is the analysis of time series. \textit{Time series analysis} involves developing models that best capture or describe an observed time series so as to appreciate the underlying causes. This field of study seeks the \textbf{“why”} behind a time series dataset, the motives that influence its dynamics. 

This often involves making assumptions about the form of the data and decomposing the time series into constitution components.

The quality of a descriptive model is determined by how well it describes all available data and the interpretation it provides to better inform the problem domain.

\begin{quote}
"The primary objective of time series analysis is to develop mathematical models that provide plausible descriptions from sample data." \cite{tsar}
\end{quote}

\subsection{Time Series Forecasting}
Attempting to predict the future is called \textit{extrapolation} in the classical statistics. More modern fields concentrate on this task and refer to it as \textit{time series forecasting}. Forecasting involves attributing models that fit on historical data and using them to predict future observations.

An important difference between analyzing and forecasting is that in the latter the future is unavailable and one has no choice other than estimating it from what has already occurred. To state this in the machine learning context, let us put it this way: In supervised learning, estimating future can only be done by dividing the dataset into training and testing. And the model tries to learn the training data while capturing test data. Evaluating the performance on the test data ensures our precision in forecasting the future. 

\begin{quote}
"The purpose of time series analysis is generally twofold: to understand or model the stochastic mechanisms that gives rise to an observed series and to predict or forecast the future values of a series based on the history of that series." \cite{tsar2}
\end{quote} 

The adeptness of a time series forecasting model is determined by its performance at predicting the future. This is often at the expense of being able to explain \textbf{why} a specific prediction was made, (or what are the influencing factors (report confidence intervals and appreciating the underlying causes behind the problem.

The primary goal of this thesis is \textit{forecasting} stock prices and this is done by adding extra information other than the target stock.

Although we distinguished forecasting from describing, based on \ref{subsec:timeconcept} and imposing a philosophical perspective regarding this matter, it is not far-fetched to deduce that describing and forecasting are actually the same task because past, present, and future are inextricably intertwined and they all coexist.

\clearpage
\section{Time Series: A model space odyssey}

\begin{quote}
    "Artificial intelligence in finance has been a very popular topic for both academia and financial industry in the last few decades. Numerous studies have been published resulting in various models. Meanwhile, within the Machine Learning (ML) field, Deep Learning (DL) started getting a lot of attention recently, mostly due to its outperformance over the classical models. Lots of different implementations of DL exist today, and the broad interest is continuing. Stock market forecasting, algorithmic trading, credit risk assessment, portfolio allocation, asset pricing and derivatives market are among the areas where ML researchers focused on developing models that can provide real-time working solutions for the financial industry." \cite{rev00}
\end{quote}

If you seek a comprehensive and state-of-the-art categorization of Deep Learning models developed for financial applications, see \cite{rev00}. \\

We are, on the other hand, concerned with the usage of ANN models (which almost underlies the category of deep learning) as well as the task of stock price forecasting. Consequently, in what follows, previous studies will be presented from two perspectives, one of which accounts for the studies that have been carried out specifically for the task of stock market prediction and another one delves into works that have employed ANN methods for the task of time series forecasting. For the former perspective, which is introduced in \ref{sec:rev1}, \cite{rev0} was used as the reference and for the latter one, which is introduced in \ref{sec:rev2}, \cite{srev0} was used as that.

\subsection{Review of Stock Market Prediction Techniques}
\label{sec:rev1}

\begin{quote}
    "The advancements in stock price prediction have gained significant importance among expert analysts and investors. The stock market prediction for analyzing the trends is complicated due to intrinsic noisy environments and large volatility with respect to the market trends. The complexities of the stock prices adapt certain factors that involve quarterly earnings' reports, market news, and varying changing behaviors. The traders depend on various technical indicators that are based on the stocks, which are collected on a daily basis. Even though these indicators are used to analyze the stock returns, it is complicated to forecast daily and weekly trends in the market." \cite{rev1}
\end{quote}

\begin{quote}
    "The accurate prediction of stock trends is interesting and a complex task in the changing industrial world. Several aspects, which affect the behavior of stock trends, are non-economic and economic factors and which are taken into consideration. Thus, predicting the stock market is considered a major challenge for increasing production." \cite{rev2}
\end{quote}

\begin{quote}
    "Traditional techniques reveal that the stock market earnings are predicted from previous stock returns and other financial variables and macroeconomics. The prediction of stock market revenues directed the investors towards examining the causes of predictability. The forecasting of stock trends is a difficult process as it is influenced by several aspects, which involve trader's expectations, financial circumstances, administrative events, and certain aspects related to the market trends. Moreover, the list of stock prices is usually dynamic, complicated, noisy, nonparametric, and nonlinear by nature. \cite{rev3}"

\end{quote}

\begin{quote}
    The forecasting of financial time series becomes an issue due to certain complex features, like volatility, irregularities, noise, and changing trends \cite{rev4}.
\end{quote}

\begin{quote}
    "Various models applied for predicting the stock prices are managed using the time series models that involve Auto-Regressive Conditional Heteroscedastic (ARCH) model, Generalized Auto-Regressive Conditional Heteroskedasticity (GARCH), and Auto-Regressive Moving Average (ARMA). However, these models entail historical data and hypotheses like normality postulates. Several methods used for stock market prediction are based on conventional time series, such as fuzzy time series data, real numbers, and design of fuzzy sets. The fuzzy time series data are implemented for stock market prediction for handling linguistic value data for producing precise predicting results. These methods are widely used for forecasting nonlinear and dynamic datasets in the changing domains, such as tourism demand and stock markets." \cite{rev5}
\end{quote}

Many intelligent techniques, namely soft computing algorithms, Neural Network (NN), backpropagation algorithm, and Genetic Algorithm (GA), are applied for predicting the stock market returns. In \cite{rev6}, a prediction model was designed for predicting the stock trends with time series models.

In \cite{rev7}, GA and
NN is integrated for designing hybrid expert systems to make
the investment decisions. A technique based on GA is designed in \cite{rev8}, for feature discretization and determining the weights of Artificial Neural Networks (ANNs) \cite{rev9} for predicting the index of the stock price. time series model and NN were combined to predict the variability of the stock price in \cite{rev10}. The Artificial Intelligence (AI) techniques, like ANN, were devised for predicting the stock market prices. Many networks used feedforward neural networks for predicting the stock trends and evaluated multiple parametric and non-parametric models to forecast the stock market returns \cite{rev11}. Soft computing methods are utilized to deal with the AI for making the decisions using the profit and loss criterion. The techniques employed are fuzzy logic \cite{rev12}, Particle Swarm Optimization (PSO) \cite{rev13}, ANN \cite{rev14, rev15}, and Support Vector
Machine (SVM) \cite{rev16, rev17}. Several researchers tried to employ fuzzy-based techniques and randomness for optimizing the pricing models \cite{rev18,rev19}. In \cite{rev20}, the fuzzy-based techniques are employed for analyzing the market trends, and in \cite{rev21}, the performance of the fuzzy forecast is derived for estimating the initial values of stock price \cite{rev22}.

\cite{rev0} pinpoints main techniques and their corresponding papers that are used for stock prediction. Figure \ref{fig:stktec} illustrates the categorization of distinct stock market prediction techniques.

\begin{figure}[H]
    \centering
    \includegraphics[height=15cm]{./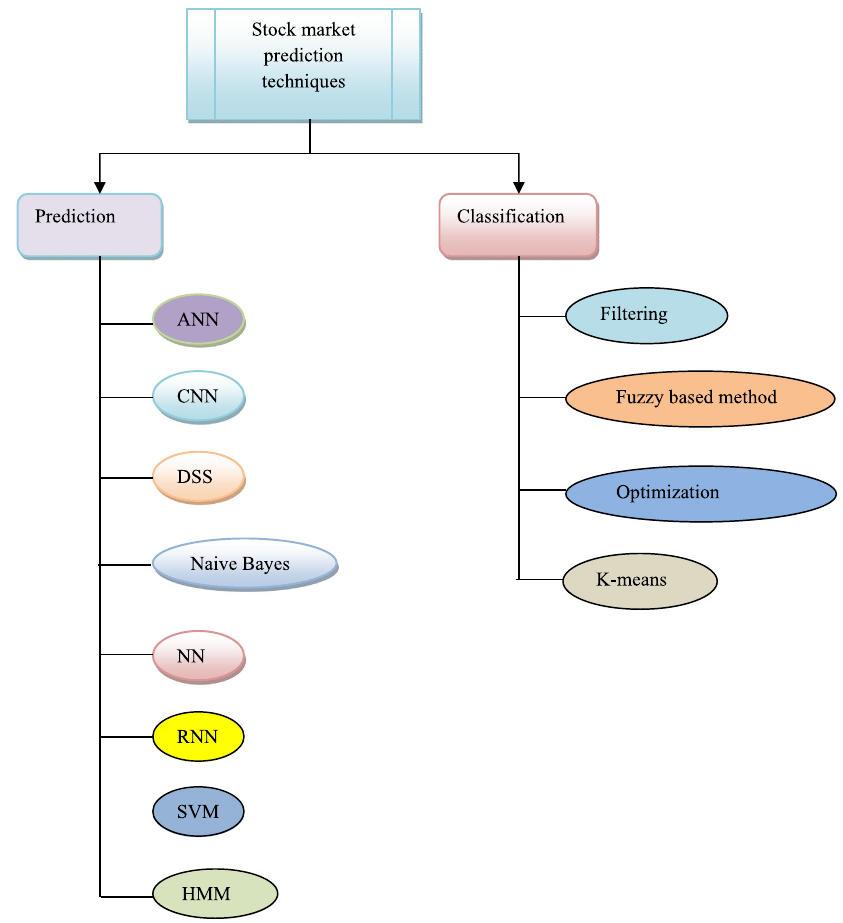}
    \caption{Categorization of distinct stock market prediction techniques by \cite{rev0}}
    \label{fig:stktec}
\end{figure}

Among techniques delineated in Figure \ref{fig:stktec}, works involving RNNs are the only ones chosen to be brought in the following since they are in line with this thesis's work. 

\subsubsection{RNN Based Prediction Techniques}
A recurrent neural network (RNN) is a class of artificial neural networks where connections between nodes form a directed graph along a temporal sequence. This allows it to exhibit temporal dynamic behavior. They are explained thoroughly in \ref{subsec:ann-rnn}.

The research works performing the stock market prediction by employing the RNNs are as follows: Hsieh, T.J et al. \cite{rev58} designed a united system, in which RNN based Artificial Bee Colony (ABC-RNN) algorithm are integrated for forecasting the stock prices. The model contains three phases, in which, initially, Haar wavelet is employed for decomposing the stock price time series data and to remove artifacts and noises. Secondly, the RNN was used for constructing the input features using Step-wise Regression-Correlation Selection (SRCS). Thirdly, ABC was adapted for optimizing the RNN weights and biases while designing the parameters. However, the method lacks advanced pattern selection mechanism for retrieving essential patterns from the data. Xie, X.K., and Wang, H \cite{rev59} designed RNN for analyzing the time series data. The dataset consists of the huge amount of intraday data from China Shanghai Shenzhen 300 Index. The RNN was used to classify the daily features using intraday data. The performance of the model was analyzed using precision and average profit. Chen, W et al. \cite{rev60} designed a model on the basis of RNN with Gated Recurrent Units (GRUs) for predicting the volatility of stocks in the Chinese stock market. The multiple price related features are subjected as an input to the model. However, the method failed to consider advanced machine learning method, like Interdependent Latent Dirichlet Allocation (ILDA), for making an accurate prediction. \cite{rev0}

\clearpage
\subsection{Review of Time Series Forecasting using Artificial Neural Networks Methods} \label{subsec:rvw-ts-ann}
\label{sec:rev2}
In what follows, previous works including Artificial Neural Networks for the special problem of time series forecasting will be investigated along with their deficiencies and requirements for further research. Thereafter, criteria containing questions will be introduced on which distinguishing papers that are applicable for further research are based. \\

\begin{quote}
    "Box and Jenkins, in the late 70s, made an important work in studying applications composed of mathematical linear models. These models represent Autoregressive (AR) and Moving Averages (MA) processes. While in the AR processes it is assumed that the current value of the time series is a linear combination of its past values, processes in the MA are supposed that the current value is a function of random interference passes or perturbations that have affected the series. Many practical experiments have demonstrated that this approach makes it possible to represent the dynamics of many real time series. That class models is popularized in both the academic and professional fields.\\
    However, it has also been found that many real time series seem to follow non-linear behavior and the approach of Box and Jenkins is insufficient to represent their dynamics \cite{srevv1}, \cite{srevv4}. Thus, in the most relevant literature have been presented a wide range of models that suggest different mathematical representations of the non-linearity present in the data \cite{srevv4}, \cite{srevv5}, such as the models based on schemes \cite{srevv4} and different types of Artificial Neural Networks (ANN) \cite{srevv6}, \cite{srevv7}, \cite{srevv8}. Some other literature reviews focused on one type of forecasting, one-step or multi-step ahead forecasting, and comparing the proposed strategies in theoretical and practical terms \cite{srevv33}." \cite{srev0}
\end{quote}

\begin{quote}
    "Particularly, the ANN has received considerable attention by the scientific community, which has been translated into a major investigative effort. Evidence of this is the large number of publications on this topic; it is as well, as a simple search in SCOPUS chains with “neural networks” and ”time series” found more than 4,000 reported documents; many of them might be irrelevant to the needs of the researcher. \\Despite the success of the ANN, and its persistence since the 90s, there exists several problems in proper model specification. This is evidenced by the fact that most of the decisions in the process specifications are subjective and are based on the experience of the modeler. Thus, seemingly, has not been fully reached systematic procedure for formal specification of ANN models" \cite{srevv10}
\end{quote}

To put it differently, most of the papers are solely confined to report result and performance without pursuing and being faithful to a general strategy. Scarcely have they pointed out the procedure followed for selecting methods or parameters provided that there was a non-empirical and methodical approach (either automated or not) in doing so. 

In addition to lack of a strategy for parameter selection or designing a network, another existing issue is an undeniable gap between theory and practice. Seldom, if ever, has a study provided a comprehensive theoretical and mathematical explanation for describing the neural network that has been used and this is somewhat natural and not expected since Neural Networks are still partly considered black boxes that are yet to be unraveled, formalized and modeled through a mathematical framework. In order to take first steps for removing this gap, attempts are made in section \ref{sec:nnmf} for explanation of the mathematical framework on which Neural Networks are based. 

%In order to reach our desired general strategy, we initially propose a %common set of steps that ought to be taken for an arbitrary time series %%forecasting problem and these steps are implicitly pursued in former %works. Later on, a criteria containing questions, which are called %"Quality Questions", are introduced for distinguishing papers.  \\

%\large \textbf{Time Series Forecasting Steps} \\
%???
\subsubsection{Quality Questions}
\cite{srev0} introduces criteria from which future works can inspire and apply to distinguish papers that satisfy the requirements and overcome deficiencies mentioned earlier to some extent (depending on the paper) and also, use the selected papers to reach the desired general strategy for designing a neural network model in the special case of the time series forecasting problem. The criteria comprise 7 quality questions which are brought subsequently:

\begin{enumerate}
\item Is there explicit mathematical formulation of the model?

\item Is the process of
estimating parameters for the new model defined?

\item Does the study
specify criteria for selecting the relevant variables?

\item Does the
study present a method for determining the appropriate complexity
(hyperparameters e.g., number of layers or neurons) of the model?

\item Is there an evaluation or
diagnosis for the model?

\item Does it examine the feasibility of the
proposed methodology by means of an application to a real case?

\item Does it specify the procedure for training the model (data
transformation, initial values of the parameters, stop criterion, etc.)?
\end{enumerate}

When using the search strings in the SCOPUS system, it
automatically recovered a total of 4021 publications. Then, \cite{srev0} manually applied the criteria of inclusion and exclusion, which led to the selection finally, a total of 17 articles .

the obtained results for 17 articles are presented
in table \ref{tbl1}, in which each row represents one of the selected studies. In Table \ref{tbl1}, columns 3-9 show the degree of compliance with the criteria of quality questions defined by the Q1-Q7 (Y stands for yes, P stands for partially and N stands for no); the column 10 collects the total score obtained in relation to the questions to Q1-Q7. \\

%\clearpage

%% 1ST TABLE

\begin{table}[H]
\centering
\begin{tabular}{ |c|c|c|c|c|c|c|c|c| }
 \hline
 ID & Q1 & Q2 & Q3 & Q4 & Q5 & Q6 & Q7 & Score\\
 \hline
 S1 & Y & Y & Y & Y & Y & Y & Y & 7.0\\
 S2 & P & Y & N & Y & N & Y & P & 4.0\\
 S3 & Y & N & N & N & Y & Y & Y & 5.0\\
 S4 & Y & N & N & N & Y & P & P & 3.0\\
 S5 & Y & Y & P & N & Y & Y & P & 5.0\\
 S6 & Y & Y & Y & Y & Y & Y & Y & 7.0\\
 S7 & Y & P & N & N & P & Y & N & 3.0\\
 S8 & Y & P & N & N & P & P & P & 3.0\\
 S9 & Y & P & Y & N & Y & Y & Y & 4.5\\
 S10 & P & N & P & N & Y & Y & P & 3.5\\
 S11 & Y & Y & P & N & Y & Y & Y & 4.5\\
 S12 & P & P & N & N & Y & P & P & 3.0\\
 S13 & Y & Y & N & N & Y & Y & Y & 5.0\\
 S14 & N & N & N & N & Y & Y & P & 2.5\\
 S15 & P & N & P & N & Y & Y & P & 3.5\\
 S16 & Y & Y & P & N & Y & Y & P & 5.0\\
 S17 & Y & P & P & Y & Y & Y & P & 5.5\\
 
\hline
\end{tabular}
\caption{Quality Assessment by \cite{srev0}}
\label{tbl1}
\end{table}

\cite{srev0} listed a table of 17 articles along with their conclusions which we present here in table \ref{tbl2}. \\

What we did was to extract information from the selected articles and summarise them. The extracted information are closely tied to steps of time series forecasting and future works can apply it to an arbitrary problem. It should be noted that the information we extracted is in line with the quality questions. In order to accomplish the extraction, at the end of the section essential information of each paper is stated in the following manner: \textbf{Data, Preprocessing, Parameters and Hyperparameters, Loss Measures} \\
The aforementioned terms are pivotal concepts in steps of a time series forecasting task.

\clearpage

%% 2ND TABLE

\begin{center}
\begin{longtable}{|p{0.7cm}|p{1cm}|p{2cm}|p{9.5cm}|}
\caption{Description of Selected Methods by \cite{srev0}}
\label{tbl2} \\

\hline \multicolumn{1}{|c|}{\textbf{ID}} & \multicolumn{1}{c|}{\textbf{Year}} & 
\multicolumn{1}{c|}{\textbf{Method}} & \multicolumn{1}{c|}{\textbf{Conclusion}} \\ \hline 
\endfirsthead

\multicolumn{4}{c}%
{{\tablename\ \thetable{} -- continued from previous page}} \\
\hline \multicolumn{1}{|c|}{\textbf{ID}} &
\multicolumn{1}{c|}{\textbf{Year}} 
& \multicolumn{1}{c|}{\textbf{Method}} & 
\multicolumn{1}{c|}{\textbf{Conclusion}} \\ \hline 
\endhead

\hline \multicolumn{4}{|c|}{{Continued on next page}} \\
\hline
\endfoot

\hline \hline
\endlastfoot
 
%\hline
S1 \cite{srev1} & 2006 & AR-NN & 
It presents and tests a non-linear disturbance model based on neural networks. This is successfully applied to the simulation of rainfall in a watershed.\\
\hline
S2 \cite{srev2} & 2007 & NLPMANN & 
It presents and test a non-linear disturbance model based on neural networks. This is successfully applied to the simulation of rainfall in a watershed.\\
\hline
S3 \cite{srev3} & 2007 & - & 
It proposed a hybrid model based on the hidden Markov model (HMM), ANN and Genetic Algorithms (GA), to predict the behavior of the financial market.\\
\hline
S4 \cite{srev4} & 2008 & AR*-GRNN &
It considered a combined model, consisting of an AR* model and a generalized regression neural networks model (GRNN, by its initials). The results indicate that the method is effective to combine the time-series models with models of neural networks, taking the advantages of decade model.\\
\hline
S5 \cite{srev5} & 2008 & - & 
It proposes a new hybrid method based on the basic concepts of ANN and the fuzzy regression models, which produces more precise results for incomplete datasets.\\
 \hline
S6 \cite{srev6} & 2009 & - & 
It is considered an Evolutionary Artificial Neural Network model (EANN) to automatically construct the architecture and the connections of the weights of the neural network.\\
\hline
S7 \cite{srev7} & 2010 & - & 
It presents a new hybrid ANN model, u sing an ARIMA model to obtain forecasts more accurate than the model of neural networks. In the first stage of the hybrid methodology, fits an ARIMA model and, in the second, is taken as network inputs the residuals of the ARIMA model and the original data.\\
\hline
S8 \cite{srev8} & 2010 & ADNN & 
It is considered a new ANN model with adaptation metrics for the entries, and with a mixing mechanism for the outputs of the network.\\
\hline
S9 \cite{srev9} & 2010 & AWNN Time series & 
Propose a hybrid model based on an adaptive Wavelet Neural Network (AWNN) and time series models, such as the ARMAX and GARCH, to predict the daily value of electricity in the market. They show better results in the forecasts than those reported in the literature.\\
\hline
S10 \cite{srev10} & 2011 & GEFTS-GRNN &
It features and ensembles a regression neural network model to predict widespread time series, which is a hybrid of different algorithms for machine learning. With this model combines the advantages joint that represent the algorithms, but has a high computational cost.\\
\hline
S11 \cite{srev11} & 2011 & ANN-SA & 
It is considered an ANN model whose training is done by means of the simulated annealing algorithm (SA). This model is proposed to study some characteristics related to seismic events, using experimental data.\\
\hline
S12 \cite{srev12} & 2012 & NWESN, BAESN, MESN & 
It Proposes three new neural network models based on Echo State Network (ESN) and with theory of complex networks. Although these models have a structure more complex than the original model ESN. They show that they yield more accurate forecasts.
\\
\hline
S13 \cite{srev13} & 2013 & L and NL-ANN & 
It presents a new model that considers at the same time the ANN of linear and nonlinear time series structures. The network is trained using the Particles Swarm Optimization algorithm (PSO). The authors show that the proposed model allows you to obtain better results compared to some traditional models.\\
\hline
S14 \cite{srev14} & 2013 & ARIMA-ANN & 
It is considered a hybrid model between ANN and the ARIMA model, in order to integrate the advantages of both models. The new model was tested in three databases, yielding good results.\\
\hline
S15 \cite{srev15} & 2014 & SVR-NN-GA & 
Proposes a hybrid approach wavelet denoising (WD) techniques, in conjunction with artificial intelligence optimization based SVR and NN model. The computational results reveal that cuckoo search (CS) outperforms both PSO and GA with respect to convergence and global searching capacity, and the proposed CS-based hybrid model is effective and feasible in generating more reliable and skillful forecasts.\\
\hline
S16 \cite{srev16} & 2015 & 3D hydrodynamic, ANN, ARMAX & 
A 3D hydrodynamic model, ANN model (BPNN), AR with exogenous inputs, ARMAX) model, and a combined hydrodynamic and ANN model integrated to more accurately predict water level fluctuation. The proposed concept of combining three-dimensional hydrodynamic model used in conjunction with an ANN is a novel one model which has shown improved prediction accuracy for the water level fluctuation.\\
\hline
S17 \cite{srev17} & 2016 & ERNN-STNN & 
Propose a hybrid model based on Elman recurrent neural networks (ERNN) with stochastic time effective function (STNN), the empirical results show that the proposed neural network displays the best performance between linear regression, complexity invariant distance (CID), and multi-scale CID (MCID) analysis methods and compared with different models such as the back-propagation neural network (BPNN), in financial time series forecasting.

\end{longtable}
\end{center}

\clearpage

%% 3RD TABLE
%\begin{itemize}
%    \item 
%\end{itemize}
 %\hline

 %\hline
\subsubsection{Information Extraction}
 \begin{enumerate}%%%%%%%%%%%%%%%%%%%%%%%DATA%%%%%%%%%%%%%%%%%%%%%%%
     \item \textbf{Data} \\
         %17%
         \textit{S17 \cite{srev1} (2016)}: Stock Market Indices
         
         %16%
         \textit{S16 \cite{srev16} (2015)}: Water Level Fluctuation
         
         %15%
         \textit{S15 \cite{srev15} (2014)}: Wind Speed
         
         %14%
         \textit{S14 \cite{srev14} (2013)}: Economic
         
         %13%
         \textit{S13 \cite{srev13} (2013)}: A mount of carbon dioxide measured monthly in Ankara capital of Turkey (ANSO). In addition, it uses well-known datasets such as Australian beer consumption or Logarithmic Canadian Lynx data
         
         %12%
         \textit{S12 \cite{srev12} (2012)}: Simulated
         
         %11%
         \textit{S11 \cite{srev11} (2011)}: Peak time-domain characteristics of strong ground-motions
         
         %10%
         \textit{S10 \cite{srev10} (2011)}: 30 real datasets
         
         %9%
         \textit{S9 \cite{srev9} (2010)}: Electricity Prices
         
         %8%
         \textit{S8 \cite{srev8} (2010)}: Chaotic time-series based on the solution of Duffing equation
         
         %7%
         \textit{S7 \cite{srev7} (2010)}: Three well-known data sets---the Wolf's sunspot data, the Canadian lynx data, and the British pound/United States dollar exchange rate data

     %%%%%%%%%%%%%%%%%%%%%%%%%%%PREPROCESSING%%%%%%%%%%%%%%%%%%%%%
     \item \textbf{Preprocessing} \\
         %1%
         \textit{S17 \cite{srev1} (2016)}: data should be properly adjusted and normalized at the beginning of the modelling in $[0,1]$ range via the following formula:
            
            $$S(t)' = \frac{S(t) - min S(t)}{max S(t) - min S(t)}$$

         %15%
         \textit{S15 \cite{srev15} (2014)}: empirical mode decomposition (EMD) method and wavelet transform (WT) or wavelet denoising (WD) method.
         Furthermore, two techniques were used:
            \begin{enumerate}
                \item \textit{5-3 Hanning filter (5-3H)}:
                first, a 5-point moving median average smoothing.
                second, a 3-point moving average smoothing
                and finally Hanning moving average smoothing is done by the following formula:
               $$u(i) = \frac{z(i-1) + 2z(i) + z(i+1)}{4}$$
                
                \item \textit{Wavelet Denoising}: 
                Wavelet transform method has been extensively applied recently in analyzing a nonstationary and high fluctuant series. It decomposes the original complicated data into several components of wavelet transform, one of which is smooth and reflects the inherent and real information.
            \end{enumerate}
            
         %14%
         \textit{S14 \cite{srev14} (2013)}: Data Normalization between $[0,1]$ via the following formula: 
            $$x\mbox{*}_t = (x_t - x_{min}) / (x_{max} - x_{min})$$
            
         %11%
         \textit{S11 \cite{srev11} (2011)}: Excluded some data from the analysis based on some of the filtering strategies proposed by Boore and Atkinson. Duplicate and missing information were removed. Also, the following normalization method was applied:
         $x_n = ax + b$ where $a = \frac{U - L}{X_{max} - X_{min}}, b = U - a X_{max}$. In the study, $L = 0.05$ and $U = 0.95$
            
         %10%
         \textit{S10 \cite{srev10} (2011)}: Detrending and deseasonalization - make time series stationary by log or square and also differencing. 
    
         %9%
         \textit{S9 \cite{srev9} (2010)}: Make time series stationary by differencing. Use logarithmic return instead of price
             
         %7%
         \textit{S7 \cite{srev7} (2010)}: Make time series stationary by differencing and power transformation

    %%%%%%%%%%%%%%%%%%PARAMETER/HYPERPARAMETER%%%%%%%%%%%%%%%
    \item \textbf{Parameter or Hyperparameter selection procedure} \\
         %17%
         \textit{S17 \cite{srev1} (2016)}:
         Experiments done repeatedly on the different index data, different number of neural nodes in the hidden layer were chosen as the optimal (manually).
         
         The details of principles of how to choose the hidden number are as follows: If the number of neural nodes in the input layer is $N$, the number of neural nodes in the hidden layer is set to be nearly $2N + 1$, and the number of neural nodes in the output layer is 1.
         
         %16
         \textit{S16 \cite{srev16} (2015)}:
         The best network architecture (number of hidden nodes, number of iterations, learning rate, and momentum coefficient) was obtained by trial and error based on RMSE in the training and validation phases.
         
         %15
         \textit{S15 \cite{srev15} (2014)}: ARIMA parameter tuning: Particle Swarm Optimization (PSO)
         SVR hyperparameter tuning: cuckoo search which is compared with grid search (GS) and two conventional optimal algorithms (GA and PSO).
        
         %14%
         \textit{S14 \cite{srev14} (2013)}: 
         \textit{ARMA}: ACF and PACF \\
         \textit{Neural Network}: in practice, experiments are often conducted to select the appropriate values p (dimension of input vector - the lagged observations) and q (number of hidden nodes). selected the ‘best' models according to delineated criteria, which include the root mean squared error (RMSE) and MAE, AIC and BIC.
    
         %13%
         \textit{S13 \cite{srev13} (2013)}: NN: Particle Swarm Optimization 
         
         %11%
         \textit{S11 \cite{srev11} (2011)}: Optimization Algorithm: Annual sealing \\ Recursive equation for updating the weights
    
         %10%
         \textit{S10 \cite{srev10} (2011)}: BIC, uses PSO and SAGA for optimization. SAG (stochastic average gradient) was used for selecting the best feature subspace of univariate time series

         %9%
         \textit{S9 \cite{srev9} (2010)}: ARMA: ACF and PACF and ordinary least squares (OLS) 
             
         %8%
         \textit{S8 \cite{srev8} (2010)}: Optimization Algorithm: Levenberg - Marquardt
         
         %7%
         \textit{S7 \cite{srev7} (2010)}: ARMA: The parameters are estimated such that an overall measure of errors is minimized.
        
         %6%
         \textit{S6 \cite{srev6} (2009)}:
         SCGA algorithm \\
        
     %%%%%%%%%%%%%%%%%%%%%%%%%%LOSS MEASURE%%%%%%%%%%%%%%
     \item \textbf{Loss Measure} \\
         %17
         \textit{S17 \cite{srev1} (2016)}: MAP, RMSE, MAPE
        
         %16
         \textit{S16 \cite{srev16} (2015)}: MAE, RMSE, R, and SS
         
         %15 
         \textit{S15 \cite{srev15} (2014)}: MSE, MAE, MAPE, SMPAE
         
         %14%
         \textit{S14 \cite{srev14} (2013)}: MD, SD, MAD, MSE, MAPE
    
         %13%
         \textit{S13 \cite{srev13} (2013)}: MSE, RMSE, MAPE, MdAPE, DA 
        
         %12%
         \textit{S12 \cite{srev12} (2012)}: NRMSE
         
         %11%
         \textit{S11 \cite{srev11} (2011)}: R (Pearson), MAE, MAPE, MSE
         
         %10%
         \textit{S10 \cite{srev10} (2011)}: MAPE
        
         %9%
         \textit{S9 \cite{srev9} (2010)}: AMAPE, used MSE to avoid overfitting
         
         %8%
         \textit{S8 \cite{srev8} (2010)}: MAPE, MSE, NMSE
         
         %7%
         \textit{S7 \cite{srev7} (2010)}:
         MAE, MSE
     
     %%%%%%%%%%%%%%%%%%%%%%%%%ADDITIONAL%%%%%%%%%%%%%%%%%%
     \item \textbf{Additional} \\
         %17
         \textit{S17 \cite{srev1} (2016)}: Horizon: 1 year
         
         %14
         \textit{S14 \cite{srev14} (2013)}: 
         Decomposes time series in the following manner; Assume two models to analyse such a time series, an additive model $(L + N)$ and a multiplicative model $(L * N)$. multiplicative model was superior to the ARIMA model, the ANN model and the additive model.
         
         %13%
         \textit{S13 \cite{srev13} (2013)}: Decompose time series into linear and nonlinear by addition. \\
         SARIMA and Winters' multiplicative exponential smoothing (WMES) approaches are the linear conventional methods used in the implementation.
         
         %12%
         \textit{S12 \cite{srev12} (2012)}: uses NARMA and NARX as linear models
        
         %9%
         \textit{S9 \cite{srev9} (2010)}: ARMAX model is used to catch the linear relationship between price return series and explanatory variable load series. GARCH model is used to unveil the heteroscedastic character of residuals. AWNN is used to present the nonlinear, nonstationary impact of load series on electricity prices
         
         %6%
         \textit{S6 \cite{srev6} (2009)}: Includes the genetic algorithm and the scaled conjugate gradient algorithm

\end{enumerate}

\vfill

\section{Conclusion and Thesis Outline}

In this chapter, at first we reviewed the works done for the task of stock price prediction and specifically, works that use RNN networks for this task. Then, we pinpointed the research gaps in the papers that use ANNs for the task of time series forecasting. We subsequently distinguished practical articles based on introduced criteria and extracted information form them so as to facilitate future works. The extracted information is pivotal in the steps that should be taken for a time series forecasting task. Hopefully, this will help to achieve a general strategy so that time series forecasting can be done more objectively instead of a problem-oriented approach which is solely based on the researcher's subjective experience. The rest of this thesis is organized as follows:

In chapter \ref{chap2}, we will provide a rather concise introduction to deep learning and neural networks, and then we will narrow our scope to recurrent neural networks (RNNs) and more specifically, Gated RNNs. Eventually, we will elaborate on a particular kind of Gated RNN, namely, LSTM, which is the primary focus of this thesis both in practice (chapter \ref{subsec:implmnt}) as well as theory (chapter \ref{sec:nnmf}).

In chapter \ref{subsec:implmnt}, we attempted forecasting one-step ahead of Goldman Sachs's (GS) prices as well as General Electric (GE) by implementing LSTM. Two other stocks that are correlated with GS, namely, JPMorgan and Morgan Stanley were added as feature to GS, Also, auxiliary features were added to improve the model's accuracy. In addition, the ARMA (auto-reggressive integrated moving average) model is also applied so as to serve as a benchmark. Our empirical results indicated the challenges when attempting to forecast by LSTM. One imperative upheaval is the forecasting lag problem.

In chapter \ref{sec:nnmf}, we introduce the formalization of neural networks through a mathematical framework, which is introduced by \cite{dnnmf}. At first, the framework is developed for a generic Neural Network and gradient descent algorithm is expressed within that framework. Then, this framework is extended to specific architectures of Neural Networks, from which we chose RNNs to explain and express gradient descent within their structure. \cite{dnnmf} proves theorems regarding RNNs for the case of classification case and cite some of the theorems to those used in feedforward networks. On the other hand, we proved the theorems for the regression case (which includes forecasting), and proved theorems of RNNs as well as expressing gradient descent algorithm independently from feedforward networks.
\clearpage

% CHAP2
\clearpage
\chapter{Introduction to Deep Learning and Neural Networks} \label{chap2}
\clearpage
In the previous chapter, we reviewed the works done on Stock Market Forecasting as well as those that employed artificial neural network (ANN) methods for time series Forecasting. In this chapter, we will provide a concise introduction to Deep Learning. At first, we explain the general notion of \textit{deep learning} by pinpointing two primary aspects of it, which are the representation of data and the depth of model. In the former, we stress the profound extent that performance of the models rely on the representation of data and in the latter, we explain the effect of depth in the models. \\
Afterward, we delineate the position of neural networks in deep learning.\\
In the last section, after providing a description of neural networks, we will narrow our scope to recurrent neural networks (RNNs) so as to explain a special kind of them, which is called Gated RNNs. Eventually, we will elaborate on a particular case of gated RNNs, namely, LSTM through a step-by-step walk-through.

\section{Representation of Data} \label{subsec:rep}

\begin{quote}
"The performance of simple machine learning algorithms depends heavily on the representation of the data they are given. For example, when logistic regression is used to recommend cesarean delivery, the AI system does not examine the patient directly. Instead, the doctor tells the system several pieces of relevant information, such as the presence or absence of a uterine scar. Each piece of information included in the representation of the patient is known as a \textit{feature}. Logistic regression learns how each of these features of the patient correlates with various outcomes. However, it cannot influence how features are defined in any way. If logistic regression were given an MRI scan of the patient, rather than the doctor's formalized report, it would not be able to make useful predictions. Individual pixels in an MRI scan have negligible correlation with any complications that might occur during delivery. This dependence on representations is a general phenomenon that appears throughout computer science and even daily life. In computer science, operations such as searching a collection of data can proceed exponentially faster if the collection is structured and indexed intelligently. People can easily perform arithmetic on Arabic numerals but find arithmetic on Roman numerals much more time-consuming. It is not surprising that the choice of representation has an enormous effect on the performance of machine learning algorithms. For a simple visual example, see figure \ref{fig:descaresvspolar}. Many artificial intelligence tasks can be solved by designing the right set of features to extract for that task, then providing these features to a simple machine learning algorithm. For example, a useful feature for speaker identification from sound is an estimate of the size of the speaker’s vocal tract. This feature gives a strong clue as to whether the speaker is a man, woman, or child." \cite{dl}
\end{quote}

Detecting the right set of features is not a straightforward task though. For the purpose of illustration, consider the task of car detection in photographs. if we are intent on considering the presence of the wheel of the car as feature, we will confront difficulty describing the wheel in terms of pixel values. Even though a wheel has a simple geometric shape, its corresponding image may be distorted due to falling shadows and sun glares or the wheel's visibility might be limited because of the car's fender or another object, and so forth.

One way to deal with this problem is the approach of \textit{representation learning} in which the ML algorithm finds not only the mapping from representation to output but also the representation itself (a typical example of this approach is \textit{autoencoders}).

In order to learn the features, one should distinguish factors of variations, i.e. sources that influence data that are either observed or unobserved. They can also be abstract concepts in the human mind that helps him to make sense of the data and infer the underlying causes of it. When analyzing a speech recording, the factors of variation include the speaker's age, their sex, their accent and the words they are speaking. When analyzing an image of a car, the factors of variation include the position of the car, its color, and the angle and brightness of the sun.

A predicament that arises in real-world applications is that factors of variation may influence every single piece of data and so disentangling them and discarding those we do not care about will be quite challenging. For instance, the individual pixels in an image of a red car might be very close to black at night. The shape of the car's silhouette depends on the viewing angle. Another example of this challenge is in recognizing facial expression. Two images of different individuals with the same visage are separated effectively in pixel space. On the other hand, two images of the same individuals manifesting different visages may lie in a very close position in pixel space.
In the mentioned scenario there are two contributing factors, one of which is the identity of the individual and another one is the facial expression (visage). The former, however, is irrelevant to the desired task and may as well be discarded. However, it influences the representation of the image and hinders the process of disentanglement.

\begin{figure}
    \centering
    \includegraphics[width=.5\linewidth]{./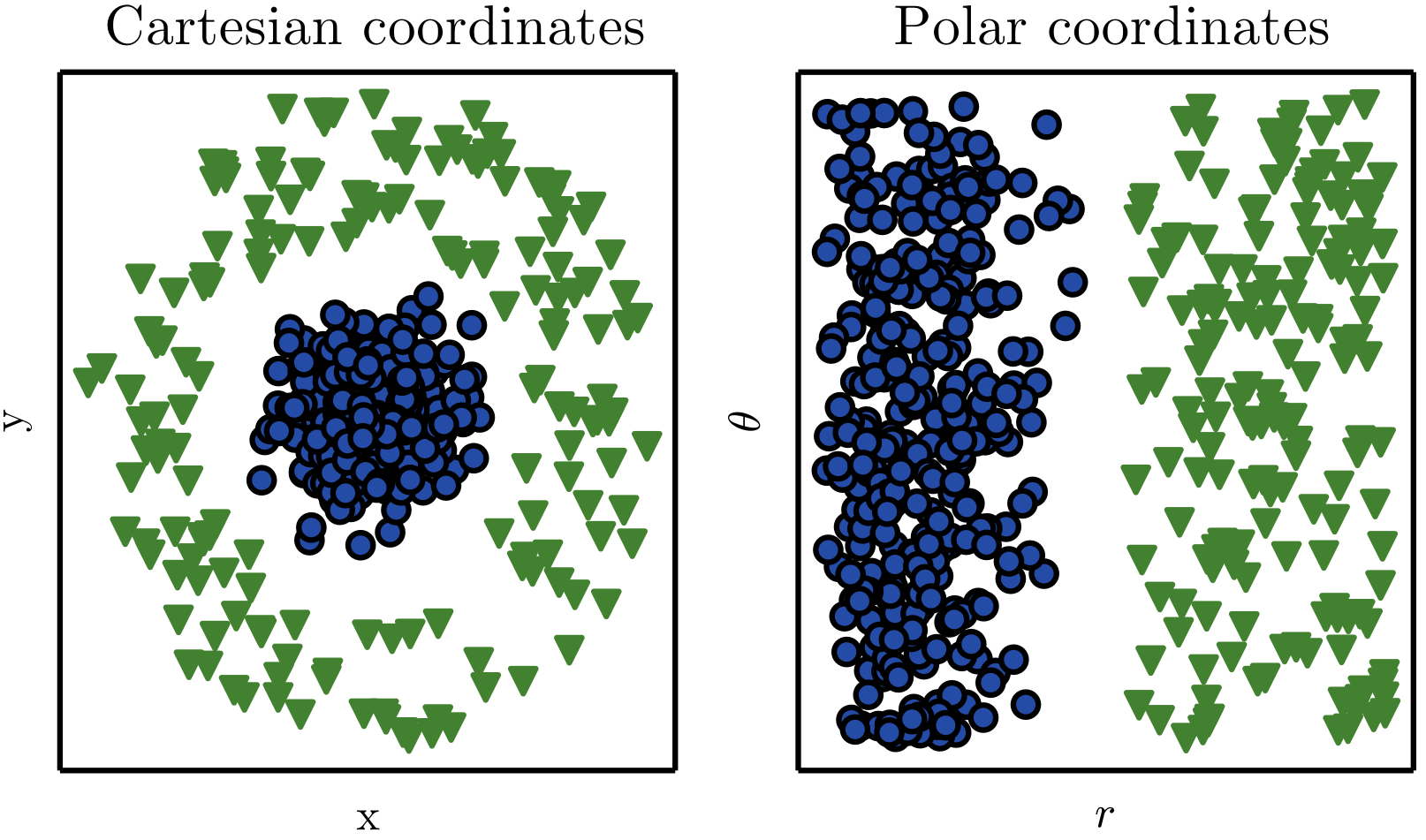}
    %\rule{1cm}{1cm}% Placeholder for actual figure
    \caption{Example of different representations by \cite{dl}: Suppose we want to separate two categories of data by drawing a line between them in a scatterplot. In the plot on the left, we represent some data using Cartesian coordinates, and the task is impossible. In the plot on the right, we represent the data with polar coordinates and the task becomes simple to solve with a vertical line. Figure is from \cite{dl}}
    \medskip
    \small
    \label{fig:descaresvspolar}
\end{figure}

\begin{quote}
"\textit{Deep Learning} solves this central problem in representation learning by introducing representations that are expressed in terms of other, simpler representations. Deep learning enables the computer to build complex concepts out of simpler concepts." \cite{dl}
\end{quote}

By gathering knowledge from experience and improving this experience by data, not only does this approach avoids the need for human operators to explicitly state all the knowledge that the computer needs (advantage of machine learning over knowledge-based approach), but also it enables the model to learn about features on its own (advantage of deep learning over representation learning). The hierarchy of concepts enables the computer to learn complicated concepts by building them out of simpler ones. The graph that shows the way these concepts are stacked up on the top of each other, would be a deep one, consisting of many layers. For this reason, we call this approach to AI deep learning.

The quintessential example of a deep learning model is the feedforward deep network, or multilayer perceptron (MLP) which is explained in \ref{subsec:ann-fnn}. Figure \ref{fig:MLP} shows an MLP model that indicates how a deep learning system can represent the concept of an image of a person by combining simpler concepts, such as corners and contours, which are in turn defined in terms of edges.

\begin{figure}[H]
    \centering
    \includegraphics[width=.5\linewidth]{./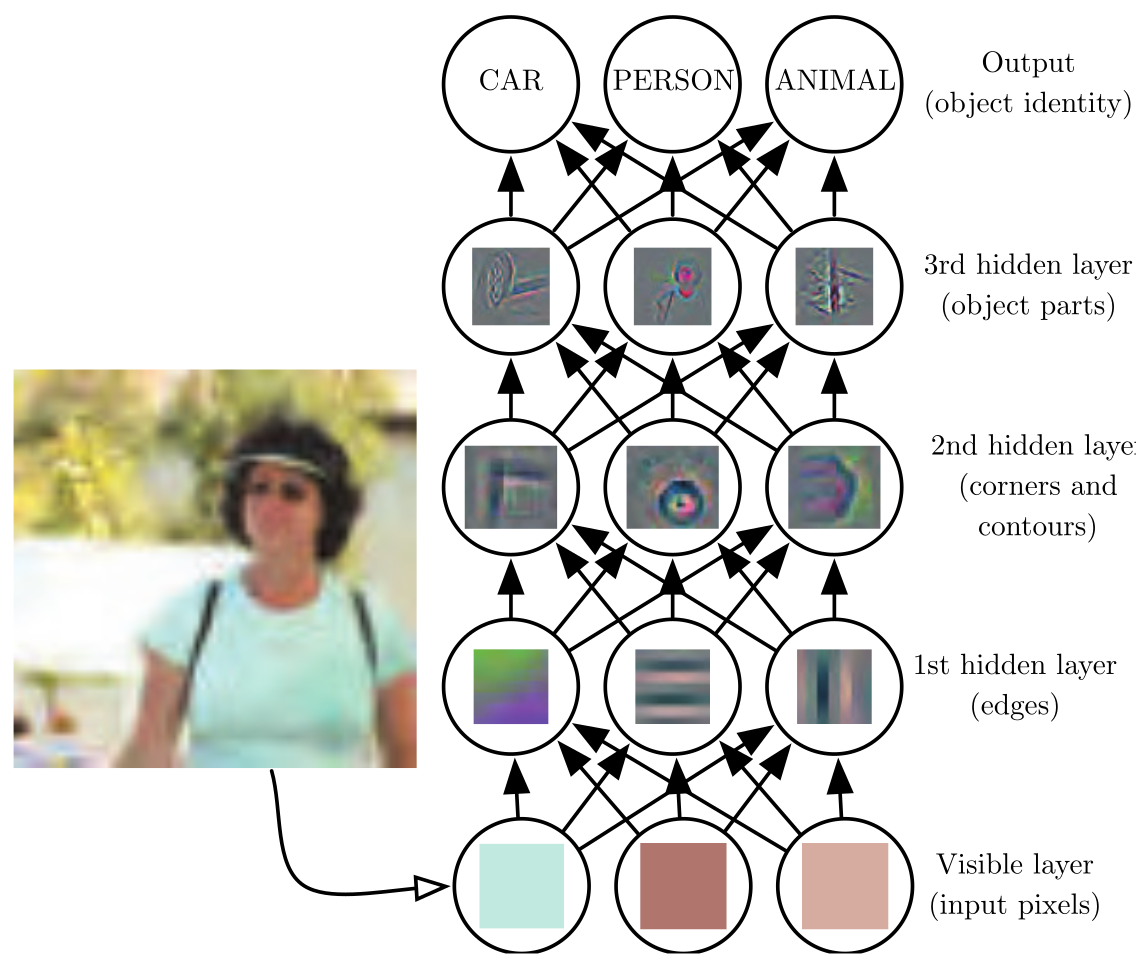}
    %\rule{1cm}{1cm}% Placeholder for actual figure
    \caption{Illustration of a deep learning model by \cite{dl}. It is difficult for a computer to understand the meaning of raw sensory input data, such as this image represented as a collection of pixel values. The function mapping from a set of pixels to an object identity is very complicated. Learning or evaluating this mapping seems insurmountable if tackled directly. Deep learning resolves this difficulty by breaking the desired complicated mapping into a series of nested simple mappings, each described by a different layer of the model. The input is presented at the visible layer, so named because it contains the variables that we are able to observe. Then a series of hidden layers extract increasingly abstract features from the image. These layers are called "hidden" because their values are not given in the data; instead, the model must determine which concepts are useful for explaining the relationships in the observed data. The images here are visualizations of the kind of feature represented by each hidden unit. Given the pixels, the first layer can easily identify edges, by comparing the brightness of neighboring pixels. Given the first hidden layer's description of the edges, the second hidden layer can easily search for corners and extended contours, which are recognizable as collections of edges. Given the second hidden layer's description of the image in terms of corners and contours, the third hidden layer can detect entire parts of specific objects, by finding specific collections of contours and corners. Finally, this description of the image in terms of the object parts it contains can be used to recognize the objects present in the image. Images from \cite{dl}.}
    \medskip
    \small
    \label{fig:MLP}
\end{figure}

To wrap it up, we contend that machine learning is the only viable approach to building AI systems that can operate in complicated real-world environments. Deep learning is a particular kind of machine learning that achieves great power and flexibility by representing the world as a nested hierarchy of concepts, with each concept defined in relation to simpler concepts, and more abstract representations computed in terms of less abstract ones. Figure \ref{fig:DL} illustrates the relationship between these different AI disciplines which is cited from \cite{dl}.

\begin{figure}[H]
    \centering
    \includegraphics[width=.6\linewidth]{./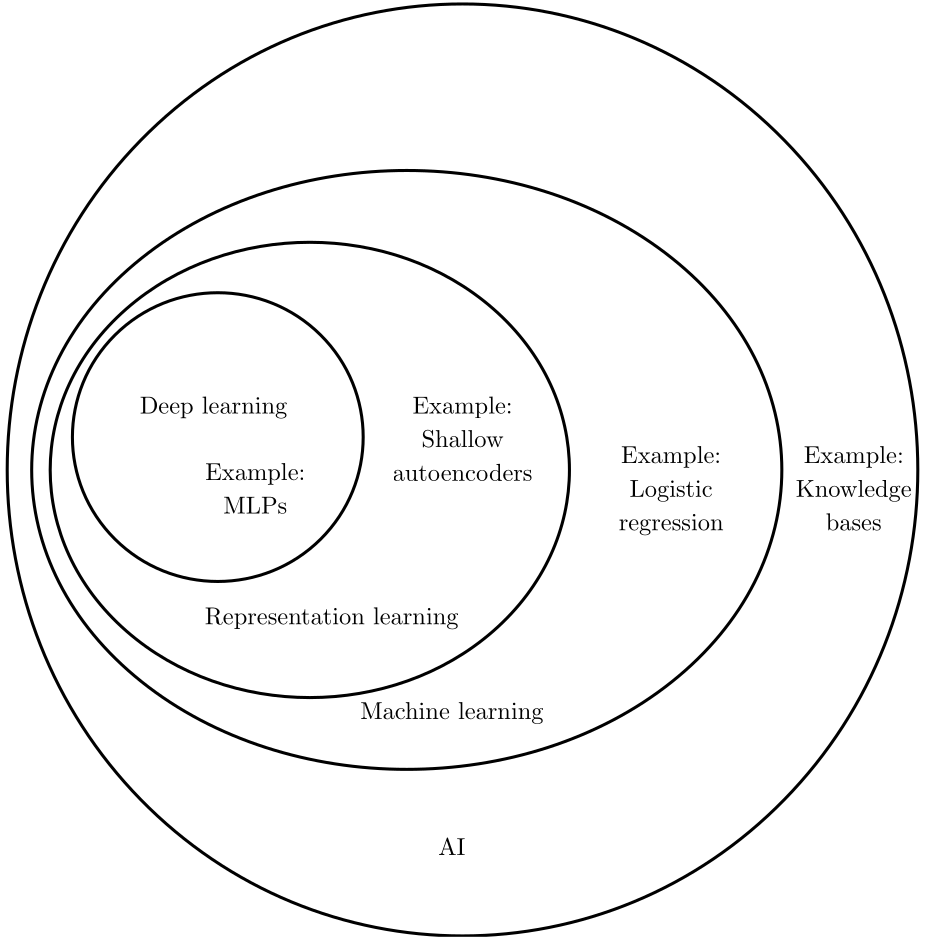}
    \caption{}
    \label{fig:DL}
\end{figure}

%\clearpage
\subsection{Depth}
\begin{quote}
    
"It is noteworthy to state that the idea of learning the right representation for the data provides one perspective on deep learning. Another perspective on deep learning is that depth
enables the computer to learn a multistep computer program. Each layer of the representation can be thought of as the state of the computer's memory after executing another set of instructions in parallel. Networks with greater depth can execute more instructions in sequence. Sequential instructions offer great power because later instructions can refer back to the results of earlier instructions. according to this view of deep learning, not all the information in a layer's activations necessarily encodes factors of variation that explain the input. The representation also stores state information that helps to execute a program that can make sense of the input. This state information could be analogous to a counter or pointer in a traditional computer program. It has nothing to do with the content of the input specifically, but it helps the model to organize its processing." \cite{dl}
\end{quote}

\section{Position of Neural Networks in Deep Learning} \label{subsec:dl role}
A comprehensive history of deep learning is beyond the scope of this thesis. See \cite{dl} for more details. However, we will pinpoint some concise remarks.
Some of the earliest learning algorithms we recognize today were intended to be computational models of biological learning, that is, models of how learning happens or could happen in the brain. As a result, one of the names that deep
learning has gone by is artificial neural networks (ANNs). The corresponding perspective on deep learning models is that they are engineered systems inspired by the biological brain (whether the human brain or the brain of another animal).

Among the algorithms that are used nowadays in the machine learning universe, some stem from models that served as computational models of biological learning, i.e. models of how the learning process occurs in the brain. Consequently, an alternative term that encompasses deep learning is \textit{artificial neural networks} (ANNs). This term roots in the inspiration of the biological brain on learning models which is another perspective of deep learning.

\begin{quote}
    "The neural perspective on deep learning is motivated by two main ideas. One idea is that the brain provides a proof by example that intelligent behavior is possible, and a conceptually straightforward path to building intelligence is to reverse engineer the computational principles behind the brain and duplicate its functionality. Another perspective is that it would be deeply interesting to understand the brain and the principles that underlie human intelligence, so machine learning models that shed light on these basic scientific questions are useful apart from their ability to solve engineering applications." \cite{dl}
\end{quote}

One may wonder why deep learning has only recently become well-known even though the first experiments with artificial neural networks were conducted in the 1950s. There are two major contributing factors:

\begin{enumerate}
    \item {\textbf{Increasing Dataset Size}: The amount of skill required for achieving satisfactory performance on a deep learning problem reduces as the amount of training data increases. On the other hand, people are now spending more time on digital devices (laptops, mobile devices). Their digital activities generate huge amounts of data that we can feed to our learning algorithms.}

    \item {\textbf{Increasing Model Sizes}: One of the main insights within brain structure is that animals become intelligent when many of their neurons work together. ANNs have grown larger in size throughout in history. This increase in model size per time is due to the availability of faster CPUs, the advent of general purpose GPUs, faster network connectivity and better software infrastructure for distributed computing.}
\end{enumerate}

If you are more drawn to grasp the advantages of deep learning over traditional machine learning models, see challenges motivating deep learning in [\cite{dl}, Chapter5, Section11].

\section{Deep Learning and Prior Knowledge}

It turns out that the incorporation of prior knowledge, biasing the learning process, is inevitable for the success of learning algorithms. To fully understand this topic, we refer the reader to two well-known experiments, bait shyness and pigeon's superstition.
\begin{quote}
"Any time we choose a specific machine learning algorithm, we are implicitly stating some set of prior beliefs we have about what kind of function the algorithm should learn. Choosing a deep model encodes a very general belief that the function we want to learn should involve composition of several simpler functions.
This can be interpreted from a representation learning point of view as saying that we believe the learning problem consists of discovering a set of underlying factors of variation that can in turn be described in terms of other, simpler underlying
factors of variation. Alternately, we can interpret the use of a deep architecture as expressing a belief that the function we want to learn is a computer program consisting of multiple steps, where each step makes use of the previous step's output. These intermediate outputs are not necessarily factors of variation but can instead be analogous to counters or pointers that the network uses to organize its internal processing. Empirically, greater depth does seem to result in better generalization."
\cite{dl}
\end{quote}

The fewer beliefs we impose on the data, the more freedom  we will gain when performing tasks on data and the more complex data is, the fewer beliefs can be correctly imposed on it. In case of our concern, which is stock data, it may become complex to the point that it would be hard to distinguish a white noise (e.g., a random walk) from a stock's price return. Therefore, this fact suggests that we require a model with as less conditions and presumptions as possible. Therefore, one of the most suitable candidates would be neural networks, which can represent an arbitrary function with almost no assumption (according to universal approximation theorem.)

\clearpage
\section{Neural Networks} \label{subsec:nn-desc}
%Artificial Neural Network (ANN) is an information processing %paradigm that is inspired by the way the biological nervous system %such as brain process information. It is composed of large number %of highly interconnected processing elements(neurons) working in %unison to solve a specific problem.

To embark on the path to introduce neural networks, we will pose beforehand the notion of \textit{perceptron}, which is a primary type of artificial neural network and then we will pose \textit{sigmoid neurons}. The relation of these notions and the formation of neural networks is explained thereafter. However, the precise mathematical formalization of neural networks is explained in \ref{sec:nnmf}.\\
In order to explain RNNs, we will elaborate on their recurrent nature (which is closely tied to involvement of self-loops) and the requirement for the existence of memory in model for modeling certain phenomena in real-world, such as predicting words or stocks. Afterward, we will pose the long term dependency challenge and propose approaches to overcome this challenge, such as leaky units, which would then bring us to the LSTM model as one of the most suitable approaches. \\
At the end of the chapter, we will describe the structure of an LSTM by introducing the concepts of cell-state and gates and will end this introduction by providing a step-by-step walk-through of LSTM.

\subsubsection{Perceptron} 
Perceptrons were developed in the 1950s and 1960s by scientist Frank Rosenblatt. Nowadays, it is more common to use alternative models of artificial neurons, perhaps the most common of which is the \textit{sigmoid neuron}. However, introducing perceptron beforehand will provide insight for understanding a sigmoid neuron.\\

A perceptron takes several binary inputs, $x_1,x_2,...,$ and produces a single binary output. See figure \ref{fig:ann-perceptron}.

\begin{figure}[H]
    \centering
    \includegraphics[width=.3\linewidth]{./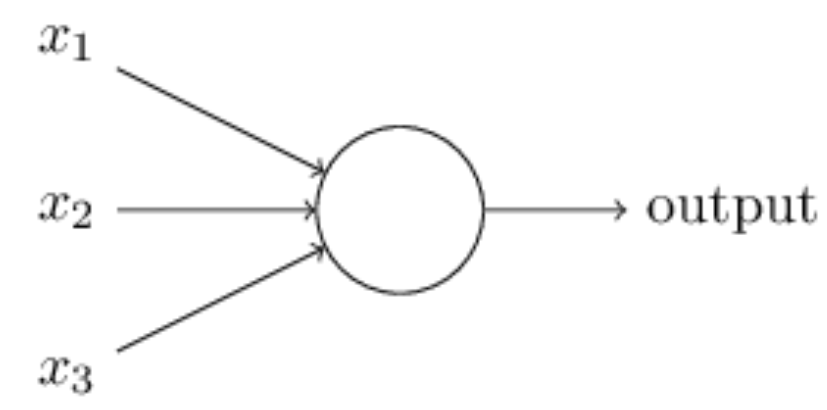}    \caption{A single perceptron with three inputs}
    \medskip
    \small
    \label{fig:ann-perceptron}
\end{figure}

Rosenblatt proposed a simple rule to compute the output. He introduced weights, $w_1,w_2,...$, real numbers manifesting the effect of the respective inputs the the output. The nueron's output, 0 or 1, is determined by whether $\sum_{j} w_j x_j$ is less than or greater than some \textit{threshold value}. More precisely:

\[
    \text{output}= 
\begin{cases}
    0, & \text{if} \sum_{j} w_j x_j \leq \text{threshold}\\
    1, & \text{if} \sum_{j} w_j x_j > \text{threshold}
\end{cases}
\]

An intuitive interpretation of perceptron is to consider perceptron a simplified binary decision-making process in which influencing factors are denoted by $x_1,x_2,...$, and the impact of each factor is expressed by $w_1,w_2,...$ respectively. If the expression $\sum_{j} w_j x_j$ exceeds the amount of threshold then the model outputs 1 and the decision will be made. Otherwise, the model outputs 0, and no decision is made since after weighing influences, they were not adequate for making the decision. 

In the relatively same manner, a complex network of perceptions would be capable of making more subtle decisions. For instance, in figure \ref{fig:ann-perceptron-network}, the network consists of three columns of perceptrons, also called \textit{layers}, each of which serves as a decision-maker by weighing up outputs of the previous layer. Adding layers lead to an increase in ability to make more complicated decisions. As mentioned in \ref{subsec:dl role}, larger networks are capable of achieving higher accuracy on more complex tasks due to involvement of more neurons as well as more interactions.

\begin{figure}[H]
    \centering
    \includegraphics[width=.7\linewidth]{./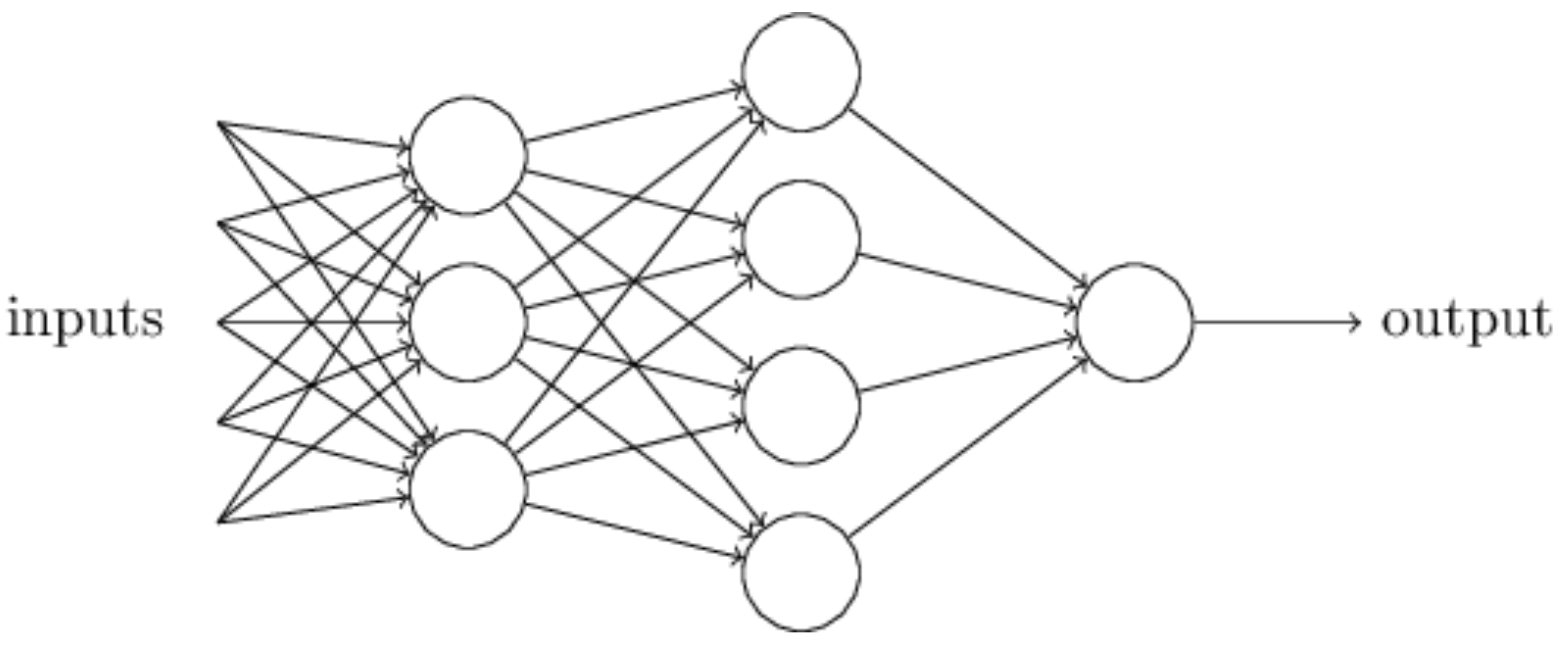}    \caption{A complex network of perceptrons}
    \medskip
    \small
    \label{fig:ann-perceptron-network}
\end{figure}

Even though in figure \ref{subsec:dl role} perceptrons are depicted as if they yield multiple outputs, in fact each perceptron outputs a single value that will serve as inputs for multiple perceptrons. However, it is confusing to draw a straight line that will then divide into multiple lines; so with bearing that in mind, we will still maintain this illustration.

An equivalent form of perceptron is this:
\[
    \text{output}= 
\begin{cases}
    0, & w \cdot x + b \leq 0 \\
    1, & w \cdot x + b > 0
\end{cases}
\]

In biological terms, the bias $b$ is a measure of the extent to which the perceptron is close to \textit{firing}. A big bias leads to the occurrence of outputting 1 being more likely and a small bias leads to the opposite.

\begin{quote}
"Another way perceptrons can be used is to compute
the elementary logical functions we usually think of as underlying
computation, functions such as \code{AND, OR}, and \code{NAND}. we can use perceptrons to compute simple logical functions. In fact, we can use networks of perceptrons to compute any logical function at all. The reason is that the \code{NAND} gate is universal for computation, that is, we can build any computation up out of \code{NAND} gates. It turns out that we can devise \textit{learning algorithms} which can automatically tune the weights and biases of a network of artificial neurons. This tuning happens in response to external stimuli, without direct intervention by a programmer. 
conventional logic gates." \cite{dl2} 
\end{quote}

In order to transform a perceptron to a model that would learn from data and solve problems, what required is tuning the weights and biases in the network in a way that produces results close to our desired network. However, we would prefer that the network would be not sensitive to perturbation, i.e. small changes in inputs so that the output does not differ much. Otherwise, if the output undergoes a huge change, then we will not be able to readily improve the network's learning since we will lose sight of the impact of each variable (factor) on the output. See figure \ref{fig:ann-perceptron-network-perturb}.

\begin{figure}[H]
    \centering
    \includegraphics[width=.7\linewidth]{./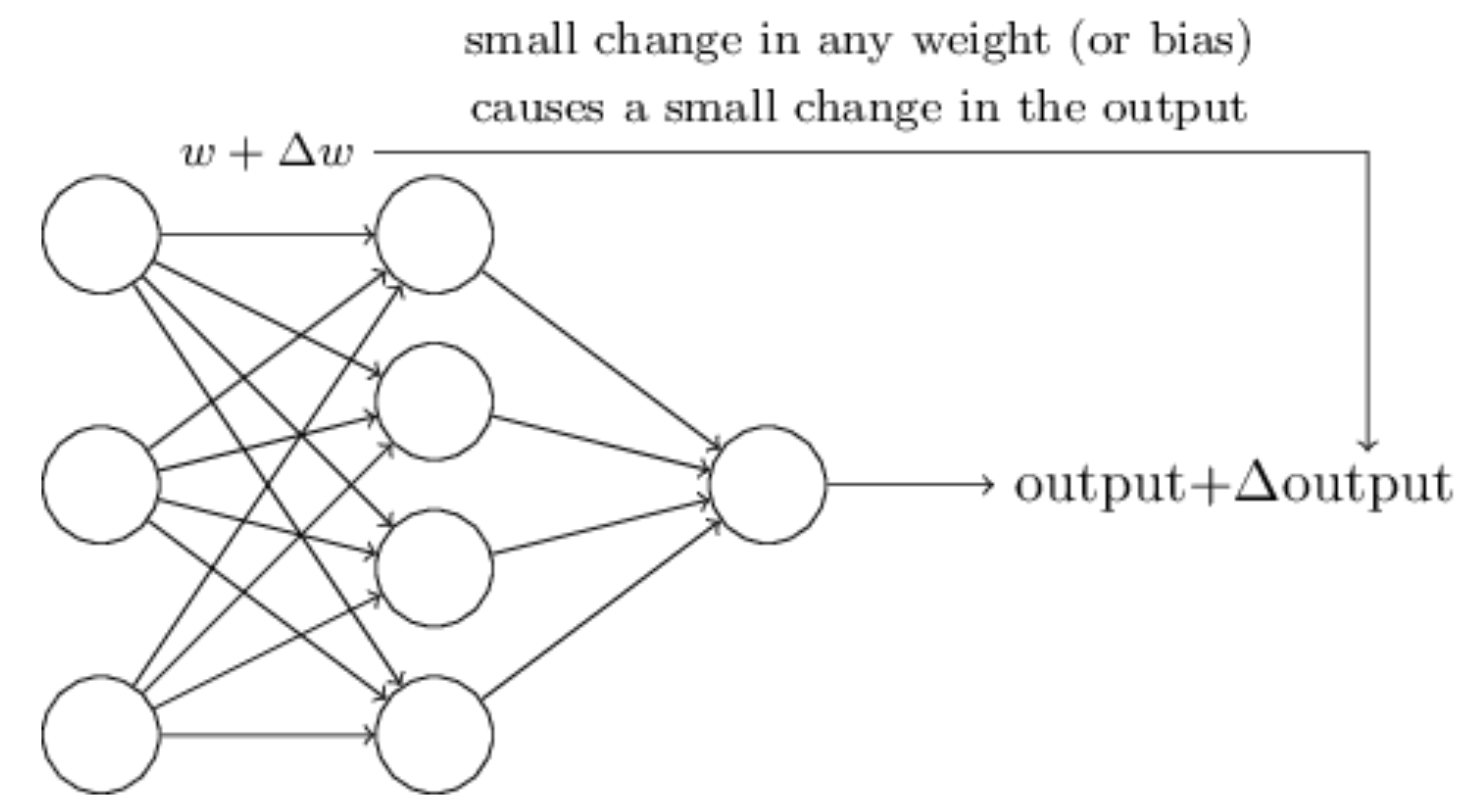}    \caption{A desired network does not manifest sensitivity to perturbation.}
    \medskip
    \small
    \label{fig:ann-perceptron-network-perturb}
\end{figure}

The main drawback of perceptrons is that they are intrinsically sensitive to input and a small change may sometimes shift their output from 0 to 1 and backward. One way to overcome this issue is to use \textit{sigmoid neurons}.

\subsubsection{Sigmoid Neurons} 
Akin to perceptrons, a sigmoid neurons receive inputs with the difference that its output can take any value between 0 and 1. In effect, the output is $\sigma(w \cdot x + b)$ where $\sigma$ is the sigmoid function which is equal to $$\sigma (z) = \frac{1}{1 + e^{-z}} = \frac{1}{1 + exp(- \sum_{j} w_j x_j -b)}.$$

Although perceptron and sigmoid neuron appear different they behave not that dissimilar. To elucidate this, suppose $z = w \cdot x + b$ is a large positive number. Then, $e^{-z}$ approaches zero and so does $\sigma(z)$. Therefore, when the input($z$) is large and positive, the output of the sigmoid neuron will be approximately 1 which was also the case in perceptron. Likewise, if $z$ is a large negative number, the output will be 0. In conclusion, perceptron and sigmoid neurons behave similarly in extreme numbers. 

We can detect the smoothness of the sigmoid function in \ref{rnn-sigmoid-func}. The smoothness of the sigmoid function would affect the structure of output such that that small perturbation ($\Delta w_j$/ in the weights and $\Delta b$/ in the bias) will result in small change $\Delta output$ in output. More precisely:

$$\Delta output = \sum_{j} \frac{\partial output}{\partial w_j} \Delta w_j + \frac{\partial output}{\partial b} \Delta b,$$

where the sum is over all weights $w_j$, and $\partial output / \partial w_j$ and $\partial output / \partial b$ denote partial derivatives of the output with respect to $w_j$ and $b$, respectively. 

\begin{figure}[H]
    \centering
    \includegraphics[width=.7\linewidth]{./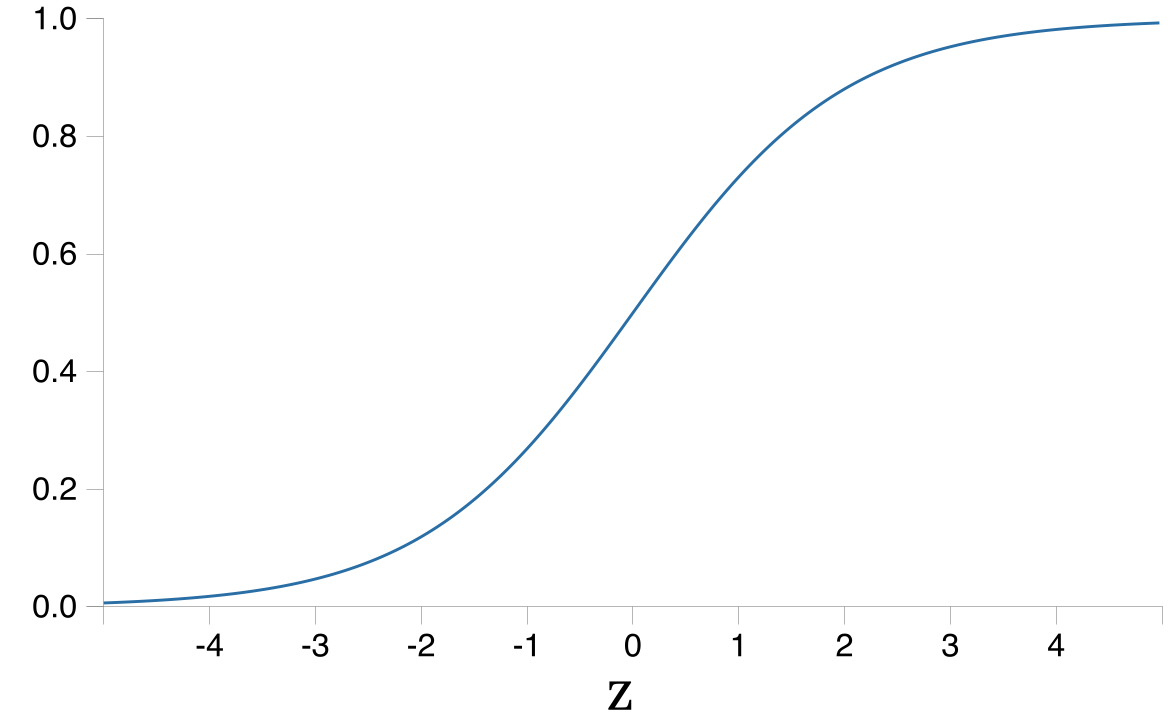}    \caption{Sigmoid Function}
    \medskip
    \small
    \label{rnn-sigmoid-func}
\end{figure}

One may wonder how to interpret the output of a sigmoid neuron. In the case of regression, the output can take any number. In the case of classification (detecting handwritten numbers for instance), however, the output is binary and either 0 or 1. In order to deal with this problem, it is conventional that if the output of the sigmoid neuron is less than 0.5, then it will be considered 0 and otherwise, it will be considered 1.

The smooth function in a sigmoid neuron should not be necessarily a sigmoid function. This called an activation function and it is the entity that makes neural network nonlinear. If no activation function was used, neural networks would be nothing more than over-parametrized linear models.

To wrap it up, we first introduced the concept of perceptron and sigmoid nuerons from which neural networks can be created. We stated the preference of sigmoid nuerons over perecptrons by stressing the role of the sigmoid function. The sigmoid function which serves as the activation function in neural network, yields two important conclusions: Firstly, it makes network less sensitive to perturbation and secondly, it makes the network nonlinear which is crucial for approximating nonlinear models. Even a simple XOR function requires an activation function if we attempt to model it by a neural network. For more detail of this, see [\cite{dl}, Chapter 6].

Now that we are familiar with the general notion of a neural network, we can introduce networks with specific architectures. We will first describe Feedforward Neural Networks in ,\ref{subsec:ann-fnn} and then Recurrent Neural Networks in \ref{subsec:ann-rnn}.

\subsection{Feedforward Neural Network} \label{subsec:ann-fnn}
\begin{quote}
"The goal of a feedforward network is to approximate some function $f\mbox{*}$. For example, for a classifier, $y = f\mbox{*}$ maps an input $x$ to a category $y$. A feedforward network defines a mapping $y = f(x;\theta)$ and learns the value of the parameters $\theta$ that results in the best possible function approximate." \cite{dl}
\end{quote}

We will scrutinize the meaning behind the relevant terms of a feedforward neural network and explain its structure.

MLPs are called \textit{feedforward} since information flows from input $x$ through intermediate computations composing $f$, and finally to the output $y$. The MLPs are devoid of \textit{feedback connections}, i.e. connections in which outputs of the model are fed back into itself. If we incorporate feedback connections in an MLP, they will become \textbf{recurrent neural networks}. These models are discussed in the next subsection (\ref{subsec:ann-rnn}).

Feedforward neural networks are called networks because they are often represented by composing together many different functions. For example, we might have three functions $f(x) = f^{(3)}(f^{(2)}(f^{(1)}(x))).$. In this case, $f^{(1)}$ is called the \textit{first layer} of the network, $f^{(2)}$ is called the \textit{second layer}, and so on. The overall length of the chain gives the \textit{depth} of the model. The name "\textit{deep learning}" arose from this terminology. The final layer of a feedforward network is called the \textit{output layer}. 

During neural network training, we propel $f(x)$ to match $f\mbox{*}(x).$ This matching is done by comparing the label $y$ associated with each point of training data $x$ with generated value of the model and then improve the model by eliminating the disparity between them. The improvement of the model is done by tuning parameters and minimizing the cost function of the model which is explained with more detail at the end of this subsection.

The intermediate layers are called \textit{hidden layers} since the training data does not reveal the desired output for each of them. Finally, these networks are called \textit{neural} because they are loosely inspired by neuroscience. Each hidden layer of the network is typically vector-valued. The dimension size of these hidden layers determines the \textit{width} of the model. See figure \ref{fig:ffn}

\begin{quote}
    "Each element of the vector may be interpreted as playing a role analogous to a neuron. Rather than thinking of the layer as representing a single vector-to-vector function, we can also think of the layer as consisting of many units that act in parallel, each representing a vector-to-scalar function." \cite{dl}
\end{quote}

\begin{figure}[H]
    \centering
    \includegraphics[width=.7\linewidth]{./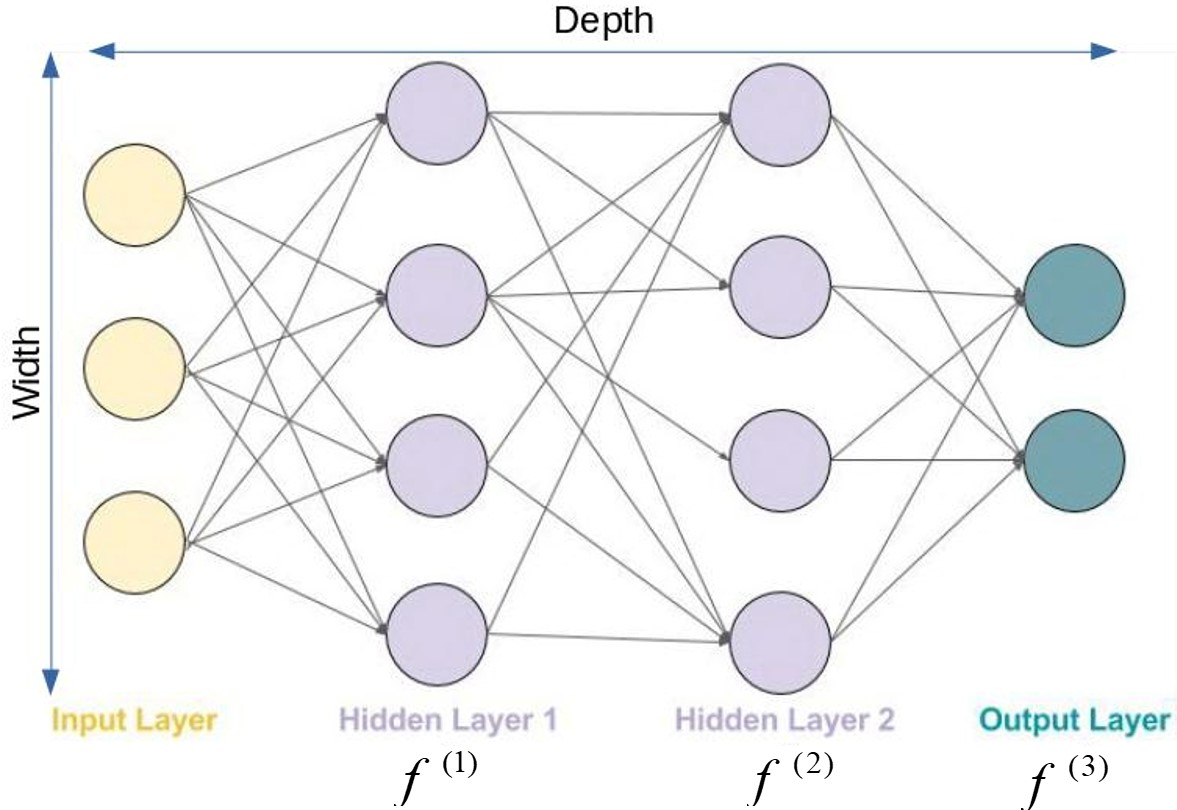}    \caption{Feedforward Neural Network}
    \medskip
    \small
    \label{fig:ffn}
\end{figure}

Layered neural networks conventionally contain a nonlinear activation function operating on individual coordinates---also known as \emph{elementwise nonlinearity}--- placed at the end of each layer. Without these, neural networks would be nothing more than over-parametrized linear models;

The process of minimizing the cost function and tuning parameters is often done by gradient descent which is a numerical algorithm for minimization. This algorithm includes a certain step called backpropagation. A thorough mathematical description of gradient descent and backpropagation algorithm is described in \ref{subsec:bptt-gen}.

%\begin{figure}[H]
%    \centering
%    \includegraphics[width=.9\linewidth]{./}    %%\caption{Backpropagation}
 %   \medskip
 %   \small
 %   \label{fig:bpp}
%\end{figure}

%\begin{center}
%\animategraphics[loop,autoplay]{12}{./Gifs/1/bpp-}{0}{8}
%\end{center}
\clearpage
\subsection{Recurrent Neural Network} \label{subsec:ann-rnn}

In feedforward networks and traditional ANNs in general, it is often presumed that all the inputs and outputs are independent of each other. But there exist real-world scenarios in which learning an instance of the data requires previous inputs or outputs. Even now that you are reading this thesis, you comprehend each word based on your understanding of previous words. Similarly, if one intends to predict the next word of a sentence, he has to remember the previous word of that sentence. It is as if human thought has persistence and retains information of previous observations and conclusions over time. MLPs are not capable of performing this ability and this would be an issue for learning phenomena that require memory and persistence of information. RNNs, on the other hand, address this issue since in them, the behavior of hidden neurons might not just be determined by the activationss in previous hidden layers, but also by the activations at earlier times. Furthermore, the activations of hidden and output neurons will not be determined just by the current input to the network, but also by earlier inputs. The persistence of information in RNNs was made possible by sharing parameters across different positions (index of time). Each member of the output is produced using the same update rule applied to the previous outputs. Such updating rule is often a (same) neural network layer, as the “A” in the figure below.

\begin{figure}[H]
    \centering
    \includegraphics[width=.15\linewidth]{./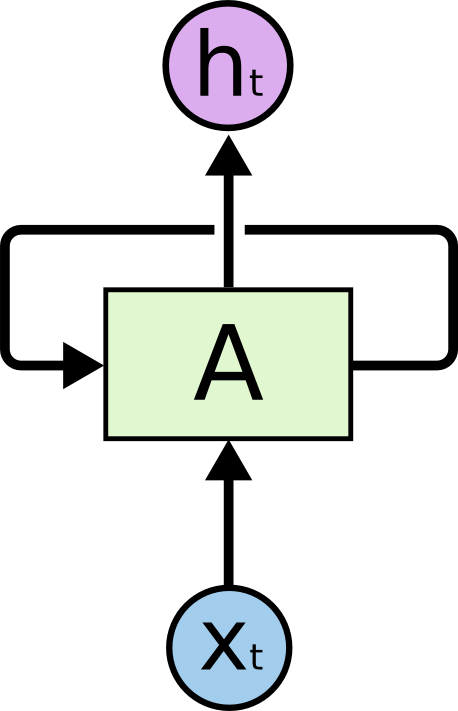}    \caption{Recurrent Neural Networks incorporate self-loops.}
    \medskip
    \small
    \label{fig:rnn1}
\end{figure}

The recurrent presentation of RNNs might make them seem obscure. However, RNNs can be perceived as multiple copies of the same networks, each passing a message to a successor. Unrolling this structure is conceivable as in figure \ref{fig:rnn1}. This chain-wise nature suggests that recurrent neural networks are closely tied with sequences and lists. 
\begin{figure}[H]
    \centering
    \includegraphics[width=.9\linewidth]{./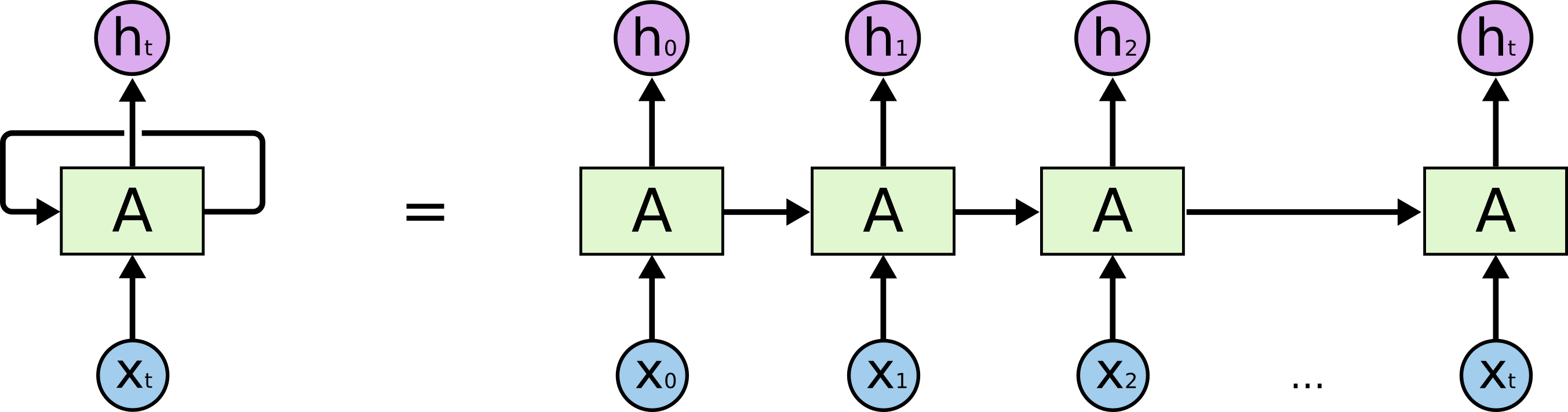}    \caption{An unrolled recurrent neural network}
    \medskip
    \small
    \label{fig:rnn2}
\end{figure}

A proper mathematical formalization of Recurrent Neural Networks is explained in \ref{sec:nnmf-rnn}. However, in order to express neural networks in a slightly more mathematical manner, consider the classical form of a dynamical system:
\begin{equation} \label{eq:ds}
    s^{(t)} = f(s^{(t-1)}; \theta),
\end{equation}

where $s^{(t)}$ is called the state of the system. If we unfold equation \eqref{eq:ds} for $\tau = 3$ time-steps, we obtain 

\begin{subequations}
\begin{align}
s^{(3)} & = f\big( s^{(2)}; \theta\big) \label{eq:folded} \\ 
& = f\big( f(s^{(1)}; \theta); \theta). \label{eq:unfolded} 
\end{align}
\end{subequations}

The unfolded equation yielded an expression devoid of recurrence that can be depicted by a traditional directed acyclic computational graph. The unfolded computational graph of equation \eqref{eq:ds} and equation \eqref{eq:unfolded} is illustrated in figure \ref{fig:rnn-unfold}.\\

\begin{figure}[H]
    \centering
    \includegraphics[width=.5\linewidth]{./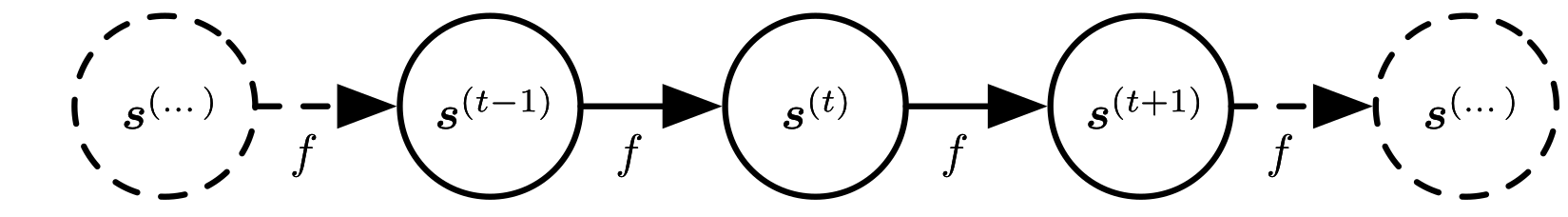}    \caption{The classical dynamical system described by equation \eqref{eq:ds}, illustrated as an unfolded computational graph. Each node represents the state at some time $t$, and the function $f$ maps the state at $t$ to the state at $t+1$. The same parameters (the same value of $\theta$ used to parameterize $f$) are used for all time-steps. Figure is from \cite{dl}}
    \medskip
    \small
    \label{fig:rnn-unfold}
\end{figure}

To incorporate the input of each step, consider a dynamical system driven by an external signal $x^{t},$

\begin{equation} \label{eq:rnn-ds}
    s^{(t)} = f(s^{(t-1)}, x^{(t)}; \theta),
\end{equation}

where we see that the state now contains information about the whole past sequence. 

\begin{quote}
    "Recurrent neural networks can be built in many different ways. Much as almost any function can be considered a feedforward neural network [according to universal approximation theorem], essentially any function involving recurrence can be considered a recurrent neural network." \cite{dl}

\end{quote}
Many recurrent neural networks use equation \eqref{eq:rnn-ds-h} or a similar equation to define the values of their hidden units. To stress the fact that the state is the hidden neurons of the network, we now rewrite equation \eqref{eq:rnn-ds} using the variable $h$ to represent the state,

\begin{equation} \label{eq:rnn-ds-h}
    h^{(t)} = f(h^{(t-1)}, x^{(t)}; \theta),
\end{equation}

illustrated in figure \ref{fig:rnn-unfold-hidden}.

\begin{figure}[H]
    \centering
    \includegraphics[width=.5\linewidth]{./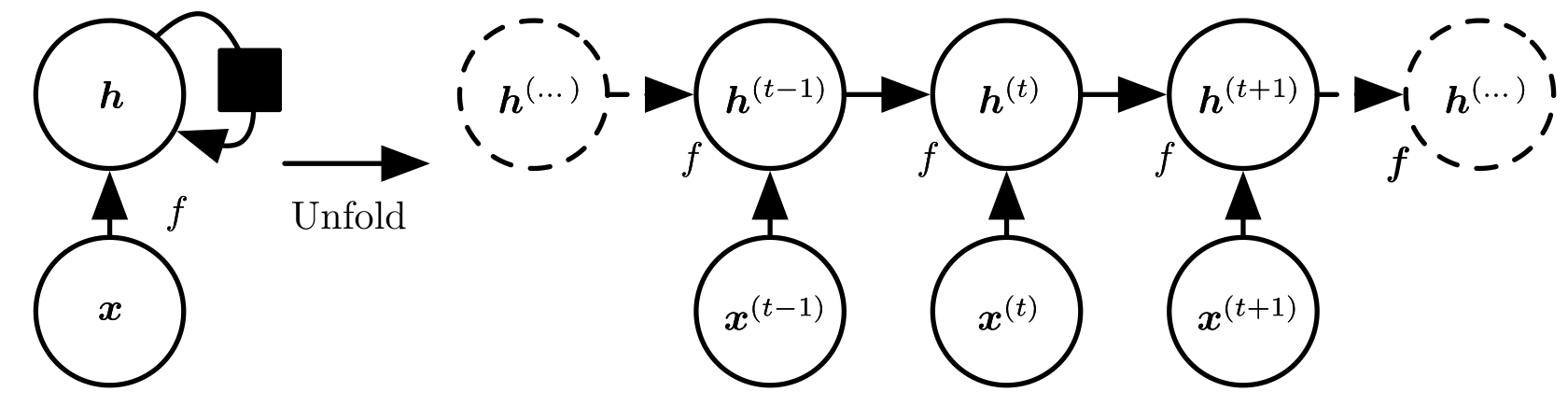}    \caption{A recurrent network with no outputs. This recurrent network just processes information from the input $x$ by incorporating it into the state $h$ that is passed forward through time. (\textit{Left}) Circuit diagram. The black square indicates a delay of a single time-step. \textit{Right} The same network seen as an unfolded computational graph, where each node is now associated with one particular time instance.}
    \medskip
    \small
    \label{fig:rnn-unfold-hidden}
\end{figure}

\begin{quote}
    "Some examples of important design patterns for recurrent neural networks include the following:
\begin{enumerate}
    \item Recurrent networks that produce an output at each time-step and have recurrent connections between hidden units.
    \item Recurrent networks that produce an output at each time-step and have recurrent connections only from the output at one time-step to the hidden units at the next time-step.
    \item Recurrent networks with recurrent connections between hidden units, that read an entire sequence and then produce a single output." \cite{dl}
\end{enumerate}
\end{quote}

The main difference between the first and second design is that in the former outputs from previous steps might be stored in the hidden state and may persist through later steps. In the latter, however, at each state, the only step from which its value used is the previous one. The type of design that is mathematically formalized in \ref{sec:nnmf-rnn} is the first one. 

\clearpage
\subsubsection{Challenge of long-term dependencies} 
When using data from the past, in some problems recent information (perhaps only data from previous time-step) suffices for learning. In other problems, however, information from more distant past is required. For example, consider a language model attempting to predict the next word in a passage based on previous ones. If one tries to predict the last word in the sentence "the clouds are in the \textit{sky}," he does not need further context---it's quite obvious that the next word is going to be sky. RNNs are adept in such cases, where the gap between the positions of information that are required to interplay in learning process is small. 

\begin{figure}[H]
    \centering
    \includegraphics[width=.5\linewidth]{./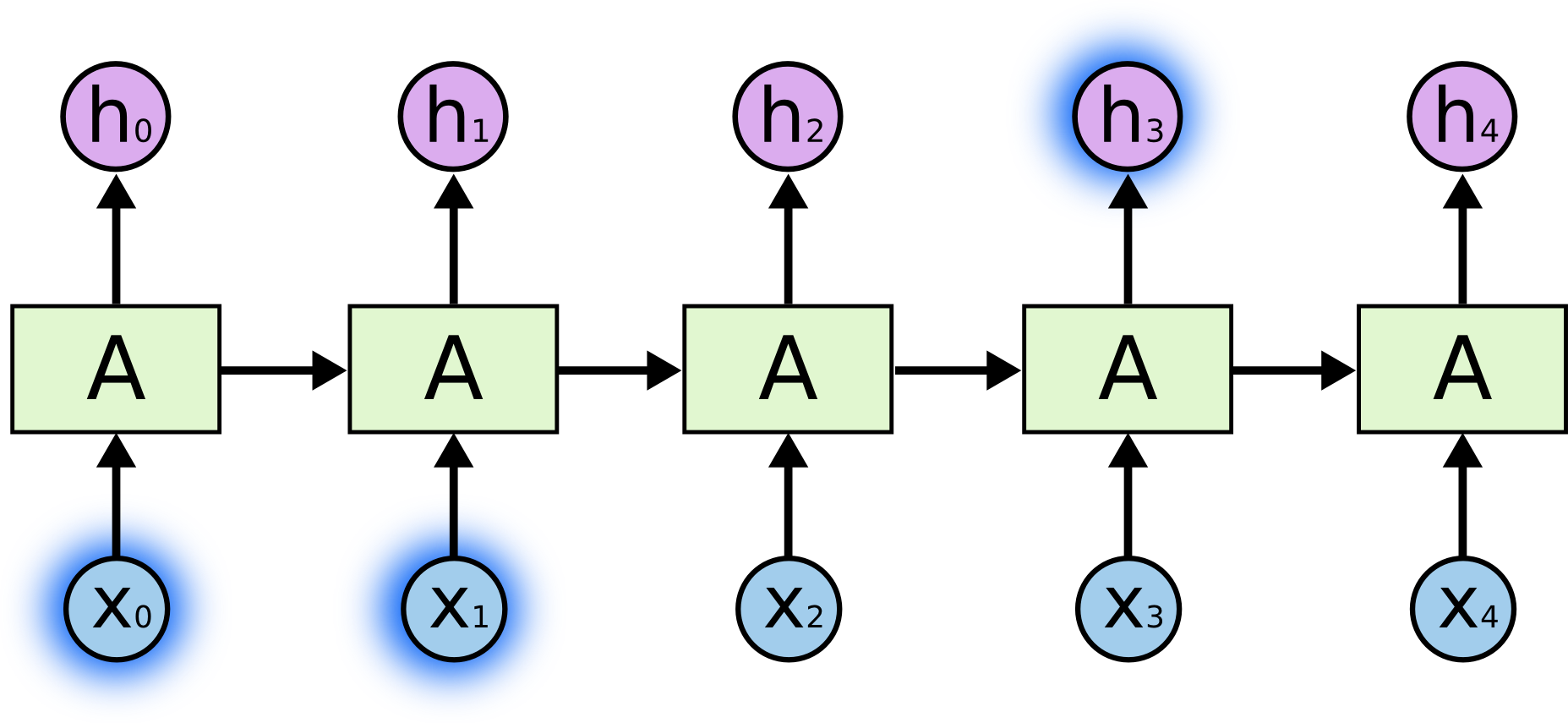}    \caption{Dependency on recent inputs from early past}
    \medskip
    \small
    \label{fig:long-term0}
\end{figure}

But there exist cases in which more context is inevitable. Consider attempting to predict the last word in the sentences "I grew up in France… I speak fluent \textit{French}." Recent information suggests that the next word is probably the name of a language, but if we try to narrow down which language, we need the context of France, from further back. There is a wide variety of similar cases where the gap between position of required information grow significantly large. Unfortunately, as that gap grows, RNNs become inept in learning to link the scattered positions of information.

\begin{figure}[H]
    \centering
    \includegraphics[width=.6\linewidth]{./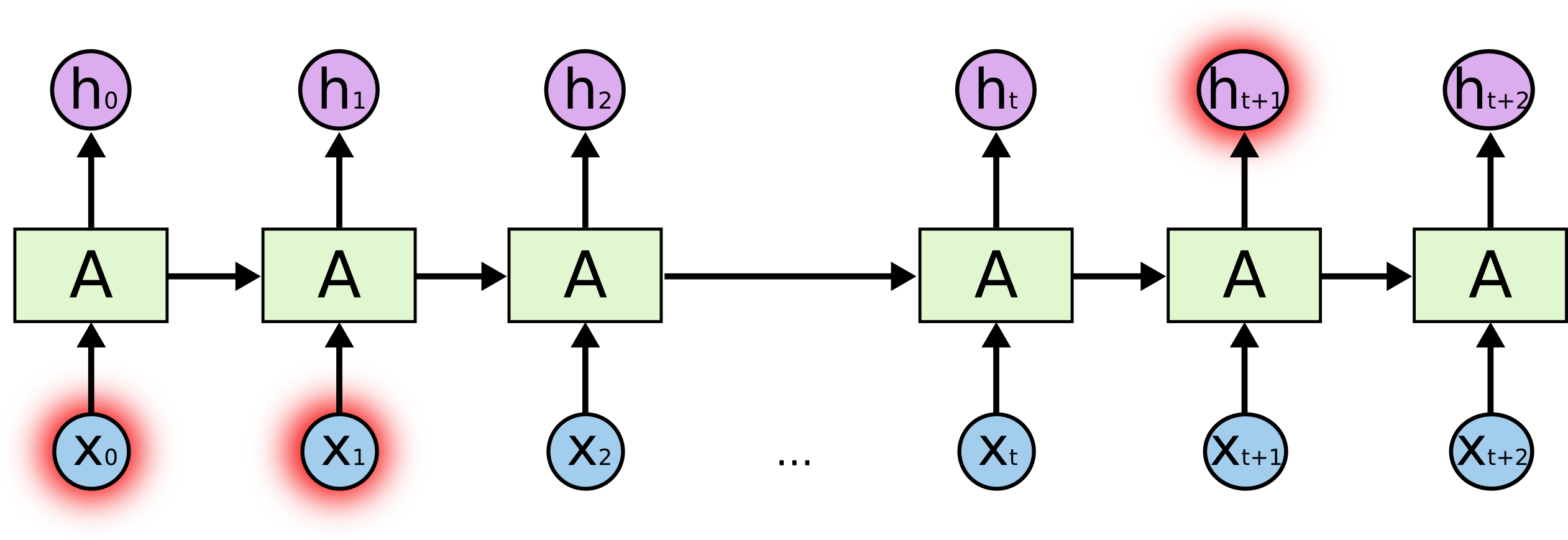}    \caption{Dependency on distant outputs from far past}
    \medskip
    \small
    \label{fig:long-term00}
\end{figure}

The reason that this fundamental issue occurs will be explained.

\begin{quote}
    "A particular difficulty that neural network optimization algorithms must overcome arises when the computational graph becomes extremely deep. Feedforward networks with many layers have such deep computational graphs. So do recurrent networks. Repeated application of the same parameters gives rise to especially pronounced difficulties." \cite{dl}
\end{quote}

\begin{figure}[H]
    \centering
    \includegraphics[width=.7\linewidth]{./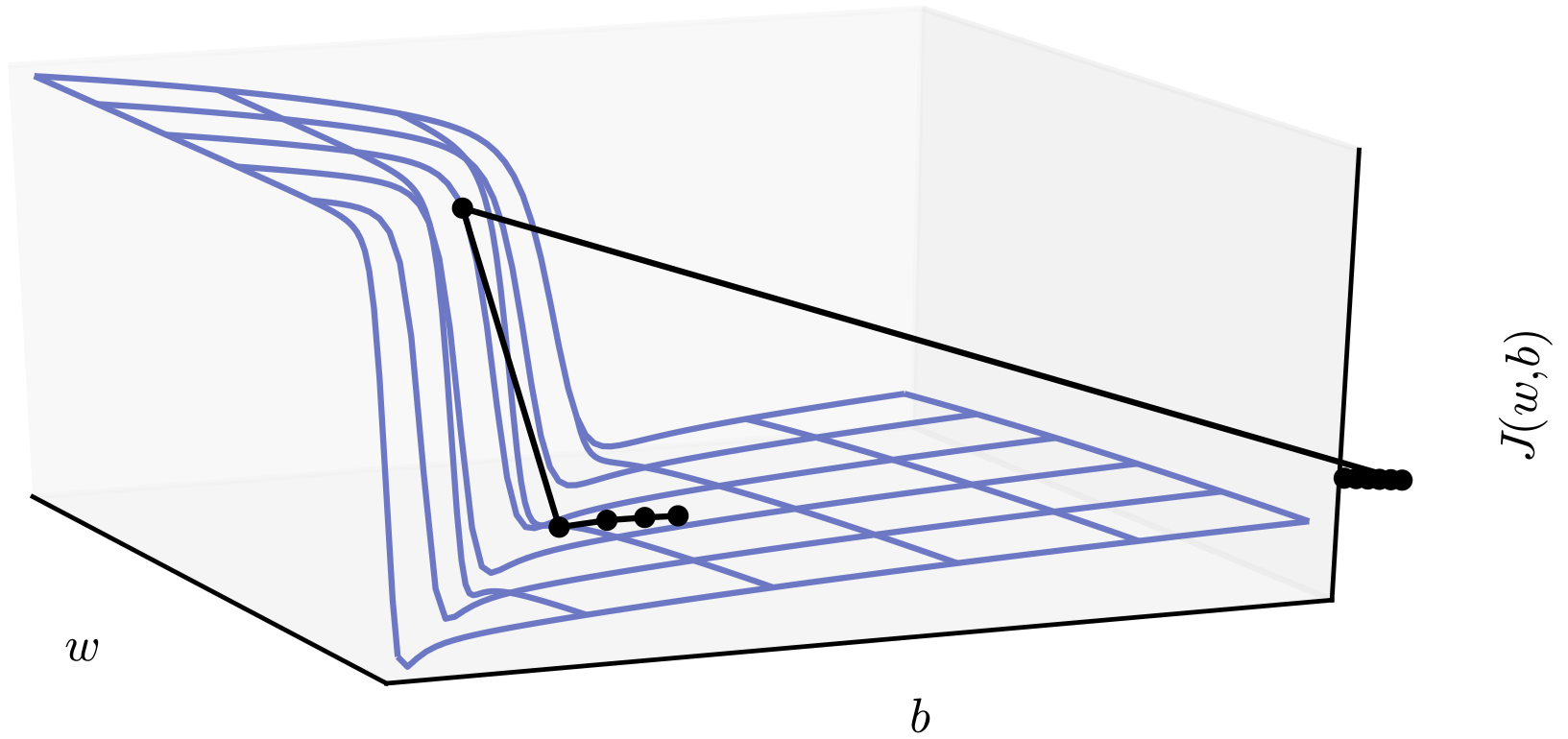}    \caption{The loss function for highly nonlinear deep neural networks or for recurrent neural networks often contains sharp nonlinearities in parameter space resulting from the multiplication of several parameters. These nonlinearities give rise to very high derivatives in some places. When the parameters get close to such a cliff region, a gradient descent update can catapult the parameters very far, possibly losing most of the optimization work that has been done. Figure is from \cite{dl}}
    \medskip
    \small
    \label{fig:long-term}
\end{figure}

Recurrent networks involve the composition of the same function multiple times, once per time-step. These compositions may lead to extremely nonlinear behavior.

In particular, the function composition employed by recurrent neural networks sort of resembles matrix multiplication. We can think of the recurrence relation

$$h^{(t)} = W^\intercal h^{(t-1)}$$

as a very simple recurrent neural network lacking a nonlinear activation function, and lacking inputs $x$  (this recurrence relation essentially describes the power method). It can be simplified to

$$h^{(t)} = (W^t)^\intercal h^{(0)},$$

and if $W$ has an eigendecomposition of the form

$$W = Q \Lambda Q^\intercal,$$

with orthogonal $Q$, the recurrence may be simplified further to

$$h^{(t)} = Q^\intercal \Lambda^t Q h^{(0)}.$$

The eigenvalues are raised to the power of $t$, causing eigenvalues with magnitude less than one to decay and vanish to zero and eigenvalues with magnitude greater than one to
explode. Any component of $h^{(0)}$ that is not aligned with the largest eigenvector will eventually be discarded. The \textbf{vanishing and exploding gradient problem} stresses the fact that gradients through such a graph are also scaled according to $\Lambda^t$. Vanishing gradients make it difficult to know which direction the parameters should move to improve the cost function, while exploding gradients can make learning unstable. In figure \ref{fig:long-term} a cliff structure is depicted that motivate gradient clipping. It is an example of the exploding gradient phenomenon.

\begin{quote}
    "One may hope that the problem can be avoided simply by staying in a region of parameter space where the gradients do not vanish or explode. Unfortunately, in order to store memories in a way that is robust to small perturbations, the RNN must enter a region of parameter space where gradients vanish. Specifically, whenever the model is able to represent long-term dependencies, the gradient of a long-term interaction has exponentially smaller magnitude than the gradient of a short-term interaction. This means not that it is impossible to learn, but that it might take a very long time to learn long-term dependencies, because the signal about these dependencies will tend to be hidden by the smallest fluctuations arising from short-term dependencies. 
    
    One way to deal with long-term dependencies is to design a model that operates at multiple time scales, so that some parts of the model operate at fine-grained time scales and can handle small details, while other parts operate at coarse time scales and transfer information from the distant past to the present more efficiently. Various strategies for building both fine and coarse time scales are possible." \cite{dl}

\end{quote}

\subsubsection{Leaky Units} 
One approach to designing a model with coarse and fine time scale is to obtain paths on which the product of derivatives is close to one. And an effective way to accomplish this is to have units with \textit{linear} self-connections and a weight near one on these connections.

When we accumulate a running average $\mu^{(t)}$ of some value $v^{(t)}$ by applying the update $\mu^{(t)} \leftarrow \alpha \mu^{(t-1)} + (1- \alpha)v^{(t)}$, the $\alpha$ parameter is an example of a linear self-connection from $\mu^{(t-1)}$ to $\mu^{(t)}$. When $\alpha$ is near one, the running average remembers information about the past for a long time, and when $\alpha$ is near zero, information about the past is rapidly discarded. Hidden units with linear self-connections can behave similarly to such running averages. Such hidden units are called \textit{leaky units}. The use of a linear self-connection with a weight near one is a way of ensuring that the unit can access values from the past.

There are two basic strategies for setting the time constants used by leaky units. One strategy is to manually fix them to values that remain constant, for example, by sampling their values from some distribution once at initialization time. Another strategy is to make the time constants free parameters and learn them. 

\subsubsection{Gated Recurrent Networks}
The most effective sequence models used in practical applications
are called \textbf{gated RNNs}. These include the \textbf{long short-term memory} and networks based on the \textbf{gated recurrent unit}.

Like leaky units, gated RNNs are based on the idea of creating paths through time that have derivatives that neither vanish nor explode. Leaky units made this possible via connection weights that were either manually chosen constants or were parameters. Gated RNNs generalize this to connection weights that may alter at each time-step.

\begin{quote}
    Leaky units allow the network to \textit{accumulate} information (such as evidence for a particular feature or category) over a long duration. Once that information has been used, however, it might be useful for the neural network to \textit{forget} the old state. For example, if a sequence is made of subsequences and we want a leaky unit to accumulate evidence inside each sub-subsequence, we need a mechanism to forget the old state by setting it to zero. Instead of manually deciding when to clear the state, we want the neural network to learn to decide when to do it. This is what gated RNNs do. 

\end{quote}
\subsubsection{LSTMs} 
LSTMs were introduced by Hochreiter and Schmidhuber (1997), and one of the core contributions of their initial versions was self-loops that produced paths along gradient can flow and persist for a long duration.
Despite having the same chain-wise nature of repeating modules in RNNs, the notorious difference of LSTMs is having a different module structure. Instead of having the same layer, such as a single tanh layer (Figure \ref{fig:rnn}), four layers interplay in their modules (Figure \ref{fig:rnn-lstm-1}). 

\begin{figure}[H]
    \centering
    \includegraphics[width=.7\linewidth]{./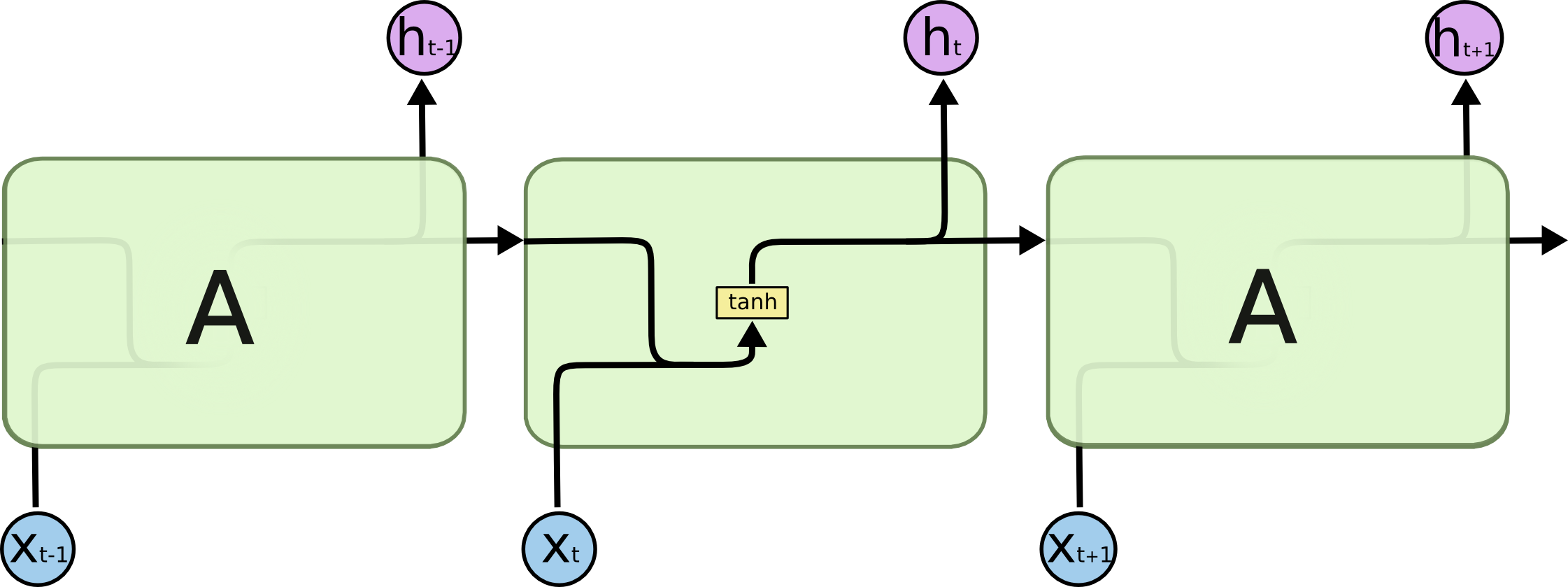}    \caption{The repeating module in a standard RNN contains a single layer.}
    \medskip
    \small
    \label{fig:rnn}
\end{figure}

\begin{figure}[H]
    \centering
    \includegraphics[width=.7\linewidth]{./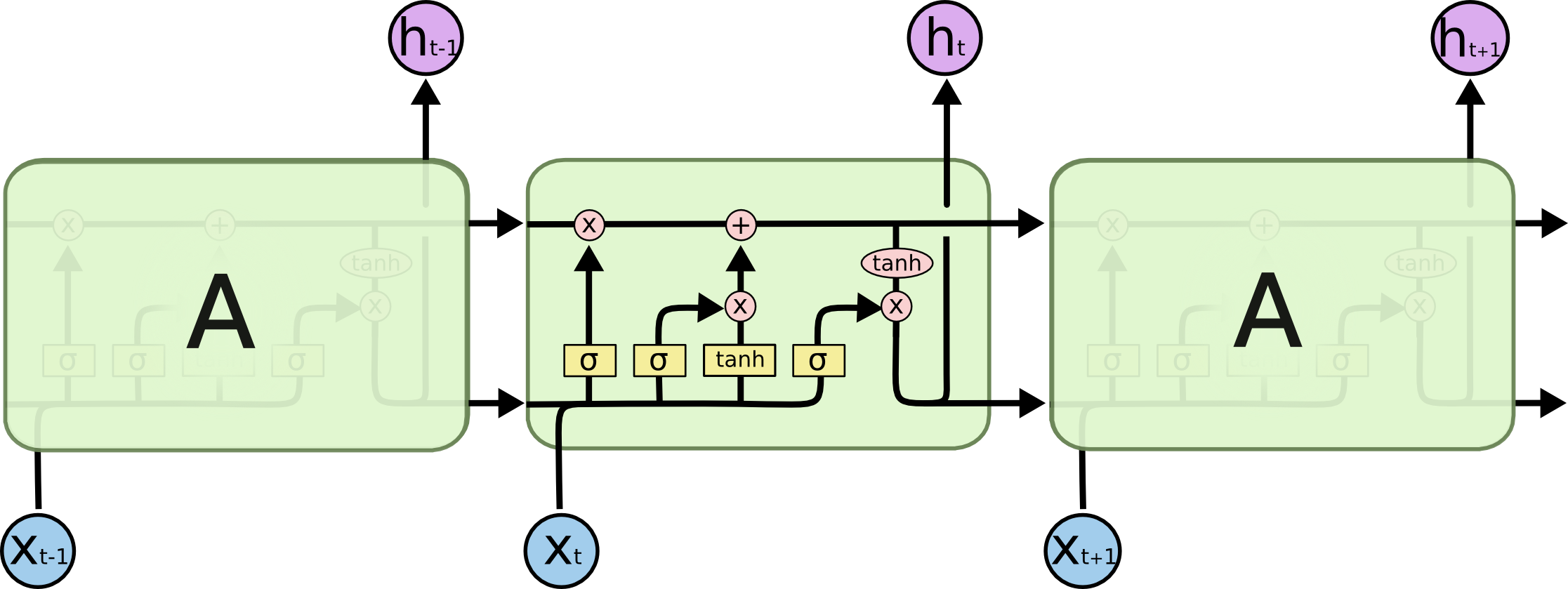}    \caption{The repeating module in an LSTM contains four interacting layers.}
    \medskip
    \small
    \label{fig:rnn-lstm-1}
\end{figure}

We will delve deeper into LSTMs by providing a walk-through shortly. Before doing so, we will specify our notation beforehand and also elucidate the core idea behind LSTMs.

\begin{figure}[H]
    \centering
    \includegraphics[width=.7\linewidth]{./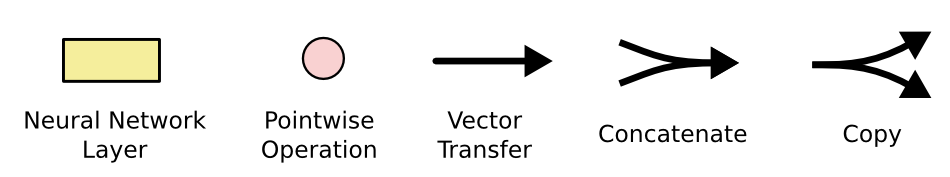}    \caption{Introducing Notation}
    \medskip
    \small
    \label{fig:rnn-lstm-not}
\end{figure}

In diagram \ref{fig:rnn-lstm-not}, each line carries an entire vector, from the output of one node to the inputs of others. The pink circles represent pointwise operations, such as vector addition, while the yellow boxes are learned neural network layers. Merging lines denote concatenation, while forking lines denote its content being copied and the copies are going to different locations.

\subsubsection{The Core Idea Behind LSTMs} 
One of the pivotal roles in an LSTM is the \textit{cell state}, which is the horizontal line placed on top of the figure \ref{fig:rnn-lstm-cellstate} The cell state serves as a passive carrying medium, not dissimilar to a conveyor belt. It surfs through the entire chain, including only some minor linear interactions. The cell state facilitates the flowing of information unaltered.

\begin{figure}[H]
    \centering
    \includegraphics[width=.9\linewidth]{./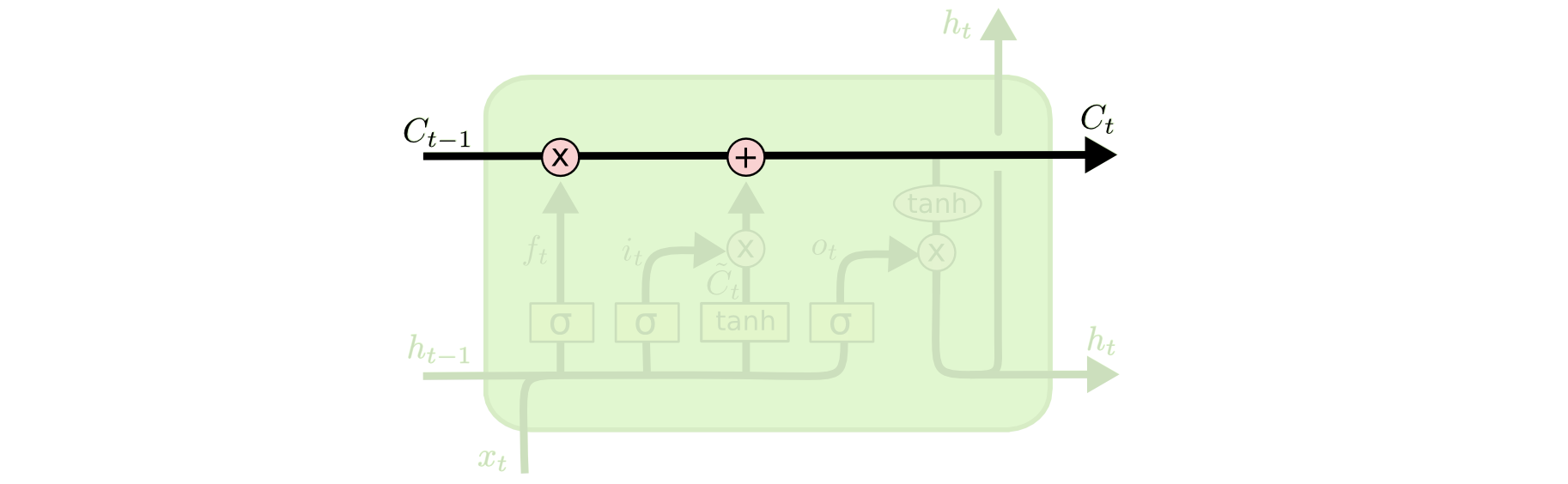}    \caption{Cell State in LSTM}
    \medskip
    \small
    \label{fig:rnn-lstm-cellstate}
\end{figure}

LSTMs are capable of adding and removing information stored in the cell state. This process is regulated by certain structures, called \textit{gates}. Gates allow in information optimally and they comprise a sigmoid neural net layer and a pointwise multiplication operator. See figure \ref{fig:rnn-lstm-gate}.

\begin{figure}[H]
    \centering
    \includegraphics[width=.1\linewidth]{./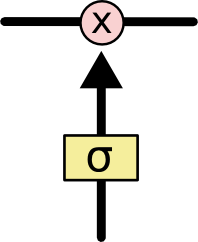}    \caption{Gate}
    \medskip
    \small
    \label{fig:rnn-lstm-gate}
\end{figure}

The sigmoid layer outputs numbers between zero and one. It acts akin to $\alpha$ in leaky units by describing how much of each component should be let through. A value of zero means "allow nothing in," while a value of one means "allow everything in."

An LSTM has three of these gates, to protect and control the cell state. 

\subsubsection{Step-by-Step LSTM Walk-through}
\begin{steps}
    \item First we have to decide which information should be discarded from the cell state. This decision is made by a sigmoid layer, called the \textit{forget gate} layer. The forget gate looks into $h_{t-1}$ and $x_t$, and outputs a value between 0 and 1 for each number of the cell state $C_{t-1}$ accordingly. See figure \ref{fig:lstm-wt-1}.\\ 
    \begin{figure}[H]
    \centering
    \includegraphics[width=.9\linewidth]{./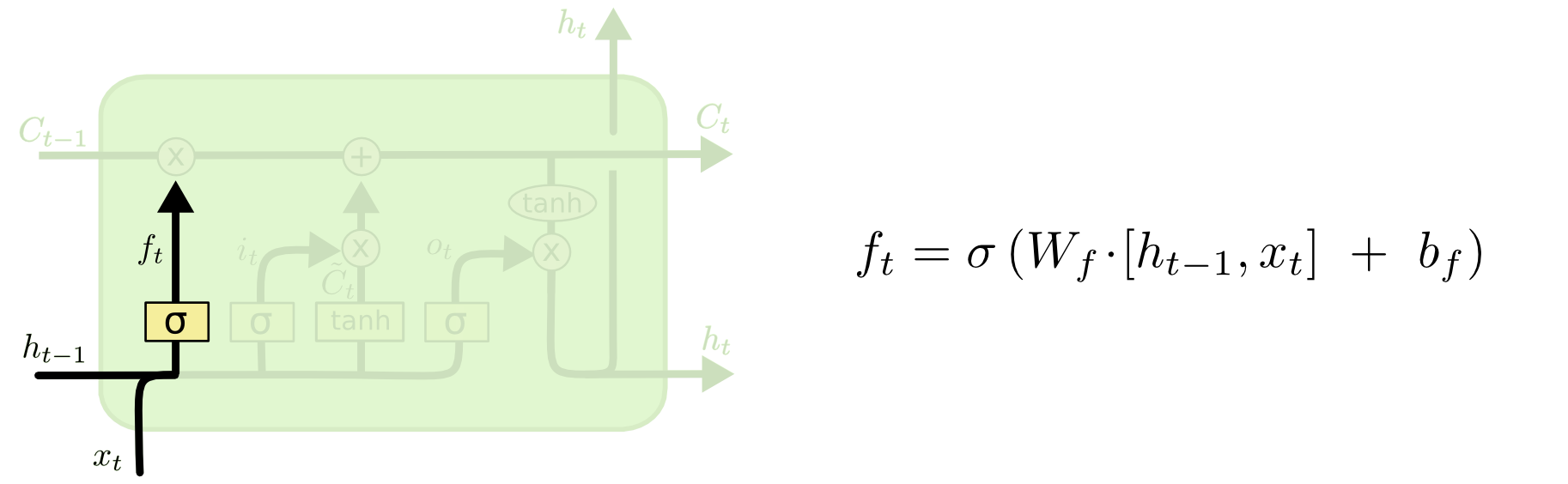}    \caption{Forget gate}
    \medskip
    \small
    \label{fig:lstm-wt-1}
    \end{figure}
    By referring to our earlier example of a language model predicting the next word based on all of the previous ones, we realize that in this problem, the cell state might include the gender of the present subject so as to use the correct pronouns. When a new subject is observed, we prefer to forget the gender of the old subject as it does not bear any useful information anymore.

    \item Second thing to do is to decide which information to store in the cell state. This step includes two tasks. Firstly, a sigmoid layer called the \textit{input gate} decides which values to update. Secondly, a tanh layer creates a vector of new candidate values $\tilde{C_t},$ that may append the state but the extent of update is scaled by the input gate. Afterward, the two aforementioned tasks will be combined to finalize updating of the state. See figure \ref{fig:lstm-wt-2}. \\
    
    \begin{figure}[H]
    \centering
    \includegraphics[width=.9\linewidth]{./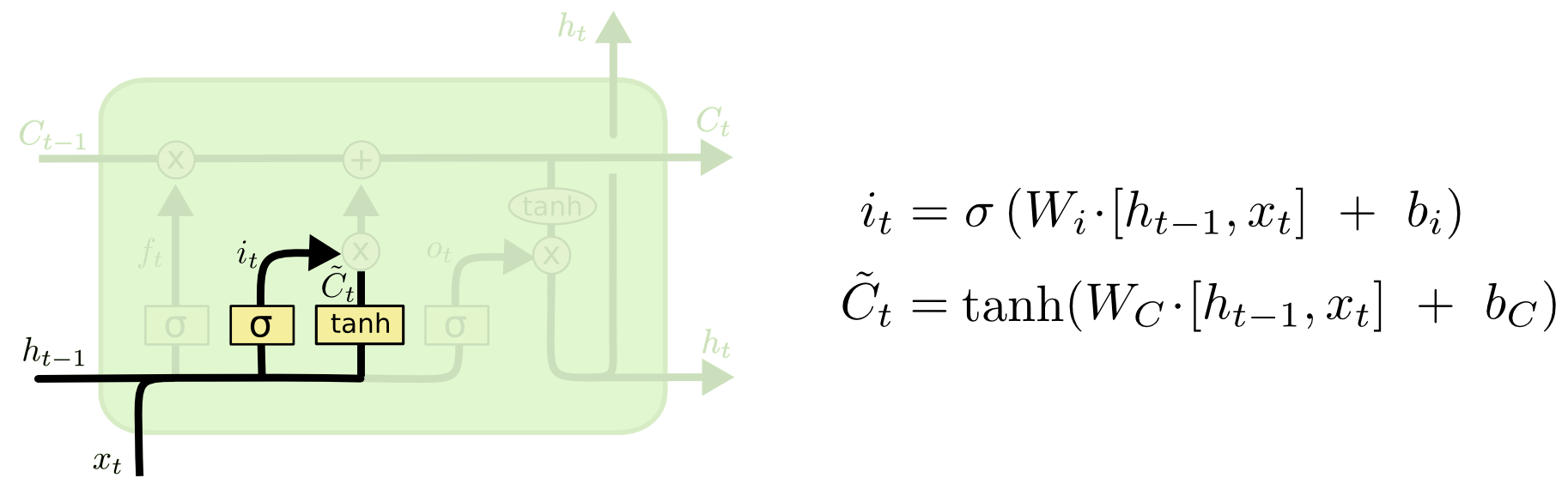}    \caption{Input Gate}
    \medskip
    \small
    \label{fig:lstm-wt-2}
    \end{figure}

    Back to the example of our language model again, we would prefer to add the gender of a new subject to the cell state, so as to replace the old one that we are forgetting.\\
    
    In order to finalize the updating of the old cell state $C_t$ into to the new cell state $C_t$, we have to implement the already made decisions from earlier steps. We do this by multiplying the old state by $f_t$ (forgetting the things we decided to forget in the previous step) and then adding  $i_t \ast \tilde{C_t}$, which are the new candidate values, scaled by how much we decided to update each state value. See figure \ref{fig:lstm-wt-3}.\\
    
    In the case of the language model, this is where we would actually drop the information about the old subject's gender and add the new information, as we decided in the previous steps. See figure 
    
    \begin{figure}[H]
    \centering
    \includegraphics[width=.9\linewidth]{./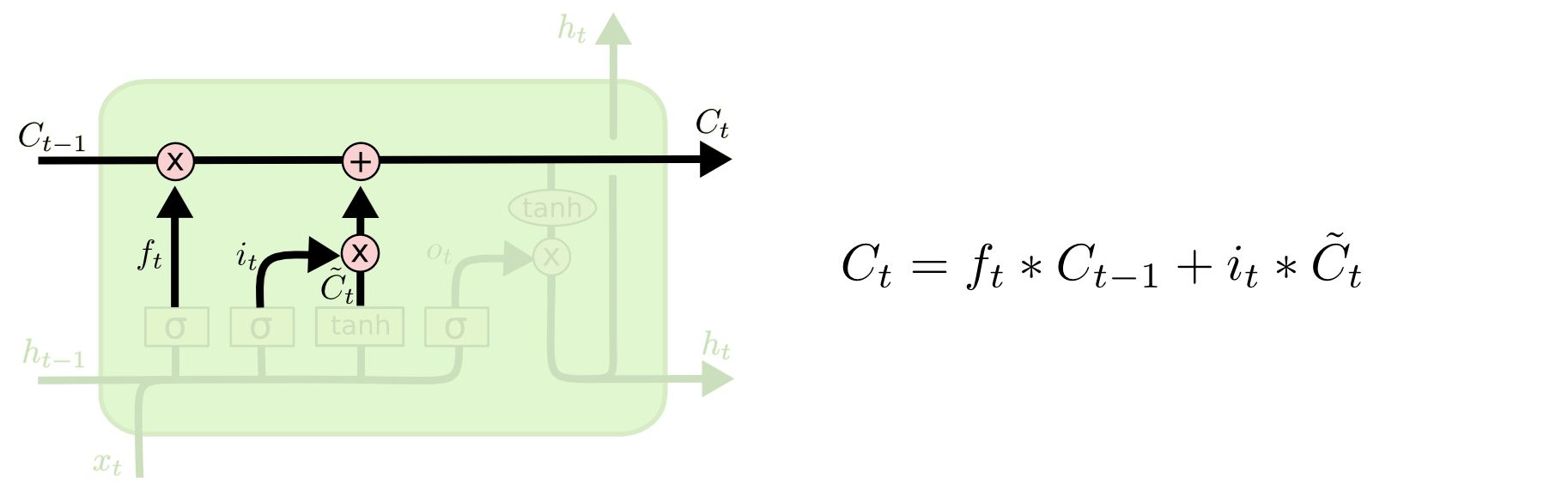}    \caption{Updating cell state}
    \medskip
    \small
    \label{fig:lstm-wt-3}
    \end{figure}

    \item Finally, we ought to decide which information is going to be outputted from the cell state. Firstly, a sigmoid layer, called \textit{output gate} decides which parts of the cell are yet to be outputted and then a tanh layer (it maps values between -1 and 1) multiplies the output of the sigmoid gate so that only the decided parts would exit the state. See figure \ref{fig:lstm-wt-4}.\\ In the language model example, since it just observed a subject, it might intend to output information relevant to a verb, in case the next input would be one. For example, it might output whether the subject is singular or plural so that we know what form a verb should be conjugated into. 
    
    \begin{figure}[H]
    \centering
    \includegraphics[width=.9\linewidth]{./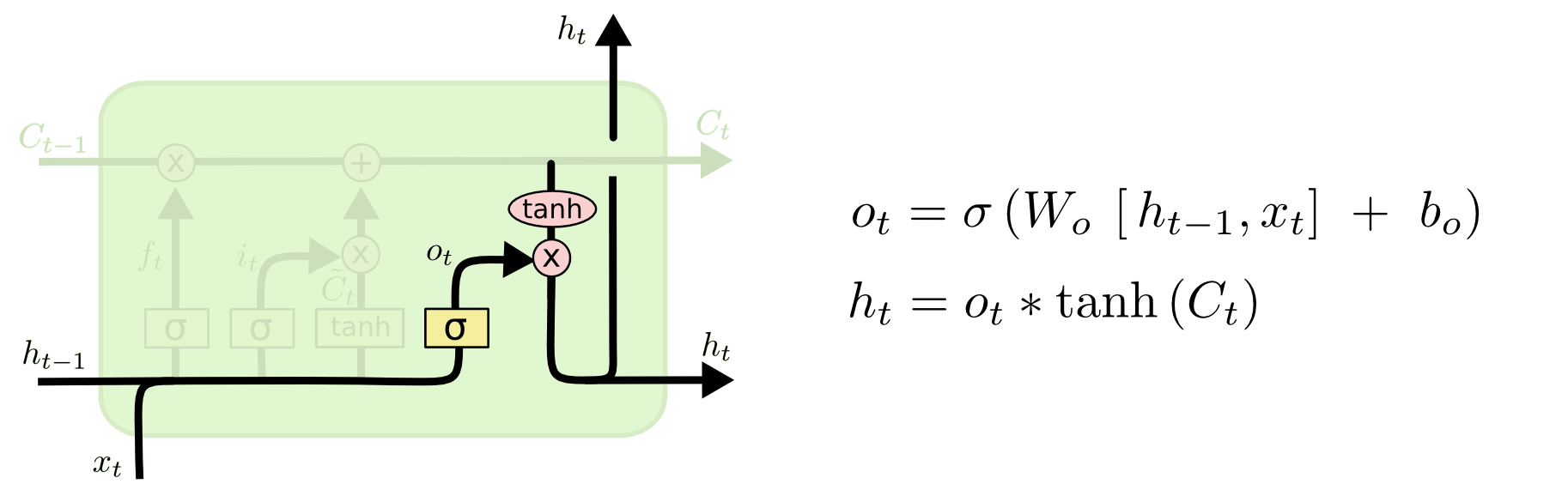}    \caption{Output Gate}
    \medskip
    \small
    \label{fig:lstm-wt-4}
    \end{figure}
    
\end{steps}

\ref{fig:lstm-ov} depicts the overall structure of LSTM and encompasses all the steps participating in cell state.

\begin{figure}[H]
\centering
\includegraphics[width=.7\linewidth]{./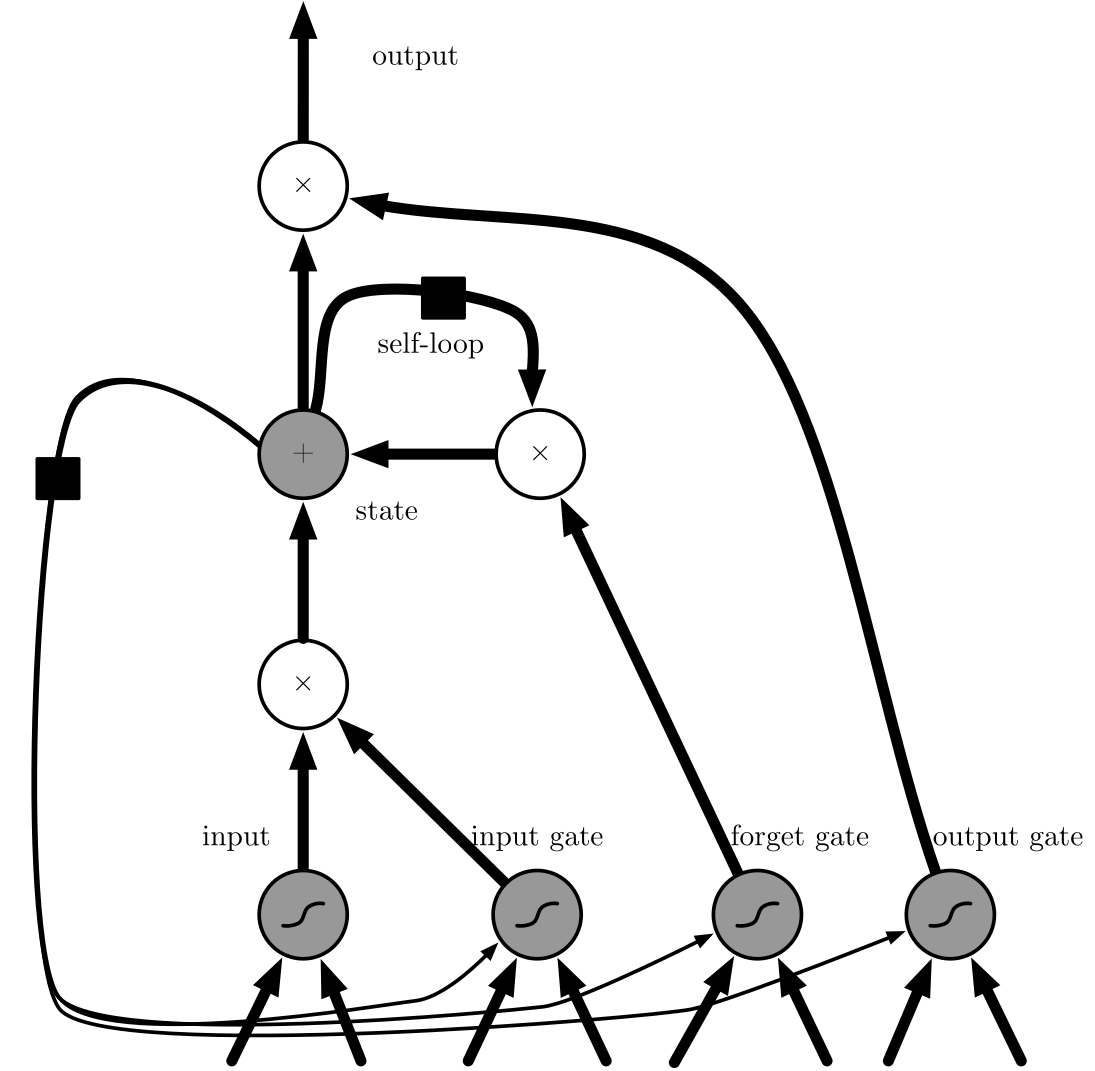}    \caption{Block diagram of the LSTM network cell. Cells are connected recurrently to each other, replacing the usual hidden units of ordinary recurrent networks. The state unit can also be used as an extra input to the gating units. The black square indicates a delay of a single time-step. Figure is from \cite{dl}}.
\label{fig:lstm-ov}
\end{figure}

\clearpage
\subsubsection{Variants on LSTMs} 
The walk-through we provided was a description for a generic LSTM. There are, however, other versions of LSTMs with slight alterations. For instance, One popular LSTM variant, introduced by Gers and Schmidhuber (2000), is adding “peephole connections.” This means that we let the gate layers look at the cell state. For more information on this, see \cite{lstm3}.

\section{Conclusion}
In this chapter, we described the general notion of deep learning along with its relation to neural networks. Furthermore, we pinpointed the advantages of deep learning models over traditional machine learning models from different perspectives, one of which was the ideas of representation of data and depth, and another one was reaching the fact that neural networks impose fewer prior beliefs on data in comparison to other models. Afterward, the structure of neural networks was introduced by pinpointing the way they are built upon perceptrons and sigmoid neurons. Then, we elaborated on the structure of RNNs after stressing their significant capability of memorizing earlier input as well as earlier outputs. We subsequently posed the challenge of long term dependency and proposed approaches to overcome it, such as leaky units and LSTMs. 
Finally, we described the structure of LSTMs by introducing the concepts of cell-state and gates and then explicate the interaction between the cell-state and the three gates contributing LSTMs (forget, input and output) through a step-by-step walk-through.\\

We realized that before feeding current input and previous state's output to the current state of the model, an input gate decides which parts of them to be discarded before entering the cell state. Afterward, an input gate looks into the new input and previous output (from which some parts are already discarded) and select the parts of them that are preferred to be stored in the cell state. Before storing new information, a vector of candidate values is created by a tanh layer. Thereafter, the cell state is updated by carrying out the tasks of the forget and output gates. Eventually, before the cell state output its values, an output gate regulates the leakage of information by choosing the parts that are yet to be outputted and serve as a new input for the next state.\\

This introduction to LSTMs would be insightful for understanding our practical part of this thesis, chapter \ref{subsec:implmnt}, in which we applied LSTM on stock prices in order to perform one-step ahead forecast, as well as theoretical part, Chapter \ref{sec:nnmf} in which the mathematical formalization of NNs and especially, RNNs and gated RNNs are explained in section \ref{sec:nnmf}.

\clearpage

% CHAP3
\clearpage
\chapter{Implementation: Results with Analysis}
\clearpage
\label{subsec:implmnt}

In the previous chapter, after introducing deep learning, we elaborated on the structure of neural networks until we reached a particular kind of them, namely, LSTMs, and we will continue developing and formalizing this structure through a mathematical framework in the last chapter \ref{sec:nnmf}.

In this chapter, we provide results of implementing LSTM (long-short term memory) on two prices, namely, Goldman Sachs (GS) and General Electric (GE) to forecast one-step ahead of each stock's price. Two other stocks that are correlated with GS, namely, JPMorgan and Morgan Stanley were added as feature to GS, Also, auxiliary features were added to improve the model's accuracy. In the end, the ARMA (auto-reggressive integrated moving average) model is also applied so as to serve as a benchmark. 
The sections of this chapter are organized as follows: \\

\ref{subsec:data} \textit{Data}: In the first section, after describing the preprocessing methods we used to prepare data for our model, we will explain our feature selection procedure and then illustrate each feature's importance by XGBoost.

\ref{subsec:methodology} \textit{Methodology}:
In this section, we will point out the models we applied on the stock data and we will explain how the models process data and forecast. We then pose the challenges we confronted in each model.

\ref{subsec:optim} \textit{Parameter Optimization Algorithm}:
In this section, we will describe two parameter optimization algorithms that we used in LSTM (SGD, RMSProp), along with their advantages and disadvantages.

%\textit{Hyperparameter Selection Procedure}
%In this section, we will explain the steps we took to %choose hyperparameters.
\ref{subsec:results} \textit{Results and Discussion}:
In this section, we put together figures, accuracy, and the diagrams of test vs prediction and then we evaluate our empirical results and draw deductions and provde analysis based on them. 

\ref{appx} \textit{Appendix}:
In this section, supplementary information is brought to provide further details of different implementation steps. Each part of this subsection is referenced throughout this section. 

\ref{conc} \textit{Conclusion}:
In the last section, we summarize the purpose of this section and prepare the reader for the rest of this thesis.

\section{Data} \label{subsec:data}

\subsection{Preprocessing}

One of the main assumptions of the model ARMA is that the data is stationary, a property which often is far from the characteristics of volatile stock prices. Therefore, before feeding prices into the model, we made attempts to make the time series of stocks stationary. The details of this are brought in \ref{appx:stationary}. \\

Before feeding features to LSTM model, a \textit{MinMaxScaler} method is utilized to normalize data between [0,1] via the following formula:
$$x\mbox{*}_t = (x_t - x_{min}) / (x_{max} - x_{min})$$

\subsection{Features} \label{subsec:features}
\subsubsection{Preparation}
One task in technical analysis involves gaining intuition about stocks from many sources, one of which might be technical indicators, e.g. mathematical calculation based on historic price, volume, etc. It would be enthralling that a machine learning model could gain some insight of its own from these indicators. To achieve this, a selected set of indicators of our target stock (GS) as well as correlated stocks (JPM and MS) are added to the features along with adjusted prices. Here is the list of features that we used:\\
\begin{itemize}
    \item \textit{ma7}: moving average with 7 window
    \item \textit{ma21}: moving average with 21 window
    \item \textit{12EMA}: exponential moving average with 12 window
    \item \textit{26EMA}: exponential moving average with 26 window
    \item \textit{MACD}: moving average convergence divergence = 12EMA - 26EMA
    \item \textit{20sd}: standard deviation with 20 window
    \item \textit{upper band}: ma21 + 20sd*2
    \item \textit{lower band}: ma21 - 20sd*2
\end{itemize}

In figure \ref{fig:ind}, diagram of the indicators is shown for all three stocks.

    \begin{figure}[H]
    \centering
    \includegraphics[width=.7\linewidth]{./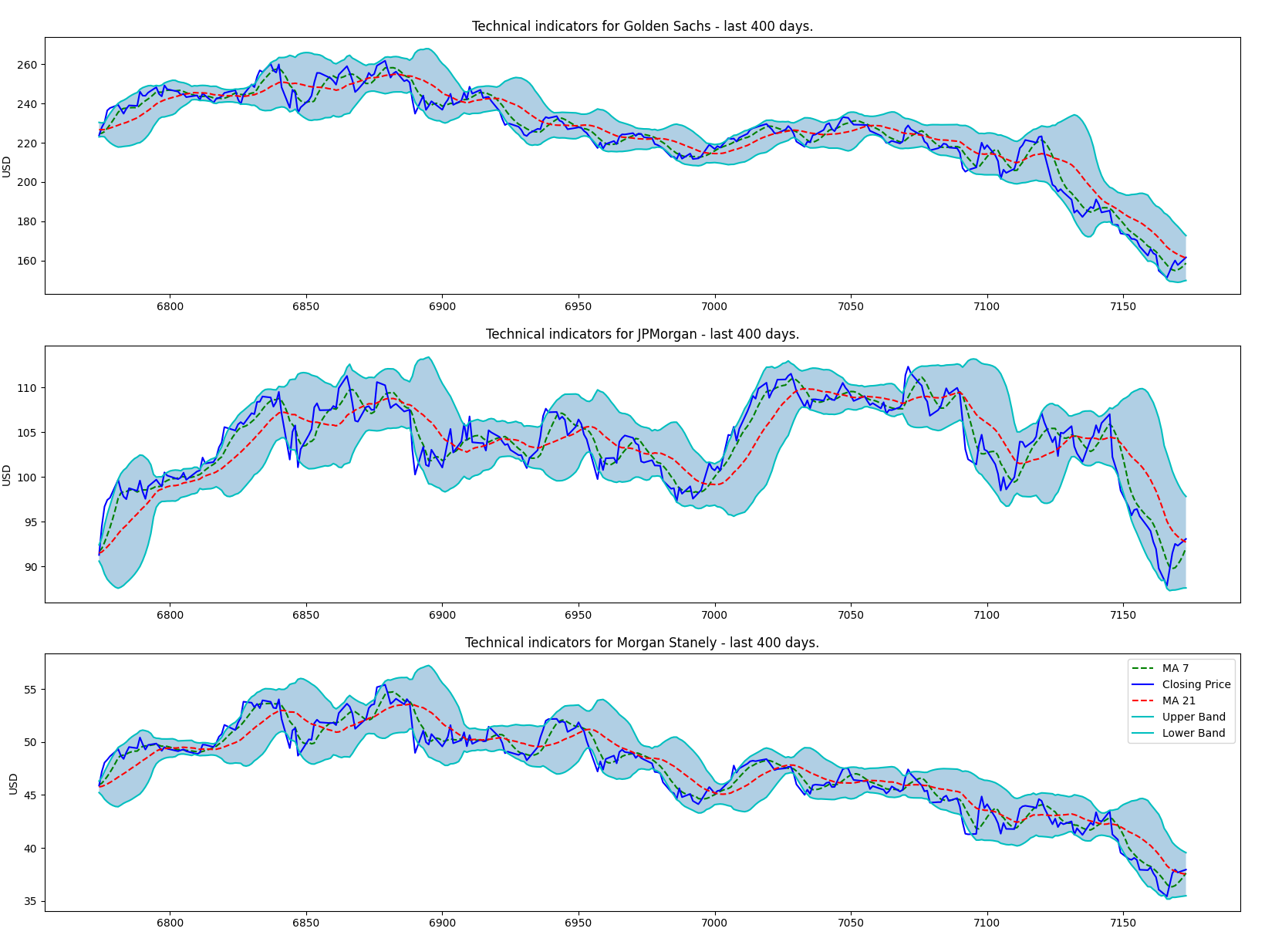}    
    \caption{Technical Indicators of price}
    \medskip
    \small
    \label{fig:ind}
    \end{figure}

In addition, Fourier transformation with different orders of components are fed to model so as to manifest long and short trends of time series to the model. See figure \ref{fig:fourt}.

    \begin{figure}[H]
    \centering
    \includegraphics[width=.7\linewidth]{./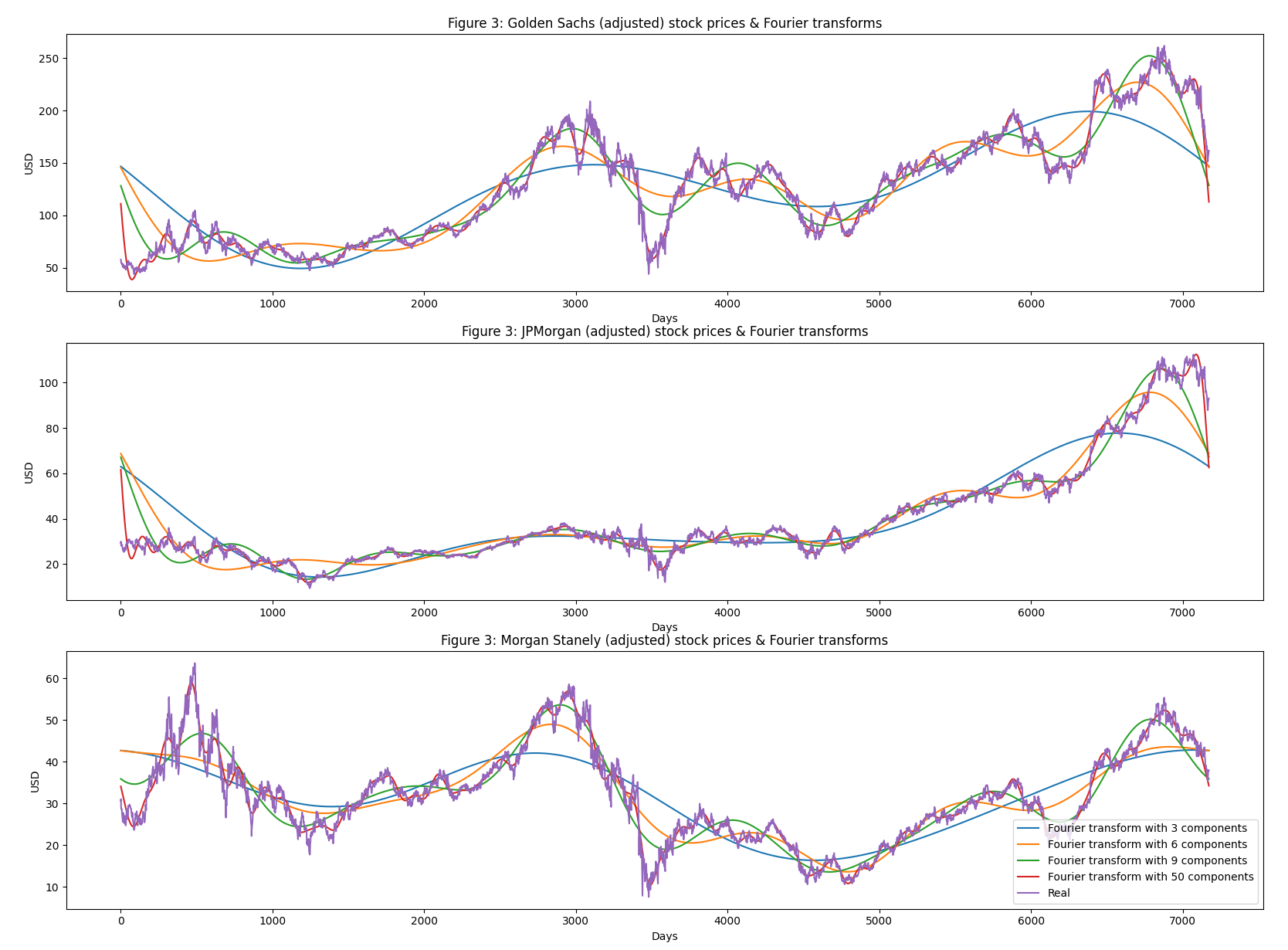}    
    \caption{Fourier transformations with different components as approximation of price}
    \medskip
    \small
    \label{fig:fourt}
    \end{figure}

\subsubsection{Feature Importance by XGBoost}
Measuring each feature's effectiveness in describing the data aid us in discarding obsolete ones and distinct the most effective ones. For this purpose, XGBoost is employed to determine each feature's importance with respect to the target feature, GS. XGBoost is an implementation of gradient boosted decision trees designed for enhancing speed and performance of more traditional decision trees. For more technical information on how feature importance is calculated in boosted decision trees, see [\cite{esl}, Chapter 10, Section 13].
Performance of XGBoost is shown in Figure \ref{fig:xgboost-loss} in terms of train error and validation error (test error). Also, Figure \ref{fig:xgboost-result} illustrates result of applying XGBoost such that each vertical line represents the importance of its corresponding feature.

 \begin{figure}[H]
        \centering
        \includegraphics[width=.6\linewidth]{./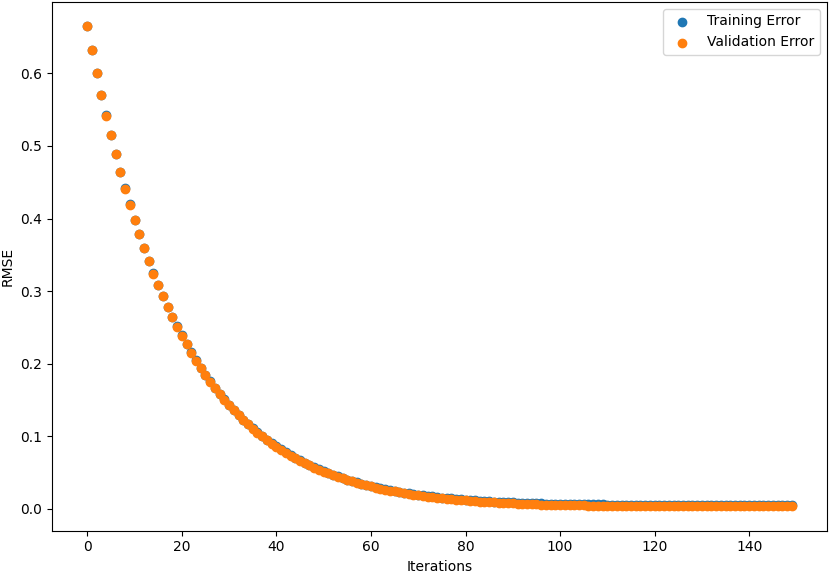}    
        \caption{Train and Validation Error of XGBoost model for feature importance}
        \medskip
        \small
        \label{fig:xgboost-loss}
    \end{figure}
 
 \begin{figure}[H]
        \centering
        \includegraphics[width=.8\linewidth,height=7cm]{./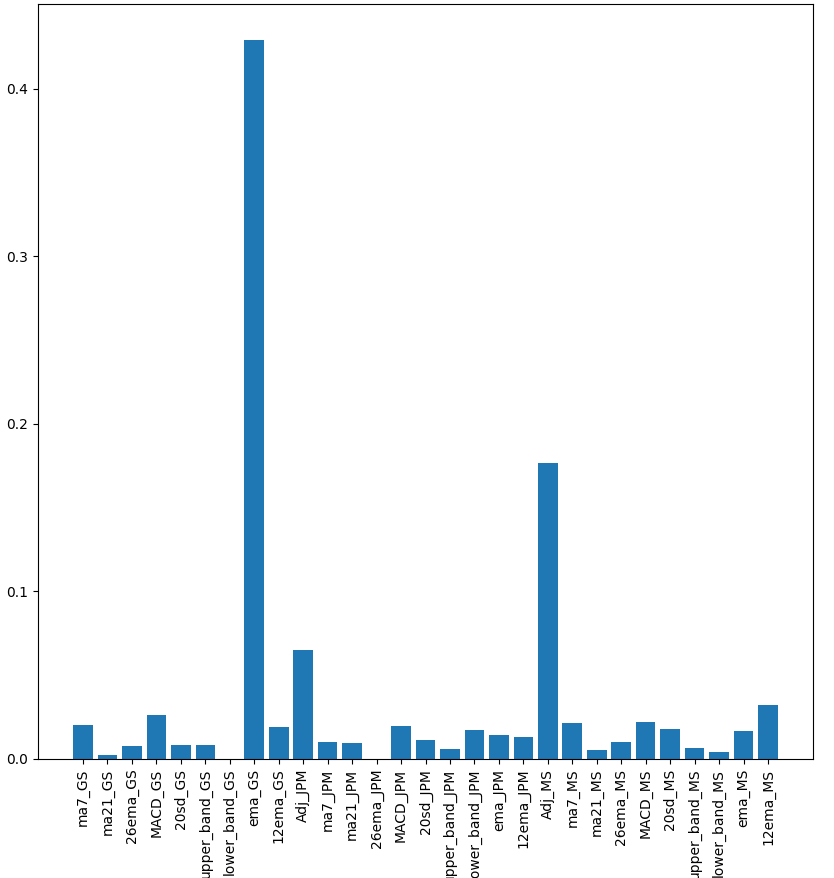}    
        \caption{Feature importance of technical indicators}
        \medskip
        \small
        \label{fig:xgboost-result}
    \end{figure}

From the figures, we can infer that among features, GS's exponential moving average is the most essential one and then, JPM and MS's prices have the most importance. This suggests that the exponential moving average (EMA) might have useful information. As a matter of fact, one of the ways to denoise data is to use EMA. This is elucidated in \ref{appx:EMA}.

\section{Methodology} \label{subsec:methodology}

\subsection{Recurrent Neural Networks (RNNs)}
The overall structure of RNNs is described in \ref{subsec:ann-rnn} and the precise mathematical formalization of them is explained in \ref{sec:nnmf-rnn}. Before elaborating on the implementation of LSTM, we will provide a terminology so that the next statements would be clear.\\

An \textit{epoch} elapses when an entire dataset is passed forward and backward through the neural network exactly one time. If the entire dataset cannot be passed into the algorithm at once, it must be divided into \textit{mini-batch}es. \textit{batch-size} is the total number of training samples present in a single \textit{min-batch}. An \textit{iteration} is a single gradient update (update of the model's weights) during training.  The number of iterations is equivalent to the number of batches needed to complete one epoch.
\textit{Samples or units} are the number of data, or say, how many rows are there in the dataframe.

The mentioned terms are valid for any neural networks. On the other hand, some other terms that are exclusive for recurrent neural networks should be explained.

Since we used the \code{Keras} library, we will explain the details of input shape throughout its framework. The input data of Keras' LSTM layer receives a 3D array as input which comprises three components: \textit{batch-size, time-step}, and \textit{input-dimension} respectively. Batch-size is already defined and \textit{time-step} will be defined shortly. \textit{input-dimensions} is equivalent to number of features.

See figure \ref{fig:lstminput} for visualization of input's shape.

  \begin{figure}[H]
        \centering
        \includegraphics[width=0.5\linewidth]{./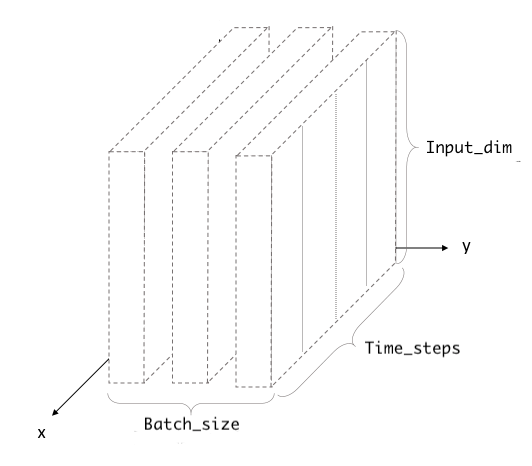}
        \caption{Input shape in LSTM network}
        \medskip
        \small
        \label{fig:lstminput}
    \end{figure}

One should meticulously reshape the input into a specific form of array, based on amount of \textit{time-step}. To illustrate this in the simplest form, suppose that the input is single sequence of $[10, 20, 30, 40, 50, 60, 70, 80, 90]$. It can be divided into multiple input/output patterns called samples, where three \textit{time-step}s are used as input and one \textit{time-step} is used as output for the one-step forecasting that is being learned. The data would appear as follows:

\begin{center}
 \begin{tabular}{||c c||} 
 \hline
 $X,$ & $y$ \\ [0.5ex] 
 \hline\hline
 10, 20, 30 & 40 \\ 

 20, 30, 40 & 50 \\

 30, 40, 50 & 50 \\
 ... & \\
 \hline
\end{tabular}
\end{center}

In the aforementioned example, \textit{time-step} is 3 and the type of of forecasting was one-step ahead. It is noteworthy to state that multi-step forecasting is also feasible by LSTM in which multiple days are predicted. However, our approach was one-step ahead forecasting. As shown in the example, LSTM puts a window of size \textit{time-step} on each data on \textit{mini-batch} and it shifts until it reaches the last value inside the \textit{mini-batch}. Each \textit{time-step} is a single occurrence of the cell in the unfolded graph of recurrent network. Number of \textit{time-step}s is denoted by $\tau$ as in \ref{subsec:ann-rnn}. In other words, \textit{time-step} is the number of units (days in our case) back in time that we expect the model to consider within the window while predicting the next day right after the window. For instance, if \textit{time-step} were one, the model would become equivalent to that of a feedforward since it merely uses previous output (today) to predict the next one (tomorrow). In effect, changing \textit{time-step}s would affect the memory of the model Consequently, each \textit{mini-batch} will be processed (and some of its values will persist through future steps) according to the stated steps.\\
For a more detailed explanation and example of RNN input shape, see \ref{appx:keras-rnn}

Once all features stated in \ref{subsec:features} are stacked up into a single dataframe, chunks of time series data will be created that are divisible by \textit{batch-size} and split them to train and test, we construct a 3D array in the form that we explained so as to feed it into the model. We used 80 percent of data for train and 20 percent for test.

 In figure \ref{fig:stocks}, the plot of all three stocks are illustrated along with a vertical split line that outlines the boundary between train and test data.
 \begin{center}
     
  \begin{figure}[H]
        \centering
        \includegraphics[width=0.9\linewidth]{./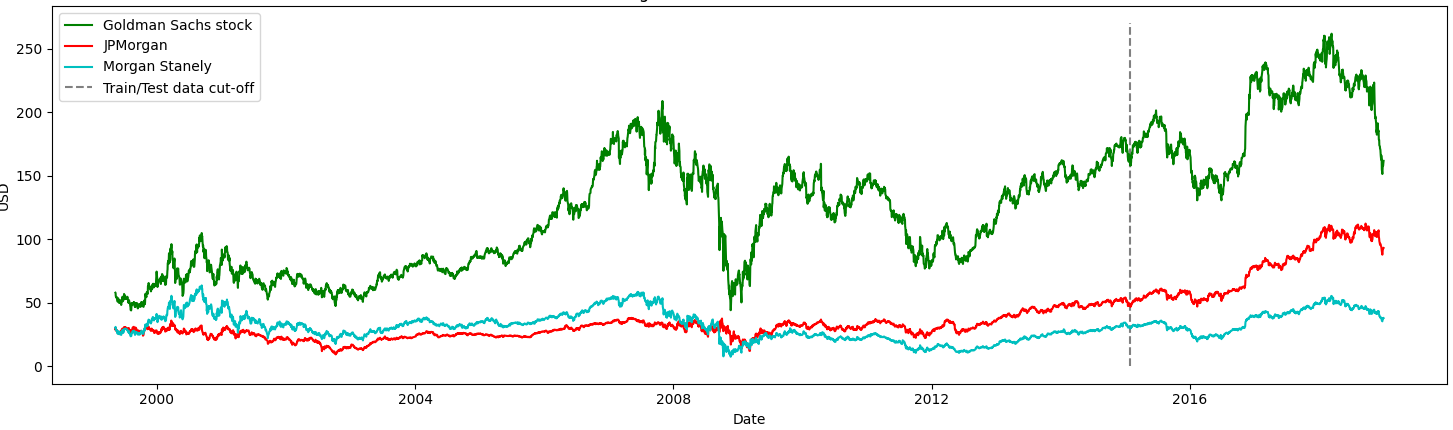}
        \caption{Daily adjusted prices of 3 stocks}
        \medskip
        \small
        \label{fig:stocks}
    \end{figure}
 \end{center}
 
The model attempts to minimize loss function, which is mean squared error (MSE) in our case. In each epoch, both training error and validation error are measured. The desired behavior would be a rather steady decrease in validation error even though it may shift upwards as some epochs. 

The discrepancy between prediction and test data will be also measured by R2 score loss after the fitting and prediction were done. 

A technique called \textit{Early Stopping} is also used so that an excessive number of epochs that would not lead to a reduction in validation error will not occur. More precisely, the minimum of previously obtained validation errors is computed at each epoch. Then, provided that this minimum wouldn't decrease after some certain amount of epochs, the model will stop automatically.
Minimizing the loss function is done by two optimization algorithms: Stochastic Gradient Descent and RMSProp, both of which are explained in \ref{subsec:paramopt}.

The main concern would be selecting hyperparamters of the model, i.e. parameters that is not tuned by the loss function and they are at least partly decided by the experimenter and determining them may also rely on prior knowledge and subjective experience.
 
 The hyperparameters that we needed to determine in our case were as following:\\
 Number of layers, number of neurons in each layer, learning rate, number of \textit{epoch, batch-size}, and \textit{time-step}. Moreover, we put activation functions in  the last and next to last layer. 
In table \ref{tab.3.1} our selection of hyperparameters is shown.

\begin{table}[H]
	\centering
	\begin{center}
		\caption{LSTM Hyperparameters}
		\renewcommand{\arraystretch}{2}
\begin{tabular}{|c|c|c|c|c|c|c|}
\hline
Data & \makecell{No. of \\Layers} & \makecell{No. of Neurons\\(in each layer)} & Learning Rate & Epoch & Batch-size & Time-step\\
\hline\hline
Closing Price & 4 & 100,60,20,1 & 0.0001 & 300 & 20 & 60
\\ 
\hline
Return of Price & 4 & 100,60,20,1 & 0.0001 & 300 & 20 & 90
\\
\hline
% etc. ...
\end{tabular}
	
		\label{tab.3.1}
	\end{center}
\end{table}

In the last layer (output layer) and next to last layer, we chose \textit{sigmoid} and \textit{relu} as activation function, respectively. The \textit{stateful} option was set to 'True' in \textit{Keras}'s LSTM layer. For more details on how data is processed in \textit{Keras}'s LSTM, see \ref{appx:keras-rnn}.

\subsection{Auto Regressive Moving Average (ARMA)}
The ARMA model is obtained by summing two models, namely, auto-regressive and moving average, both of which are linear models. 

\subsubsection{Linear Models in Time Series}
A time series $r_t$ is said to be linear if it has the following form:
\begin{equation} \label{eq:linearts}
r_t = \mu + \sum_{i=0}^{\infty} \psi_i a_{t-i},
\end{equation}

where $\mu$ is the mean of $r_t, \psi_0=1$ and $\{a_t\}$ is a sequence of i.i.d random variables with mean zero and a well-defined distribution (i.e., $\{a_t\}$ is a white noise series). The mean and variance of \ref{eq:linearts} can be obtained as
$$E(r_t) = \mu, \: Var(r_t) = {\sigma_a}^2,$$

${\sigma_a}^2$ is the variance of $a_t$. Because $Var(r_t) < \infty, \{ {\psi_i}^2 \}$ must be a convergent sequence, that is, ${\psi_i}^2 \rightarrow 0$ as $i \rightarrow \infty$

\subsubsection{AR(p) Model} 
An AR(p) model has the following form:
$$r_t = \phi_0 + \phi_1 r_{t-1} + \phi_1 r_{t-1} + \cdots + \phi_p r_{t-p}+ a_t,$$

where $p$ is a non-negative integer and $\{a_t\}$ is assumed to be a white noise series with mean zero and variance
${\sigma_a}^2$. This model is in the same form as the well-known simple linear regression model in which $r_t$ is the dependent variable and $r_{t−1}$ is the explanatory variable.

\subsubsection{MA(q) model} 
There are several ways to introduce MA models. One approach is to treat the model as a simple extension of a white noise series. Another approach is to treat the model as an infinite-order AR model with some parameter constraints. \cite{fts} proves that the two approaches are equivalent. We will adopt the first approach. 
An MA(q) model has the following form:
$$r_t = a_t -\sum_{i=1}^{q} \theta_i a_{t-i},$$

where $\{a_t\}$ is a white noise series.

\subsubsection{ARMA(p,q)} 
A general ARMA(p,q) model has the following form:

$$r_t = \phi_0 + \sum_{i=1}^{p} \phi_i r_{t-i} + a_t - \sum_{i=1}^{q} \theta_i a_{t-i}$$

where $\{a_t\}$ is a white noise and $p$ and $q$ are nonnegative integers. Therefore, ARMA has both AR and MA as its components.

There are two matters that should be considered when using ARMA model, one of which is that ARMA assumes the data to be stationary and the other one is determining $p$ and $q$. Both matters are discussed in \ref{appx:stationary} and \ref{appx:stationary} respectively. 

Exponential smoothing (exponential moving average) (see \ref{appx:EMA}) and ARIMA models are the two most widely used approaches to time series forecasting. While the former is based on a description of the trend and seasonality in the data, ARIMA models aim to describe the autocorrelations in the data.

 \section{Parameter Optimization Algorithm} \label{subsec:optim} \label{subsec:paramopt}
  \subsection{Stochastic Gradient Descent} 
  Gradient Descent is a numerical optimization algorithm capable of finding solutions to a wide variety of problems. The gist of this algorithm is to tweak parameters iteratively so as to minimize a cost function (loss function, error function). To shed light on how it works, imagine that a person is stuck in a mountain within a dense fog but he is able to sense the slope of the ground below feet. Under this circumstance, a possible strategy is to follow the direction of the steepest slope. This strategy is the one used in GD, since it also measures the local gradient of the cost function with respect to the parameter vector $\theta$, and it navigates the direction of the descending gradient. The stop condition is when the gradient reaches zero. Basically, reaching the minima of function relies on two main decisions: the direction of movement and magnitude of movement at each step. GD aid in making these decisions effectively. The following formula provides the gist of this algorithm:
  
 \begin{figure}[H]
        \centering
        \includegraphics[width=.5\linewidth]{./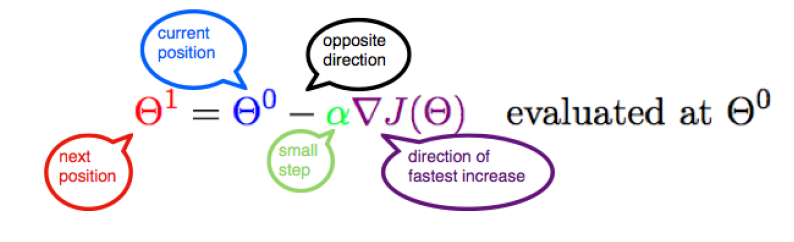}    
        \caption{A single iteration of Gradient Descent}
        \medskip
        \small
        \label{fig:gd}
    \end{figure}

Therefore, by calculating partial derivatives of each parameter, we are able to iterate through our datapoints using the value of newly updated parameters at each step. The new calculated gradient indicates the slope of our cost function at our current position and the direction we should move to update our parameters. The size of our update is controlled by the learning rate $\alpha$.\\

In algorithm \ref{alg:backppg}, Gradient Descent is applied to a general neural network. The difference between Gradient Descent (GD) and Stochastic Gradient Descent is that in the former all observations will contribute in computing $\nabla_J$ in \ref{fig:gd}. In the latter, however, a randomly selected subset (minibatch) of observations is used. SGD can be regarded as a stochastic approximation of gradient descent optimization because it replaces the actual gradient (calculated from the entire data set) by an estimate thereof (calculated from a randomly selected subset of the data). In the following figures, paths taken by GD and SGD are compared:

\begin{figure}[H]
    \centering
    \begin{minipage}{0.45\textwidth}
        \centering
        \includegraphics[width=.5\linewidth]{./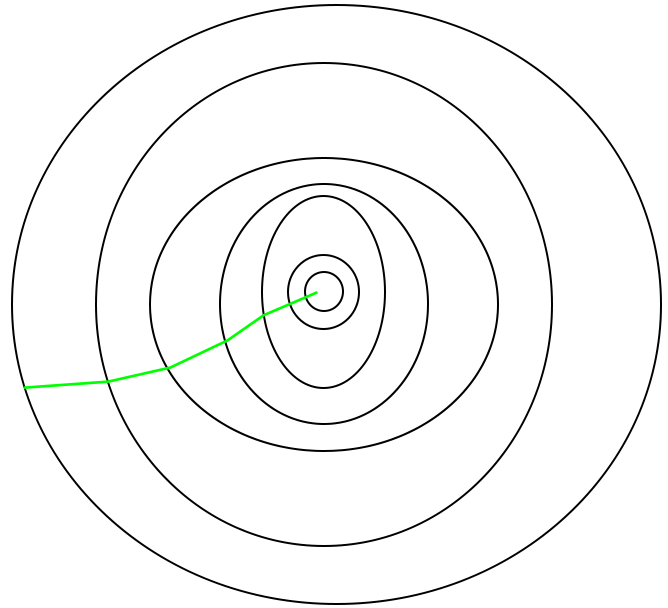}    
        \caption{Path taken by Gradient Descent}
    \end{minipage}\hfill
    \begin{minipage}{0.45\textwidth}
        \centering
        \includegraphics[width=.5\linewidth]{./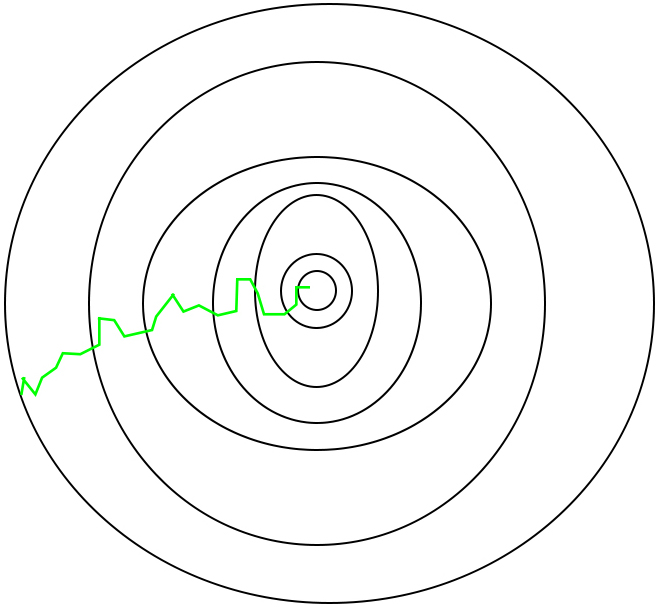} 
        \caption{Path taken by Stochastic Gradient Descent}
    \end{minipage}
\end{figure}

Figures imply that as SGD is generally noisier than typical Gradient Descent, it naturally took a higher number of iterations to reach the minima, because of its randomness in its descent. Even though it requires a higher number of iterations to reach the minima than typical Gradient Descent, it is still computationally much less expensive than typical Gradient Descent. So, in most scenarios, SGD is preferred over Batch Gradient Descent for optimizing a learning algorithm.

\subsection{RMSProp Optimizer}
 
 In the standard GD algorithm, it takes larger steps in one direction and smaller steps in another direction which slows down the algorithm. For instance, it takes larger steps in the $y$ axis in comparison to $x$ axis. Momentum resolves this issue by restricting oscillation in one direction so that algorithm converges faster. It provides the freedom to use higher learning rates with less chance of confronting the overshooting challenge. To fully appreciate how it is done, it is noteworthy to pinpoint that it employs the exponential moving average in its architecture. Therefore, we refer to EMA in \ref{appx:EMA} as the two concepts are closely tied.\\ 
 The GD with momentum updates parameters as follows:
 
 $$V_t = \beta V_{t-1} + (1-\beta) \nabla J_\theta (x,y;\theta)$$
 $$\theta = \theta - \alpha V_t,$$
 
 where $J$ is loss function and $\alpha$ is the learning rate.

In the following figures, paths taken by SGD with momentum and SGD without momentum are compared:

\begin{figure}[H]
    \centering
    \begin{minipage}{0.45\textwidth}
        \centering
        \includegraphics[width=.7\linewidth]{./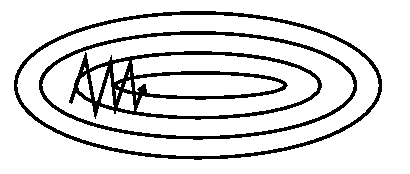}    
        \caption{Path taken by SGD without Momentum}
    \end{minipage}\hfill
    \begin{minipage}{0.45\textwidth}
        \centering
        \includegraphics[width=.7\linewidth]{./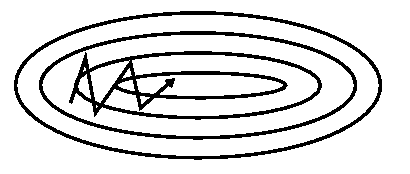}    
        \caption{Path taken by SGD with momentum}
    \end{minipage}
\end{figure}

Figures imply that momentum leads to accelerating gradients in the right direction.

Momentum GD almost always performs better than GD; yet we will utilize another optimizer, RMSProp, which is quite akin to momentum GD. The RMSProp also restricts the oscillations in the vertical direction in a similar manner but different in the way it calculates the gradient. It updates the parameters as follows:

 $$V_t = \beta V_{t-1} + (1-\beta) \nabla J_\theta (x,y;\theta)$$
 $$\theta = \theta - \alpha \frac{\nabla J(x,y;\theta)}{\sqrt{V_t} + \epsilon},$$
 
 We trained our model on both SGD and RSProp optimizers. In all scenarios, RMSProp was superior.

 %\begin{figure}[H]
 %       \centering
 %       \includegraphics[width=.7\linewidth]{./Figures/implement/%fig11.png}    
 %       \caption{}
 %       \medskip
 %       \small
 %%       \label{fig:xgboost-result}
  %  \end{figure}

%\subsection{Hyperparameter Selection Procedure} %\label{subsec:hypeoptim}

%\subsubsection{Bayesian Optimizer} 

\section{Results and Discussion} \label{subsec:results}
\subsection{LSTM}
\subsubsection{Forecasting Lag Conundrum in RNN} 
 A certain predicament arises in sequence prediction when using RNNs. We begin explaining it by displaying the result of one of our earlier attempts to forecast Goldman Shack stock. The model and procedures we used in this earlier attempt are quite identical to those of our latest models. We also used the features stated in \ref{subsec:features} and the target was the adjusted prices.

 In figure \ref{fig:close-plot}, train error and test error from MSE loss is shown and in figure \ref{fig:close-trvd}, both prediction and test data are illustrated.
 
    \begin{figure}[H]
        \centering
        \includegraphics[width=.5\linewidth]{./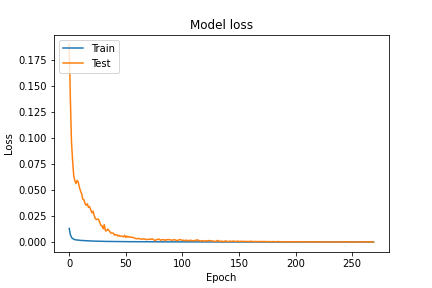}
        \caption{Train and Test Error of LSTM on adjusted prices}
        \medskip
        \small
        \label{fig:close-plot}
    \end{figure}

   \begin{figure}[H]
        \centering
        \includegraphics[width=.5\linewidth]{./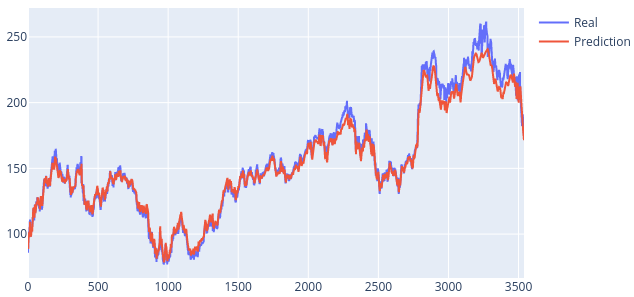}
        \caption{Test vs Prediction}
        \medskip
        \small
        \label{fig:close-trvd}
    \end{figure}

Judging from errors and the diagram of prediction vs test data, one may hastily infer that an almost impeccable forecasting is accomplished. Nevertheless, if we scrutinize our prediction by enlarging the graph's scale and zoom into days of the prediction, a phenomenon becomes salient. For instance, diagram of days 2100-2500 is illustrated in figure \ref{fig:close-zoom1}
 
    \begin{figure}[H]
        \centering
        \includegraphics[width=.5\linewidth]{./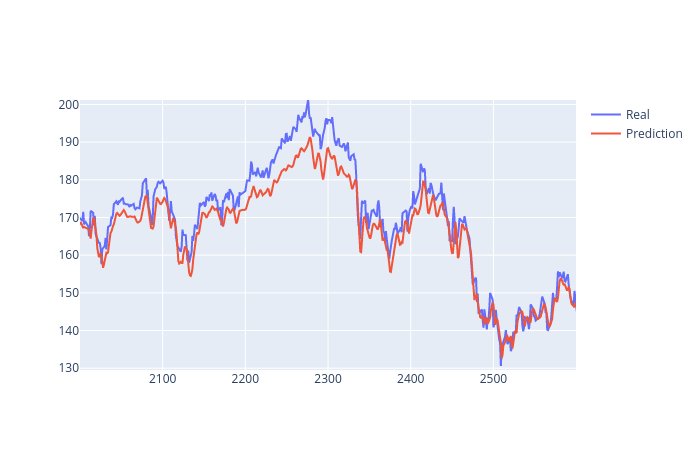}
        \caption{Test vs Prediction}
        \medskip
        \small
        \label{fig:close-zoom1}
    \end{figure}
    
In figure \ref{fig:close-zoom2}, we zoom further and observe days 2300-2500.

\begin{figure}[H]
    \centering
    \includegraphics[width=.5\linewidth]{./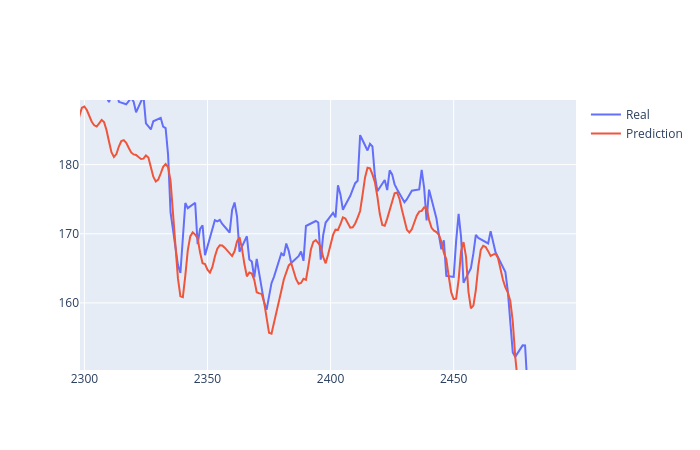}
    \caption{Test vs Prediction}
    \medskip
    \small
    \label{fig:close-zoom2}
\end{figure}
 
The mean squared error of test data and prediction is 0.00067, which also would seem excellent.

As soon as we narrow our scope to shorter days, a lag in forecasting becomes salient. We call this phenomenon "forecasting lag". One may wonder whether this lag in prediction matter and how it can be avoided. That's the conundrum we are facing. Note that if a person for example claims to predict the tomorrow's price of stock accurately and actually does it by waiting for tomorrow to come and after knowing the price, report it as his prediction, this prediction is futile. So the lag is indeed a problem. It exclusive for RNNs and it makes forecasting more challenging as this underperformance is not readily detectable from the standard loss measures of train and test errors. This challenge exists because of RNNs' recurrent structure (and the persistent of data which stems from model's memory) and also the essence of stock data, in the sense that although the model surmounts in following the dynamics of data, it might forecast by waiting for the input of future sequences to be fed to the model so as to set them as the forecast value of earlier days. By means of this, memorizing the future inputs and assigning them as output (prediction) for earlier inputs (a day or two days earlier for instance) creates a devious forecast result. Since stock data has high frequency and oscillates rather slow, the lag is not readily detectable.

There are some ways to address this particular pitfall:\\
\begin{enumerate}
    \item It is preferable to find a loss that is less subject to this issue. Both MSE and R-squared losses fail in detecting this problem. One may transform the forecasting from a regression type of problem employ to a classification one, perhaps by setting different classes based on the movement of stock. Once transformed, classification measures such as ROC and AOC curve become available.
    \item One should be cautious about the features. If a feature uses index later than the current and previous inputs at each state, then it reveals future values. An averaging that uses future values of sequence for computing is an example of this.
    \item Adjusting the parameters and hyperparameters may use a model devoid of the problem. For example, changing optimizer or select different hyperparameters might coerce the model not to memorize earlier inputs or at least use them only to the extent that the problem doesn't arise.
    \item Altering the representation of the target feature (price in our case) might solve the issue. As already discussed in \ref{subsec:rep}, the representation of input has a profound effect on the performance of the model. In many stock forecasting works, designate closing price or adjusted price of stock as the target feature. This might hinder the model's performance and the lag arise as price of each day lies close to that of previous and next day. Thus, instead of closing prices, one can use \textit{price return} as the target feature.
\end{enumerate}

\subsubsection{Price Return}
\begin{quote}
"Campbell, Lo, and MacKinlay (1997) give two main reasons for using returns. First, for average investors, return of an asset is a complete and scale-free summary of the investment opportunity. Second, return series are easier to handle than price series because the former have more attractive statistical properties." \cite{fts}
\end{quote}

We have used \textit{One-Period Sample Return} as feature which is defined as follows.

$$R_t = \frac{P(t) - P(t-1)}{P(t-1)} = \frac{P(t)}{P(t-1)} - 1$$
or
$$P_t = P_{t-1} (1 + R_t),$$

where $P(t)$ is the price of an asset at time index $t$ and $R_t$ is the return. The second formula is more intuitive as it can be interpreted as adjusting the previous price based on today's return.

In contrast with the closing or adjusted prices (see figure \ref{fig:stocks}), price return does not manifest trend (mean) of data. In effect, the dynamic of price return resembles that of a white noise with no sign of any pattern. See figures \ref{fig:return1} and \ref{fig:return2}.

\begin{figure}[H]
    \centering
    \includegraphics[width=.9\linewidth]{./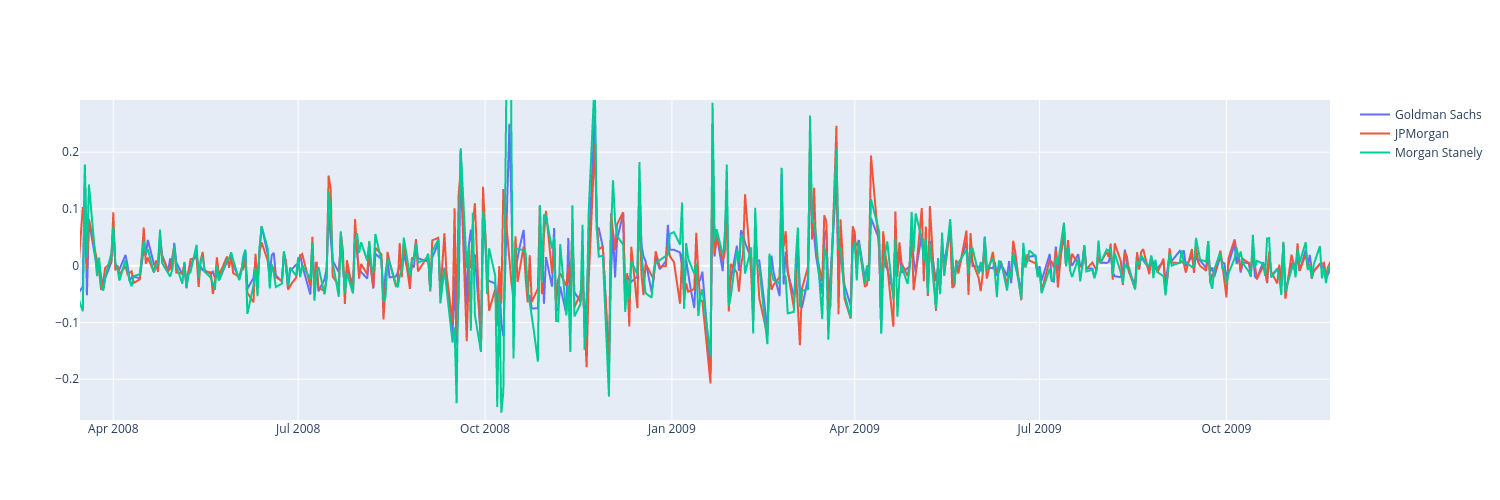}
    \caption{Price returns of three stocks in 2008-2009}
    \medskip
    \small
    \label{fig:return1}
\end{figure}

 \begin{figure}[H]
    \centering
    \includegraphics[width=.9\linewidth]{./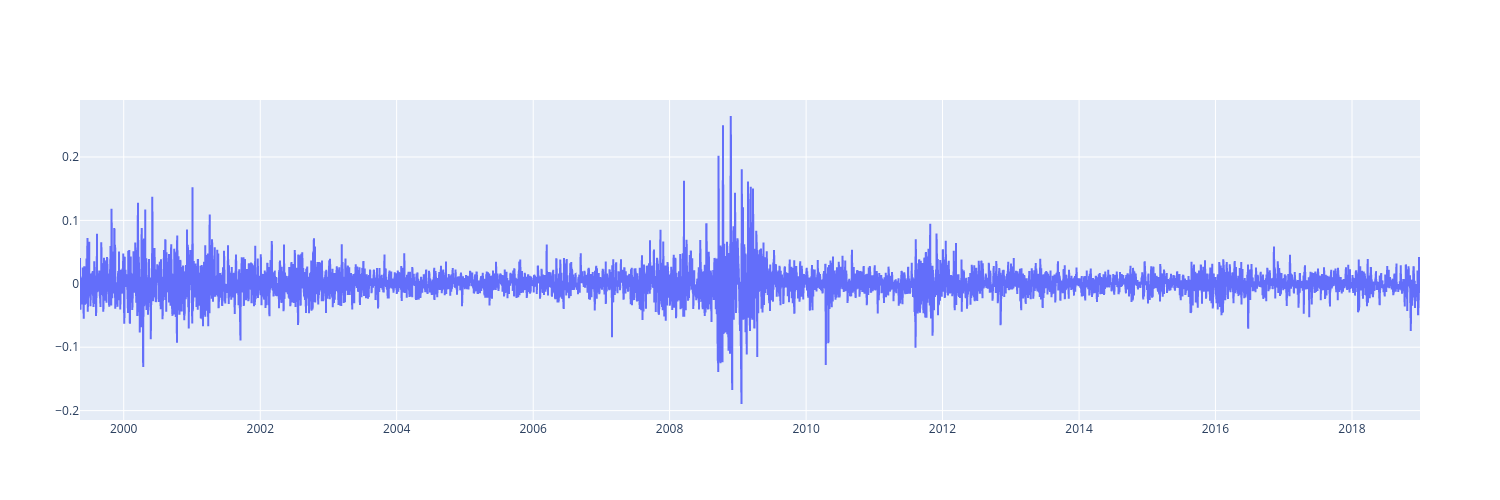}
    \caption{Price return of Goldman Sachs in 2000-2018}
    \medskip
    \small
    \label{fig:return2}
\end{figure}
 
Based on figures, price return seems a more prudent choice for target feature. Consequently, we trained our model on it.

Consider the General Electric (GE) closing prices shown in figure \ref{fig:GE-close} along with the ratio of train and test demarked by the vertical line.

  \begin{figure}[H]
        \centering
        \includegraphics[width=0.9\linewidth]{./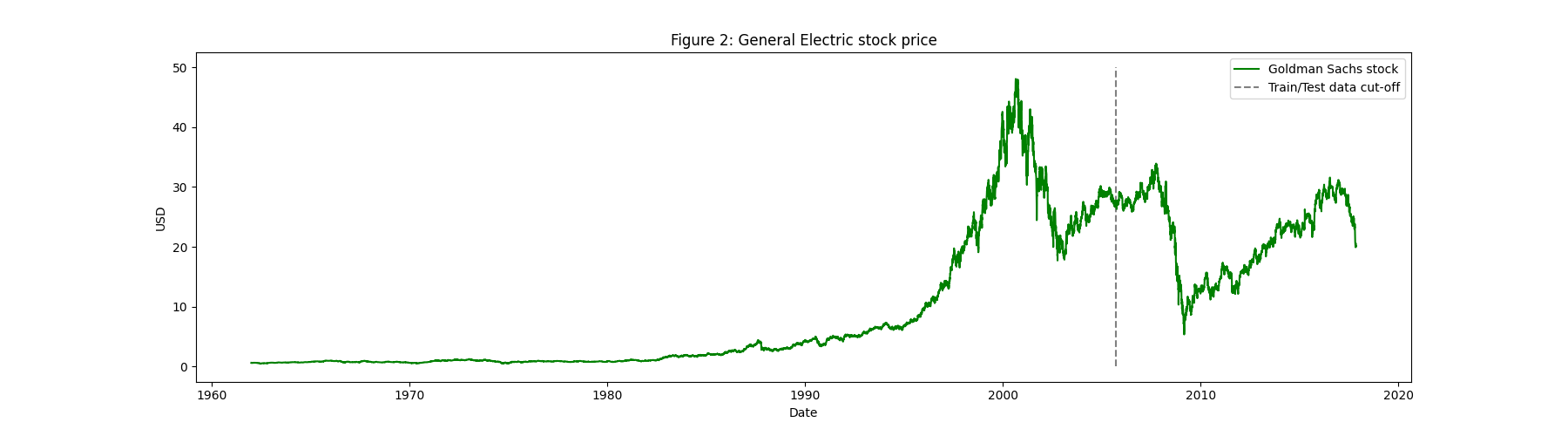}
        \caption{GE Closing Prices}
        \medskip
        \small
        \label{fig:GE-close}
    \end{figure}

At first we will report the result of applying LSTM on GE's closing price and then that on GE's return.
Applying LSTM on GE will yield the lagged forecasting again. In figure \ref{fig:GE-trainvar}, train and test errors are shown and also, in figure \ref{fig:GE-pred}, prediction is shown along with test data.

 \begin{figure}[H]
    \centering
    \includegraphics[width=0.5\linewidth]{./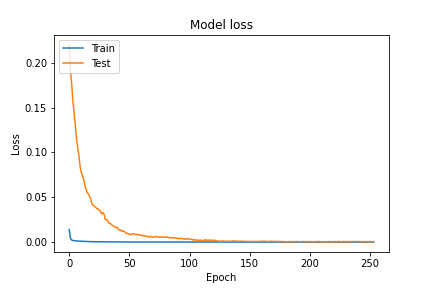}
    \caption{Model train and validation error}
    \medskip
    \small
    \label{fig:GE-trainvar}
\end{figure}

\begin{figure}[H]
    \centering
    \includegraphics[width=0.5\linewidth]{./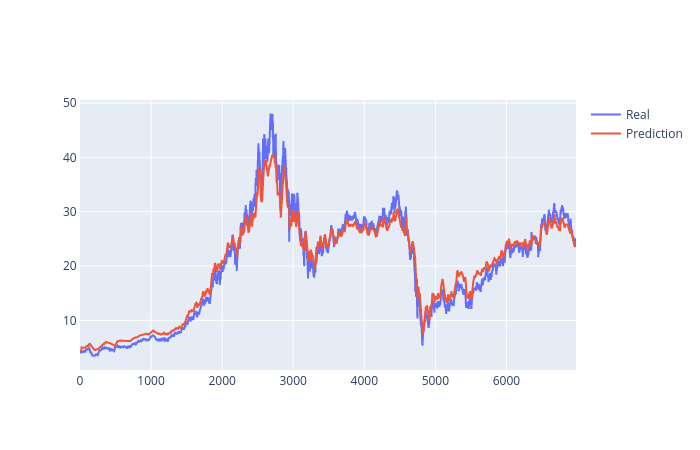}
    \caption{Test vs Prediction}
    \medskip
    \small
    \label{fig:GE-pred}
\end{figure}

By enlarging our graph's scale and zoom into days of the prediction, the lag appears again. See figures \ref{fig:close-zoom1} and \ref{fig:close-zoom2}

\begin{figure}[H]
    \centering
    \includegraphics[width=0.5\linewidth]{./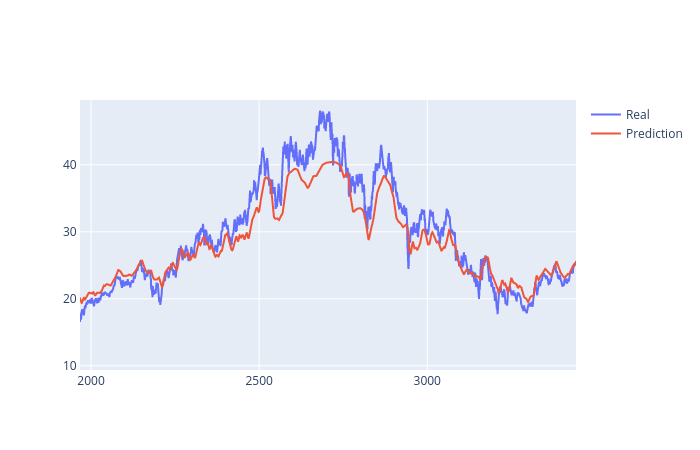}
    \caption{Test vs Prediction}
    \medskip
    \small
    %\label{fig:GE-close}
\end{figure}

\begin{figure}[H]
    \centering
    \includegraphics[width=0.5\linewidth]{./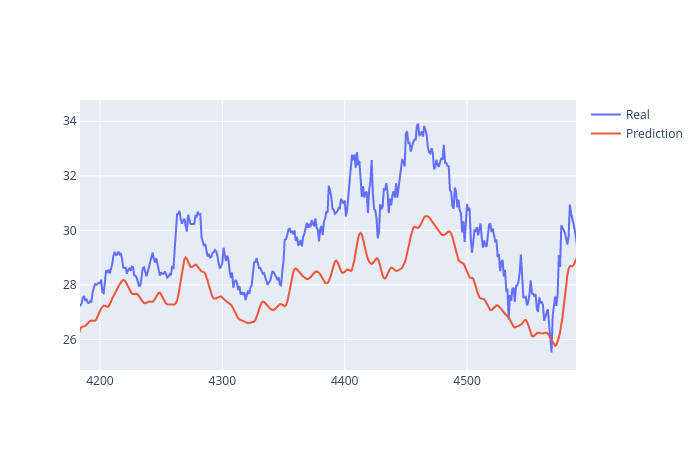}
    \caption{Test vs Prediction}
    \medskip
    \small
    %\label{fig:GE-close}
\end{figure}

The MSE and R-squared values of prediction are 0.0013767663656408873 and 0.9699375342029654 respectively.

Applying the exact same model on GE price return yields a result quite different than that on closing prices. In figure \ref{fig:GE-RET-predvstest}
 prediction is illustrated along with test data while in figure \ref{fig:GE-RET-trainval}, validation and train errors are illustrated.  
\begin{figure}[H]
    \centering
    \includegraphics[width=0.5\linewidth]{./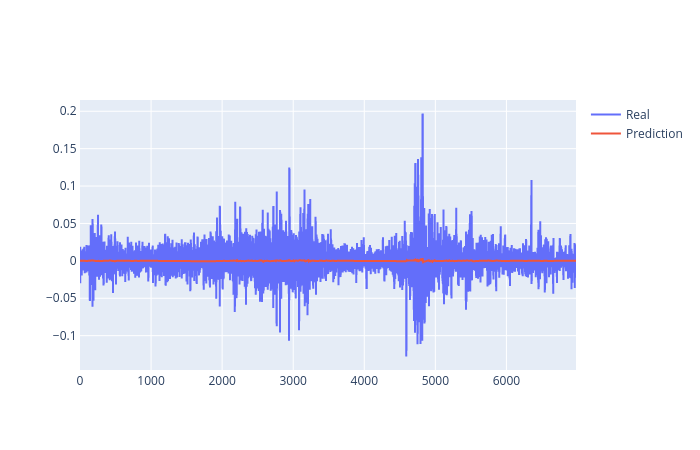}
    \caption{Test vs Prediction}
    \medskip
    \small
    \label{fig:GE-RET-predvstest}
\end{figure}

\begin{figure}[H]
    \centering
    \includegraphics[width=0.5\linewidth]{./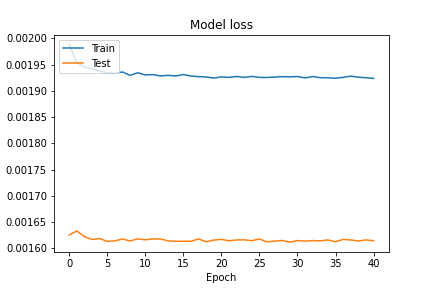}
    \caption{Test vs Prediction}
    \medskip
    \small
    \label{fig:GE-RET-trainval}
\end{figure}

For a more clear sight of prediction, look at the following figure.

\begin{figure}[H]
    \centering
    \includegraphics[width=0.5\linewidth]{./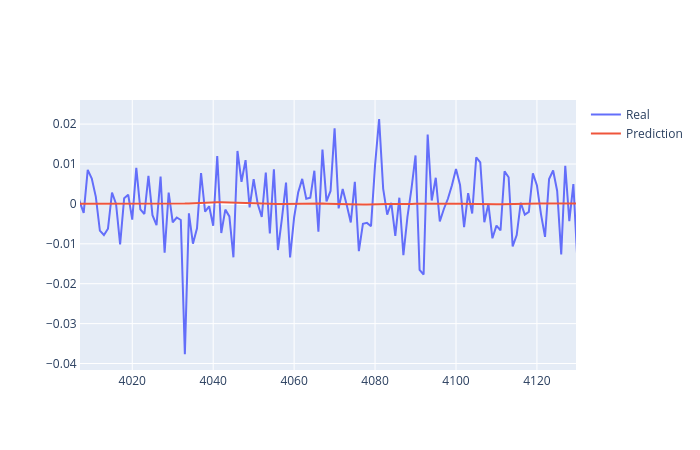}
    \caption{Test vs Prediction}
    \medskip
    \small
    %\label{fig:GE-close}
\end{figure}

MSE error of prediction and test is  0.002235389706528071.

\subsection{ARMA} 
Forecasting GS price with ARIMA with orders $(p = 5, q = 2)$ yields the following result:
 \begin{figure}[H]
        \centering
        \includegraphics[width=.5\linewidth]{./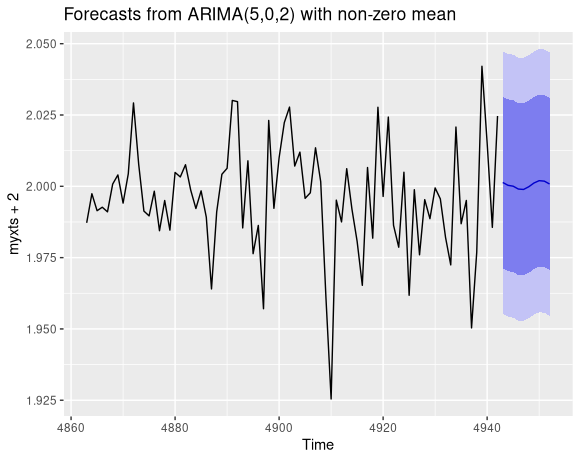}    
        \caption{ARIMA Forecasting with $h=10$}
        \medskip
        \small
    \end{figure}    
In figure \ref{fig:autoarima}, coefficients and error of fitting ARIMA is presented.
Residuals from the forecasting model along with its ACF are also illustrated in figure \ref{auto-arima-ret-res}. 

 \begin{figure}[H]
        \centering
        \includegraphics[width=.5\linewidth]{./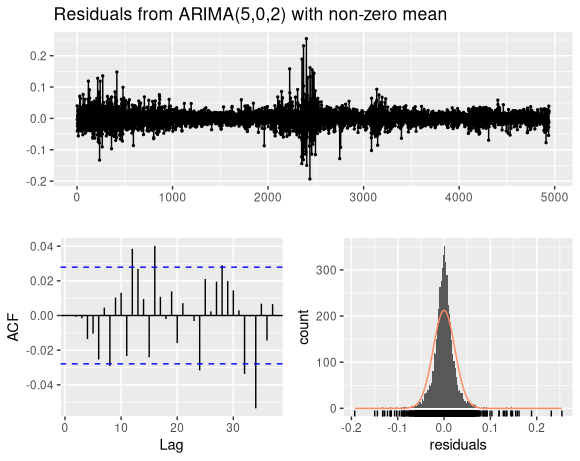}    
        \caption{Residuals}
        \medskip
        \small
        \label{auto-arima-ret-res}
    \end{figure}

%GS - RETURN - SGD
%MSE is 0.0012532564194297463 
%R2Score is 1.0046789536485532

\section{Appendix} \label{appx}

%\subsubsection{Loss Measures} \label{sec:losses}
%\textbf{Mean Squared Error}\\
%\textbf{R-squared Error}\\
%\textbf{AIC and BIC}\\

\subsection{ARMA: Creating stationary time series} \label{appx:stationary}

\subsubsection{Definition of Stationarity}
There are two kinds of stationarity: \textit{strictly stationary} and \textit{weekly stationary}. 
As the former is not in line with real-world datasets, we will only define the latter kind.

A time series is \textit{weakly stationary} if both the mean of $r_t$ and the covariance between $r_t$ and $r_{t-\ell}$ are time invariant, where $\ell$ is an arbitrary integer. More precisely, $\{r_t\}$ is weakly stationary if (a) $E(r_t) = \mu,$ which is a constant, and (b) $Cov(r_t,r_{t-\ell}) = \gamma_\ell$, which only depends on $\ell$. In practice, suppose that we
have observed $T$ data points $\{r_t | t=1,...,T\}$. The weak stationarity implies that the time plot of the data would show that the $T$ values fluctuate with constant variation around a fixed level.

Thus, time series with trends, or with
seasonality, are not stationary---the trend and seasonality will affect the value of the time series at different times. On the other hand, a white noise series is stationary---it does not matter when one observes it, it should look much the same at any point in time.
Some cases can be confusing---a time series with cyclic behavior (but with no trend or seasonality) is stationary.

To remove variations in mean (manifesting as seasonality and trend) and in variance (manifesting as shifts in height) to make the time series stationary, there is a general principle:
\begin{quote}
"The logarithm stabilizes the variance, while the seasonal differences remove the seasonality and trend." \cite{forepp}
\end{quote}

\subsubsection{Seasonal Plots} 
Before removing seasonality, we have to detect whether there it exists in data. For detecting seasonality, a practical approach is to observe seasonal plots. For this purpose, monthly data seasonal plots of monthly data are illustrated in linear and polar forms in figures \ref{fig:seas-month} and \ref{fig:seas-month-polar} respectively.

\begin{figure}[H]
    \centering
    \begin{minipage}{0.45\textwidth}
        \centering
        \includegraphics[height=6cm]{./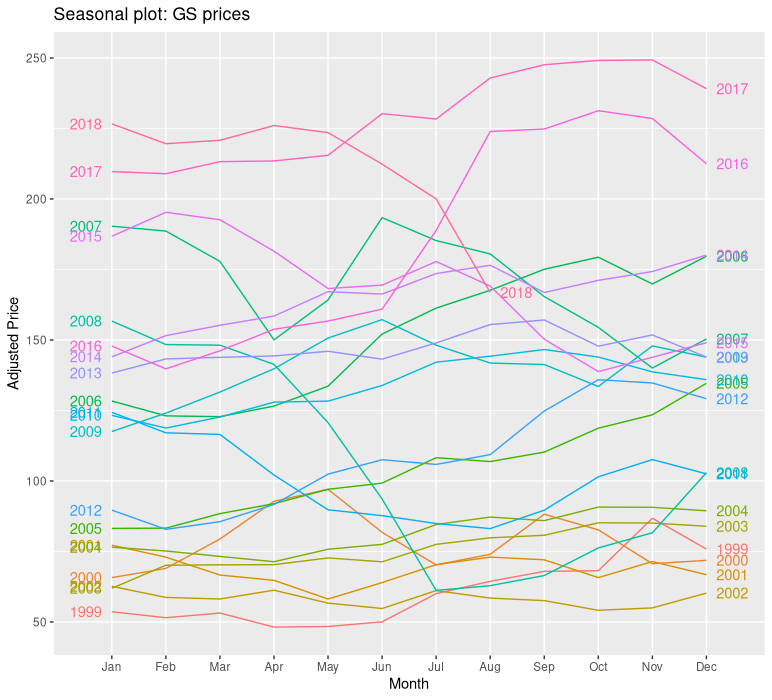} % first figure itself
        \caption{Seasonal Plot of GS}
        \label{fig:seas-month}
    \end{minipage}\hfill
    \begin{minipage}{0.45\textwidth}
        \centering
        \includegraphics[height=6cm]{./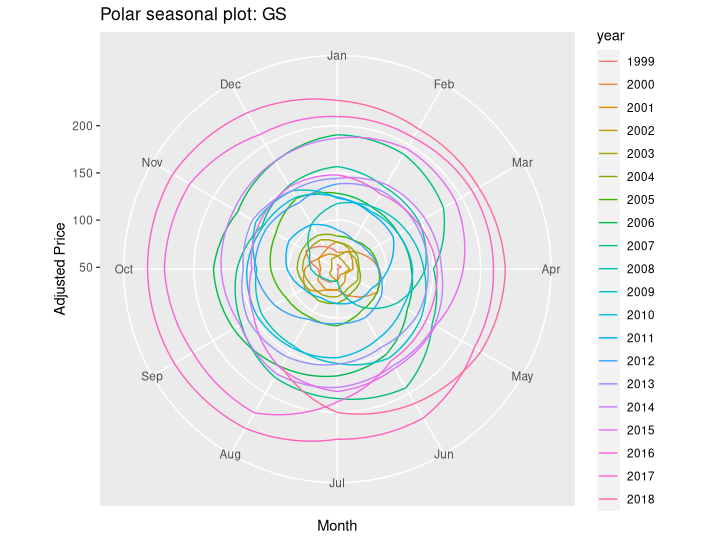} % second figure itself
        \caption{Polar seasonal plot of GS}
        \label{fig:seas-month-polar}
    \end{minipage}
\end{figure}

Based on the seasonal plots, we can conclude the price is not inclined to a certain month or season, meaning that there is no apparent alteration that would occur constantly over regular periods. Therefore, it seems no seasonal differencing (difference of terms that are more than 1 lag distant) is required.

\subsubsection{Unit Root Tests}
Now, we need to determine whether usual differencing (difference of each two consequtive term) is required or not. For this purpose ,we will use \textit{Unit root tests}. We will use log of prices (so that price's volatility would be time-invariant) as well as price returns for the test.

To test whether the log price of $p_t$ follows a random walk or a random walk with drift (time trend), we employ the models

\begin{subequations}
\begin{align}
p_t & = \phi_1 p_{t-1} + e_t \\
p_t & = \phi_0 + \phi_1 p_{t-1} + e_t, 
\end{align}
\end{subequations}

where $e_t$ denotes the error term. We consider the null hypothesis $H_0; \phi_1 = 1$ (non-stationary) versus the alternative hypothesis: $H_a : \phi_1 < 1$ (stationary). This is the well-known unit-root testing problem.
A number of unit root tests are available, which are based on different assumptions and may lead to conflicting answers. In our analysis, we use both \textit{Dickey and Fuller} test (Dickey and Fuller (1979)) and 
\textit{Kwiatkowski-Phillips-Schmidt-Shin} (KPSS) test (Kwiatkowski, Phillips, Schmidt, \& Shin, 1992). 

\subsubsection{DF Test}
The Dickey-Fuller test is testing if $\phi = 0$ in the following model of the data:

$$p_t = \alpha + \beta t + \phi p_{t-1} + e_t$$

which can be written as

$$\Delta p_t = p_t - p_{t-1} = \alpha + \beta t + \gamma p_{t-1} + e_t$$

where $p_t$ is the price of stocks and $e_t$ is error. It is written this way so that we can perform a linear regression of $\delta p_t$ against $t$ and $p_{t-1}$ and test if $\gamma$ is different from 0. if $\gamma = 0$, it implies that the data is a random walk process. Otherwise, if $-1 < 1 + \gamma < 1$, then the data is stationary.

There is also the Augmented Dickey-Fuller test which allows for higher-order autoregressive processes by including  $\Delta p_{t-q}$ in the model. The main concern is still whether $\gamma = 0$ or not.

$$\Delta p_t = \alpha + \beta t + \gamma  p_{t-1} + \delta_1 \Delta p_{t-1} + \delta_2 \Delta p_{t-2} + \cdots + \Delta p_{t-q}$$

The null hypothesis for both tests is that the data is non-stationary. We want to \textbf{reject} the null hypothesis for this test, hence a p-value that is equal or less than 0.05.

Now, consider the log price of GS. Testing for a unit root is relevant if one wishes to verify empirically that the prices follows a random walk with drift. In figure \ref{fig:df}, the result of applying DF test on the data is indicated:

 \begin{figure}[H]
        \centering
        \includegraphics[width=.7\linewidth]{./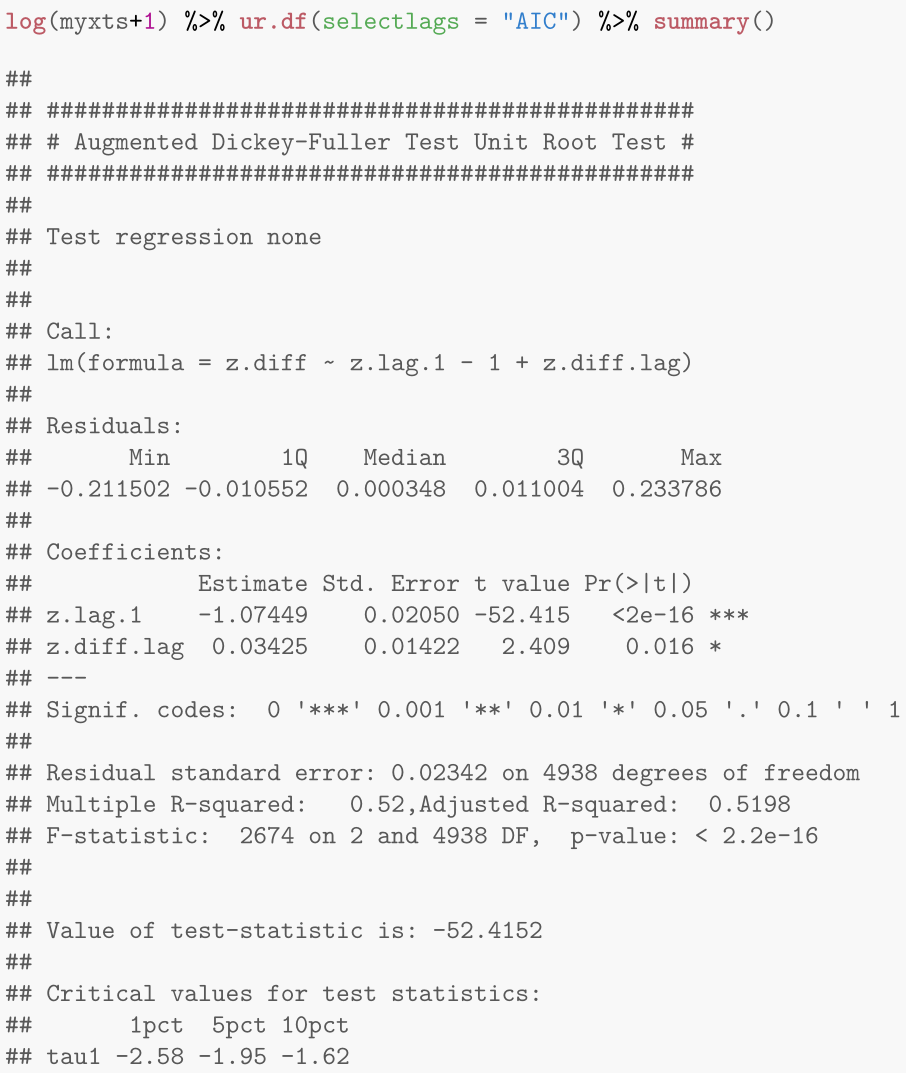}    
        \caption{Result of Dicky Fuller test on GS returns (implemented in \code{R})}
        \medskip
        \small
        \label{fig:df}
    \end{figure}

The coefficient part of the summary indicates that \code{z.lag.1} is different than 0 (so stationary).
Pay heed that the test statistic is \textbf{less} than the critical value for \code{tau3} at 5 percent. This implies that the null hypothesis is rejected at $\alpha = 0.05$, a standard level for significance testing.

\subsubsection{KPSS Test} 
 Unlike most unit root tests, \textit{Kwiatkowski et al.} provide straightforward test of the null hypothesis of trend stationarity against the alternative of a unit root. In fact, the null hypothesis is that the data are
stationary, and we look for evidence that refutes the null hypothesis. Consequently, small p-values (e.g., less than 0.05) suggest that differencing is
required. 

The model assumes that $p_t$ is sum of a deterministic time trend, a random walk and a stationary residual:

$$p_t = \beta t + (r_t + \alpha) + e_t,$$

where $r_t = r_{t-1} + u_t$ is a random walk (the initial value $r_0 = \alpha$ serves as an intercept) and $e_t$ are i.d.d with zero mean and variance ${\sigma^2}_u$.

Under this form, The null and the alternative hypotheses would be as following:

$H_0 : p_t$ is trend (or level stationary) OR ${\sigma^2}_u$

$H_1: p_t$ is a unit root process.

In figure \ref{fig:kpss}, the result of applying DF test on the data is indicated:

 \begin{figure}[H]
        \centering
        \includegraphics[width=.7\linewidth]{./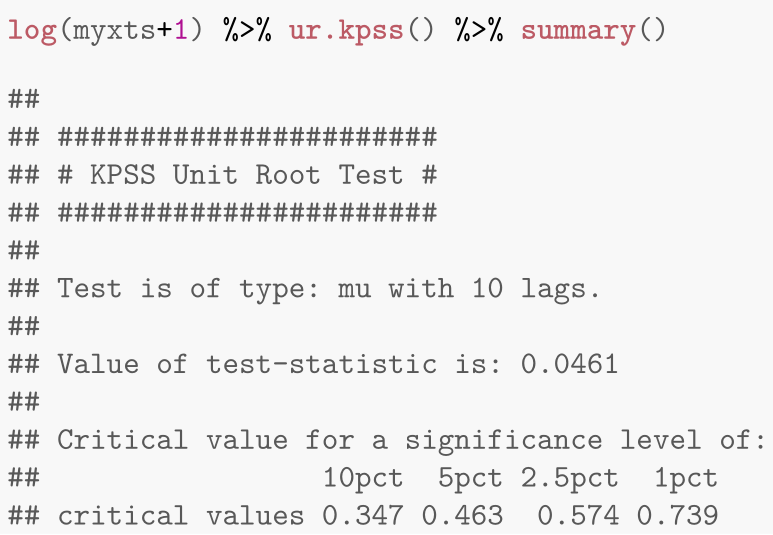}    
        \caption{Result of KPSS test on GS returns (implemented in \code{R})}
        \medskip
        \small
        \label{fig:kpss}
    \end{figure}   

The test statistic is tiny. So, we can conclude that data is stationary.

\subsection{ARMA: Choosing orders} \label{appx:order}

Determining the orders of ARMA can be a challenging task. One of the most common approaches is to observe ACF and PACF diagrams to choose the best value for $p$ (AR component) and $q$ (MA component).

\subsubsection{ACF and PACF} 
We can calculate the correlations for a time series observations with previous \textit{time-step}s, called lags. Since the correlation of the time series observations is calculated with values of the same series at previous times, this is called a serial correlation, or an \textit{autocorrelation}. A plot of the autocorrelation of a time series by lag is called the \textit{AutoCorrelation Function}, or the acronym ACF.
More precisely, the lag-$\ell$ autocorrelation is defined as
$$\rho_l = \frac{Cov(p_t, p_{t-\ell}}{\sqrt{Var(p_t)Var(p_{t-\ell})}} = \frac{Cov(p_t,p_{t-\ell})}{Var(p_t)} = \frac{\gamma_\ell}{\gamma_0},$$

where the property $Var(p_t) = Var(p_{t-\ell})$ for a weakly stationary series is used. From the definition, we have $\rho_0 = 1$, $\rho_{\ell} = \rho_{-\ell}$, and $-1 \leq \rho_{\ell} \leq 1$.

A partial autocorrelation is a summary of the relationship between an observation in a time series with observations at prior time-steps, such that the relationships of intervening observations removed. The partial autocorrelation at a certain lag is the correlation that results after removing the effect of any correlations due to the terms at shorter lags.

It is essential to take PACF into account beside ACF as well since ACF might reflect false correlation. To shed light on why this occurs, suppose two random variables, say $X$ and $Y$ might manifest positive correlation with each other, i.e. $Cor(X,Y) > 0$. However, this correlation might stem from the fact that both of them are dependent to an intermediary third variable, say $Z$. Therefore, if we compute the correlation of $X$ and $Y$ conditioned on $Z$, we may reach a value near zero, i.e. $Cor(X,Y | Z) \approx 0$. Similarly, PACF computes the correlation of different lags with considering the intermediary ones.\\

\begin{figure}[ht] 
  \begin{minipage}[b]{0.5\linewidth}
    \centering
    \includegraphics[height=4cm]{./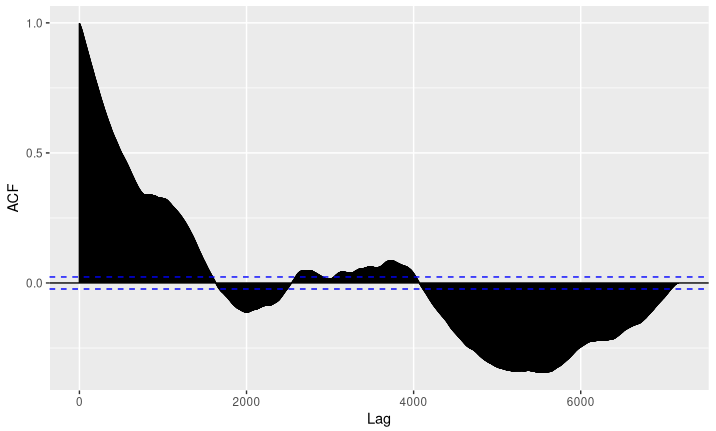}    
        \caption{ACF of GS adjusted}
    \vspace{4ex}
    \label{fig:ACF}
  \end{minipage}%%
  \begin{minipage}[b]{0.5\linewidth}
    \centering
    \includegraphics[height=4cm]{./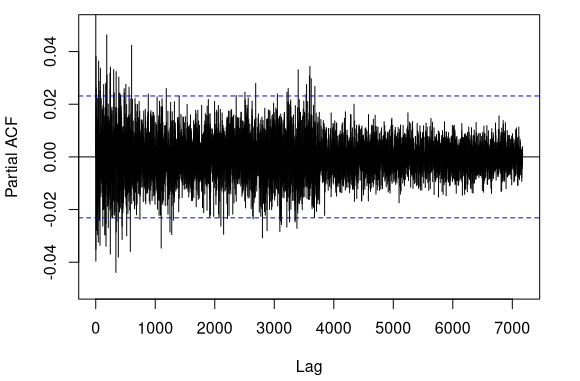}    
        \caption{PACF of GS adjusted}
    \vspace{4ex}
    \label{fig:PACF}
  \end{minipage} 
  \begin{minipage}[b]{0.5\linewidth}
    \centering
    \includegraphics[height=4cm]{./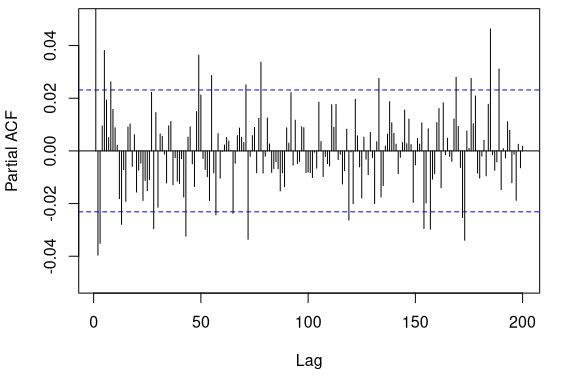}    
        \caption{PACF of GS adjusted (200 lags)}
    \vspace{4ex}
    \label{fig:PACF-200}
  \end{minipage}%% 
  \begin{minipage}[b]{0.5\linewidth}
    \centering
    \includegraphics[height=4cm]{./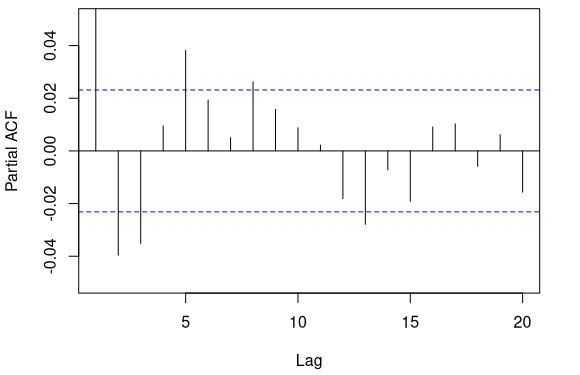}    
        \caption{PACF of GS adjusted (20 lags)}
    \vspace{4ex}
    \label{fig:PACF-20}
  \end{minipage} 
\end{figure}

In figures \ref{fig:ACF} and \ref{fig:PACF} ACF and PACF of GS stock's adjusted prices are illustrated and in figures \ref{fig:PACF-200} and \ref{fig:PACF-20}, PACF of first 200 lags and 20 lags are shown.

Let us look at the same diagrams for price return as well:

\begin{figure}[ht] 
  \label{ fig7} 
  \begin{minipage}[b]{0.5\linewidth}
    \centering
    \includegraphics[height=4cm]{./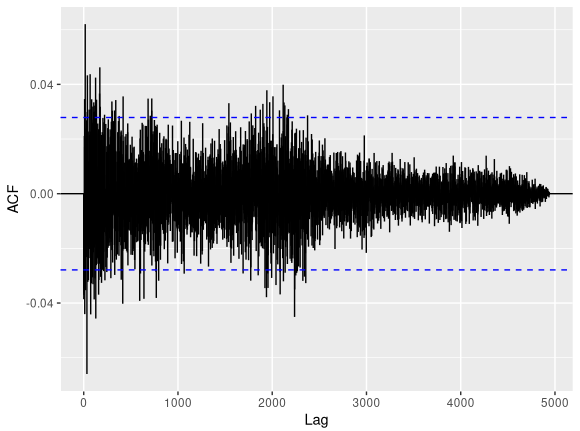}    
        \caption{ACF of GS price return}
    \vspace{4ex}
    \label{fig:ACF-RET}
  \end{minipage}%%
  \begin{minipage}[b]{0.5\linewidth}
    \centering
    \includegraphics[height=4cm]{./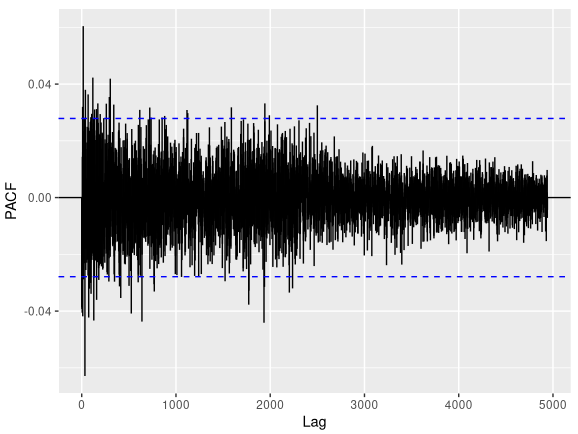}    
        \caption{PACF of GS price return}
    \vspace{4ex}
    \label{fig:PACF-RET}
  \end{minipage} 
  \begin{minipage}[b]{0.5\linewidth}
    \centering
    \includegraphics[height=4cm]{./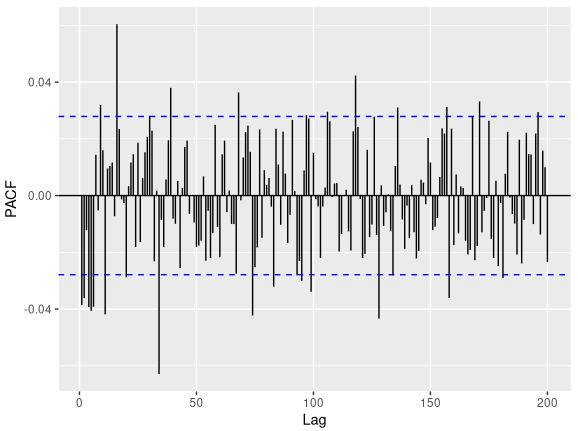}    
        \caption{PACF of GS price return (200 lags)}
    \vspace{4ex}
    \label{fig:PACF-RET-200}
  \end{minipage}%% 
  \begin{minipage}[b]{0.5\linewidth}
    \centering
    \includegraphics[height=4cm]{./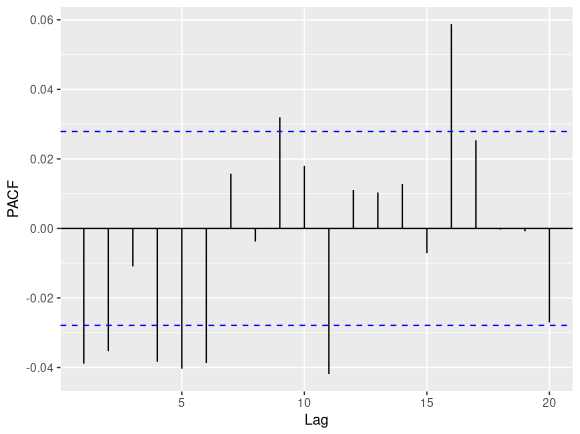}    
        \caption{PACF of GS price return (20 lags)}
    \vspace{4ex}
    \label{fig:PACF-RET-20}
  \end{minipage} 
\end{figure}

 Based on the ACF and PACF, we choose different orders. The package \code{auto.arima} in R provides an automatic detection of orders before forecasting. In figure \ref{fig:autoarima}, result of applying \code{auto.arima} on stock returns is shown. 

\begin{figure}[H]
        \centering
        \includegraphics[width=.7\linewidth]{./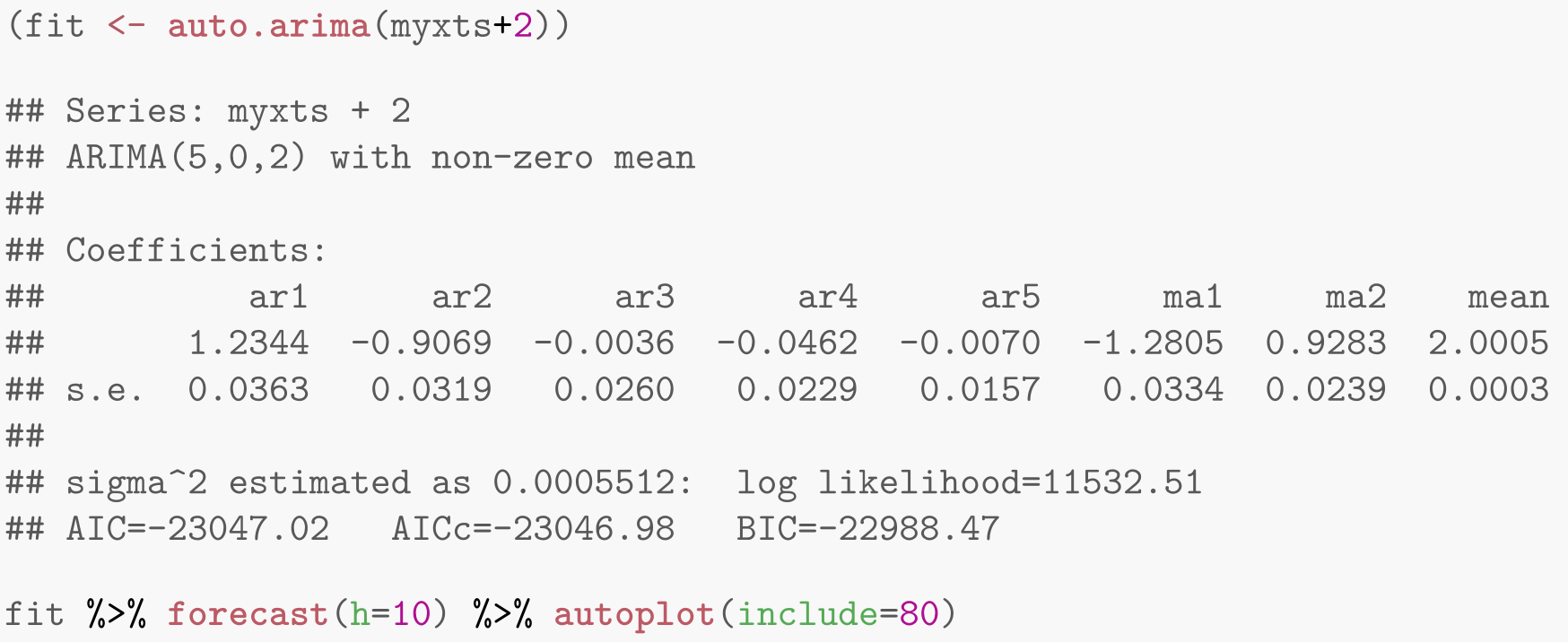}    
        \caption{Result of fitting \code{auto.arima} to GS price returns (implemented in \code{R})}
        \medskip
        \small
        \label{fig:autoarima}
    \end{figure}        
    
The result of fit is choosing 5 and 2 as values of $p$ and $q$ respectively. Our own selected orders did not perform better than the proposed order from the automatic procedure because the AIC and BIC errors of any other chosen order were more.
%\subsubsection{XGBoost: Relation of Decision Trees and Feature %Importance} \label{appx:dectree}

\subsection{Exponential Moving Average} \label{appx:EMA}
 Consider $S$, a sequence of numbers that is noisy. Figure \ref{fig:guassian}, a cosine function is shown to which the Guassian noise is added. 

 \begin{figure}[H]
        \centering
        \includegraphics[width=.5\linewidth]{./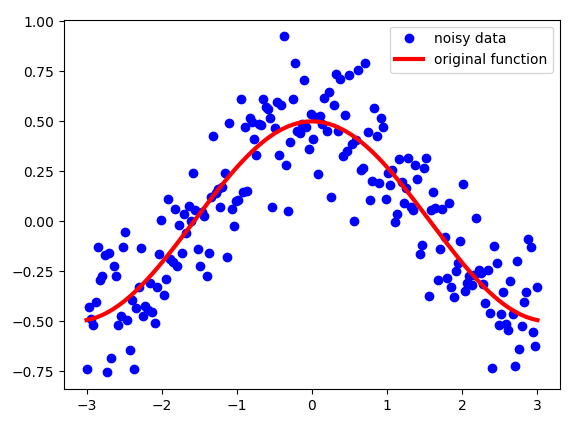}    
        \caption{A cosine function along with Guassian noise}
        \medskip
        \small
        \label{fig:guassian}
    \end{figure}
    
Instead of using data, if we use the exponential moving average (EMA), it would look like figure \ref{fig:EMA}.

     \begin{figure}[H]
        \centering
        \includegraphics[width=.5\linewidth]{./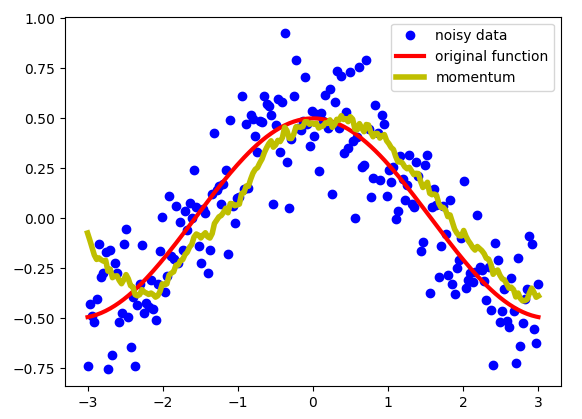}    
        \caption{Momentum is a smoother than noises and can be a fine representation}
        \medskip
        \small
        \label{fig:EMA}
    \end{figure}
    
The acquired plot is a smoother function that approaches the actual cosine function. In effect, EMA defines a new sequence $V$ by the following formula:

\begin{equation} \label{eq:EMA}
 V_t = \beta V_{t-1} + (1-\beta)S_t,   
\end{equation}

where $\beta \in [0,1].$

$\beta$ is a hyperparameter that is also used in momentum SGD. Figure \ref{fig:EMAbeta} illustrates the effect of using different amounts of $\beta$.

   \begin{figure}[H]
        \centering
        \includegraphics[width=.5\linewidth]{./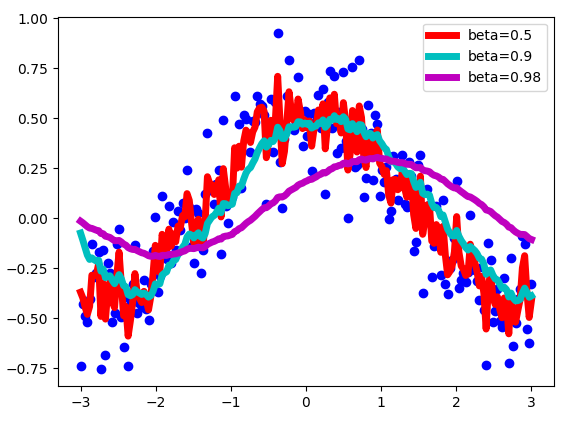}    
        \caption{The effect of $\beta$ on momentum. Smaller $\beta$ yields more accuracy while bigger $\beta$ yields smoother function}
        \medskip
        \small
        \label{fig:EMAbeta}
    \end{figure}

One of the key property of EMA is that it assigns different weights to different index of data. The more recent a sample is, the more stress EMA attributes to it and earlier data might even be discarded at some point of time. To prove this, we expand the recurrent equation of \ref{eq:EMA} as follows:

\begin{equation} 
\begin{split}
V_t & = \beta V_{t-1} + (1-\beta) S_t \\
V_{t-1} & = \beta V_{t-2} + (1-\beta) S_{t-1}\\
V_{t-2} & = \beta V_{t-3} + (1-\beta) S_{t-2}
\end{split}
\end{equation}

If we substitute each line in \ref{eq:EMA} we will obtain the following equation:

\begin{equation} 
\begin{split}
V_t & = \beta \big(\beta (\beta V_{t-3} + (1-\beta) S_{t-2}) + (1 - \beta) S_{t-1}\big) + (1-\beta) S_t \\
& = \beta^k V_{t-k} + (1-\beta) \sum_{i=0}^{k} \beta^i S_{t-i},
\end{split}
\end{equation}

where $k$ is the lag (window) of time that we consider. Since $\beta < 1$, as it increases, contribution of previous observations of sequence will be diminished.

%\subsection{Loss Measures} \label{sec:losses}
%\textbf{Mean Squared Error (MSE)}\\
%\textbf{R-squared Error}\\
%\textbf{AIC and BIC}\\

\subsection{Process of data in LSTM (Keras)} \label{appx:keras-rnn}
To elucidate the execution process, consider the following example. Suppose that the \textit{batch-size} is 6, RNN size is 7, number of \textit{time-step}s that are included in one input line is 5 and the number of features in each \textit{time-step} is 3. If this is the case, the input tensor(matrix) shape for one batch would appear as follows:
\\
\\
Tensor shape of one batch $= (6,5,3)$
\\
\\
In figure \ref{fig:batch-data} , the data inside a batch is illustrated.

\begin{figure}[H]
    \centering
    \includegraphics[width=.5\linewidth]{./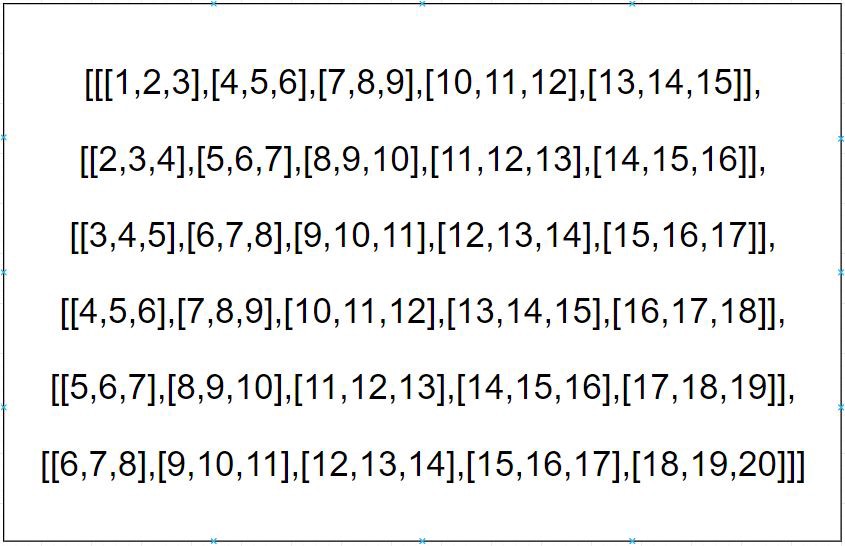}    
    \caption{Data representation inside a batch of data}
    \medskip
    \small
    \label{fig:batch-data}
\end{figure}
    
After the RNN starts processing the data by unrolling its structure, the outputs are produced in the manner shown in figure \ref{fig:RNN-unrolled}.
    
\begin{figure}[H]
    \centering
    \includegraphics[width=.98\linewidth,height=10cm]{./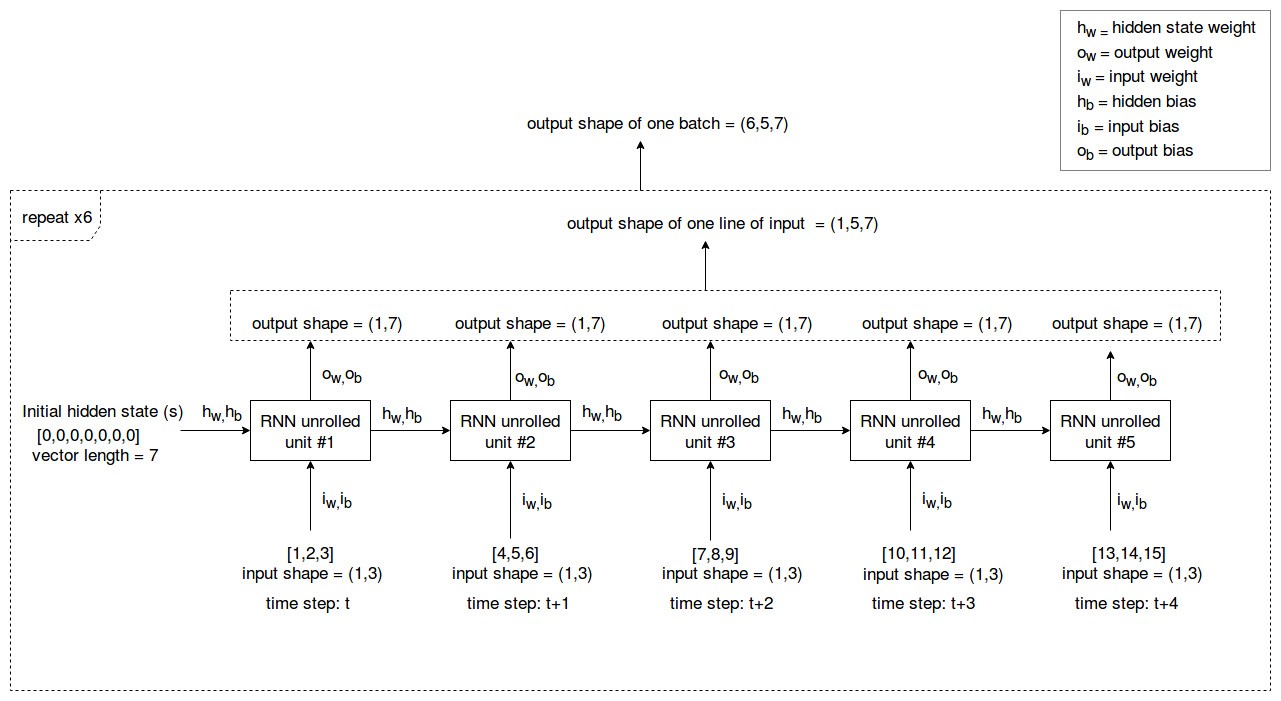}    
    \caption{Overall structure of an unrolled RNN}
    \medskip
    \small
    \label{fig:RNN-unrolled}
\end{figure}

In what follows, the processing of a batch and then that of a single line of input is explained.

\subsubsection{Processing a batch}
When batch of data is fed into the RNN cell, this cell starts the processing from the 1st line of input. Likewise, the RNN cell will sequentially process all the input lines in the batch of data that was fed and give one output at the end which includes all the outputs of all the input lines.

\subsubsection{Processing a single line of input}
Since we defined “number of steps” as 5, the RNN cell has been unrolled 5 times. The execution process is as follows:
\begin{steps}

    \item First, the initial hidden state $(S)$, which is typically a vector of zeros and the hidden state weight $(h)$ is multiplied and then the hidden state bias is added to the result. In the meantime, the input at the \textit{time-step} $t ([1,2,3])$ and the input weight $i$ is multiplied and the input bias is added to that result. We can obtain the hidden state at \textit{time-step} t by sending the addition of the above two results through an activation function, typically tanh $(f)$. 
    $$S_t = f\Big((i_w [1,2,3] + i_b) + (h_w S_{initial} + h_b)\Big)$$
    
    \item To obtain the output at \textit{time-step} t, the hidden state $(S)$ at \textit{time-step} $t$ is multiplied by the output weight $(O)$ at \textit{time-step} $t$ and then the output bias is added to the result.
    
    $${Output}_t = (S_t O_w) + O_b$$
    
    \item When calculating the hidden state at time step $t+1$, the hidden state $(S)$ at \textit{time-step} $t$ is multiplied by the hidden state weight $(h)$ and the hidden state bias is added to the result. Then as mentioned before the input at \textit{time-step} $t+1 ([4,5,6])$ will get multiplied by the input weight $(i)$ and the input bias will be added to the result. These two results will then be sent through an activation function, typically tanh $(f)$.
    
    $$S_{t+1} = f\big((i_w [4,5,6] + i_b) + (h_w S_t + h_b)\big)$$
    
    \item To obtain the output at \textit{time-step} $t+1$, the hidden state $(S)$ at \textit{time-step} $t+1$ is multiplied by the output weight $(O)$ at \textit{time-step} $t+1$ and then output bias is added to the result. While producing the output of \textit{times-step} $t+1$, it uses not only the input data of \textit{time-step} $t+1$, but also uses information of data in \textit{time-step} $t$ via the hidden state at \textit{time-step} $t+1$.
    
    \item Finally, the explained process is repeated for all of the \textit{time-step}s.
\end{steps}

After processing all time steps in one line of input in the batch, we will have 5 outputs of shape (1,7). So when all these outputs are concatenated together. the shape becomes (1,5,7). When all the input lines of the batch are done processing we get 6 outputs of size (1,5,7). Thus, the final output of the whole batch would be (6,5,7).

\section{Conclusion} \label{conc}

From reported figures and accuracy as well as the forecasting lag problem, we can conclude that provided that the target feature is price return, the model will be immune to falling into the forecasting lag trap. On the other hand, designating price return as feature is at the expense of the model losing sight of the trends and magnitude of change over time. Consequently, it seems that forecasting stocks' prices by their historical data as the sole feature by LSTM seems a quite cumbersome and perhaps impossible task due to the mentioned issues. This is in contradiction with the suggestive theory that LSTM seems like the most suitable model for predicting prices and sequence data in general.

% CHAP4
\clearpage
\chapter{Neural Networks in Mathematical Framework} \label{sec:nnmf}
\clearpage
%\ref{dnnmf} serves as a first step towards forming this rigorous understanding. "we develop a generic mathematical framework for representing neural networks and demonstrate how this framework can be used to represent specific neural network architectures. We hope that this framework will serve as a common mathematical language for theoretical neural network researchers—something which currently does not exist—and spur further work into the analytical properties of DNNs."
%

In the previous chapter, we implemented neural networks on our problem. Although neural networks are a great candidate for a wide variety of tasks, at this point we do not have a full rigorous understanding of why Deep Neural Networks work so well, and how exactly to construct neural networks that perform well for a specific problem. ANNs are partly considered a blackbox without being formalized in a proper mathematical framework. \cite{dnnmf} takes the first steps towards forming this rigorous understanding. In this chapter, the same framework and notation that used in \cite{dnnmf} will be used here as well.\\

At first, the mathematical preliminaries for understanding the theory behind NNs are pvoided. Then, the desired framework is developed for a generic NN and gradient descent algorithm is expressed within that framework. At last, this framework is extended to specific architectures of NNs, from which we chose RNNs to explain and express gradient descent within their structure. \cite{dnnmf} proves theorems of this part for classification case and cite some of the theorems to those relevant to feedforward networks. On the other hand, we proved the theorems for the regression case (which includes forecasting), and prove theorems of RNNs as well as expressing gradient descent algorithm independently from feedforward networks. At the end, narrow our scope to specific RNNs, such as Vanilla RNN and LSTM.

\section{Mathematical Preliminaries} \label{sec:math tools}
Most approaches to describing DNNs rely upon decomposing the parameters and inputs into scalars, as opposed to referencing their underlying vector spaces, which adds a level of awkwardness into their analysis. On the other hand, the framework that \cite{dnnmf} develops strictly operates over these vector spaces, affording a more natural mathematical description of DNNs once the objects that we use are well defined and understood. \\

To set foot in the desired mathematical framework of neural networks, we introduce prerequisite mathematical concepts and notation for handling generic vector-valued maps. Although some of the posed concepts in this section are quite basic, it is necessary to solidify the symbols and language that we will use throughout this chapter so that the notation would be crystal clear and without any ambiguity. \cite{dnnmf} \\

\subsection{Linear Maps, Bilinear Maps, and Adjoints}

\newtheorem{definition}{Definition}[section]
\theoremstyle{definition}
\begin{definition}[bilinear map]

A map $\beta: V \times W \to Z$ is bilinear if $V,W,Z$ are vector spaces and for each fixed $v \in V$ the map $\beta(v,.):W \to Z$ is linear, while for each fixed $w \in W$ the map $\beta(.,w):V \to Z$ is linear. Examples are
\begin{itemize}
    \item Ordinary real multiplication $(x,y) \mapsto xy$ is a bilinear map $\mathbb{R} \times \mathbb{R} \to \mathbb{R}$.
    \item The dot product is a bilinear map $\mathbb{R}^n \times \mathbb{R}^n \to \mathbb{R}$ (Hfowever, it's not a linear map while considering a multivariate function.)
    \item The matrix product is a bilinear map $M(m \times k) \times M(k \times n) \to M(m \times n)$
\end{itemize}
\end{definition}

Let us start by considering three finite-dimensional and real inner product spaces
$E_1$, $E_2$ , and $E_3$, with the inner product denoted $\langle\,,\rangle$ on each space. We will denote the space of linear maps from $E_1$ to $E_2$ by $L(E_1; E_2)$, and the space of bilinear maps from $E_1 \times E_2$ to $E_3$ by $L(E_1, E_2; E_3)$. For any bilinear map $B \in L(E_1, E_2; E_3)$ and any vector $e_1 \in E_1$, we can define a linear map $(e_1 \lrcorner B) \cdot L(E_2; E_3)$ as 

\begin{center}
$(e_1 \lrcorner B) \cdot e_2 = B(e_1, e_2)$
\end{center}

for all $e_2 \in E_2$. Similarly, for any $e_2 \in E_2$ , we can define a linear map $(B \cdot e_2)\in L(E_1; E_3)$ as

\begin{center}
$(B \llcorner e_2) \cdot e_1 = B(e_1, e_2)$
\end{center}

for all $e_1 \in E_1$. We refer to the symbols $\lrcorner$ and $\llcorner$ as the \emph{left-hook} and \emph{right-hook}, respectively.
Intuitively, each hook holds the corresponding variable (first or second) constant while applying the linear map to the other variable.
\\
\\
We will use the standard definition of the adjoint $L\mbox{*}$ of a linear map $L \in L(E_1; E_2):L\mbox{*}$ is defined as the linear map satisfying 

\begin{center}
$\langle L\mbox{*} \cdot e_2,e_1\rangle = \langle e_2,  L \cdot e_1\rangle$
\end{center}

for all $e_1 \in E_1$ and $e_2 \in E_2$. Notice that $L\mbox{*}(E_2; E_1)$ ---it is a linear map exchanging the domain and codomain of L. The adjoint operator satisfies the direction reversing property: 

\begin{center}
$(L_2 \cdot L_1)\mbox{*} = L_1\mbox{*} \cdot L_2\mbox{*}$
\end{center}

for all $L_1 \in L(E_1; E_2)$ and $L_2 \in L(E_2; E_3)$. A linear map $L \in L(E_1; E_1)$ is said to be \emph{self-adjoint} if $L\mbox{*} = L$.

As composition and identity functions in space of linear operators are corresponded to multiplication matrix and identity matrix in that of matrix spaces, adjoint is corresponded to transpose . To clarify this correspondence, suppose A is a matrix. Then

$$\langle \mathbf{A}e_1, e_2 \rangle = (\mathbf{A}e_1)^\intercal e_2 = e_1^\intercal (\mathbf{A}^\intercal e_2) = \langle e_1, \mathbf{A}^\intercal e_2 \rangle$$

\subsection{Differentiability in multivariate normed spaces}

\begin{definition}{Linear Transformation} \label{def1}
an $m \times n$ matrix $A$ with entries $a_{ij}$ defines a linear transformation $T_A: \mathbb{R}^n \to \mathbb{R}^m$ that sends $n$-space to $m$-space according to the formula 

$$T_A(v) = \sum_{i=1}^{m}\sum_{j=1}^{n}a_{ij}v_je_i$$

where $v = \sum v_je_j \in \mathbb{R}^n$ and $e_1,...,e_n$ is the standard basis of $\mathbb{R}^n$. (Equally, $e_1,...,e_m$ is the standard basis of $\mathbb{R}^m$
\end{definition}

A function of real variable $y = f(x)$ has derivative $f'(x)$ at $x$ when

\begin{equation} \label{eq:1}
\lim_{h \to 0} {\frac{f(x+h) - f(x)}{h}} = f'(x)
\end{equation}

If, however, $x$ is a vector variable, \eqref{eq:1} makes no sense. For what does it mean to divide by the vector increment h? Equivalent to \eqref{eq:1} is the condition 

$$f(x+h) = f(x) + f'(x)h + R(h) \Rightarrow \lim_{h \to 0} {\frac{R(h)}{|h|}} = 0$$

which is easy to recast in vector terms.

\begin{definition}{}
Let $f:U \to \mathbb{R}^m$ be given where $U$ is an open subset of $\mathbb{R}^n$. The function $f$ is differentiable at $u \in U$ with derivative $(Df)_u = T$ if $T:\mathbb{R}^n \to \mathbb{R}^m$ is a linear transformation and 

\begin{equation} \label{eq:2}
f(u + v) = f(u) + T(v) + R(v) \Rightarrow \lim_{|v| \to 0} {\frac{R(v)}{|v|}} = 0
\end{equation}

We say that the Taylor remainder $R$ is sublinear because it tends to $0$ faster than $|v|$.
\end{definition}

Here is how to visualize $Df$. Take $m=n=2$. The mapping $f:U \to \mathbb{R}^2$ distorts shapes nonlinearly; its derivative describes the linear part of the distortion. Circles are sent by $f$ to wobbly ovals, but they become ellipses under $(Df)_p$ . Lines are sent by f to curves, but they become straight lines under $(Df)_p$ . See figure \ref{fig:der1}.

\begin{figure}[H]
    \centering
    \includegraphics[width=.5\linewidth]{./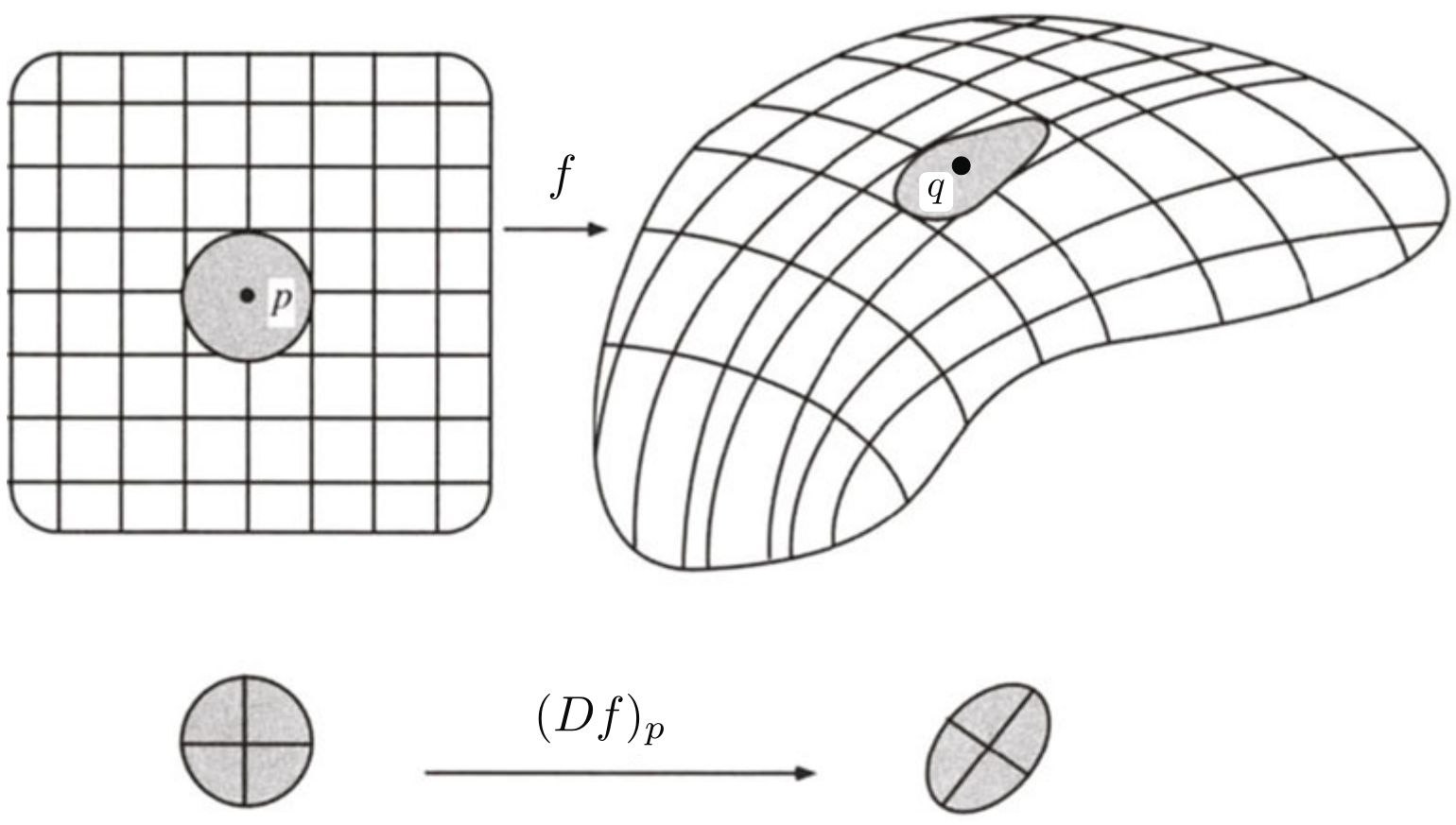}
    \caption{$(Df)_p$ is the linear part of $f$ at $p$}
    \label{fig:der1}
\end{figure}

This way of looking at differentiability is conceptually simple. Near p, f is the sum of three terms: A constant term $q = fp$, a linear term $(Df)_p v$, and a sublinear remainder term $R(v)$. Bear in mind what kind of an object the derivative is. It is not a number. It is not a vector. No, if it exists then $(Df)_p$ is a linear transformation (\ref{def1}) from the domain space to the target space. 
\\

\newtheorem{theorem}{Theorem}[section]
\begin{theorem}
If $f$ is differentiable at $p$ then it unambiguously determines $(Df)_p$ according to the limit formula, valid for all $u \in \mathbb{R}^n$,

$$(Df)_p(u) = \lim_{t \to 0} {\frac{f(p + tu) - f(p)}{t}}.$$
\end{theorem}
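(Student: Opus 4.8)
The plan is to start directly from the differentiability hypothesis as stated in \eqref{eq:2}, namely that there is a linear transformation $T = (Df)_p$ with $f(p+v) = f(p) + T(v) + R(v)$ where $R$ is sublinear, $\lim_{|v|\to 0} R(v)/|v| = 0$, and then specialize the increment $v$ to run along the ray through $u$. Concretely, for a scalar $t \neq 0$ I would substitute $v = tu$ into the differentiability relation and exploit the linearity of $T$ to write $T(tu) = t\,T(u)$. This gives $f(p+tu) = f(p) + t\,T(u) + R(tu)$, and after subtracting $f(p)$ and dividing by $t$ I obtain the clean identity
$$\frac{f(p+tu) - f(p)}{t} = T(u) + \frac{R(tu)}{t}.$$
Everything then reduces to showing that the remainder quotient $R(tu)/t$ vanishes as $t \to 0$.

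For that final step I would dispose of the degenerate case $u = 0$ first: there the difference quotient is identically $0$ for every $t \neq 0$ and $T(0) = 0$, so the formula holds trivially. For $u \neq 0$ the key algebraic manoeuvre is to rewrite the remainder term as a product,
$$\frac{R(tu)}{t} = \frac{R(tu)}{|tu|}\cdot\frac{|tu|}{t} = \frac{R(tu)}{|tu|}\cdot\frac{|t|\,|u|}{t},$$
which is valid since $|tu| = |t|\,|u| \neq 0$. As $t \to 0$ the vector increment $|tu| \to 0$, so the first factor tends to $0$ by the sublinearity of $R$; the second factor equals $\pm|u|$ and is therefore bounded. A vanishing quantity times a bounded quantity vanishes, so $R(tu)/t \to 0$ and the limit of the difference quotient is exactly $T(u) = (Df)_p(u)$.

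The uniqueness (``unambiguously determines'') assertion then comes for free: the right-hand side of the limit formula refers only to $f$, $p$, and $u$, with no residual dependence on the choice of $T$. Hence any two linear transformations satisfying \eqref{eq:2} must agree on every $u \in \mathbb{R}^n$ and so coincide, which is precisely why the notation $(Df)_p$ is well posed.

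The one genuinely delicate point, and the step I would flag as the main obstacle, is the passage from the \emph{vector} limit $|v|\to 0$ in the hypothesis to the \emph{scalar} limit $t\to 0$ in the conclusion. One cannot simply split $R(tu)/t$ into a product of two convergent limits, because the auxiliary factor $|t|\,|u|/t = \pm|u|$ does not converge (its sign flips as $t$ crosses $0$). The correct argument is the boundedness-times-vanishing estimate above, noting that $|tu|\to 0$ genuinely drives the sublinear factor to zero while the sign ambiguity is harmless because it sits inside a bounded term. Getting this bookkeeping right — rather than the algebra, which is routine — is where care is needed.
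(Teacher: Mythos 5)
Your proof is correct and follows essentially the same route as the paper's: substitute $v = tu$ into the differentiability relation, use linearity of $T$ to isolate $T(u)$, and show the remainder quotient $R(tu)/t$ vanishes by factoring it into a sublinear term times a bounded term, with uniqueness following from uniqueness of limits. You are in fact slightly more careful than the paper, which glosses over the $u=0$ case and the sign of $t$ in the factorization $\frac{R(tu)}{t|u|}|u|$, both of which you handle explicitly.
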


\begin{proof}
Let $T$ be a linear transformation that satisfies \eqref{eq:2}. Fix any $u \in \mathbb{R}^n$ and take $v = tu$. Then 

\begin{equation} 
\label{eq:3}
\frac{f(p+tu) - f(p)}{t} = \frac{T(tu) + R(tu)}{t} = T(u) + \frac{R(tu)}{t|u|}|u|.
\end{equation}

The last term converges to zero as $t \rightarrow 0$, which verifies \eqref{eq:3} Limits, when they exist, are unambiguous and therefore if $T'$ is a second linear transformation that satisfies \eqref{eq:2} then $T(u) = T'(u)$ so $T=T'$.
\end{proof}

Here is another definition of derivative which is trivial that it is equivalent to the previous one.

\begin{definition}{}
Let $E,F$ be normed vector spaces, $U$ be an open subset of $E$ and let $f: U \subset E \to F$ a given mapping. Let $u_0 \in U$. We say that $f$ is differentiable at the point $u_0$ provided that there is a bounded linear map $Df(u_0): E \to F$ such that for every $\epsilon > 0$, there is a $\delta > 0$ such that whenever $0 < \norm{u-u_0} < \delta$, we have

\begin{equation} \label{eq:def3}
\frac{\norm{f(u) - f(u_0) - Df(u_0) \cdot (u-u_0)}}{\norm{u-u_0}} < \epsilon
\end{equation}

where $\norm{\cdot}$ represents the norm on the appropriate space and where the evaluation of $Df(u_0)$ on $e \in E$ is denoted $Df(u_0) \cdot e$.
\end{definition}

This definition can also be written as

$$\lim_{u \to u_0} {\frac{f(u) - f(u_0) - Df(u_0) \cdot (u-u_0)}{\norm{u-u_0}}} = 0$$

In [\cite{nnm1}, Chapter 2, Section 3] proves that this derivative is unique in case of real numbers which is not a loss of generality.
\\
\\
In the aforementioned definitions, we exclude $u=u_0$ in taking the limit, since we are dividing by $\norm{u-u_0}$, and take the limit through those $x \in U$. More explicitly, it may be again rewritten by saying that for every $\epsilon > 0$ there is a $\delta > 0$ such that $u \in U$ and $\norm{u - u_0} < \delta$ implies

$$\norm{f(u) - f(u_0) - Df(u_0) \cdot (u - u_0)} \leq \epsilon \norm{u-u_0}$$

In this formulation we can allow $u = u_0$ since both sides reduce to zero. 
\\
Intuitively, $u \mapsto f(u_0) + Df(u_0)(u-u_0)$ is supposed to be the best affine approximation to $f$ near the point $u_0$. See figure \ref{fig:der2}. The figure indicates the equations of the tangent planes to the graph of $f$. 
\\
From the figure, we expect that there can be only one best linear approximation which is actually the case and we already stressed that the definition is unique. If we compare the definitions of $Df(x)$ and $\frac{df}{dx} = f'(x)$, we see that $Df(x)(h) = f'(x) \cdot h$ (the product of the numbers $f'(x)$ and $h \in \mathbb{R}$. Thus the linear map $Df(x)$ is just multiplication by $\frac{df}{dx}$

\begin{figure}[H]
    \centering
    \includegraphics[width=.5\linewidth]{./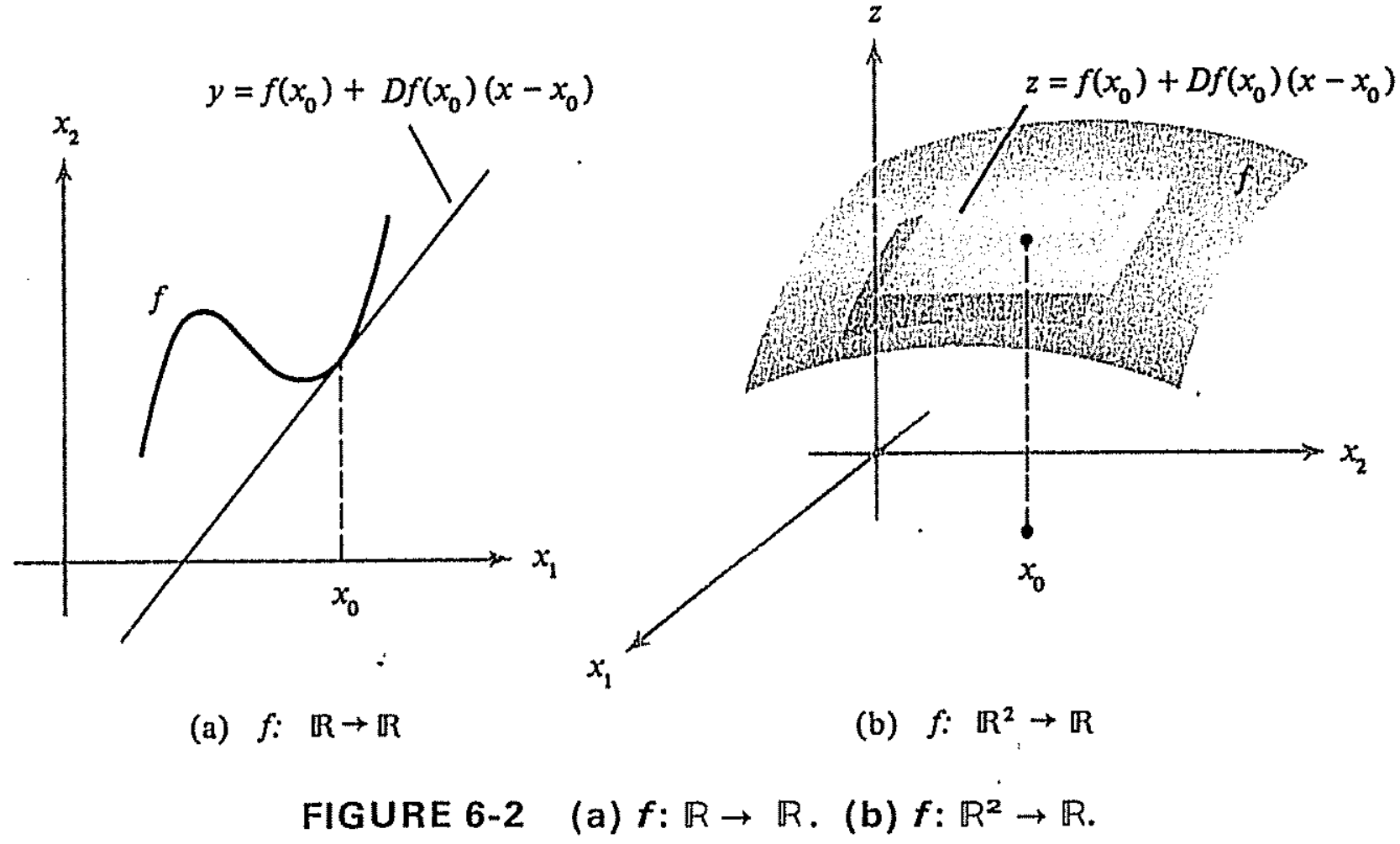}
    \caption{}
    \label{fig:der2}
\end{figure}

So far we introduced two definitions for differentiability and the derivative. Let us discuss the directional derivative:

\begin{definition}
Let $f$ be a defined in a neighborhood of $x_0 \in \mathbb{R}^n$ and let $e \in \mathbb{R}^n$ be a unit vector. Then
$$\frac{d}{dt}f(x_0+te) \Big |_{t=0} = \lim_{t \to 0} {\frac{f(x_0 + te) - f(x_0)}{t}}$$
is called the directional derivative of $f$ at $x_0$ in the direction $e$.
\end{definition}

From this definition, the directional derivative is just the rate of change of $f$ in the direction $e$; see figure \ref{fig:der3}.

\begin{figure}[H]
    \centering
    \includegraphics[width=.5\linewidth]{./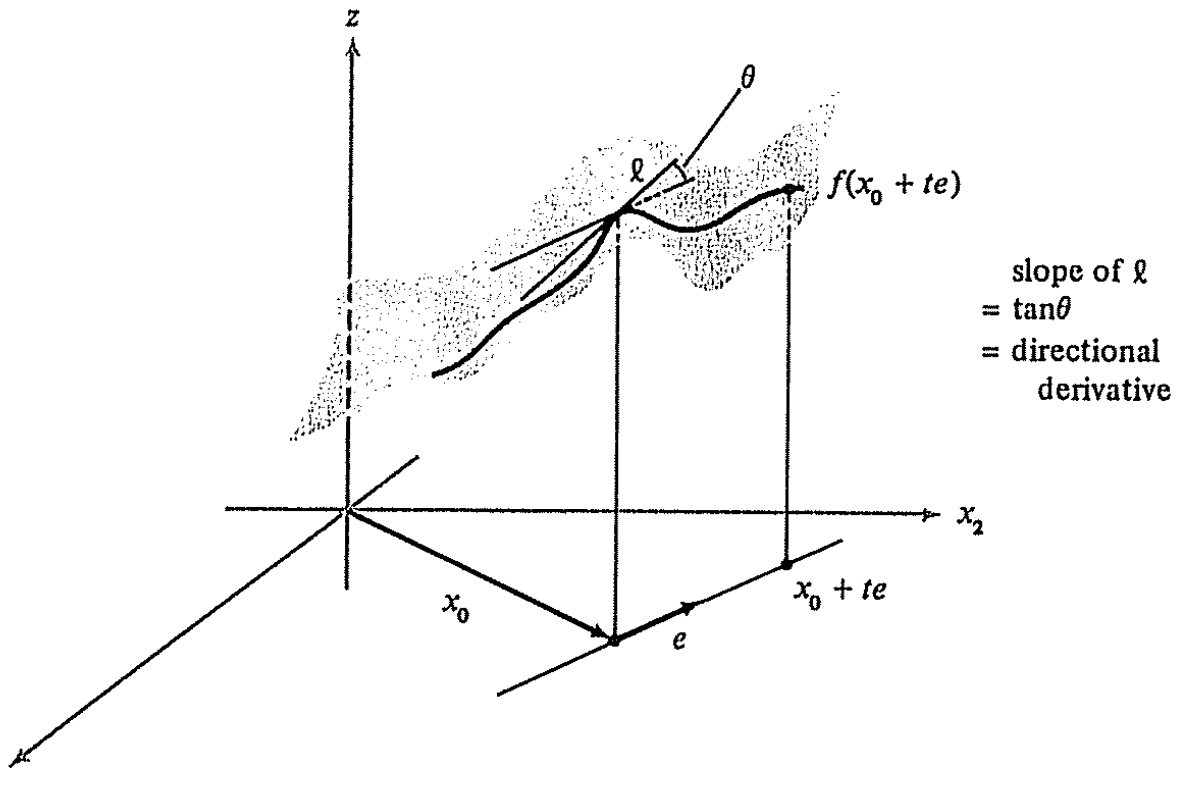}
    \caption{Slope of $\ell = tan\theta =$  directional derivative}
    \label{fig:der3}
\end{figure}

We claim that the directional derivative in the definition of $e$ equals $Df(x_0) \cdot e$. To see this just look at the definition of $Df(x_0)$ \eqref{eq:def3} with $x = x_0 + te$; we get

$$ \norm{\frac{f(x_0 + te) - f(x_0)}{t} - Df(x_) \cdot e} \leq \epsilon \norm{e}$$ for any $\epsilon > 0$

if $\norm{t}$ is sufficiently small. This proves that if f is differentiable at $x_0$, then the directional derivatives also exist and are give by 

$$\lim_{t \to 0} {\frac{f(x_0) + te - f(x_0)}{t} = Df(x_0) \cdot e}.$$

In particular, observe that $\frac{\partial f}{\partial x_i}$ is the derivative of $f$ in the direction of the ith coordinate axis (with $e = e_i = (0,0,...,0,1,0,...,0)$.)
\\

In order to differentiate a function $f$ of several variables, one can either use the definition of derivative (for which an example will be demonstrated in \ref{sec3-2}) or use that of partial derivatives, meaning that one writes $f$ in component form $f(x_1,...,x_n) = (f_1(x_1,...,x_n),...,f_m(x_1,...,x_n))$ and compute the partial derivatives, $\frac{\partial f_j}{\partial x_i}$ for $j=1,...,m$ and $i=1,...,n$, where the symbol $\frac{\partial f_j}{\partial x_i}$ means that we compute the usual derivative of $f_j$ with respect to $x_i$ while keeping the other variables $x_1,...,x_{i-1},x_{i+1},...,x_n$ fixed. Explicitly, \ref{partdef}

\begin{definition} \label{partdef}
 $\frac{\partial f_j}{\partial x_i}$ is given by the following limit, when the latter exists:
$$\frac{\partial f_j}{\partial x_i}(x_1,...,x_n) = \lim_{h \to 0} {\{\frac{f_j(x_1,...,x_i+h,...,x_n) - f_j(x_1,...,x_n)}{h}\}}$$
\end{definition}

\begin{theorem}
Suppose $A \subset \mathbb{R}^n$ is an open set and $f:A \to \mathbb{R}^m$ is differentiable. Then the partial derivatives $\frac{\partial f_j}{\partial x_i}$ exist and the matrix of the linear map $Df(x)$ with respect to the standard bases in $\mathbb{R}^n$ and $\mathbb{R}^m$ is given by 

$$
\begin{pmatrix}
\frac{\partial f_1}{\partial x_1} & \frac{\partial f_1}{\partial x_2} & \cdots & \frac{\partial f_1}{\partial x_n}\\
. & . & & .\\
. & . & & .\\
. & . & & .\\
\frac{\partial f_m}{\partial x_1} & \frac{\partial f_m}{\partial x_2} & \cdots & \frac{\partial f_m}{\partial x_n}\\
\frac{\partial f_j}{\partial x_i} & \frac{\partial f_j}{\partial x_i} & \cdots & \frac{\partial f_j}{\partial x_i} 
\end{pmatrix}
$$

where each partial derivative is evaluated at $x = (x_1,...,x_n).$ This matrix is called the Jacobian matrix of $f$.

\end{theorem}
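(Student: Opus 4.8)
The plan is to build directly on the fact, established just before Definition \ref{partdef}, that differentiability of $f$ at a point forces the directional derivatives to exist and to equal the action of $Df$ on the corresponding direction vector. Once that link is in hand, the theorem reduces to unwinding what it means for a matrix to represent a linear map with respect to the standard bases.

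First I would fix $x \in A$ and specialize the directional-derivative identity to the standard basis vectors $e_1, \dots, e_n$ of $\mathbb{R}^n$. Since $f$ is differentiable at $x$, for each $i$ the limit
$$\lim_{t \to 0} \frac{f(x + t e_i) - f(x)}{t} = Df(x) \cdot e_i$$
exists. Because convergence in $\mathbb{R}^m$ is coordinatewise, I can read off this vector limit one component at a time: the $j$-th component of the difference quotient on the left is exactly the scalar difference quotient appearing in Definition \ref{partdef}, so $\frac{\partial f_j}{\partial x_i}(x)$ exists for every pair $j, i$, and the vector $Df(x) \cdot e_i$ has $j$-th entry $\frac{\partial f_j}{\partial x_i}(x)$.

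Next I would invoke the standard fact from linear algebra that the matrix representing a map $T \in L(\mathbb{R}^n; \mathbb{R}^m)$ in the standard bases has $T(e_i)$ as its $i$-th column. Applying this to $T = Df(x)$ and using the previous step, the $i$-th column of the matrix of $Df(x)$ is the vector whose $j$-th entry is $\frac{\partial f_j}{\partial x_i}(x)$. Assembling these columns for $i = 1, \dots, n$ produces precisely the $m \times n$ array with $(j,i)$ entry $\frac{\partial f_j}{\partial x_i}$, which is the Jacobian matrix claimed.

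The argument is essentially routine, so there is no deep obstacle; the one point demanding care is the passage from the single vector limit to the $m$ scalar limits of the components. I must justify that a limit in $\mathbb{R}^m$ may be evaluated componentwise — equivalently, that each coordinate projection $\pi_j : \mathbb{R}^m \to \mathbb{R}$ is continuous and linear — which is exactly what guarantees that the scalar partial derivatives of the components exist and reassemble into $Df(x) \cdot e_i$. With that observation secured, the remainder is bookkeeping about the columns of the representing matrix.
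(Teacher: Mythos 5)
Your proof is correct. The paper itself does not supply an argument here---it simply defers to the cited reference---but your proposal is the standard proof and builds on exactly the identity the paper establishes immediately before the theorem statement, namely that differentiability forces the directional derivatives to exist and equal $Df(x_0)\cdot e$; specializing to $e = e_i$, projecting onto coordinates (justified by the continuity and linearity of each projection $\pi_j$), and recalling that the $i$-th column of a representing matrix is the image of $e_i$ gives the Jacobian form. Nothing is missing.
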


\begin{proof}
See [\cite{nnm2}, Chapter 6, Section 6].
\end{proof}

One should take special note when $m=1$, in which case we have a real-valued function of $n$ variables. Then $Df$ has the matrix

$$(\frac{\partial f}{\partial x_1} ... \frac{\partial f}{\partial x_n})$$

and the derivative applied to a vector $e = (a_1,...,a_m)$ is 

$$Df(x) \cdot e = \sum_{i=1}^{n} \frac{\partial f}{\partial x_i} a_i .$$

It should be emphasized that $Df$ is a linear mapping at each $x \in A$ and the definition of $Df(x)$ is independent of the basis used. If we change the basis from standard basis to another one, the matrix elements will definitely change. However, the $Df$ transformation is denoted by Jacobian Matrix which is merely a representation and simplifies the computation of derivative. 

\subsection{Derivatives} \label{sec3-2}
\subsubsection{First Derivatives}
So far we explained the definition of differentiability and derivative in multivariate spaces. Now, we will set a uniform notation that serves our primary purpose of formalizing Neural Networks' structure. This notation is in line with \cite{dnnmf} and it is as follows:\\

Consider the function $f: E_1 \to E_2$, where $E_1$ and $E_2$ are inner product spaces. The first derivative map of $f$, denoted $Df$, is a map from $E_1$ to $L(E_1; E_2)$ which operates in the following manner for any $v \in E_1$:

\begin{equation} \label{eq:def2}
Df(x) \cdot v = \frac{d}{dt}f(x+tv) \Big |_{t=0}.    
\end{equation}

\subsubsection{Example 1}
Let $f: \mathbb{R}^2 \to \mathbb{R}$, $f(x,y) = xy$. Compute $Df(x)$ and $\frac{df}{dx}$. 

\begin{equation}
\begin{split}
Df(x,y) \cdot (v_1,v_2) &  = \frac{d}{dt}(f(x+tv_1, y+tv_2)) \Big |_{t=0} \\
 & = \frac{d}{dt}((x + tv_1)\cdot(y+tv_2)) \Big |_{t=0} \\
 & = (xv_2 + yv_1 + 2t{v_1}{v_2}) \Big |_{t=0} \\
 & = xv_2 + yv_1
\end{split}
\end{equation}

\subsubsection{Example 2}
Let $f: \mathbb{R}^2 \to \mathbb{R}^3$, $f(x,y) = (x^2,{x^3}y,{x^4}{y^2})$. Compute $Df(x)$.

%\begin{gather}
$$
\begin{pmatrix} 
 \frac{\partial f_1}{\partial x} & \frac{\partial f_1}{\partial y} \\
 \frac{\partial f_2}{\partial x} & \frac{\partial f_2}{\partial y} \\
 \frac{\partial f_3}{\partial x} & \frac{\partial f_3}{\partial y}
\end{pmatrix}
 =
  \begin{pmatrix}
   2x & 0 \\
   3{x^2}y & x^3 \\
   4{x^3}{y^2} & {2{x^4}y}
   \end{pmatrix}
$$
%\end{gather}

where $f_1(x,y) = x^2, f_2(x,y) = {x^3}y, f_3(x,y) = {x^4}{y^2}$.

For each $x \in E_1$, the adjoint of the derivative $Df(x) \in L(E_1;E_2)$ is well-defined, and we will denote it $D\mbox{*}f(x)$ instead of $Df(x)\mbox{*}$ for the sake of convenience. Then, $D\mbox{*}f(x): E_1 \to L(E_2;E_1)$ denotes the map that takes each point $x \in E_1$ to $D\mbox{*}f(x) \in L(E_2;E_1)$.
\\
\\
Now, let us consider two maps $f_1:E_1 \to E_2$ and $f_2:E_2 \to E_3$ that are $C^1$, where $E_3$, where $E_3$ is another inner product space. The derivative of their composition, $D(f_2 \circ f_1)(x)$, is a linear map from $E_1$ to $E_3$ for any $x \in E_1$, and is calculated using the well-known chain rule, i.e.
\begin{equation}
\label{eq:cr}
D(f_2 \circ f_1)(x) = Df_2(f_1(x)) \cdot Df_1(x).
\end{equation}

\subsubsection{Second Derivatives}
We can safely assume that every map here is $C^2$. The second derivative map of $f$, denoted $D^2f$, is a map from $E_1$ to $L(E_1,E_1;E_2)$, which operates as $x \mapsto D^2f(x)$ for any $x \in E_1$. The bilinear map $D^2f(x) \in L(E_1,E_1;E_2)$ operates as 

\begin{equation}
D^2f(x) \cdot (v_1,v_2) = D(D(f(x) \cdot v_2) \cdot v_1 = \frac{d}{d_t}(Df(x+tv_1) \cdot v_2) \Big |_{t=0}
\end{equation}

for any $v_1,v_2 \in E_1$. The map $D^2f(x)$ is symmetric, i.e. $D^2f(x) \cdot (v_1,v_2) = D^2f(x) \cdot (v_2,v_1)$ for all $v_1,v_2 \in E_1$.

%If one seeks a better intuition for the essence of $D^2$ operator, %it is helpful for him to observe the bridge between quadratic %functions:
%Consider Q, a homogenous function of degree 2 which has the %following form:
%$$Q$$

Two useful identities exist for vector-valued second derivatives—the higher order chain rule and the result of mixing $D$ with $D\mbox{*}$ —which we will describe in the next two lemmas.

\newtheorem{lemma}[theorem]{Lemma}

\begin{lemma}
For any $x,v_1,v_2 \in E_1$,

$$D^2(f_2 \circ f_1)(x) \cdot (v_1,v_2) = D^2f_2(f_1(x)) \cdot (Df_1(x) \cdot v_1,Df_1(x) \cdot v_2) + Df_2(f_1(x)) \cdot D^2f_1(x) \cdot (v_1,v_2),$$

where $f_1:E_1 \to E_2$ is $C^2$ and $f_2:E_2 \to E_3$ is $C^2$ for vectors spaces $E_1, E_2$ and $E_3$
\end{lemma}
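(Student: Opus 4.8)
The plan is to differentiate the first-order chain rule \eqref{eq:cr} once more, using a product (Leibniz) rule for the bilinear evaluation pairing $L(E_2;E_3)\times E_2 \to E_3$, $(L,w)\mapsto L\cdot w$. Since $f_1,f_2$ are $C^2$, every object below is differentiable and the manipulations are justified.

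First I would unwind the left-hand side using the definition of the second derivative as a directional derivative of the first derivative:
$$D^2(f_2\circ f_1)(x)\cdot(v_1,v_2) = \frac{d}{dt}\Big(D(f_2\circ f_1)(x+tv_1)\cdot v_2\Big)\Big|_{t=0}.$$
Substituting the chain rule \eqref{eq:cr} evaluated at the shifted point $x+tv_1$ rewrites the bracketed quantity as $Df_2(f_1(x+tv_1))\cdot\big(Df_1(x+tv_1)\cdot v_2\big)$, which is the evaluation pairing applied to the two $t$-dependent objects $A(t) := Df_2(f_1(x+tv_1)) \in L(E_2;E_3)$ and $b(t) := Df_1(x+tv_1)\cdot v_2 \in E_2$.

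Next I would apply the product rule, $\frac{d}{dt}(A(t)\cdot b(t))\big|_{t=0} = A'(0)\cdot b(0) + A(0)\cdot b'(0)$, which is legitimate because the pairing is bilinear and continuous. The factor $b'(0)$ is, directly from the definition of $D^2 f_1$, equal to $D^2 f_1(x)\cdot(v_1,v_2)$, while $b(0)=Df_1(x)\cdot v_2$ and $A(0)=Df_2(f_1(x))$. The remaining piece $A'(0)$ I would compute by viewing $A$ as the composition of $Df_2$ with the curve $t\mapsto f_1(x+tv_1)$, whose velocity at $t=0$ is $Df_1(x)\cdot v_1$; the chain rule then yields $A'(0) = D^2 f_2(f_1(x))\cdot(Df_1(x)\cdot v_1,\ \cdot\,)$, i.e. the linear map sending $w$ to $D^2 f_2(f_1(x))\cdot(Df_1(x)\cdot v_1, w)$.

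Assembling the two terms, $A'(0)\cdot b(0)$ becomes $D^2 f_2(f_1(x))\cdot(Df_1(x)\cdot v_1,\, Df_1(x)\cdot v_2)$ and $A(0)\cdot b'(0)$ becomes $Df_2(f_1(x))\cdot D^2 f_1(x)\cdot(v_1,v_2)$, which is exactly the claimed identity. I expect the main obstacle to be bookkeeping rather than depth: in a single differentiation step one must distinguish the two mechanisms at work — the outer derivative $A'(0)$ is itself a chain-rule derivative feeding the \emph{first} slot of the bilinear map $D^2 f_2$, whereas $b'(0)$ fills \emph{both} slots of $D^2 f_1$ — and one must justify the Leibniz rule for the evaluation pairing. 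Note that the symmetry of $D^2 f_2$ is not needed anywhere in this argument.
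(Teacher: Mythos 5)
Your proposal is correct and follows essentially the same route as the paper's own proof: unwind $D^2(f_2\circ f_1)(x)\cdot(v_1,v_2)$ as the $t$-derivative at $0$ of $Df_2(f_1(x+tv_1))\cdot Df_1(x+tv_1)\cdot v_2$, apply the product rule to split into two terms, and identify each via the chain rule and the definition of the second derivative. Your explicit framing of the Leibniz step as a product rule for the bilinear evaluation pairing $L(E_2;E_3)\times E_2\to E_3$ is a slightly cleaner justification of what the paper calls "the standard product rule," but the argument is the same.
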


\begin{proof}
We can prove this directly from the definition of the derivative.

%\begin{equation} \label{eq:5}
%\begin{split}
%D^2(f_2 \circ f_1)(x) \cdot (v_1,v_2) &  = D(D(f_2 \circ f_1)(x) \cdot %v_2) \cdot v_1 \\
% & = \frac{d}{d_t} (Df_2(f_1(x+tv_1)) \cdot Df_1(x+tv_1) \cdot v_2) \Big %|_{t=0} \\
% & = \frac{d}{d_t} (Df_2(f_1(x+tv_1)) \cdot Df_1(x) \cdot v_2) \Big %|_{t=0} \\ 
% & = 
%\end{split}
%\end{equation}

\begin{subequations}
\begin{align}
D^2(f_2 \circ f_1)(x) \cdot (v_1,v_2) & = D(D(f_2 \circ f_1)(x) \cdot v_2) \cdot v_1 \label{subeq:11} \\
& = D(D(f_2(f_1(x)) \cdot Df_1(x) \cdot v_2) \cdot v_1 \label{subeq:12} \\
& = \frac{d}{d_t} (Df_2(f_1(x+tv_1)) \cdot Df_1(x+tv_1) \cdot v_2) \Big |_{t=0} \label{subeq:13} \\
& = \frac{d}{d_t} (Df_2(f_1(x+tv_1)) \cdot Df_1(x+tv_1) \cdot v_2) \Big |_{t=0} \label{subeq:14} \\
& \hspace{3.5mm} + D(f_2(f_1(x)) \cdot \frac{d}{d_t}(Df_1(x+tv_1) \cdot v_2) \Big |_{t=0} \\
& = D^2f_2(f_1(x)) \cdot \Big(\frac{d}{d_t}f_1(x+tv_1) \Big |_{t=0}, Df_1(x) \cdot v_2\Big) \\
& \hspace{3.5mm} + Df_2(f_1(x)) \cdot D^2f_1(x) \cdot (v_1,v_2) \label{subeq:15}\\
& = D^2f_2(f_1(x)) \cdot (Df_1(x) \cdot v_1, Df_1(x) \cdot v_2) \label{subeq:16}\\
& \hspace{3.5mm} + Df_2(f_1(x)) \cdot D^2f_1(x) \cdot (v_1,v_2), \label{subeq:17}
\end{align}
\end{subequations}

where \eqref{subeq:11} is from \eqref{eq:cr}, \eqref{subeq:12} is from the definition of derivative \eqref{eq:def2}, \eqref{subeq:13} and \eqref{subeq:14} is from the standard product rule, and \eqref{subeq:16} and \eqref{subeq:17} is from the standard chain rule along with the definition of second derivative.

\end{proof}

\begin{lemma} \label{lemma 3-2 2}
Consider three inner product spaces $E_1, E_2$ and $E_3$, and two functions $f:E_1 \to E_2$ and $g:E_2 \to E_3$. Then, for any $x,v \in E_1$ and $w \in E_3$, 

$$D(D\mbox{*}g(f(x)) \cdot w) \cdot v = \Big( (Df(x) \cdot v)\lrcorner D^2g(f(x)) \Big) \mbox{*} \cdot w.$$

\end{lemma}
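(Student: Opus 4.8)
The plan is to verify the identity by pairing both sides against an arbitrary vector $u \in E_2$; since both sides are elements of $E_2$ and the inner product is non-degenerate, it suffices to show that $\langle \cdot, u\rangle$ agrees for every $u$. First I would set $\phi(x) = D\mbox{*}g(f(x)) \cdot w \in E_2$, so that the left-hand side is $D\phi(x) \cdot v$. Because $\langle \cdot, u\rangle$ is a bounded linear functional it commutes with the directional derivative, which gives
\[
\langle D\phi(x) \cdot v, u\rangle = \frac{d}{dt}\langle \phi(x+tv), u\rangle \Big|_{t=0} = D\big(\langle D\mbox{*}g(f(x)) \cdot w, u\rangle\big) \cdot v .
\]

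Next I would invoke the defining property of the adjoint, $\langle D\mbox{*}g(y) \cdot w, u\rangle = \langle w, Dg(y) \cdot u\rangle$, to rewrite the scalar function being differentiated, holding both $u$ and $w$ fixed. Differentiating $x \mapsto \langle w, Dg(f(x)) \cdot u\rangle$ in the direction $v$ then reduces to differentiating the $E_3$-valued map $x \mapsto Dg(f(x)) \cdot u$. Setting $h(y) = Dg(y) \cdot u$, the definition of the second derivative gives $Dh(y) \cdot z = D^2 g(y) \cdot (z, u)$, and the chain rule \eqref{eq:cr} yields
\[
D\big(Dg(f(x)) \cdot u\big) \cdot v = Dh(f(x)) \cdot (Df(x) \cdot v) = D^2 g(f(x)) \cdot (Df(x) \cdot v, \, u).
\]

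The final step is to recognize the left-hook: by its very definition $\big((Df(x) \cdot v) \lrcorner D^2 g(f(x))\big) \cdot u = D^2 g(f(x)) \cdot (Df(x) \cdot v, u)$, so the quantity above equals $\big((Df(x) \cdot v) \lrcorner D^2 g(f(x))\big) \cdot u$. Applying the adjoint relation once more, now to peel $w$ back out of the inner product, I would obtain
\[
\langle D\phi(x) \cdot v, u\rangle = \big\langle w, \big((Df(x)\cdot v)\lrcorner D^2 g(f(x))\big)\cdot u\big\rangle = \big\langle \big((Df(x)\cdot v)\lrcorner D^2 g(f(x))\big)\mbox{*} \cdot w, \, u\big\rangle .
\]
Since this holds for every $u \in E_2$, the claimed identity follows.

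I expect the main obstacle to be bookkeeping rather than any genuine difficulty: one must keep straight which slot of $D^2 g$ is being differentiated and which is held fixed, respecting the paper's convention $D^2 g(x) \cdot (v_1,v_2) = D(Dg(x)\cdot v_2)\cdot v_1$, and one must confirm that $u$ and $w$ are truly constant throughout, so that both the inner product and the adjoint relation may be passed through $D$ freely. The only other point requiring a word of justification is the commutation of the directional derivative with $\langle \cdot, u\rangle$, which is valid precisely because that functional is linear and continuous.
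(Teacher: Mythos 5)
Your proposal is correct and follows essentially the same route as the paper's proof: pair against an arbitrary test vector in $E_2$, pass the derivative through the inner product, apply the adjoint identity, use the chain rule together with the definition of $D^2g$, and recognize the left-hook before taking the adjoint back out. The only difference is cosmetic — you name the test vector $u$ where the paper uses $y$, and you spell out the auxiliary map $h(y) = Dg(y)\cdot u$ explicitly.
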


\begin{proof}
Pair the derivative of the map $D\mbox{*}g(f(x)) \cdot w$ with any $y \in E_2$ in the inner product. Since product rule applies to norms as well, we have:

\begin{subequations}
\begin{align}
\langle y, D(D\mbox{*}g(f(x)) \cdot w) \cdot v \rangle & = D(\langle y, D\mbox{*}g(f(x)) \cdot w \rangle) \cdot v\\
& = D(\langle D(g(f(x)) \cdot y, w \rangle) \cdot v\\
& = \langle D^2g(f(x)) \cdot (Df(x) \cdot v,y),w \rangle\\
& = \langle\Big( (Df(x) \cdot v) \lrcorner D^2g(f(x)) \Big) \cdot y,w\rangle\\
& = \langle y, \Big( (Df(x)\cdot v) \lrcorner D^2g(f(x)) \Big) \mbox{*} \cdot w\rangle .
\end{align}
\end{subequations}

This holds for any $y \in E_2$.
\end{proof}

\subsubsection{Parameter-Dependent Maps}
Now, the derivative notation developed in the previous subsection will be developed to parameter-dependent maps: maps containing both a state variable and a parameter. We will heavily rely on parameter-dependent maps because we can regard the input of each layer of a feed-forward neural network as the current state of the network, which will be evolved according to the parameters at the current layer. To formalize this notion, suppose $f$ is a parameter-dependent map from $E_1 \times H_1$ to $E_2$, i.e. $f(x;\theta) \in E_2$ for any $x \in E_1$ and $\theta \in H_1$, where $H_1$ is also an inner product space. In this context, we will refer to $x \in E_1$ as the state for $f$, whereas $\theta \in H_1$ is the parameter. 

\subsubsection{First Derivatives}
We will use the notation presented in \ref{sec3-2} to denote the derivative of $f$ with respect to the state variable:\\
for all $v \in E_1$,

\begin{equation} \label{eq:def}
Df(x;\theta) \cdot v = \frac{d}{dt}f(x+tv;\theta) \Big |_{t=0}.    
\end{equation}

Also, $D^2f(x;\theta) \cdot (v_1,v_2) = D(Df(x;\theta) \cdot v_2) \cdot v_1$ as before. However, we will introduce new notation to denote the derivative of $f$ with respect to the parameters as follows:

\begin{equation} \label{eqpd:def}
\nabla f(x;\theta) \cdot u = \frac{d}{dt}f(x;\theta + tu) \Big |_{t=0}    
\end{equation} 

for any $u \in H_1$. Note that $\nabla f(x;\theta) \in L(H_1;E_2)$. We will require a chain rule for the composition of functions involving parameter-dependent maps, especially when not all of the functions involving parameter-dependent maps, especially when not all of the functions in the composition depend on the parameter, and this all appears in \ref{lemma 3-3 1}.

\begin{lemma} \label{lemma 3-3 1}
Suppose that $E_1, E_2, E_3,$ and $H_1$ are inner product spaces, and $g: E_2 \to E_3$ and $f: E_1 \times H_1 \to E_2$ are both $C^1$ functions. Then, the derivative of their composition with respect to the second argument of $f$, i.e. $\nabla (g \circ f)(x;\theta) \in L(H_1; E_3),$ is given by 

\begin{equation} \label{eq:3-3}
\nabla (g \circ f)(x;\theta) = Dg(f(x;\theta)) \cdot \nabla f(x;\theta),  
\end{equation}

for any $x \in E_1$ and $\theta \in H_1$
\end{lemma}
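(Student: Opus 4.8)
The plan is to reduce this parameter-chain-rule to the ordinary chain rule (equation \eqref{eq:cr}) by freezing the state variable. First I would fix $x \in E_1$ and introduce the single-argument map $h \colon H_1 \to E_2$ defined by $h(\theta) = f(x;\theta)$. The point of this reformulation is that the parameter-derivative $\nabla$ is, by its very definition \eqref{eqpd:def}, nothing other than the ordinary state-derivative $D$ of $h$: for every $u \in H_1$ we have
\[
Dh(\theta) \cdot u = \frac{d}{dt}h(\theta + tu)\Big|_{t=0} = \frac{d}{dt}f(x;\theta + tu)\Big|_{t=0} = \nabla f(x;\theta)\cdot u,
\]
so $Dh(\theta) = \nabla f(x;\theta)$ as elements of $L(H_1;E_2)$. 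Since $f$ is $C^1$ jointly, $h$ is $C^1$ in $\theta$, and $g$ is $C^1$ by hypothesis, so both maps in the composition $g \circ h$ satisfy the regularity needed to invoke \eqref{eq:cr}.

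Next I would observe that, with $x$ held fixed, the composition $(g \circ f)(x;\,\cdot\,)$ is literally the map $g \circ h \colon H_1 \to E_3$, and that $\nabla(g \circ f)(x;\theta)$ is by definition the state-derivative of this map in its parameter, i.e. $\nabla(g\circ f)(x;\theta) = D(g \circ h)(\theta)$. Applying the chain rule \eqref{eq:cr} to $g \circ h$ at the point $\theta$ gives $D(g \circ h)(\theta) = Dg(h(\theta)) \cdot Dh(\theta)$. Substituting $h(\theta) = f(x;\theta)$ and the identification $Dh(\theta) = \nabla f(x;\theta)$ established above yields exactly $\nabla(g\circ f)(x;\theta) = Dg(f(x;\theta)) \cdot \nabla f(x;\theta)$, which is the claimed identity \eqref{eq:3-3}, valid as an equality of linear maps in $L(H_1;E_3)$.

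Alternatively, one can avoid naming $h$ and argue directly: evaluate both sides on an arbitrary $u \in H_1$, write $\nabla(g\circ f)(x;\theta)\cdot u = \frac{d}{dt}\,g\bigl(f(x;\theta+tu)\bigr)\big|_{t=0}$, and differentiate the curve $t \mapsto g(f(x;\theta+tu))$ through $E_3$ using the chain rule along a one-parameter family, whose velocity at $t=0$ is $\nabla f(x;\theta)\cdot u$. I expect the only genuine (and minor) obstacle to be purely notational: being careful that $\nabla$ denotes differentiation in the parameter slot while $Dg$ denotes differentiation of $g$ in its single state argument, so that the object $Dg(f(x;\theta)) \in L(E_2;E_3)$ composes correctly on the left of $\nabla f(x;\theta) \in L(H_1;E_2)$ to produce a map in $L(H_1;E_3)$. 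Once the reformulation via $h$ makes clear that $\nabla$ here is simply $D$ of the frozen-state map, no new analysis is required and the result follows from \eqref{eq:cr} verbatim.
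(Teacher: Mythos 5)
Your argument is correct and is exactly the content behind the paper's one-line proof, which simply states that the identity is ``an extension of \eqref{eq:cr}'': freezing $x$, identifying $\nabla f(x;\theta)$ with the ordinary derivative of the frozen-state map $\theta \mapsto f(x;\theta)$, and invoking the chain rule is the intended reduction. You have merely written out the details the paper leaves implicit, so no further comment is needed.
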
 

\begin{proof}
This is just an extension of \eqref{eq:cr}.
\end{proof}

\subsubsection{Higher-Order Derivatives}
We can define the mixed partial derivative maps, $\nabla Df(x;\theta) \in L(H_1; E_2)$ as

$$\nabla Df(x;\theta) \cdot (u,e) = \frac{d}{d_t} (Df(x; \theta +tu) \cdot e) \Big |_{t=0}$$

and

$$D \nabla f(x;\theta) \cdot (e,u) = \frac{d}{d_t} (\nabla f(x+te; \theta ) \cdot u) \Big |_{t=0}$$

for any $e \in E_1, u \in H_1$. Note that if $f$ is $C^2,$ then

$$D\nabla f(x;\theta) \cdot (e,u) = \nabla Df(x;\theta) \cdot (u,e).$$

A useful identity similar to Lemma \ref{lemma 3-2 2} exists when mixing $\nabla\mbox{*}$ and $D$. 

\begin{lemma}
Consider three inner product spaces $E_1, E_2$ and $H_1,$ and a parameter-dependent map $g: E_1 \times H_1 \to E_2$. Then, for any $x,v \in E_1, w \in E_2,$ and $\theta \in H_1$,

$$D(\nabla\mbox{*}g(x;\theta) \cdot w) \cdot v = (\nabla Dg(x;\theta) \llcorner v)\mbox{*} \cdot w = (v \lrcorner D\nabla g(x;\theta))\mbox{*} \cdot w.$$
\end{lemma}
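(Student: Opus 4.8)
The plan is to mirror the proof of Lemma~\ref{lemma 3-2 2} almost verbatim, replacing the outer second-derivative object with the mixed derivative $D\nabla g$ and keeping careful track of the fact that the state derivative $D$ and the parameter derivative $\nabla$ now act on different slots. First I would fix an arbitrary test vector $y \in H_1$ and pair it, in the $H_1$ inner product, against the left-hand side $D(\nabla\mbox{*}g(x;\theta)\cdot w)\cdot v$. Since pairing with a fixed $y$ is a linear (hence differentiable) operation, I can pull the state derivative $D$ outside the inner product to obtain $D\big(\langle y,\ \nabla\mbox{*}g(x;\theta)\cdot w\rangle\big)\cdot v$.

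Next I would invoke the defining property of the adjoint, $\langle y,\ \nabla\mbox{*}g(x;\theta)\cdot w\rangle = \langle \nabla g(x;\theta)\cdot y,\ w\rangle$, to move the star off $\nabla g$; here $y$ is held constant while $x$ varies, so differentiating in the direction $v$ and using that $w$ is fixed gives $\langle D(\nabla g(x;\theta)\cdot y)\cdot v,\ w\rangle$. The crucial identification is that $D(\nabla g(x;\theta)\cdot y)\cdot v$ is exactly $D\nabla g(x;\theta)\cdot(v,y)$ by the definition of the mixed map $D\nabla g$, which I can then rewrite with the left-hook as $(v\lrcorner D\nabla g(x;\theta))\cdot y$. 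A final application of the adjoint property, now in reverse, yields $\langle y,\ (v\lrcorner D\nabla g(x;\theta))\mbox{*}\cdot w\rangle$. Since $y \in H_1$ was arbitrary, the first claimed identity follows.

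For the second equality I would appeal to the symmetry of mixed partials recorded just before the lemma, namely $D\nabla f(x;\theta)\cdot(e,u) = \nabla Df(x;\theta)\cdot(u,e)$ for $C^2$ maps. Applying this with $e=v$ shows that $(v\lrcorner D\nabla g(x;\theta))\cdot u = (\nabla Dg(x;\theta)\llcorner v)\cdot u$ for every $u\in H_1$, so the two hooked linear maps coincide and therefore so do their adjoints, giving the remaining equality for free. I expect the only real friction to be bookkeeping rather than mathematics: keeping straight that $\nabla\mbox{*}g(x;\theta)\cdot w$ is an $H_1$-valued function of $x$ (so the outer $D$ is a state derivative landing in $L(E_1;H_1)$), that the hook must be the left-hook because $v$ occupies the first ($E_1$) slot of $D\nabla g$ while the right-hook is correct for $\nabla Dg$ whose $v$-slot comes second, and that the adjoint is taken over the $E_2$–$H_1$ pairing. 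The product-rule subtlety that complicated Lemma~\ref{lemma 3-2 2} does not arise here, since $\nabla g(x;\theta)\cdot y$ depends on $x$ through a single factor.
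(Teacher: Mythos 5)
Your proposal is correct and is exactly the argument the paper intends: its proof is the one-line instruction to ``prove similarly to Lemma~\ref{lemma 3-2 2} by choosing $y \in H_1$ as a test vector,'' and you have carried out precisely that computation (pair with $y$, move the adjoint, differentiate in the state direction, identify the mixed derivative, hook on $v$, and undo the adjoint), with the second equality following from the symmetry $D\nabla g(x;\theta)\cdot(e,u)=\nabla Dg(x;\theta)\cdot(u,e)$ as you say. No gaps; your bookkeeping of which slot $v$ occupies, and hence which hook to use, matches the paper's conventions.
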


\begin{proof}
Prove similarly to Lemma \ref{lemma 3-2 2} by choosing $y \in H_1$ as a test vector.
\end{proof}

\subsection{Elementwise Functions} \label{subsec:elemwise funcs}

Layered neural networks conventionally contain a nonlinear activation function operating on individual coordinates---also known as \emph{elementwise nonlinearity}--- placed at the end of each layer. Without these, neural networks would be nothing more than over-parametrized linear models; it is therefore important to understand the properties of elementwise functions. To this end, consider an inner product space $E$ of dimension $n$, and let $\{e_k\}_{k=1}^n$ be an orthonormal basis of $E$. We define an elementwise function as a map $\Psi: E \to E$ of the form 

\begin{equation} \label{eq:psi}
    \Psi(v) = \sum_{k=1}^{n} \psi(\langle v,e_k \rangle) e_k,
\end{equation}

where $\psi : \mathbb{R} \to \mathbb{R}$---which we will refer to as the elementwise operation associated with $\Psi$---defines the operation of the elementwise function over the coordinates $\{\langle v,e_k \rangle\}_k$ of the vector $v \in E$ with respect to the chosen basis. If we use the convention that $\langle v,e_k \rangle \equiv v_k \in \mathbb{R}$, we can rewrite \eqref{eq:psi} as

$$\Psi(v) = \sum_{k=1}^{n} \psi(v_k) e_k,$$

but we will tend to avoid this as it becomes confusing when there are multiple subscripts. The operator $\Psi$ is basis-dependent, but $\{e_k\}_{k=1}^n$ can by any orthonormal basis of $E$. \\

We define the associated \emph{elementwise first derivative,} $\Psi' : E \to E$ as

\begin{equation} \label{eq:psi1stdev}
    \Psi'(v) = \sum_{k=1}^{n} \psi'(\langle v,e_k \rangle) e_k.
\end{equation}

Similarly, we define \emph{elementwise first derivative} $\Psi'' : E \to E$ as

\begin{equation} \label{eq:psi2nddev}
    \Psi''(v) = \sum_{k=1}^{n} \psi''(\langle v,e_k \rangle) e_k.
\end{equation}

We can also rewrite Eqs. \eqref{eq:psi1stdev} and \eqref{eq:psi2nddev} using $\langle v,e_k \rangle \equiv v_k$ as 

$$\Psi'(v) = \sum_{k=1}^{n} \psi'(v_k) e_k$$

and

$$\Psi''(v) = \sum_{k=1}^{n} \psi''(v_k) e_k.$$

\subsubsection{Hadamard Product}
To assist in the calculation of derivatives of elementwise functions, we will define a symmetric bilinear operator $\odot \in L(E,E;E)$ over the orthogonal basis $\{e_k\}_{k=1}^n$ as 

\begin{equation} \label{hadprod}
    e_k \odot e_k' \equiv \delta_{k,k'} e_k,
\end{equation}

where $\delta_{k,k'}$ is the Kronecker delta. This is the standard Hadamard product---also known as elementwise multiplication--- when $E = \mathbb{R}^n$ and $\{e_k\}_{k=1}^n$ is the standard basis of $\mathbb{R}^n$, which we can see by calculating $v \odot v'$ for any $v,v' \in \mathbb{R}^n :$

\begin{equation}
\begin{split}
    v \odot v' & = \Bigg( \sum_{k=1}^{n} v_k e_k \Bigg) \odot \Bigg( \sum_{k=1}^{n} v'_{k'} e_{k'} \Bigg) \\
    & = \sum_{k,k'=1}^{n} v_k v'_{k'} e_k \odot e_{k'} \\
    & = \sum_{k,k'=1}^{n} v_k v'_{k'} \delta_{k,k'} e_{k} \\
    & = \sum_{k=1}^{n} v_k v'_{k} e_{k}, 
\end{split}
\end{equation}

where we have used the convention that $\langle v,e_k \rangle \equiv v_k.$ However, when $E \neq \mathbb{R}^n$ or $\{e_k\}_{k=1}^n$ is not the standard basis, we can regard $\odot$ as a generalization of the Hadamard product. For all $y,v,v' \in E$, the Hadamard product satisfies the following properties:

\begin{equation}
\begin{split}
    v \odot v' & = v' \odot v \\
    (v \odot v') \odot y & = v \odot (v' \odot y) ,\\
    \langle y, v \odot v' \rangle & = \langle v \odot y, v' \rangle = \langle y \odot v', v \rangle.
\end{split}
\end{equation}

\subsubsection{Derivative of Elementwise Functions}
We can now compute the derivative of elementwise functions using the Hadamard product as described below.

\begin{theorem} \label{thm:elemfunc1stder}
Let $\Psi : E \to E$ be an elementwise function over an inner product
space $E$ as defined in \eqref{eq:psi}. Then, for any $v,z \in E,$

$$D \Psi(z) \cdot v = \Psi' (z) \odot v.$$

Furthermore, $D\Psi(z)$ is self-adjoint for all $z \in E,$ i.e. $D\mbox{*} \Psi(z) = D\Psi(z)$ for all $z \in E$.
\end{theorem}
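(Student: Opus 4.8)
The plan is to work directly from the directional-derivative definition of $Df$ in \eqref{eq:def2}, exploiting the fact that $\Psi$ acts diagonally in the orthonormal basis $\{e_k\}_{k=1}^n$. First I would write $D\Psi(z)\cdot v = \frac{d}{dt}\Psi(z+tv)\big|_{t=0}$ and substitute the series form \eqref{eq:psi}, using linearity of the inner product in its first slot to get $\Psi(z+tv) = \sum_{k=1}^n \psi\big(\langle z,e_k\rangle + t\langle v,e_k\rangle\big)e_k$. Since this is a finite sum of smooth scalar functions of $t$, I can differentiate term by term, and the scalar chain rule gives $\frac{d}{dt}\psi\big(\langle z,e_k\rangle + t\langle v,e_k\rangle\big)\big|_{t=0} = \psi'(\langle z,e_k\rangle)\,\langle v,e_k\rangle$. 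Hence $D\Psi(z)\cdot v = \sum_{k=1}^n \psi'(\langle z,e_k\rangle)\,\langle v,e_k\rangle\, e_k$.

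Next I would show this agrees with $\Psi'(z)\odot v$. Expanding the Hadamard product bilinearly, writing $v = \sum_{k'}\langle v,e_{k'}\rangle e_{k'}$, using the definition \eqref{eq:psi1stdev} of $\Psi'$ and the defining relation \eqref{hadprod}, I obtain $\Psi'(z)\odot v = \sum_{k,k'} \psi'(\langle z,e_k\rangle)\langle v,e_{k'}\rangle\,(e_k\odot e_{k'}) = \sum_{k,k'}\psi'(\langle z,e_k\rangle)\langle v,e_{k'}\rangle\,\delta_{k,k'}\,e_k$, and the Kronecker delta collapses the double sum to $\sum_k \psi'(\langle z,e_k\rangle)\langle v,e_k\rangle e_k$, which matches the derivative computed above. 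This settles the identity $D\Psi(z)\cdot v = \Psi'(z)\odot v$ for all $v,z\in E$.

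For self-adjointness I would verify $\langle D\Psi(z)\cdot v, w\rangle = \langle v, D\Psi(z)\cdot w\rangle$ for arbitrary $v,w\in E$. Using the first part, $\langle D\Psi(z)\cdot v, w\rangle = \langle \Psi'(z)\odot v, w\rangle$, and invoking the inner-product identity $\langle y, p\odot q\rangle = \langle p\odot y, q\rangle$ for $\odot$ listed just above the theorem, together with symmetry of the real inner product, I move $\Psi'(z)$ onto the second factor: $\langle \Psi'(z)\odot v, w\rangle = \langle \Psi'(z)\odot w, v\rangle = \langle v, \Psi'(z)\odot w\rangle = \langle v, D\Psi(z)\cdot w\rangle$. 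Since $v,w$ are arbitrary, $D\mbox{*}\Psi(z) = D\Psi(z)$. Equivalently, one may note that the matrix of $D\Psi(z)$ in the basis $\{e_k\}$ is the diagonal matrix with entries $\psi'(\langle z,e_k\rangle)$, which is manifestly symmetric.

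I expect no genuine obstacle here, only a care-point: when expanding the Hadamard product one must keep the summation index of $\Psi'(z)$ distinct from that of $v$ before applying \eqref{hadprod}, otherwise the collapse by $\delta_{k,k'}$ is obscured. Everything else is routine, since the sum is finite (so term-by-term differentiation is unconditionally valid) and the diagonal action of $\Psi$ reduces both claims to scalar statements handled coordinate-by-coordinate.
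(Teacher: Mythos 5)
Your proposal is correct and follows essentially the same route as the paper's proof: differentiate the coordinate expansion of $\Psi(z+tv)$ term by term via the scalar chain rule, identify the result with $\Psi'(z)\odot v$ through the defining relation of $\odot$, and deduce self-adjointness from the identity $\langle y, v\odot v'\rangle = \langle v\odot y, v'\rangle$. The only difference is that you spell out the double-index collapse in the Hadamard product, which the paper leaves implicit.
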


\begin{proof}
Let $\psi$ be the elementwise operation associated with $\Psi$. Then, 

\begin{equation}
\begin{split}
    D\Psi(z) \cdot v & = \frac{d}{dt}\Psi(z+tv) \Big |_{t=0} \\
    & = \frac{d}{dt} \sum_{k=1}^{n} \psi(\langle z+tv, e_k\rangle) e_k \Big |_{t=0}\\
    & = \sum_{k=1}^{n} \psi'(\langle z, e_k\rangle) \langle v, e_k \rangle e_k \\
    & = \Psi'(z) \odot v,
\end{split}
\end{equation}

where the third equality follows from the chain rule and linearity of the derivative. Furthermore, for any $y \in E,$

$$\langle y, D\Psi(z) \cdot v \rangle = \langle y, \Psi'(z) \odot v \rangle = \langle \Psi'(z) \odot y, v \rangle = \langle D\Psi(z) \cdot y, v \rangle .$$

Since $\langle y, D\Psi(z) \cdot v \rangle = \langle D \Psi(z) \cdot y, v\rangle$ for any $v,y,z \in E, D\Psi(z)$ is self-adjoint.
\end{proof}

\begin{theorem} \label{thm:elemfunc2ndder}
Let $\Psi : E \to E$ be an elementwise function over an inner product
space $E$ as defined in \eqref{eq:psi}. Then, for any $v_1,v_2,z \in E,$

$$D^2 \Psi(z) \cdot (v_1,v_2) = \Psi'' (z) \odot v_1 \odot v_2 .$$

Furthermore, $(v_1 \lrcorner D^2\Psi(z))$ and $(D^2\Psi(z) \llcorner v_2)$ are both self-adjoint linear maps for any $v_1,v_2,z \in E.$
\end{theorem}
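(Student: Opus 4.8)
The plan is to establish the displayed formula by reducing the second derivative to a single further differentiation of the first-derivative expression already obtained in Theorem~\ref{thm:elemfunc1stder}, and then to read off self-adjointness of the two hooked maps directly from the inner-product property of the Hadamard product. First I would unfold the definition of the second derivative as $D^2\Psi(z) \cdot (v_1, v_2) = \frac{d}{dt}\big(D\Psi(z + t v_1) \cdot v_2\big)\big|_{t=0}$ and substitute the first-derivative identity $D\Psi(w)\cdot v_2 = \Psi'(w)\odot v_2$, turning the inner expression into $\Psi'(z + t v_1)\odot v_2$. Because $\odot$ is bilinear and $v_2$ does not depend on $t$, the $t$-derivative passes through the Hadamard product and leaves $\big(\tfrac{d}{dt}\Psi'(z + t v_1)\big|_{t=0}\big)\odot v_2$. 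Applying the chain rule coordinatewise to $\Psi'(w) = \sum_k \psi'(\langle w, e_k\rangle)e_k$ produces $\sum_k \psi''(\langle z, e_k\rangle)\langle v_1, e_k\rangle e_k$, which I would identify as $\Psi''(z)\odot v_1$ by matching the definition of $\Psi''$ against the expansion $v_1 = \sum_k \langle v_1, e_k\rangle e_k$. Associativity of $\odot$ then gives $D^2\Psi(z)\cdot(v_1,v_2) = \Psi''(z)\odot v_1\odot v_2$.

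For the self-adjointness claims, I would observe that each hooked map is merely Hadamard multiplication by a fixed vector. The left-hook map satisfies $(v_1 \lrcorner D^2\Psi(z))\cdot w = \Psi''(z)\odot v_1\odot w$, so setting $a = \Psi''(z)\odot v_1$ it acts as $w \mapsto a\odot w$; by commutativity of $\odot$ the right-hook map $(D^2\Psi(z)\llcorner v_2)\cdot w = \Psi''(z)\odot w\odot v_2$ is likewise $w\mapsto b\odot w$ with $b = \Psi''(z)\odot v_2$. Self-adjointness of any such map then follows from the Hadamard identity $\langle y, a\odot w\rangle = \langle a\odot y, w\rangle$: pairing an arbitrary test vector $y$ gives $\langle y, (v_1\lrcorner D^2\Psi(z))\cdot w\rangle = \langle (v_1\lrcorner D^2\Psi(z))\cdot y, w\rangle$, which is exactly the defining relation for the map to equal its own adjoint, and the same computation applies verbatim to the right-hook map.

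The only place I expect to need genuine care is the middle of the first paragraph: justifying cleanly that the $t$-derivative commutes with the bilinear operator $\odot$ so that $v_2$ factors out as a constant, and checking that the coordinatewise chain rule yields precisely the grouping $\Psi''(z)\odot v_1$ rather than some other association of the three factors. Once the commutativity and associativity of $\odot$ are invoked carefully, both parts are short, and the self-adjointness argument introduces no ideas beyond those already used to prove Theorem~\ref{thm:elemfunc1stder}.
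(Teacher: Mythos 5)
Your proposal is correct and follows essentially the same route as the paper: both differentiate the first-derivative identity $D\Psi(w)\cdot v_2 = \Psi'(w)\odot v_2$ once more to obtain $\Psi''(z)\odot v_1\odot v_2$, and both obtain self-adjointness of the hooked maps from the inner-product identity $\langle y, a\odot w\rangle = \langle a\odot y, w\rangle$. The only cosmetic differences are that the paper invokes Theorem~\ref{thm:elemfunc1stder} again by noting that $z\mapsto\Psi'(z)\odot v_2$ is itself an elementwise function (where you carry out the coordinatewise chain rule explicitly), and it deduces the right-hook case from the symmetry of $D^2\Psi(z)$ rather than repeating the computation.
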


\begin{proof}
We can prove this directly:

\begin{equation}
\begin{split}
    D^2\Psi(z) \cdot (v_1,v_2) & = D(D\Psi(z) \cdot v_2) \cdot v_1 \\
    & = D(\Psi'(z) \cdot v_2) \cdot v_1\\
    & = (\Psi'' (z) \odot v_1) \odot v_2,
\end{split}
\end{equation}

where the third equality follows since $\Psi'(z) \odot v_2$ is an elementwise function in $z$. \\
Also, for any $y \in E,$

\begin{equation}
\begin{split}
    \langle y, \Big( v_1 \lrcorner D^2 \Psi(z) \Big) \cdot v_2 \rangle & = \langle y,D^2 \Psi(z) \cdot (v_1, v_2) \rangle \\
    & = \langle \Psi''(z) \odot v_1 \odot y, v_2 \rangle \\
    & = \langle \Big( v_1 \lrcorner D^2\Psi(z) \Big) \cdot y, v_2 \rangle.
\end{split}
\end{equation}

This implies that the map $\big( v_1 \lrcorner D^2\Psi(z) \big)$ is self-adjoint map for any $v_1,z \in E$. From the symmetry of the second derivative $D^2\Psi(z)$, the map $\big(D^2\Psi(z) \llcorner v_1 \big)$ is also self-adjoint for any $v_1,z \in E.$
\end{proof}

\subsection{Conclusion}

In this section, the mathematical tools for handling vector-valued functions that will arise when describing generic neural networks were presented. In particular, the notation and theory surrounding linear maps, derivatives, parameter-dependent maps, and elementwise functions were introduced. Familiarity with the material presented in this chapter is paramount for understanding the rest of this chapter.

\clearpage
\section{Generic Representation of Neural Networks}

According to Universal Approximation Theorem, a feedforward network with a single layer is sufficient to represent any function, but the layer may be infeasibly large and may fail to learn and generalize correctly (See [\cite{dl}, Chater 6, Section 4] for more details). In many circumstances, using deeper models can reduce the number of units required to represent the desired function and can reduce the amount of generalization error. Therefore, it is important to develop a solid and concise theory for repeated function composition as it pertains to neural networks; and we will use the mathematical tools described in the previous section \ref{sec:math tools} to make it happen. Also, the derivatives of these functions with respect to the parameters at each layer are computed since neural networks often learn their parameters via some form of gradient descent. The derivative maps that are computed will remain in the same vector space as the parameters, which will allow us to perform gradient descent naturally over these vector spaces.

For commencing the path to formalize NNs, a generic neural network is formulated as the composition of parameter-dependent functions. We will then introduce standard loss functions based on this composition for the task of regression. Since forecasting which was our primary problem is a regression task, we will dismiss the classification case. 
\clearpage

\subsection{Neural Network Formulation} \label{nnf}
A deep neural network with $L$ layers can be represented as the composition of $L$ functions $f_i:E_i \times H_i \to E_{i+1}$, where $E_i, H_i,$ and $E_{i+1}$ are inner product spaces for all $i \in [L]$. We will refer to the variables $x_i \in E_i$ as \emph{state variables}, and the variables $\theta_i \in H_i$ as \emph{parameters}. Throughout this thesis, the dependence of the layerwise function $f_i$ on the parameter $\theta_i$ is often suppressed for ease of composition, i.e. $f_i$ is understood as a function from $E_i$ to $E_{i+1}$ depending on $\theta_i \in H_i.$ We can then write down the output of a neural network for a generic input $x \in E_1$ using this suppression convention as a function $F: E_1 \times (H_1 \times \cdots \times H_L) \to E_{L+1}$ according to 

$$F(x;\theta) = (f_L \circ \cdots \circ f_1)(x),$$

where each $f_i$ is dependent on the parameter $\theta_i \in H_i$, and $\theta$ represents the parameter set ${\theta_1,...,\theta_L}.$ For now, we will assume that each parameter $\theta_i$ is independent of the other parameters $\{{\theta_j}\}_{j \ne i},$ but we will see how to modify this assumption when working with recurrent neural networks in the last section.
\\

Now, some maps will be introduced so as to assist in the calculation of derivatives. First, the \emph{head} map at layer $i, \alpha_i : E_1 \to E_{i+1},$ is given by 

\begin{equation} \label{eq:alpha}
    \alpha_i = f_i \circ \cdots \circ f_1
\end{equation}

for each $i \in [L] \equiv \{1,...,L\}.$ Note that $\alpha_i$ implicitly depends on the parameters $\{\theta_1,...,\theta_i\}.$ For convenience, set $\alpha_0 = id:$ the identity map on $E_1$. Similarly, \emph{tail} map at layer $i, \omega_i:E_i \to E_{L+1},$ can be defined as 

\begin{equation} \label{eq:omega}
\omega_i = f_L \circ \cdots \circ f_i
\end{equation}

for each $i \in [L].$ The map $\omega_i$ implicitly depends on $\{\theta_i,...,\theta_L\}.$ Again for convenience, set $\omega_{L+1}$ to be the identity map on $E_{L+1}.$ It can be easily show that the following holds for all $i \in [L]:$
\begin{equation}
\label{eq:4-1}
F = \omega _{i+1} \circ \alpha _i, \;\;\;\;\;    
 \omega _i = \omega _{i+1} \circ f_i, \;\;\;\;\;   
 \alpha _i = f_i \circ \alpha _{i-1} 
\end{equation}

The equations in \eqref{eq:4-1} imply that the output $F$ can be decomposed into 

$$F = \omega_{i+1} \circ f_i \circ \alpha_{i-1}$$

for all $i \in [L]$, where both $\omega _{i+1}$ and $\alpha _{i-1}$ have no dependence on the parameter $\theta _i$.

\subsection{Loss Functions and Gradient Descent}

The goal of neural network is to optimize some loss function $J$ with respect to the parameters $\theta$ over a set of $n$ network inputs $\mathcal{D} = \{(x_{(1)},y_{(1)}),...,(x_{(n)}, y_{(n)})\},$ where $x_{(j)} \in E_1$ is the $j$th input data point with associated response or target $y_{(j)} \in E_{L+1}.$ Most optimization methods are gradient-based, meaning that we must calculate the gradient of $J$ with respect to the parameters at each layer $i \in [L].$

At first, the squared loss function will be introduced along with taking derivatives of it for a single data point $(x,y) \equiv (x_{(j)},y_{(j)})$ for some $j \in [n],$ and then concisely present error backpropagation. Finally, the algorithm for performing gradient descent steps for regression will be presented.\\
%Note that we will often write
%
%$$x_i = \alpha_{i-1}(x)$$
%
%throughout this section for ease of notation.

Our starting point is a result to compute $\nabla_{\theta_i}^{\mbox{*}} F(x;\theta).$

\begin{lemma}
For any $x \in E_1$ and $i \in [L],$

\begin{equation} \label{eq:gradf}
\nabla_{\theta _i}^{\mbox{*}} F(x;\theta) = \nabla_{\theta_i}^{\mbox{*}} f_i(x_i) \cdot D\mbox{*}\omega_{i+1}(x_{i+1})
\end{equation}

\end{lemma}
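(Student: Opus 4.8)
The plan is to exploit the decomposition $F = \omega_{i+1} \circ f_i \circ \alpha_{i-1}$ recorded in \eqref{eq:4-1}, together with the observation that neither $\alpha_{i-1}$ nor $\omega_{i+1}$ carries any dependence on the parameter $\theta_i$. First I would fix the intermediate states $x_i = \alpha_{i-1}(x) \in E_i$ and $x_{i+1} = f_i(x_i) = \alpha_i(x) \in E_{i+1}$. Since $\alpha_{i-1}$ is assembled only from $f_1,\dots,f_{i-1}$, and the parameters $\{\theta_j\}$ are assumed mutually independent (as stipulated in \ref{nnf}), the state $x_i$ is constant with respect to $\theta_i$. Hence the entire $\theta_i$-dependence of $F$ is funneled through the single layerwise map $f_i$ evaluated at the frozen input $x_i$.

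Next I would apply the parameter chain rule of Lemma \ref{lemma 3-3 1}, equation \eqref{eq:3-3}, treating $\omega_{i+1}: E_{i+1} \to E_{L+1}$ as the $\theta_i$-independent outer map $g$ and $f_i$ as the parameter-dependent map $f$. This gives the identity in $L(H_i; E_{L+1})$
$$\nabla_{\theta_i} F(x;\theta) = D\omega_{i+1}(f_i(x_i)) \cdot \nabla_{\theta_i} f_i(x_i) = D\omega_{i+1}(x_{i+1}) \cdot \nabla_{\theta_i} f_i(x_i).$$

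Finally I would take adjoints, using the direction-reversing property $(L_2 \cdot L_1)\mbox{*} = L_1\mbox{*} \cdot L_2\mbox{*}$ with $L_2 = D\omega_{i+1}(x_{i+1})$ and $L_1 = \nabla_{\theta_i} f_i(x_i)$, and recalling the shorthands $D\mbox{*}\omega_{i+1} \equiv (D\omega_{i+1})\mbox{*}$ and $\nabla_{\theta_i}\mbox{*} f_i \equiv (\nabla_{\theta_i} f_i)\mbox{*}$. This yields
$$\nabla_{\theta _i}^{\mbox{*}} F(x;\theta) = \nabla_{\theta_i}^{\mbox{*}} f_i(x_i) \cdot D\mbox{*}\omega_{i+1}(x_{i+1}),$$
which is exactly the claimed formula \eqref{eq:gradf}.

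The computation itself is short, so the main obstacle is not analytic but bookkeeping: one must justify cleanly that differentiating $F$ with respect to $\theta_i$ reduces to differentiating $\omega_{i+1}\circ f_i$ at the frozen state $x_i$, i.e. that the head map $\alpha_{i-1}$ contributes nothing. This rests entirely on the standing parameter-independence assumption, and once it is invoked the chain rule and the adjoint identity close the argument with no further subtlety. A minor point to keep straight is distinguishing the state derivative $D$ from the parameter derivative $\nabla$ when applying \eqref{eq:3-3}, since $\omega_{i+1}$ is differentiated in its state argument while $f_i$ is differentiated in its parameter argument.
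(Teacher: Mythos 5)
Your argument is correct and is essentially identical to the paper's own proof: both apply the parameter chain rule \eqref{eq:3-3} to the decomposition $F = \omega_{i+1} \circ f_i \circ \alpha_{i-1}$, use the $\theta_i$-independence of $\omega_{i+1}$ and $\alpha_{i-1}$ to isolate $\nabla_{\theta_i} f_i(x_i)$, and then take adjoints via the direction-reversing property. Your additional remarks on why $\alpha_{i-1}$ contributes nothing simply make explicit what the paper states in one clause.
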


\begin{proof}
Apply the chain rule from \eqref{eq:3-3} to $F = \omega_{i+1} \circ f_i \circ \alpha _{i-1}$ according to

\begin{equation} \label{eq:5}
\begin{split}
\nabla_{\theta _i}F(x;\theta) &  = D{\omega_{i+1}}(f_i(\alpha_{i-1}(x))) \cdot \nabla_{\theta _i}f_i(\alpha _{i-1}(x)) \\
 & =  D{\omega_{i+1}}(x_{i+1}) \cdot \nabla_{\theta_i}f_i(x_i), 
\end{split}
\end{equation}

since neither $\omega_{i+1}$ nor $\alpha_{i-1}$ depend on $\theta_i$. Then, by taking the adjoint and applying the reversing property we can obtain \eqref{eq:gradf}.

\end{proof}

\subsubsection{Regression}
The target variable $y \in E_{L+1}$ can be any generic vector of real numbers. Thus, for a single data point, the most common loss function to consider is the squared loss, given by

\begin{equation} \label{eq:sqloss} 
J_R(x,y; \theta) = \frac{1}{2} \norm{y - F(x; \theta)}^2 = \frac{1}{2} \langle y - F(x; \theta), y - F(x; \theta)\rangle.
\end{equation}

The network perdiction ${\widehat{y}}_R \in E_{L+1}$ is given by the network output $F(x; \theta).$ We can calculate the gradient of $J_R$ with respect to the parameter $\theta_i$ according to the following theorem.

\begin{theorem} \label{thm:gradj}
For any $x \in E_1, y \in E_{L+1},$ and $i \in [L]$,

\begin{equation} \label{eq:derj}
\nabla_{\theta_i}J_R(x, y; \theta) = \nabla_{\theta_i}^{\mbox{*}} f_i(x_i) \cdot D\mbox{*}\omega_{i+1}(x_{i+1}) \cdot ({\widehat{y}}_R - y),    
\end{equation}

where $x_i = \alpha_{i-1}(x)$ and ${\widehat{y}}_R = F(x; \theta)$

\end{theorem}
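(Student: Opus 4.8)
The plan is to view the squared loss as the composition $J_R(x,y;\theta) = g(F(x;\theta))$ of the network map $F$ with the fixed, parameter-free scalar map $g : E_{L+1} \to \mathbb{R}$ given by $g(z) = \frac{1}{2}\langle y - z, y - z\rangle$, and then to differentiate this composition with respect to the single parameter block $\theta_i$ using the parameter-dependent chain rule \eqref{eq:3-3}, finishing by invoking the preceding lemma \eqref{eq:gradf}, which already expresses $\nabla_{\theta_i}^{\mbox{*}} F$ in the desired factored form. Because $g$ carries no parameter dependence, the entire $\theta_i$-dependence of $J_R$ sits inside $F$, so this is essentially a direct assembly of the tools built earlier in the section.

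First I would compute the state derivative of $g$: a one-line calculation from the definition \eqref{eq:def2}, expanding $\frac{d}{dt} g(z + tv)\big|_{t=0}$ and using bilinearity of the inner product, gives $Dg(z) \cdot v = \langle z - y, v\rangle$ for every $v \in E_{L+1}$. Next I would apply \eqref{eq:3-3} with outer map $g$ and inner parameter-dependent map $F$ to get $\nabla_{\theta_i} J_R(x,y;\theta) = Dg(F(x;\theta)) \cdot \nabla_{\theta_i} F(x;\theta)$ as an element of $L(H_i;\mathbb{R})$. Substituting $\widehat{y}_R = F(x;\theta)$ and the formula for $Dg$, this says that for every test direction $u \in H_i$ we have $\nabla_{\theta_i} J_R(x,y;\theta) \cdot u = \langle \widehat{y}_R - y, \nabla_{\theta_i} F(x;\theta) \cdot u\rangle$.

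The crux is then to pass from this derivative functional to the gradient vector that the statement denotes by $\nabla_{\theta_i} J_R$. Using the defining property of the adjoint I would move the inner map across the inner product, rewriting the right-hand side as $\langle \nabla_{\theta_i}^{\mbox{*}} F(x;\theta) \cdot (\widehat{y}_R - y), u\rangle$; since this holds for all $u \in H_i$, the gradient is identified as the vector $\nabla_{\theta_i}^{\mbox{*}} F(x;\theta) \cdot (\widehat{y}_R - y) \in H_i$. Substituting the factorization \eqref{eq:gradf}, namely $\nabla_{\theta_i}^{\mbox{*}} F(x;\theta) = \nabla_{\theta_i}^{\mbox{*}} f_i(x_i) \cdot D\mbox{*}\omega_{i+1}(x_{i+1})$, then yields exactly \eqref{eq:derj}.

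I expect the only genuine subtlety — rather than the routine differentiation — to be bookkeeping around the adjoint and the Riesz identification: being careful that $\nabla_{\theta_i} J_R$ on the left of the statement denotes the gradient vector in $H_i$ (obtained by shifting every map onto the other slot of the inner product), not the derivative functional in $L(H_i;\mathbb{R})$, and that the adjoints compose in the order dictated by the direction-reversing property so that the codomains line up as $E_{L+1} \to E_{i+1} \to H_i$. Once the spaces are tracked correctly, the identity falls out immediately.
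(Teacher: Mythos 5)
Your proposal is correct and follows essentially the same route as the paper: both arrive at $\nabla_{\theta_i}J_R(x,y;\theta)\cdot U = \langle \widehat{y}_R - y, \nabla_{\theta_i}F(x;\theta)\cdot U\rangle$ (the paper via the product rule on the inner product, you via pre-computing $Dg$ for the outer squared-loss map and invoking the chain rule \eqref{eq:3-3}, which amounts to the same computation), then pass to the gradient vector by the adjoint and the canonical isomorphism, and finish by substituting \eqref{eq:gradf}. The subtlety you flag about distinguishing the derivative functional in $L(H_i;\mathbb{R})$ from the gradient vector in $H_i$ is exactly the point the paper handles with its appeal to the canonical isomorphism.
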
 \label{thm:derj}

\begin{proof}
By the product rule, for any $U_i \in H_i,$

\begin{equation}
\begin{split}
\nabla_{\theta_i} J_R(x, y; \theta) \cdot U_i & = \nabla_{\theta_i} \tfrac{1}{2} \langle F(x; \theta) - y, F(x; \theta) - y \rangle \cdot U \\
& = \tfrac{1}{2} \langle \nabla_{\theta_i} F(x; \theta), F(x; \theta) - y \rangle \cdot U + \tfrac{1}{2} \langle F(x; \theta) - y, \nabla_{\theta_i} F(x; \theta) \rangle \cdot U\\
& = \langle F(x; \theta) - y, \nabla_{\theta_i} F(x; \theta) \cdot U_i \rangle\\
& = \langle \nabla_{\theta_i}^{\mbox{*}}F(x; \theta) \cdot (F(x; \theta) - y), U_i \rangle,
\end{split}
\end{equation}

This implies that the derivative map above is a \emph{linear functional}, i.e. $\nabla_{\theta_i} J_R(x, y; \theta) \in L(H_i; \mathbb{R}).$ Then, by the canonical isomorphism described in [\cite{nnm3}, Chapter 5, Section 3], $\nabla_{\theta_i} J_R(x, y; \theta)$ can be represented as an element of $H_i$ as 

$$\nabla_{\theta_i} J_R(x, y; \theta) = \nabla_{\theta_i}^{\mbox{*}} F(x; \theta) \cdot (F(x; \theta) - y).$$

Since $F(x; \theta) = \widehat{y}_R$ and $\nabla_{\theta_i}^{\mbox{*}} F(x; \theta) = \nabla_{\theta_i}^{\mbox{*}} f_i(x_i) \cdot D\mbox{*}\omega_{i+1}(x_{i+1})$ by \eqref{eq:gradf}, we have thus proven \eqref{eq:derj}

\end{proof}

\subsection{Backpropagation} \label{subsec:bptt-gen}

The derivative of loss function with respect to a generic parameter $\theta _i$ \eqref{eq:derj} involves applying $D\mbox{*}\omega_{i+1}(x_{i+1})$ to an error vector, that is ${\widehat{y}}_R - y.$ This operation is commonly referred to as \emph{backpropagation}, and the procedure of calculating it recursively is demonstrated in the the next theorem. 

\begin{theorem}[\textbf{Backpropagation}] \label{thm:bckpp}
For all $x_i \in E_i,$ with $\omega_i$,

\begin{equation} \label{eq:bckppg}
D\mbox{*}\omega_i(x_i) = D\mbox{*}f_i(x_i) \cdot D\mbox{*}\omega_{i+1}(x_{i+1}),    
\end{equation}
where $x_{i+1} = f_i(x_i),$ for all $i \in [L].$

\end{theorem}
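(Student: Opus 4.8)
The structural fact that makes this theorem essentially immediate is already recorded in the excerpt: equation \eqref{eq:4-1} gives the recursion $\omega_i = \omega_{i+1} \circ f_i$. The plan is therefore to differentiate this single composition, evaluate at the appropriate points, and then take adjoints, letting the direction-reversing property of the adjoint produce the stated ordering. No approximation arguments or limits beyond those already packaged into the chain rule are needed.

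Concretely, I would first apply the chain rule \eqref{eq:cr} to $\omega_i = \omega_{i+1} \circ f_i$ at the point $x_i \in E_i$, giving $D\omega_i(x_i) = D\omega_{i+1}(f_i(x_i)) \cdot Df_i(x_i)$. Since by hypothesis $x_{i+1} = f_i(x_i)$, the inner evaluation point is exactly $x_{i+1}$, so this reads $D\omega_i(x_i) = D\omega_{i+1}(x_{i+1}) \cdot Df_i(x_i)$ as an identity of linear maps in $L(E_i; E_{L+1})$.

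Next I would take the adjoint of both sides. Using the reversing property $(L_2 \cdot L_1)\mbox{*} = L_1\mbox{*} \cdot L_2\mbox{*}$ recorded in the preliminaries, together with the notational convention $D\mbox{*}g(\cdot) \equiv (Dg(\cdot))\mbox{*}$, the product $\big(D\omega_{i+1}(x_{i+1}) \cdot Df_i(x_i)\big)\mbox{*}$ becomes $Df_i(x_i)\mbox{*} \cdot D\omega_{i+1}(x_{i+1})\mbox{*} = D\mbox{*}f_i(x_i) \cdot D\mbox{*}\omega_{i+1}(x_{i+1})$, which is precisely \eqref{eq:bckppg}. The validity over all $i \in [L]$ is inherited directly from the validity of \eqref{eq:4-1} over that same range, with the convention $\omega_{L+1} = \mathrm{id}$ anchoring the recursion at the top layer.

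The only point requiring care, and the closest thing to an obstacle here, is purely notational rather than mathematical: one must keep straight that $D\mbox{*}\omega_i(x_i)$ means the adjoint of the linear map $D\omega_i(x_i)$ taken \emph{after} evaluation at the base point $x_i$, so that taking adjoints commutes with the evaluation and the chain rule applies before the adjoint is formed. Once that reading is fixed, the argument is a one-line chain rule followed by the adjoint reversal, and the recursive structure of backpropagation is simply the recursive structure of $\omega_i = \omega_{i+1} \circ f_i$ read through the adjoint.
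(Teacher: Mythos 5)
Your proposal is correct and follows exactly the paper's own argument: apply the chain rule \eqref{eq:cr} to $\omega_i = \omega_{i+1} \circ f_i$, evaluate the inner map at $x_{i+1} = f_i(x_i)$, and take adjoints using the direction-reversing property, with $\omega_{L+1} = \mathrm{id}$ anchoring the range of validity. Your version is simply a more explicit write-up of the same one-line proof.
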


\begin{proof}
Apply the chain rule \eqref{eq:cr} to $\omega_i(x_i) = (\omega_{i+1} \circ f_i)(x_i),$ and take the adjoint to obtain \eqref{eq:bckppg}. This holds for any $i \in [L]$ since $\omega_{L+1} = id.$
\end{proof}

The reason that \ref{alg:backppg} is referred to as backpropagation will be elucidated in Algorithm \ref{alg:backppg}, since $D\mbox{*}\omega_i(x_i)$ will be applied to an error vector $e_L \in E_{L+1}$ and then sent backwards at each layer $i.$

\subsection{Gradient Descent Step Algorithm}

 The method for computing one step of gradient descent is presented for a generic layered neural network in Algorithm 3.2.1, which clarifies how the results of this subsection can be combined. The inputs are the network input point $(x,y) \in E_1 \times E_{L+1},$ the parameter set $\theta = \{\theta_1,...,\theta_L\}$ and the learning rate $\eta \in \mathbb{R}_+$. It updates the set of network parameters $\theta$ via one step of gradient descent. 

\subsubsection{Describing Algorithm}
At first the network prediction is generated using forward propagation from lines 2-5 and store the state at each layer. Then, these states are used in the backpropagation step, which begins at line 6. At the top layer $(i=L),$ the error vector $e_L$ is  initialized to ${\widehat{y}}_R - y$, since $D\mbox{*}\omega_{L+1}(x_{L+1}) = id$ and

$$\nabla_{\theta_L} J_R(x, y; \theta) = \nabla_{\theta_L}^{\mbox{*}} f_L(x_L) \cdot D\mbox{*} \omega_{L+1}(x_{L+1}) \cdot e_L = \nabla_{\theta_L}^{\mbox{*}} f_L(x_L) \cdot e_L.$$

When $i \neq L,$ the error vector in line 11 is updated through multiplication by $D\mbox{*}f_{i+1}(x_{i+1})$ in accordance with \eqref{eq:bckppg} e.g. $ e_{i} = D\mbox{*}f_{i+1}(x_{i+1}) \cdot e_{i+1}$. Then, line 13 uses $e_i = D\mbox{*}\omega_{i+1}(x_{i+1}) \cdot (F(x; \theta) - y)$ to calculate $\nabla_{\theta_i} J_R(x, y; \theta)$ as per \eqref{eq:derj}.

It will be verified that two error vectors are identical.
\begin{equation} \label{eq:errvec}
    e_i = D\mbox{*} \omega_{i+1}(x_{i+1}) \cdot (\widehat{y}_R - y),
\end{equation}
\begin{equation} \label{eq:errvec2}
    e_{i} = D\mbox{*}f_{i+1}(x_{i+1}) \cdot e_{i+1}.
\end{equation}

By substituting \eqref{eq:errvec} in \eqref{eq:errvec2} we'll obtain 

\begin{equation} \label{eq:errvec3}
\begin{split}
e_{i-1} = D\mbox{*}f_{i}(x_{i}) \cdot e_{i} & = D\mbox{*}f_{i}(x_{i}) \cdot D\mbox{*} \omega_{i+1}(x_{i+1}) \cdot (\widehat{y}_R - y) \\
& = D\mbox{*} \omega_i(x_i) \cdot (\widehat{y}_R - y).
\end{split}
\end{equation}

One can rewrite \eqref{eq:errvec3} by increasing index by one:
$$e_i = D\mbox{*} \omega_{i+1}(x_{i+1}) \cdot (\widehat{y}_R - y),$$

which will return us to \eqref{eq:errvec}. \\
Now, it is straightforward to conclude that

\begin{equation}
\nabla_{\theta_i} J_R(x, y; \theta) = \nabla_{\theta_i}^{\mbox{*}} f_i(x_i) \cdot D\mbox{*}\omega_{i+1}(x_{i+1}) \cdot (\widehat{y}_R - y) = \nabla_{\theta_i}^{\mbox{*}} f_i(x_i) \cdot e_i .
\end{equation}

The essential results we obtained thus far which were used in the algorithm are as follows:
\begin{itemize}

    \item $D\mbox{*} \omega_i(x_i) = D\mbox{*} f_i(x_i) \cdot D\mbox{*} \omega_{i+1}(x_{i+1}),$

    \item $e_i = D\mbox{*} \omega_{i+1}(x_{i+1}) \cdot (\widehat{y}_R - y) = D\mbox{*}f_{i+1}(x_{i+1}) \cdot e_{i+1}.$
    
    \item $\nabla_{\theta_i}^{\mbox{*}} F(x; \theta) = \nabla_{\theta_i}^{\mbox{*}} f_i(x_i) \cdot D\mbox{*}\omega_{i+1}(x_{i+1}),$
    
    \item $\nabla_{\theta_i} J_R(x, y; \theta) = \nabla_{\theta_i}^{\mbox{*}} F(x; \theta) \cdot (F(x; \theta) - y) = \nabla_{\theta_i}^{\mbox{*}} f_i(x_i) \cdot D\mbox{*}\omega_{i+1}(x_{i+1}) \cdot (\widehat{y}_R - y) = \nabla_{\theta_i}^{\mbox{*}} f_i(x_i) \cdot e_i,$

    %\item $e_L = D\mbox{*} f_{L+1}(x_{L+1}) \cdot (x_{L+1} - y), \;\;\; %\nabla_{\theta_L} J_R(x, y; \theta) = \nabla_{\theta_L}^{\mbox{*}} %f_L(x_L) \cdot e_L $

\end{itemize}

One can extend Algorithm \ref{alg:backppg} linearly to a batch of input points $\{(x_{(j)},y_{(j)})\}_{j \in A},$ where $A \subset [n],$ by averaging the contribution to the gradient from each point $(x_{(j)},y_{(j)})$ over the batch. One can also extend Algorithm \ref{alg:backppg} to more complex versions of gradient descent, e.g. momentum and adaptive gradient step methods. These methods, however, are not in the scope of this thesis. Once can also incorporate a simple form of regularization into this framework as described in the following remark.

\newtheorem*{remark}{Remark}

\begin{remark}
It is straightforward to incorporate a standard $\ell_2$-regularization term into this framework. Consider a new objective function $\mathcal{J}_T(x, y; \theta) = J_R(x, y, \theta) + \lambda T(\theta),$ where $\lambda \in \mathbb{R}_+$ is the regularization parameter, and

$$T(\theta) = \frac{1}{2} \norm{\theta}^2 = \frac{1}{2} \sum_{i=1}^{L} \norm{\theta_i}^2 = \frac{1}{2} \sum_{i=1}^{L} \langle \theta_i, \theta_i \rangle$$

is the regularization term. It follows that $\nabla_{\theta_i} \mathcal{J}_T(x, y; \theta) = \nabla_{\theta_i}J_R(x, y; \theta) + \lambda\theta_i,$ since $\nabla_{\theta_i}T(\theta) = \theta_i$. The reason of this can be found in [\cite{dnnmf}, Chapter 3, Section 3]. Consequently, gradient descent can be updated to include the regularizing term, i.e. line 14 can be changed in Algorithm \ref{alg:backppg} to 

$$\theta_i \leftarrow \theta_i - \eta (\nabla_{\theta_i}J_R(x, y; \theta) + \lambda\theta_i).$$

\end{remark}

\begin{algorithm}[H]
  \caption{One iteration of gradient descent for a generic neural network}
   \label{alg:backppg}
  \begin{algorithmic}[1]
    \STATE \textbf{function} \textsc{GradStepNN}%
$(x, y, \theta, \eta)$
    %\bindent
    \INDSTATE $x_1 \leftarrow x$ 
    \begin{ALC@g}
    %\bindent
    %\eindent
    \FOR{$i \in \{1,...,L\}$} 
        \STATE {$x_{i+1} \leftarrow f_i(x_i)$} \COMMENT{$x_{L+1} = F(x; \theta);$ forward propagation step} 
    \ENDFOR
    
    \FOR{$i \in \{L,...,1\}$} 
        \STATE $\tilde{\theta_i} \leftarrow \theta$ \COMMENT{Store old $\theta_i$ for updating $\theta_{i-1}$}
        \IF {$i=L$} 
        \STATE $e_L \leftarrow x_{L+1} - y$
        \ELSE
        \STATE $e_i \leftarrow D{\mbox{*}}f_{i+1}(x_{i+1}) \cdot e_{i+1}$ \COMMENT{Update with $\tilde{\theta}_{i+1}$; backpropagation step}
        \ENDIF
        \STATE $\nabla_{\theta_i}J_R(x, y; \theta) \leftarrow \nabla_{\theta_i}^{\mbox{*}} f_i(x_i) \cdot e_i$ 
        \STATE $\theta_i \leftarrow \theta_i - \eta\nabla_{\theta_i} J_R(x, y; \theta)$ \COMMENT{Parameter update setup}
    \ENDFOR
    \RETURN $\theta$
   \end{ALC@g}
  \end{algorithmic}
\end{algorithm}

One can also consider a higher-order loss function that penalizes the first derivative of the network output. For more detail on this, see [\cite{dnnmf}, Chapter 3, Section 3]

\subsection{Conclusion}

In this section, a generic mathematical framework for layered neural networks was developed. Derivatives were calculated with respect to the parameters of each layer for standard loss functions, demonstrating how to do this directly over the vector space in which the parameters are defined. This generic framework will be used to represent specific network structures (such as RNNs and more specifically, gated RNNs and finally LSTM) in the next section.

\clearpage
\section{Recurrent Neural Networks} \label{sec:nnmf-rnn}

A mathematical framework for a generic layered neural network was developed in the preceding section, including a method to express error backpropagation and loss function derivatives directly over the inner product space in which the network parameters are defined. This chapter will be dedicated to expressing Recurrent Neural Networks. Afterward, we will concentrate on Vnilla RNNs and then gated RNNs and eventually narrow it down to LSTMs.

\subsection{Generic RNN Forumation}

The framework developed in \ref{nnf} will be altered to describe the RNN, as it is a different style of neural network. At first the notation for sequences is introduced, then the forward propagation of the hidden state will be discussed, and finally, the loss function and the BPTT gradient descent methods for RNN will be introduced.

\subsubsection{Sequence Data}
In the most general case, the input to an RNN, which is denoted by $\mathbf{x}$, is a sequence of bounded length, i.e.

$$\mathbf{x} \equiv (x_1,...,x_L) \in \underbrace{E_x \times \cdots \times E_x}_\text{$L$ times} \equiv E_x^L ,$$

where $E_x$ is some inner product space, $E_x^L$ is shorthand for the direct product of $L$ copies of $E_x$, and $L \in \mathbb{Z}_+$ is the maximum sequence length for the particular problem. One can also write the RNN target variables, which is denoted by $\mathbf{y}$, as a sequence of bounded length, i.e.

$$\mathbf{y} \equiv (y_1,...,y_L) \in \underbrace{E_y \times \cdots \times E_y}_\text{$L$ times} \equiv E_y^L ,$$

where $E_y$ is also an inner product space.\\

When using an RNN, the dataset will be of the form $\mathcal{D} = \{(\mathbf{x}_{(j)}, \mathbf{y}_{(j)})\}_{j=1}^n$ where $(\mathbf{x}_{(j)},\mathbf{y}_{(j)}) \in E_x^L \times E_y^L$ for all $j \in [n].$ However, sequences are generally of varying length, so any particular $\mathbf{x}_{(j)}$ may only have $\ell < L$ elements; for those points, we will simply not calculate the loss or prediction beyond the $\ell$th layer of the network. Similarly, a given $\mathbf{y}_{(j)}$ may not contain a target value for each $i \in [L]$ again, the loss will be calculated only when there is actually a target value. Thus, without loss of generality, we will only present the case where the data point we are considering, $(\mathbf{x}_{(\newoptimal{j})}, \mathbf{y}_{(\newoptimal{j})}) \equiv (\mathbf{x}, \mathbf{y}) \in \mathcal{D}$ is full, i.e. $\mathbf{x}$ is of length $L$ and $\mathbf{y}$ contains $L$ target points. 

\subsubsection{Hidden States, Parameters and Forward Propagation}
One feature that makes RNNs unique is that they contain a hidden state—initialized independently from the inputs—that is propagated forward at each layer $i$. Note that in the context of RNNs, one layer will be considered both the evolution of the hidden state and the resulting prediction generated post-evolution. The inner product space of hidden states will be referred as $E_h$. The method of propagating the hidden state forward is also the same at each layer, which is another unique property of RNNs. It is governed by the same functional form and the same set of \emph{transition} parameters $\theta \in H_T$, where $H_T$ is some inner product space. This is the \emph{recurrent} nature of RNNs: each layer performs the same operations on the hidden state, with the only difference between layers being that the input data is $x_i \in E_x$ at layer $i \in [L].$ \\

To solidify this concept, a generic layerwise function $f:E_h \times E_x \times H_T \to E_h$ is introduced that governs the propagation of the hidden state forward at each layer. One can express this for any $h \in E_h, x \in E_x$ and $\theta \in H_T$ as 

$$f(h; x; \theta) \in E_h.$$

Now consider a data point $\mathbf{x} \in E_x^L$ as described above. It is asserted that the $i$th layer of the RNN will take as input the $(i - 1)$th hidden state, which is denoted by $h_{i-1} \in E_h,$ and the $i$th value of $\mathbf{x},$ which is $x_i \in E_x,$ for all $i \in [L].$ The forward propagation of the hidden state after the $i$th layer is given by 
$$h_i \equiv f(h_{i-1}; x_i; \theta),$$

where $h_0 \in E_h$ is the initial hidden state, which can either be learned as a parameter or initialized to some fixed vector. For ease of composition, one again the parameters of $f$, will be suppressed as well as the input $x_i$ in this formulation such that

$$h_i \equiv f_i(h_{i-1})$$

for all $i \in [L].$ \footnote{A slightly different indexing convenction is adpoted in this section---notice that $f_i$ takes in $h_{i-1}$ and outputs $h_i$, as opposed to the previous section where the state variable is evolved according to $x_{i+1} = f_i(x_i).$ This indexing convention is more natural for RNNs, it will be shown that the $i$th prediction will depend on $h_i$ with this adjustment, instead of on $h_{i+1}$} Notice that $f_i$ retains implicit dependence on $x_i$ and $\theta$. The $h_i$ will be referred to as the \emph{state variable} for the RNN, as it is the quantity that we propagate forward at each layer. \\

One can define the head map as in \eqref{eq:alpha}, but with the argument corresponding to a hidden state, i.e. for all $i \in [L],$ we define $\alpha_i : E_h \to E_h$ as 

\begin{equation} \label{eq:alpha2}
    \alpha_i = f_i \circ \cdots \circ f_1,
\end{equation}

and $\alpha_0$ is defined to be the identity map on $E_h.$ If one views the RNN as a discrete-time dynamical system, he could also call $\alpha_i$ the \emph{flow} of the system. A new map will be introduced to aide in the calculation of derivatives $\mu_{j,i} : E_h \to E_h,$ which accumulates the evolution of the hidden state from layer $i \in [L]$ to $j \in \{i,..., L\}$ inclusive, i.e.

\begin{equation} \label{eq:mu}
    \mu_{j,i} = f_j \circ \cdots \circ f_i.
\end{equation}

The $\mu_{j,i}$ will be also set to be the identity on $E_h$ for $j < i,$ which we extend to include the case when $i > L,$ i.e.

$$\mu_{j,i} = id$$

whenever $i > min(j,L).$

\subsubsection{Prediction and Loss Functions}
Recall that there is a target variable at each layer $i \in [L],$ meaning that there should be a prediction at each layer. As in the previous subsection, It will be enforced that the prediction also has the same functional form and set of \emph{prediction} parameters at each layer. The prediction function $g$ takes in a hidden state $h \in E_h$ and a set of prediction parameters $\zeta \in H_P,$ and outputs an element of $E_y,$ i.e. $g : E_h \times H_P \to E_y.$ Often, the dependence of $g$ on the parameters is suppressed such that $g : E_h \to E_y$ again for ease of composition. Consequently, the prediction at layer $i \in [L]$ can be written in several ways:

\begin{equation} \label{eq:rnnoutput}
\widehat{y}_i = g(h_i) = (g \circ \mu_{i,k})(h_{k-1}) = (g \circ \alpha_i)(h) 
\end{equation}

for any $k \leq i,$ where $h_i = \alpha_i(h)$ for all $i \in [L],$ and $h \equiv h_0 \in E_h$ is the initial hidden state.

Since there is a prediction at each layer, there will be also a loss at each layer. The total loss for the entire network, $\mathcal{J}$, is the sum of these losses, i.e.

\begin{equation} \label{eq:totalloss}
\mathcal{J} = \sum_{i=1}^{L} J_R(y_i, \widehat{y}_i),    
\end{equation}

where $J : E_y \times E_y \to \mathbb{R}$ is the squared loss as in \eqref{eq:sqloss}. Recall that we can define the squared loss as 

\begin{equation} 
J_R(y,\widehat{y}) = \frac{1}{2} \langle y - \widehat{y}, y - \widehat{y} \rangle
\end{equation}

It is important to note that $\widehat{y}_i$ from \eqref{eq:totalloss} depends on the initial state $h,$ the transition parameters $\theta$, the prediction parameters $\zeta$, and the input sequence up to layer $i$, given by $\mathbf{x_i} \equiv (x_1,...,x_i).$

\subsubsection{Loss Function Gradients}
Taking derivatives of the loss function \eqref{eq:totalloss} with respect to the parameters will be required. One can easily take the derivatives of the loss with respect to the prediction parameters $\zeta$. As for the transition parameters $\theta$, there are two prevailing methods: RTRL, where derivatives will be sent forward throughout the network, and BPTT, where they will go through the entire network first and then derivatives are sent backward. In practice, basic RTRL is very slow compared to BPTT but one can derive it more intuitively than BPTT and so it serves as a good starting point. Furthermore, RTRL can sometimes be applicable to streams of data that must be processed as they arrive.
%, hence we'll concentrate solely on BPTT.

\subsubsection{Prediction Parameters}
We would like to compute $\nabla_{\zeta}\mathcal{J}$, where define $\mathcal{J}$ is defined in \eqref{eq:totalloss}. Since the differential operator $\nabla_\zeta$ is additive, we have

$$\nabla_\zeta \mathcal{J} = \sum_{i=1}^{L} \nabla_\zeta (J_R(y_i, \widehat{y}_i)),$$

where $J_R(y_i, \widehat{y}_i)$ is enclosed in parentheses to emphasize that at first $J_R(y_i, \widehat{y}_i)$ is evaluated, and then its derivative is taken with respect to $\zeta$.

\begin{theorem}
For any $y_i \in E_y, h_i \in E_h,$ and $i \in [L],$

\begin{equation} \label{eq:rnn-nabzetj}
    \nabla_\zeta (J_R(y_i, \widehat{y}_i)) = 
    \nabla_{\zeta}^{\mbox{*}} g(h_i) \cdot e_i,
    %\newoptimal{\nabla_{\zeta}} 
\end{equation}

where $\widehat{y}_i$ is defined in \eqref{eq:rnnoutput}, and $e_i = \widehat{y}_i - y_i$.

\end{theorem}

\begin{proof}
We can prove this theorem similarly to theorem \ref{thm:derj}, although the notation is a bit different. Suppose $J_R$ is the squared loss, then for any $i \in [L]$ and $U \in H_P,$

\begin{equation} 
\begin{split}
\nabla_\zeta (J_R(y_i, \widehat{y}_i)) \cdot U 
& = \nabla_\zeta (\tfrac{1}{2} \langle \widehat{y}_i - y_i, \widehat{y}_i - y_i \rangle) \cdot U \\
& = \nabla_\zeta (\tfrac{1}{2} \langle g(h_i) - y_i, g(h_i) - y_i \rangle) \cdot U\\
& = \langle g(h_i) - y_i, \nabla_\zeta g(h_i) \cdot U \rangle \\
& = \langle \nabla_\zeta^{\mbox{*}} g(h_i) \cdot (g(h_i) - y_i), U \rangle \\
& =\langle \nabla_{\zeta}^{\mbox{*}} g(h_i) \cdot (\widehat{y}_i - y_i), U \rangle, \\ 
& =\langle \nabla_{\zeta}^{\mbox{*}} g(h_i) \cdot e_i, U \rangle,
\end{split}
\end{equation}

where the third line is true since $h_i$ has no dependence on $\zeta$, The last line implies that the derivative map above is linear functional, i.e. $\nabla_\zeta^{\mbox{*}} J_R(y_i; \widehat{y}_i) \in L(H_P; \mathbb{R}).$ Once more, by the canonical isomorphism described in [\cite{nnm3}, Chapter 5, Section 3], we can represent $\nabla_{\zeta} J_R(y_i, \widehat{y}_i)$ as an element of $H_P$ as 

$$\nabla_\zeta J_R(y_i, \widehat{y}_i) =  \nabla_\zeta^{\mbox{*}} g(h_i) \cdot e_i.$$
\end{proof}

\subsubsection{Real-Time Recurrent Learning}
%Before proceeding to BPTT for recurrent %neural networks, we will first show the %forward propagation of the derivative of %the head map in Lemma
We will now proceed with the presentation of the RTRL algorithm for calculating the gradient of \eqref{eq:totalloss} with respect to the transition parameters $\theta$. We will first show the forward propagation of the derivative of the head map in lemma \ref{lemma:nabalpha}, and then proceed to calculate the derivatives of \eqref{eq:totalloss} with respect to $\theta$ in Theorem \ref{thm:rtrl}.

\begin{lemma} \label{lemma:nabalpha}
For any $h \in E_h$ and $i \in [L],$ with $\alpha_i$ defined in \eqref{eq:alpha2},

\begin{equation} \label{eq:nabalplh0}
    \nabla_{\theta}^{\mbox{*}} \alpha_i(h) = \nabla_{\theta}^{\mbox{*}} \alpha_{i-1}(h) \cdot D\mbox{*}f_i(h_{i-1}) + \nabla_{\theta}^{\mbox{*}} f_i(h_{i-1}),
\end{equation}

where $h_{i-1} = \alpha_{i-1}(h)$
\end{lemma}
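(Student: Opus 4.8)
The plan is to exploit the recursive decomposition $\alpha_i = f_i \circ \alpha_{i-1}$ coming from \eqref{eq:alpha2}, but with careful attention to a feature absent from the feedforward setting of the previous section: the transition parameter $\theta \in H_T$ is \emph{shared} across all layers. Consequently $\theta$ enters the expression for $\alpha_i(h)$ through \emph{two} channels at once — directly, through the outer map $f_i$, and indirectly, through the inner head map $\alpha_{i-1}$, which itself depends on $\theta$ by way of the earlier layers $f_1,\dots,f_{i-1}$. This dual dependence is exactly what prevents applying the parameter-dependent chain rule of Lemma \ref{lemma 3-3 1} on its own, since that result was derived for an \emph{outer} function that carries no parameter dependence.

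First I would differentiate $\alpha_i(h)$ with respect to $\theta$ directly from the definition \eqref{eqpd:def}. Restoring the suppressed arguments, $\alpha_i(h;\theta) = f\big(\alpha_{i-1}(h;\theta);\,x_i;\,\theta\big)$, so for any direction $u \in H_T$ I would write $\nabla_\theta \alpha_i(h)\cdot u = \tfrac{d}{dt} f\big(\alpha_{i-1}(h;\theta+tu);\,x_i;\,\theta+tu\big)\big|_{t=0}$. Reading this as the composition of the curve $t \mapsto \big(\alpha_{i-1}(h;\theta+tu),\,\theta+tu\big)$ with $f$, I would split it by the total-derivative rule into the contribution from the state slot, $Df_i(h_{i-1})\cdot\big(\nabla_\theta\alpha_{i-1}(h)\cdot u\big)$, plus the contribution from the parameter slot, $\nabla_\theta f_i(h_{i-1})\cdot u$, where $h_{i-1} = \alpha_{i-1}(h)$. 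Since this holds for every $u$, it yields the operator identity $\nabla_\theta\alpha_i(h) = Df_i(h_{i-1})\cdot\nabla_\theta\alpha_{i-1}(h) + \nabla_\theta f_i(h_{i-1})$.

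The last step is to take adjoints of both sides. Using linearity of the adjoint together with the direction-reversing property $(L_2 \cdot L_1)\mbox{*} = L_1\mbox{*}\cdot L_2\mbox{*}$ from the preliminaries, the product $Df_i(h_{i-1})\cdot\nabla_\theta\alpha_{i-1}(h)$ transposes to $\nabla_{\theta}^{\mbox{*}}\alpha_{i-1}(h)\cdot D\mbox{*}f_i(h_{i-1})$, while the additive term is unchanged, giving precisely \eqref{eq:nabalplh0}. No induction is needed, since the claim is a single recursive step; I would simply check that it reduces correctly at $i=1$, where $\alpha_0 = id$ carries no $\theta$-dependence, so $\nabla_{\theta}^{\mbox{*}}\alpha_0(h) = 0$ and the formula collapses to $\nabla_{\theta}^{\mbox{*}}\alpha_1(h) = \nabla_{\theta}^{\mbox{*}} f_1(h_0)$, as it should.

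The main obstacle is conceptual rather than computational: one must resist using the clean chain rule of Lemma \ref{lemma 3-3 1} verbatim and instead recognise that parameter sharing forces a \emph{total} derivative, in which $\theta$ is perturbed simultaneously in the state argument and the parameter argument of $f$. Once this bookkeeping is arranged correctly, the two remaining moves — the total-derivative split and the adjoint transposition — are entirely routine.
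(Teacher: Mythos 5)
Your proposal is correct and follows essentially the same route as the paper: decompose $\alpha_i = f_i \circ \alpha_{i-1}$, account for the shared parameter $\theta$ entering both through $f_i$ directly and through $\alpha_{i-1}$ (the paper phrases this as combining the chain rule with the product rule, which is exactly your total-derivative split), obtain $\nabla_\theta \alpha_i(h) = \nabla_\theta f_i(h_{i-1}) + Df_i(h_{i-1}) \cdot \nabla_\theta \alpha_{i-1}(h)$, and take adjoints using the direction-reversing property, with the same observation that the $i=1$ case holds because $\nabla_{\theta}^{\mbox{*}}\alpha_0(h)$ is the zero operator. Your write-up is simply a more explicit rendering of the identical argument.
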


\begin{proof}
We know that for any $i \in [L], \alpha_i = f_i \circ \alpha_{i-1}.$ Since both $f_i$ and $\alpha_{i-1}$ depend on $\theta,$ to take the derivative of their composition we must combine the chain rule with the product rule: first hold $\alpha_{i-1}$ constant with respect to $\theta$ and differentiate $f_i$, and then hold $f_i$ constant with respect to $\theta$ and differentiate $\alpha_{i-1}$. In particular,

\begin{equation} \label{eq:nabalph00}
\nabla_\theta \alpha_i (h) = \nabla_\theta (f_i \circ \alpha_{i-1})(h) = \nabla_\theta f_i(h_{i-1}) + Df_i(h_{i-1}) \cdot \nabla_\theta \alpha_{i-1}(h) 
\end{equation}

since $h_{i-1} = \alpha_{i-1}(h).$ Then, by taking the adjoint, we recover \eqref{eq:nabalplh0}. Note that \eqref{eq:nabalplh0} still holds when $i = 1$, as $\alpha_0$ is the identity on $E_h$ with no dependence on the parameters $\theta,$ and thus $\nabla_{\theta}^{\mbox{*}} \alpha_0(h)$ is the zero operator. 

\end{proof}

\begin{theorem}[\textbf{Real-Time Recurrent Learning}] \label{thm:rtrl}
For any $h \in E_h, y_i \in E_y,$ and $i \in [L],$

\begin{equation} \label{eq:gradj-rnn}
\nabla_\theta (J_R(y_i,\widehat{y}_i)) = \nabla_\theta^{\mbox{*}} \alpha_i(h) \cdot D\mbox{*} g(h_i) \cdot e_i,     
\end{equation}

where $h_i = \alpha_i(h), \alpha_i$ is defined in \eqref{eq:alpha2}, $\widehat{y}_i$ is defined in \eqref{eq:rnnoutput} and $e_i = \widehat{y}_i - y_i$.

\end{theorem}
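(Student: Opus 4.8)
The plan is to mirror the proof of Theorem~\ref{thm:derj} almost verbatim, the only structural difference being that here the $\theta$-dependence of the prediction $\widehat{y}_i = g(h_i) = (g \circ \alpha_i)(h)$ enters solely through the head map $\alpha_i$, since the prediction function $g$ depends on the prediction parameters $\zeta$ and not on the transition parameters $\theta$. First I would pair the derivative with an arbitrary test vector $U \in H_T$ in the inner product and peel off the squared loss \eqref{eq:sqloss} by the product rule, exactly as in Theorem~\ref{thm:derj}, to obtain $\nabla_\theta(J_R(y_i,\widehat{y}_i)) \cdot U = \langle \widehat{y}_i - y_i, \nabla_\theta \widehat{y}_i \cdot U\rangle = \langle e_i, \nabla_\theta \widehat{y}_i \cdot U\rangle$, using $e_i = \widehat{y}_i - y_i$.

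Next I would expand $\nabla_\theta \widehat{y}_i$. Writing $\widehat{y}_i = (g \circ \alpha_i)(h)$ as in \eqref{eq:rnnoutput} and using that $g$ carries no dependence on $\theta$, the parameter chain rule of Lemma~\ref{lemma 3-3 1} (equation~\eqref{eq:3-3}) gives $\nabla_\theta \widehat{y}_i = Dg(\alpha_i(h)) \cdot \nabla_\theta \alpha_i(h) = Dg(h_i) \cdot \nabla_\theta \alpha_i(h)$, where $h_i = \alpha_i(h)$. Substituting and then moving the linear maps across the inner product by the defining property of the adjoint together with the direction-reversing property $(L_2 \cdot L_1)\mbox{*} = L_1\mbox{*} \cdot L_2\mbox{*}$, I arrive at $\langle e_i, Dg(h_i)\cdot \nabla_\theta \alpha_i(h)\cdot U\rangle = \langle \nabla_\theta^{\mbox{*}}\alpha_i(h)\cdot D\mbox{*}g(h_i)\cdot e_i,\, U\rangle$.

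Since this identity holds for every $U \in H_T$, the derivative map $\nabla_\theta(J_R(y_i,\widehat{y}_i))$ is a linear functional on $H_T$, and the canonical isomorphism invoked in Theorem~\ref{thm:derj} (see [\cite{nnm3}, Chapter~5, Section~3]) lets me represent it as the element $\nabla_\theta^{\mbox{*}}\alpha_i(h)\cdot D\mbox{*}g(h_i)\cdot e_i$ of $H_T$, which is exactly \eqref{eq:gradj-rnn}.

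I do not expect a genuine obstacle here: the recursive content --- how $\nabla_\theta^{\mbox{*}}\alpha_i(h)$ itself unfolds layer by layer --- has already been isolated in Lemma~\ref{lemma:nabalpha} (equation~\eqref{eq:nabalplh0}), so this theorem only needs to splice that accumulated derivative onto the per-layer loss gradient. The one point demanding care is bookkeeping the dependence structure: one must use that $\alpha_i$ depends on $\theta$ while $g$ does not, so that the ordinary (state) chain rule applies to $g$ and the entire weight of the $\theta$-recursion sits inside $\nabla_\theta^{\mbox{*}}\alpha_i(h)$. If anything is delicate, it is keeping the order of factors correct when taking adjoints, which the reversing property dictates.
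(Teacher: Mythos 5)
Your proposal is correct and follows essentially the same route as the paper's own proof: pair with a test vector $U \in H_T$, apply the product rule to the squared loss and the parameter chain rule to $g \circ \alpha_i$, move everything across the inner product via adjoints and the reversing property, and finish with the canonical isomorphism. The paper's proof is just a more compressed version of exactly this computation.
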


\begin{proof}
For any $U \in H_T,$

\begin{subequations}
\begin{align}
\nabla_\theta (J_R(y_i,\widehat{y_i})) \cdot U & = \nabla_\theta \Big ( \tfrac{1}{2} \langle g(\alpha_i(h)) - y_i, g(\alpha_i(h)) - y_i \rangle \Big ) \\ 
& = \langle g(\alpha_i(h)) - y_i, D g(h_i) \cdot \nabla_\theta \alpha_i(h) \cdot U \rangle \\ 
& = \langle \nabla_\theta^{\mbox{*}} \alpha_i (h) \cdot D\mbox{*} g(h_i) \cdot (\widehat{y_i} - y_i), U \rangle.
\end{align}
\end{subequations}

Therefore, by the canonical isomorphism described in  [\cite{nnm3}, Chapter 5, Section 3], we have proven \eqref{eq:gradj-rnn}.

\end{proof}

Note that even though we do not have access to $e_i$ and $h_i$ until layer $i$, we can still propagate the linear map $\nabla_\theta^{\mbox{*}} \alpha_i(h)$ forward without an argument at each layer $i$ according to \eqref{eq:nabalplh0}, and then use this to calculate \eqref{eq:gradj-rnn}. This is the \emph{real-time} aspect of RTRL, as it follows for exact gradient computation at each layer $i$ without knowledge of the information at future layers. Unfortunately, this forward propagation is also what makes RTRL slow compared to BPTT. Nevertheless, we present a generic algorithm for performing one step of gradient descent via RTRL in Algorithm \ref{alg:rtrl}. As input to the algorithm, we provide the sequence input $\mathbf{x}$ and associated targets $\mathbf{y}$, the initial state $h$, the transition parameters $\theta$, the prediction parameters $\zeta$ and the learning rate $\eta$. We receive, as output, a parameter set updated by a single step of gradient descent.

The essential results we obtained thus far which were used in the algorithm are as follows:

\begin{itemize}

    \item $\nabla_\zeta \mathcal{J} = \sum_{i=1}^{L} J_R(y_i, \widehat{y}_i) = \sum_{i=1}^{L} \nabla_\zeta^{\mbox{*}} g(h_i) \cdot e_i,$

    \item $\nabla_\theta^{\mbox{*}} \alpha_i(h) = \nabla_\theta^{\mbox{*}} \alpha_{i-1}(h) \cdot D\mbox{*} f_i(h_{i-1}) + \nabla_\theta^{\mbox{*}} f_i(h_{i-1}),$
    
    \item $\nabla_\theta J_R(y_i, \widehat{y}_i) = \nabla_\theta^{\mbox{*}} \alpha_i(h) \cdot D\mbox{*} g(h_i) \cdot e_i = \big( \nabla_\theta^{\mbox{*}} \alpha_{i-1} (h) \cdot D\mbox{*} f_i(h_{i-1}) + \nabla_\theta^{\mbox{*}} f_i(h_{i-1}) \big) \cdot D\mbox{*} g(h_i) \cdot e_i,$
    
    \item $\mu_{j,i+1} (h_i) = D \mu_{j, i+2} (h_{i+1}) \cdot Df_{i+1}(h_i),$
    
    \item $D_{h_i} \mathcal{J} = D\mbox{*} f_{i+1} (h_i) \cdot Dh_{i+1} \mathcal{J} + D\mbox{*} g(h_i) \cdot e_i,$

    \item $\nabla_\theta \alpha_k(h) = \sum_{j=1}^{k} D \mu_{k,j+1} (h_J) \cdot \nabla_\theta f_j(h_{j-1}),$
    
    \item $\nabla_\theta \mathcal{J} = \sum_{i=1}^{L} \nabla_\theta^{\mbox{*}} f_i(h_{i-1}) \cdot D_{h_i} \mathcal{J},$

    \item $\nabla_\theta \mathcal{J} = \sum_{i=1}^{L} \nabla_\theta^{\mbox{*}} \alpha_j(h) \cdot D\mbox{*} g(h_j) \cdot e_j = \sum_{j=1}^{L} \sum_{i=1}^{j} \nabla_\theta^{\mbox{*}} f_i(h_{i-1}) \cdot D\mbox{*} \mu_{j,i+1} (h_i) \cdot D\mbox{*} g(h_j) \cdot e_j .$
\end{itemize}

\begin{algorithm}[H]
  \caption{One iteration of gradient descent for for an RNN via RTRL}
   \label{alg:rtrl}
  \begin{algorithmic}[1]
    \STATE \textbf{function} \textsc{GradDescRTRL}%
$(\mathbf{x}, \mathbf{y}, h, \theta, \zeta, \eta)$
    %\bindent
    \INDSTATE $h_0 \leftarrow h$ 
    \begin{ALC@g}
    %\bindent
    %\eindent
    \STATE $\nabla_\theta \mathcal{J} \leftarrow 0$ \COMMENT{0 in $H_T$, the inner product space in which $\theta$ resides}
    \STATE $\nabla_\zeta \mathcal{J} \leftarrow 0$ \COMMENT{0 in $H_P$, the inner product space in which $\zeta$ resides}
    \FOR{$i \in \{1,...,L\}$} 
        \STATE $h_i \leftarrow f_i(h_{i-1})$ \COMMENT{$f_i$ depends on $\theta, x_i$}
        \STATE $\widehat{y}_i \leftarrow g(h_i)$
        \STATE $\nabla_\theta^{\mbox{*}} \alpha_i(h) \leftarrow \nabla_\theta^{\mbox{*}} \alpha_{i-1}(h) \cdot D\mbox{*} f_i(h_{i-1}) + \nabla_\theta^{\mbox{*}} f_i(h_{i-1})$
        
        \STATE $e_i \leftarrow \widehat{y}_i - y_i$
        \STATE $\nabla_\theta \mathcal{J} \leftarrow \nabla_\theta \mathcal{J} + \nabla_\theta^{\mbox{*}} \alpha_i(h) \cdot D\mbox{*}g(h_i) \cdot e_i$ \COMMENT{Add accumulated gradient at each layer}
        \STATE $\nabla_\zeta \mathcal{J} \leftarrow \nabla_\zeta \mathcal{J} + \nabla_\zeta^{\mbox{*}} g(h_i) \cdot e_i$
    \ENDFOR
    \STATE $\theta \leftarrow \theta - \eta \nabla_\theta \mathcal{J}$ \COMMENT{Parameter Update Steps}
    \STATE $\zeta \leftarrow \zeta - \eta \nabla_\zeta \mathcal{J}$
    \RETURN $\theta, \zeta$
   \end{ALC@g}
  \end{algorithmic}
\end{algorithm}

\subsubsection{Backpropagation Through Time}
We can derive a more efficient method for gradient calculation with respect to the transition parameters in RNNs known as BPTT. Even though for executing BPTT, we must traverse the network both forwards and backwards, combination of the two are yet far more computationally efficient than RTRL \cite{dnnmf}. Note that we will use the notation $D_{h_i}$ to denote the action of taking the derivative with respect to the state $0h_i$ in this section, for any $i \in [L].$ We use this, as opposed to $\nabla_{h_i},$ since $h_i$ is a state variable. \\

The first part of BPTT that we will derive is the backpropagation step, which sends the error at layer $i \in [L]$ backwards throughout the network. To do this, we will calculate $D{\mu_{j,i+1}}(h_i)$ for $j \leq i + 1$ in Lemma \ref{lemma:mu}, and then use this result to derive the recurrence in Theorem \ref{thm:bptt-rec}

\begin{lemma} \label{lemma:mu}
For any $h_i \in E_h, i \in [L-1]$ and $j \in [L]$ with $j \geq i+1,$

\begin{equation}
    D{\mu_{j,i+1}}(h_i) = D{\mu_{j,i+2}}(h_{i+1}) \cdot D{f_{i+1}}(h_i) 
\end{equation}

where $h_{i+1} = f_{i+1}(h_i)$ and $\mu_{j,i}$ is defined in \eqref{eq:mu}. Furthermore, $D\mu_{i,i+1}(h_i)$ is the identity map on $E_h$.

\end{lemma}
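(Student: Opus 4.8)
The plan is to read the recurrence straight off the definition of the accumulation map $\mu_{j,i}$ in \eqref{eq:mu} and then invoke the chain rule \eqref{eq:cr}; no machinery beyond that is needed. The essential observation is that the composition defining $\mu_{j,i+1}$ can be split after its innermost factor: since $\mu_{j,i+1} = f_j \circ \cdots \circ f_{i+1}$ and $\mu_{j,i+2} = f_j \circ \cdots \circ f_{i+2}$, we have the factorization
\begin{equation*}
\mu_{j,i+1} = \mu_{j,i+2} \circ f_{i+1},
\end{equation*}
valid for all $j \geq i+1$. (When $j = i+1$ the left-hand map is just $f_{i+1}$ while $\mu_{j,i+2} = \mu_{i+1,i+2} = id$ by the convention $\mu_{j,i} = id$ for $i > \min(j,L)$, so the factorization still holds at that endpoint.)

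First I would substitute this factorization into the chain rule \eqref{eq:cr}, taking the inner map to be $f_{i+1}$ and the outer map to be $\mu_{j,i+2}$, and evaluate at the point $h_i$. This gives
\begin{equation*}
D\mu_{j,i+1}(h_i) = D\mu_{j,i+2}\big(f_{i+1}(h_i)\big) \cdot Df_{i+1}(h_i).
\end{equation*}
Since $h_{i+1} = f_{i+1}(h_i)$ by hypothesis, the argument of the first factor becomes $h_{i+1}$, and we recover exactly the claimed recurrence $D\mu_{j,i+1}(h_i) = D\mu_{j,i+2}(h_{i+1}) \cdot Df_{i+1}(h_i)$.

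For the final assertion I would treat $j = i$ as a boundary case handled by the same conventions. Here the second index $i+1$ exceeds $\min(i,L) = i$, so $\mu_{i,i+1} = id$, the identity map on $E_h$; its derivative at any point, in particular at $h_i$, is again the identity on $E_h$, because the derivative of a linear map (here the identity) is that same map. I do not anticipate a genuine obstacle: the chain rule supplies the entire content, and the only care required is index bookkeeping — verifying that the factorization $\mu_{j,i+1} = \mu_{j,i+2} \circ f_{i+1}$ degenerates correctly at the endpoints $j = i+1$ and $j = i$, so that the recurrence and the identity-map base case are both covered without a separate argument.
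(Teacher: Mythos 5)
Your proposal is correct and follows essentially the same route as the paper's proof: factor $\mu_{j,i+1} = \mu_{j,i+2} \circ f_{i+1}$, apply the chain rule \eqref{eq:cr} at $h_i$, and read off the identity-map base case from the convention $\mu_{j,i}=id$ for $i > \min(j,L)$. Your extra attention to the endpoint $j = i+1$ is a harmless refinement of what the paper does implicitly.
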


\begin{proof}
First of all, since $\mu_{i,i+1}$ is the identity map on $E_h$, we automatically have that $D\mu_{i,i+1}(h_i)$ is the identity on $E_h$. \\
Furthermore, for $j \geq i+1$, by the definition of $\mu_{j,i+1}$ we have that 

$$\mu_{j,i+1} = \mu_{j,i+2} \circ f_{i+1}.$$

Therefore, by the chain rule, for any $h_i \in E_h$,
\begin{equation}
\begin{split}
D\mu_{j,i+1}(h_i) 
& = D(\mu_{j,i+2} \circ f_{i+1})(h_i) \\
& = D\mu_{j,i+2}(h_{i+1}) \cdot Df_{i+1}(h_i)
\end{split}
\end{equation}

since $h_{i+1} = f_{i+1}(h_i).$

\end{proof}

\begin{theorem} [\textbf{Backpropagation Through Time}] \label{thm:bptt-rec}

For any $i \in [L]$ and $h_i \in E_h$, with $\mathcal{J}$ defined as in \eqref{eq:totalloss},

\begin{equation} \label{eq:rnnlossm}
D_{h_i}\mathcal{J} = D\mbox{*}f_{i+1}(h_i) \cdot D_{h_{i+1}} \mathcal{J} + D\mbox{*}g(h_i) \cdot e_i,
\end{equation}

where we set $D_{h_{L+1}} \mathcal{J}$ to be the zero vector in $E_h$ and $e_i = \widehat{y}_i - y_i$.

\end{theorem}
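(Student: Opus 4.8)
The plan is to exploit the fact that the total loss $\mathcal{J}$ in \eqref{eq:totalloss} depends on the state $h_i$ through exactly two channels: directly, via the prediction $\widehat{y}_i = g(h_i)$ appearing in the single summand $J_R(y_i, \widehat{y}_i)$; and indirectly, through every later state $h_j$ with $j > i$, each of which is a function of $h_i$ only via $h_{i+1} = f_{i+1}(h_i)$. The summands $J_R(y_j, \widehat{y}_j)$ with $j < i$ carry no dependence on $h_i$ at all, since $h_j$ for $j < i$ is computed before $h_i$. First I would therefore split $\mathcal{J} = J_R(y_i, \widehat{y}_i) + \sum_{j \geq i+1} J_R(y_j, \widehat{y}_j) + \sum_{j < i} J_R(y_j, \widehat{y}_j)$ and differentiate with respect to $h_i$ term by term, the last sum dropping out.

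For the direct term, I would pair the functional $D_{h_i}(J_R(y_i, g(h_i)))$ with an arbitrary test vector $v \in E_h$ and compute, exactly as in the proof of Theorem \ref{thm:gradj}, using the product rule for the squared loss \eqref{eq:sqloss} together with the adjoint manipulation $\langle g(h_i) - y_i, Dg(h_i) \cdot v \rangle = \langle D\mbox{*}g(h_i) \cdot e_i, v \rangle$, where $e_i = \widehat{y}_i - y_i$. By the canonical isomorphism between $L(E_h; \mathbb{R})$ and $E_h$, this identifies the direct contribution as the vector $D\mbox{*}g(h_i) \cdot e_i$.

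For the indirect term, the key observation is that $\Phi := \sum_{j \geq i+1} J_R(y_j, \widehat{y}_j)$, regarded as a function of $h_{i+1}$ (each $\widehat{y}_j$ with $j \geq i+1$ being $g(\mu_{j,i+2}(h_{i+1}))$ by \eqref{eq:rnnoutput} and \eqref{eq:mu}), is precisely the part of $\mathcal{J}$ that depends on $h_{i+1}$; hence $D_{h_{i+1}} \mathcal{J} = D\Phi(h_{i+1})$, since the remaining summands with $j \leq i$ are constant in $h_{i+1}$. Writing this indirect dependence as the composition $\Phi \circ f_{i+1}$ in $h_i$, I would apply the chain rule \eqref{eq:cr} and take adjoints to obtain $D\mbox{*}f_{i+1}(h_i) \cdot D_{h_{i+1}} \mathcal{J}$. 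Summing the two contributions yields \eqref{eq:rnnlossm}.

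Finally, I would verify the base case $i = L$: there are no summands with $j > L$, so only the direct term survives and $D_{h_L}\mathcal{J} = D\mbox{*}g(h_L) \cdot e_L$, which is consistent with the stated convention $D_{h_{L+1}}\mathcal{J} = 0$ (the undefined factor $D\mbox{*}f_{L+1}(h_L)$ being annihilated by the zero vector). The main obstacle here is bookkeeping rather than analysis: one must argue carefully that $D_{h_{i+1}}\mathcal{J}$ captures exactly the loss terms at layers $\geq i+1$, so that the chain-rule factor is genuinely the full state-gradient at layer $i+1$, and that the terms at layers $\leq i$ drop out at the appropriate differentiation. Keeping the distinction between a linear functional and its canonical vector representative straight throughout is the other point requiring care.
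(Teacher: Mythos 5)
Your proof is correct, and it reaches the recursion by a slightly more direct route than the paper. The paper first establishes the explicit non-recursive formula $D_{h_i}\mathcal{J} = \sum_{j=i}^{L} D\mbox{*}\mu_{j,i+1}(h_i)\cdot D\mbox{*}g(h_j)\cdot e_j$ (its equation \eqref{eq:rnnloss}) by pairing with a test vector and invoking the canonical isomorphism, and only then converts this sum into the recursion \eqref{eq:rnnlossm} by splitting off the $j=i$ term and factoring $D\mbox{*}f_{i+1}(h_i)$ out of the tail using Lemma \ref{lemma:mu}. You instead decompose $\mathcal{J}$ into the direct summand $J_R(y_i,\widehat{y}_i)$ and the tail $\Phi=\sum_{j\geq i+1}J_R(y_j,\widehat{y}_j)$, observe that the tail depends on $h_i$ only through $h_{i+1}=f_{i+1}(h_i)$, and apply the chain rule to $\Phi\circ f_{i+1}$ directly, which bypasses both the closed-form sum and Lemma \ref{lemma:mu}. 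The two computations are the same at bottom (the chain rule on $\Phi\circ f_{i+1}$ is exactly what Lemma \ref{lemma:mu} encodes, summed over $j$), but your version is shorter; the paper's detour through \eqref{eq:rnnloss} pays off later, since that explicit formula is reused in the proof of Theorem \ref{thm:derj-rec}, so if you adopted your route you would need to derive it separately there. Your handling of the $j<i$ terms, the canonical identification of functionals with vectors, and the $i=L$ base case all match the paper's.
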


\begin{proof}
We can prove this directly from the definition of $\mathcal{J}$ for the squared loss. For any $v \in E_h$,

%begin{subequations}
%\begin{align}
%D_{h_i}\mathcal{J} \cdot v & = D_{h_i} \Bigg( \sum_{j=1}^{L} \tfrac{1}{2} \langle %y_j - \widehat{y}_j, y_j - \widehat{y}_j \rangle \Bigg) \cdot v \\
%& = D_{h_i} \Bigg( \sum_{j=1}^{L} \tfrac{1}{2} \langle y_j - g(\alpha_j(h)), y_j - %g(\alpha_j(h)) \rangle \Bigg) \cdot v \\
%& = D_{h_i} \Bigg( \sum_{j=i}^{L} \tfrac{1}{2} \langle y_j - g(\mu_{j,i+1}(h_i)), %y_j - g(\mu_{j,i+1}(h_i)) \rangle \Bigg) \cdot v \label{subeq:21}\\
%& = \sum_{j=i}^{L} \langle y_j - g(\mu_{j,i+1}(h_i)), D\big(g(\mu_{j,i+1}(h_i))\big) %\cdot v \rangle \\
%& = \sum_{j=i}^{L} \langle y_j - \underbrace{g(\mu_{j,i+1}(h_i))}_{g(h_j)}, D\big( %\underbrace{g(\mu_{j,i+1}(h_i))}_{g(h_j)}\big) \cdot D\mu_{j,i+1} (h_i) \cdot v %\rangle \label{subeq:22} \\
%& = \sum_{j=i}^{L} \langle D\mbox{*}\mu_{j,i+1}(h_i) \cdot D\mbox{*}(g(h_j)) \cdot %e_j, v \rangle
%\end{align}
%\end{subequations}

\begin{subequations}
\begin{align}
D_{h_i}\mathcal{J} \cdot v & = D_{h_i} \Bigg( \sum_{j=1}^{L} \tfrac{1}{2} \langle \widehat{y}_j - y_j,\widehat{y}_j - y_j \rangle \Bigg) \cdot v \\
& = D_{h_i} \Bigg( \sum_{j=1}^{L} \tfrac{1}{2} \langle g(\alpha_j(h)) - y_j, g(\alpha_j(h)) - y_j \rangle \Bigg) \cdot v \\
& = D_{h_i} \Bigg( \sum_{j=i}^{L} \tfrac{1}{2} \langle g(\mu_{j,i+1}(h_i)) - y_j, g(\mu_{j,i+1}(h_i)) - y_j \rangle \Bigg) \cdot v \label{subeq:21}\\
& = \sum_{j=i}^{L} \langle g(\mu_{j,i+1}(h_i)) - y_j, D\big(g(\mu_{j,i+1}(h_i))\big) \cdot v \rangle \\
& = \sum_{j=i}^{L} \langle \underbrace{g(\mu_{j,i+1}(h_i))}_{g(h_j)} - y_j, D\big( \underbrace{g(\mu_{j,i+1}(h_i))}_{g(h_j)}\big) \cdot D\mu_{j,i+1} (h_i) \cdot v \rangle \label{subeq:22} \\
& = \sum_{j=i}^{L} \langle D\mbox{*}\mu_{j,i+1}(h_i) \cdot D\mbox{*}(g(h_j)) \cdot e_j, v \rangle
\end{align}
\end{subequations}

where \eqref{subeq:21} holds since the loss from layers $j < i$ is not impacted by $h_i$, \eqref{subeq:22} holds from the chain rule in \eqref{eq:cr}. Therefore, by the canonical isomorphism described in [\cite{nnm3}, Chapter 5, Section 3], we can represent $D_{h_i} \mathcal{J}$ as an element of $E_h$ according to 

\begin{equation} \label{eq:rnnloss}
    D_{h_i}\mathcal{J} = \sum_{j=i}^{L} D\mbox{*} \mu_{j,i+1}(h_i) \cdot D\mbox{*}g(h_j) \cdot e_j
\end{equation}
for any $i \in [L].$ We can manipulate \eqref{eq:rnnloss} as follows when $i < L:$

\begin{subequations}
\begin{align}
D_{h_i}\mathcal{J} & = D\mbox{*} \mu_{i,i+1}(h_i) \cdot D\mbox{*} g(h_i) \cdot e_i + \sum_{j=i+1}^{L} D\mbox{*} \mu_{j,i+1}(h_i) \cdot D\mbox{*} g(h_j) \cdot e_j\\
& = D\mbox{*}g(h_i) \cdot e_i + \sum_{j=i+1}^{L} D\mbox{*}f_{i+1}(h_i) \cdot D\mbox{*} \mu_{j,i+2}(h_{i+1}) \cdot D\mbox{*} g(h_j) \cdot e_j  \label{subeq:31} \\
& = D\mbox{*} g(h_i) \cdot e_i + D\mbox{*} f_{i+1}(h_i) \cdot \Bigg(\sum_{j=i+1}^{L} D\mbox{*}\mu_{j,i+2}(h_{i+1}) \cdot D\mbox{*}g(h_j) \cdot e_j \Bigg) \\
& = D\mbox{*}g(h_i) + D\mbox{*}f_{i+1}(h_i) \cdot D_{h_{i+1}}\mathcal{J}.
\end{align}
\end{subequations}

where \eqref{subeq:31} follows from Lemma \ref{lemma:mu} and the and the reversing property of the adjoint. We have proven \eqref{eq:rnnlossm} for $i < L$.

As for when $i = L,$ since we set $D_{h_{i+1}}\mathcal{J} = 0$, then:

\begin{equation}
\begin{split}
    D_{h_L}\mathcal{J} & = D\mbox{*}\mu_{L,L+1}(h_i) \cdot D\mbox{*}g(h_L) \cdot e_L \\
    & = D\mbox{*}g(h_L) \cdot e_L \\
    & = 0 + D\mbox{*}g(h_L) \cdot e_L \\
    & = id\cdot0 + D\mbox{*}g(h_L) \cdot e_L \\
    & = D\mbox{*}f_{L+1}(h_L)\cdot0 + D\mbox{*}g(h_L) \cdot e_L\\
    & = D\mbox{*}f_{L+1}(h_L) \cdot D_{h_{i+1}}\mathcal{J} + D\mbox{*}g(h_L) \cdot D\mbox{*}g(h_L) \cdot e_L
\end{split}
\end{equation}

So, we have proven \eqref{eq:rnnlossm} for all $i \in [L]$.

\end{proof}

\subsubsection{Remark} 
Here we have followed the convention that only $h_i$ s treated as an independent variable in computing the derivative of $\mathcal{J}$ with respect to $h_i$, which we can denote as $D_{h_i} \mathcal{J}$. There is some ambiguity here, however, since $h_i$ can be viewed as $\alpha_i (h_0)$. In order to avoid this ambiguity, we could just \textit{define} $D_{h_i} \mathcal{J}$ as the expression on the right-hand side in \eqref{eq:rnnloss}
without giving it the meaning of a derivative. We will see that Theorem \ref{thm:bptt-rec} will still hold under this assumption.

We will present the gradient of $\mathcal{J}$ with respect to the transition parameters for BPTT in Theorem \ref{thm:derj-rec} after first presenting a useful result in Lemma \ref{lemma:3-13}. The expression that we will derive relies heavily on the recursion from Theorem \ref{thm:bptt-rec}, similarly to how Theorem \ref{thm:gradj} depended on the recursion from Theorem \ref{thm:bckpp}

\begin{lemma} \label{lemma:3-13}
For any $k \in [L]$ and $h \in E_h$,

\begin{equation} \label{eq:nabalph}
    \nabla_\theta \alpha_k (h) = \sum_{j=1}^{k} D \mu_{k,j+1}(h_j) \cdot \nabla_\theta f_j(h_{j-1}),
\end{equation}

where $\alpha_k$ is defined in \eqref{eq:alpha2}, $h_j = \alpha_j(h)$ for all $j \in [L],$ and $\mu_{k,j+1}$ is defined in \eqref{eq:mu}.

\end{lemma}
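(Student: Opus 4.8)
The plan is to prove the identity by induction on $k$, taking as the engine the pre-adjoint recursion established inside the proof of Lemma~\ref{lemma:nabalpha}, namely \eqref{eq:nabalph00}:
$$\nabla_\theta \alpha_k(h) = \nabla_\theta f_k(h_{k-1}) + Df_k(h_{k-1}) \cdot \nabla_\theta \alpha_{k-1}(h),$$
which follows from $\alpha_k = f_k \circ \alpha_{k-1}$ together with the combined chain/product rule. The target formula is precisely the ``unrolled'' version of this one-step recursion, so induction is the natural route, and no analytic estimates are needed beyond the chain rule \eqref{eq:cr} and the definitions of $\alpha$ in \eqref{eq:alpha2} and $\mu$ in \eqref{eq:mu}.

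For the base case $k=1$ I would note $\alpha_1 = f_1$ and $h_0 = h$, so $\nabla_\theta \alpha_1(h) = \nabla_\theta f_1(h_0)$; the right-hand sum reduces to its single $j=1$ term $D\mu_{1,2}(h_1)\cdot\nabla_\theta f_1(h_0)$, and since $\mu_{1,2}=id$ by the convention that $\mu_{a,b}=id$ whenever $b>\min(a,L)$, its derivative is the identity and the two sides agree.

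For the inductive step, assuming the formula at $k-1$, I would substitute the induction hypothesis into the recursion to obtain
$$\nabla_\theta \alpha_k(h) = \nabla_\theta f_k(h_{k-1}) + Df_k(h_{k-1}) \cdot \sum_{j=1}^{k-1} D\mu_{k-1,j+1}(h_j)\cdot \nabla_\theta f_j(h_{j-1}).$$
The crux is then to absorb the factor $Df_k(h_{k-1})$ into each summand. Writing $\mu_{k,j+1} = f_k \circ \mu_{k-1,j+1}$ (peeling $f_k$ off the \emph{outside}) and applying the chain rule gives $D\mu_{k,j+1}(h_j) = Df_k(\mu_{k-1,j+1}(h_j))\cdot D\mu_{k-1,j+1}(h_j)$; the key observation that makes this collapse is that $\mu_{k-1,j+1}(h_j) = h_{k-1}$, i.e. evolving $h_j$ through $f_{j+1},\dots,f_{k-1}$ lands exactly on $h_{k-1}$. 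Hence $Df_k(h_{k-1})\cdot D\mu_{k-1,j+1}(h_j) = D\mu_{k,j+1}(h_j)$, and the sum rewrites as $\sum_{j=1}^{k-1} D\mu_{k,j+1}(h_j)\cdot \nabla_\theta f_j(h_{j-1})$.

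Finally I would fold the leftover term $\nabla_\theta f_k(h_{k-1})$ into the sum as its $j=k$ entry: since $\mu_{k,k+1}=id$ (again by the boundary convention, as $k+1>\min(k,L)=k$), we have $D\mu_{k,k+1}(h_k)=id$ and $\nabla_\theta f_k(h_{k-1}) = D\mu_{k,k+1}(h_k)\cdot \nabla_\theta f_k(h_{k-1})$, extending the range to $1\le j\le k$ and completing the induction. I expect the only real friction to be bookkeeping rather than analysis: keeping the two composition orders straight (peeling $f_k$ off the outside of $\mu_{k,j+1}$, as opposed to the inside-peeling of $f_{i+1}$ used in Lemma~\ref{lemma:mu}) and correctly invoking the identity-map conventions at both boundaries ($k=1$ and $j=k$).
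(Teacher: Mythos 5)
Your proposal is correct and follows essentially the same route as the paper's proof: induction on $k$ with base case $k=1$, substitution of the one-step recursion \eqref{eq:nabalph00}, absorption of $Df_k(h_{k-1})$ via the composition identity $\mu_{k,j+1} = f_k \circ \mu_{k-1,j+1}$ evaluated at $\mu_{k-1,j+1}(h_j)=h_{k-1}$, and the identity-map convention $D\mu_{k,k+1}(h_k)=id$ to fold in the $j=k$ term. The only difference is cosmetic (you index the step as $k-1\to k$ rather than $k\to k+1$).
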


\begin{proof}
We can prove this by induction. For $k = 1$, since $\alpha_1 = f_1$ and $h = h_0$, 
$$\nabla_\theta \alpha_1(h) = \nabla_\theta f_1(h_0).$$

Also, by Lemma \ref{lemma:mu} $D \mu_{1,2}(h_1)$ is the identity. Therefore, \eqref{eq:nabalph} is true for $k = 1$. Now assume \eqref{eq:nabalph} holds for $2 \leq k \leq L-1$. Then,

\begin{subequations}
\begin{align}
\nabla_\theta \alpha_{k+1}(h) & = D f_{k+1}(h_k) \cdot \nabla_\theta \alpha_k (h) + \nabla_\theta f_{k+1}(h_k) \label{subeq:41} \\
& = Df_{k+1}(h_k) \cdot \Bigg( \sum_{j=1}^{k} D\mu_{k,j+1}(h_j) \cdot \nabla_\theta f_j(h_{j-1}) \Bigg) \label{subeq:42} \\
& \hspace{3.5mm} + D\mu_{k+1,k+2}(h_{k+1}) \cdot \nabla_\theta f_{k+1}(h_k) \label{subeq:43}\\
& = \sum_{j=1}^{k} Df_{k+1}(h_k) \cdot D\mu_{k,j+1}(h_j) \cdot \nabla_\theta f_j(h_{j-1}) \\
& \hspace{3.5mm} + D\mu_{k+1,k+2}(h_{k+1}) \cdot \nabla_\theta f_{k+1}(h_k) \label{subeq:44}\\
& = \sum_{j=1}^{k+1} D\mu_{k+1,j+1}(h_j) \cdot \nabla_\theta f_j(h_{j-1})
\label{subeq:45}
\end{align}
\end{subequations}

where \eqref{subeq:41} follows from \eqref{eq:nabalph00}, (\eqref{subeq:42}) and \eqref{subeq:43} from the inductive hypothesis and the fact that $D\mu_{k+1,k+2}(h_{k+1})$ is the identity, and \eqref{subeq:45} comes from the fact that $f_{k+1} \circ \mu_{k,j+1} = \mu_{k+1,j+1},$ implying

$$D f_{k+1}(h_k) \cdot D\mu_{k,j+1}(h_j) = D\mu_{k+1,j+1}(h_j)$$

for $j \leq k.$ Thus, we have proven \eqref{eq:nabalph} for all $k \in [L]$ by induction.

\end{proof}

\begin{theorem} \label{thm:derj-rec}
For $\mathcal{J}$ defined as in \eqref{eq:totalloss},

\begin{equation} \label{eq:nabjtheta}
\nabla_\theta \mathcal{J} = \sum_{i=1}^{L} \nabla_\theta^{\mbox{*}} f_i(h_{i-1}) \cdot D_{h_i} \mathcal{J},   
\end{equation}

where we can write $D_{h_i} \mathcal{J}$ as an element of $E_h$ recursively according to \ref{thm:bptt-rec}
\end{theorem}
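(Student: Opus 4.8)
The plan is to derive \eqref{eq:nabjtheta} by combining the RTRL gradient expression from Theorem \ref{thm:rtrl} with the structural decomposition of $\nabla_\theta \alpha_k(h)$ from Lemma \ref{lemma:3-13}, and then reorganizing the resulting double sum so that the inner sum collapses into $D_{h_i}\mathcal{J}$ as given by \eqref{eq:rnnloss}. Since $\mathcal{J} = \sum_{k=1}^{L} J_R(y_k, \widehat{y}_k)$ and $\nabla_\theta$ is additive, I would first write
$$\nabla_\theta \mathcal{J} = \sum_{k=1}^{L} \nabla_\theta(J_R(y_k, \widehat{y}_k)) = \sum_{k=1}^{L} \nabla_\theta^{\mbox{*}} \alpha_k(h) \cdot D\mbox{*}g(h_k) \cdot e_k,$$
where the second equality is exactly Theorem \ref{thm:rtrl}.

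Next I would take the adjoint of the identity in Lemma \ref{lemma:3-13}, using the direction-reversing property of the adjoint, to obtain
$$\nabla_\theta^{\mbox{*}} \alpha_k(h) = \sum_{j=1}^{k} \nabla_\theta^{\mbox{*}} f_j(h_{j-1}) \cdot D\mbox{*}\mu_{k,j+1}(h_j).$$
Substituting this into the previous display yields a double sum over the triangular index set $\{(k,j) : 1 \le j \le k \le L\}$, namely
$$\nabla_\theta \mathcal{J} = \sum_{k=1}^{L} \sum_{j=1}^{k} \nabla_\theta^{\mbox{*}} f_j(h_{j-1}) \cdot D\mbox{*}\mu_{k,j+1}(h_j) \cdot D\mbox{*}g(h_k) \cdot e_k.$$

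The crucial manipulation is to interchange the order of summation, rewriting the triangular sum as $\sum_{j=1}^{L}\sum_{k=j}^{L}$, and then to factor the layer-$j$ transition derivative $\nabla_\theta^{\mbox{*}} f_j(h_{j-1})$ out of the inner sum over $k$, leaving
$$\nabla_\theta \mathcal{J} = \sum_{j=1}^{L} \nabla_\theta^{\mbox{*}} f_j(h_{j-1}) \cdot \left( \sum_{k=j}^{L} D\mbox{*}\mu_{k,j+1}(h_j) \cdot D\mbox{*}g(h_k) \cdot e_k \right).$$
I would then recognize the parenthesized inner sum as precisely the closed form of $D_{h_j}\mathcal{J}$ established in \eqref{eq:rnnloss} (matching indices by taking the fixed state-layer to be $j$), so that relabeling $j \to i$ gives \eqref{eq:nabjtheta}. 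The recursive reading of $D_{h_i}\mathcal{J}$ asserted in the statement then follows immediately from Theorem \ref{thm:bptt-rec}.

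The main obstacle I anticipate is one of index bookkeeping rather than conceptual difficulty: one must verify carefully that the $\mu$-subscripts produced after taking the adjoint in Lemma \ref{lemma:3-13} line up exactly with those appearing in \eqref{eq:rnnloss} once the summation order is swapped, and in particular that the boundary terms are handled consistently with the stated conventions --- the diagonal term $j=k$ invoking the identity $D\mu_{j,j+1}(h_j)$ from Lemma \ref{lemma:mu}, the upper limit $k=L$, the convention $\mu_{j,i}=id$ for $j<i$, and $D_{h_{L+1}}\mathcal{J}=0$. An alternative route would be a direct induction on $i$ using the recursion \eqref{eq:rnnlossm} of Theorem \ref{thm:bptt-rec}, but the summation-swap argument above is cleaner since it reuses results already in hand.
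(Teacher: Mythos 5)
Your proposal is correct and follows essentially the same route as the paper's own proof: sum the RTRL expression \eqref{eq:gradj-rnn} over all output layers, substitute the adjoint of Lemma \ref{lemma:3-13}, swap the order of summation over the triangular index set, factor out $\nabla_\theta^{\mbox{*}} f_i(h_{i-1})$, and identify the inner sum with $D_{h_i}\mathcal{J}$ via \eqref{eq:rnnloss}. The only difference is cosmetic index labelling, and your attention to the boundary conventions ($D\mu_{j,j+1}$ being the identity, $D_{h_{L+1}}\mathcal{J}=0$) matches the conventions the paper relies on.
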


\begin{proof}
We can prove this directly using the results from earlier in this subsection:

\begin{equation}
\begin{split}
\nabla_\theta \mathcal{J} &  = \sum_{j=1}^{L} \nabla_\theta^{\mbox{*}} \alpha_j(h) \cdot D\mbox{*} g(h_j) \cdot e_j  \\
 & = \sum_{j=1}^{L} \sum_{i=1}^{j} \nabla_\theta^{\mbox{*}} f_i(h_{i-1}) \cdot D\mbox{*} \mu_{j,i+1} (h_i) \cdot D\mbox{*} g(h_j) \cdot e_j,
\end{split}
\end{equation}

where the first equality follows from summing \eqref{eq:gradj-rnn} over all $j \in [L],$ and the second line from taking the adjoint of \eqref{eq:nabalph}. We will now swap the incides to obtain the final result, since we are summing over $\{(i,j) \in [L] \times [L]: 1 \leq i \leq j \leq L \}:$

\begin{equation}
\begin{split}
\nabla_\theta \mathcal{J} &  = \sum_{i=1}^{L} \sum_{j=i}^{L} \nabla_\theta^{\mbox{*}} f_i(h_{i-1}) \cdot D\mbox{*} \mu_{j,i+1} (h_i) \cdot D\mbox{*} g(h_j) \cdot e_j \\
& = \sum_{i=1}^{L} \nabla_\theta^{\mbox{*}} f_i(h_{i-1}) \cdot \Bigg( \sum_{j=i}^{L} D\mbox{*} \mu_{j,i+1} (h_i) \cdot D\mbox{*} g(h_j) \cdot e_j \Bigg) \\
& = \sum_{i=1}^{L} \nabla_\theta^{\mbox{*}} f_i(h_{i-1}) \cdot D_{h_i} \mathcal{J},
\end{split}
\end{equation}

where the final result comes from \eqref{eq:rnnloss}

\end{proof}

We will now present an algorithm for taking one step of gradient descent
in BPTT. The inputs and outputs are the same as Algorithm \ref{alg:rtrl}, with the only difference being that we compute the gradient with respect to the transition parameters according to BPTT and not RTRL. We will denote the backpropagate error quantity in Algorithm \ref{alg:bptt-rnn} by:

$$\varepsilon_i \equiv D_{h_i} \mathcal{J}$$

for all $i \in [L+1]$. We can again extend Algorithm \ref{alg:bptt-rnn} to a batch of inputs, more complicated gradient descent algorithms, and regularization, as in Algorithm \ref{alg:backppg}. 

One important extension to the BPTT aglorithm given in Algorithm \ref{alg:bptt-rnn} is truncated BPTT, in which we run BPTT every $\ell < L$ timesteps down for a fixed $m < L$ steps \cite{lstm1}, and then reset the error vector to zero after. Truncated BPTT requires fewer computations than full BPTT and can also help with the problem of vanishing and exploding gradients, as the gradients will not be propagated back as far as in full BPTT. One potential downside is that the exact gradients will not be calculated, although this is preferable to exact gradients if they would otherwise explode. \\

\vfill

\begin{algorithm}[H]
  \caption{One iteration of gradient descent for for an RNN via BPTT}
   \label{alg:bptt-rnn}
  \begin{algorithmic}[1]
    \STATE \textbf{function} \textsc{GradDescBPTT}%
$(\mathbf{x}, \mathbf{y}, h, \theta, \zeta, \eta)$
    %\bindent
    \INDSTATE $h_0 \leftarrow h$ 
    \begin{ALC@g}
    %\bindent
    %\eindent
    \STATE $\nabla_\theta \mathcal{J} \leftarrow 0$ \COMMENT{0 in $H_T$, the inner product space in which $\theta$ resides}
    \STATE $\nabla_\zeta \mathcal{J} \leftarrow 0$ \COMMENT{0 in $H_P$, the inner product space in which $\zeta$ resides}
    \FOR{$i \in \{1,...,L\}$} 
        \STATE $h_i \leftarrow f_i(h_{i-1})$ \COMMENT{$f_i$ depends on $\theta, x_i$}
        \STATE $\widehat{y}_i \leftarrow g(h_i)$
        \STATE $e_i \leftarrow \widehat{y}_i - y_i$
        \STATE $\nabla_\zeta \mathcal{J} \leftarrow \nabla_\zeta \mathcal{J} + \nabla_\zeta^{\mbox{*}} g(h_i) \cdot e_i$ \COMMENT{Add accumulated gradient at each layer}
    \ENDFOR
    \STATE $\varepsilon_{L+1} \leftarrow 0$ \COMMENT{0 in $E_h$; Initialization of $D_{h_{L+1}} \mathcal{J}$}
    \FOR{$i \in \{1,...,L\}$} 
        \STATE $\varepsilon_i \leftarrow D\mbox{*} f_{i+1} \cdot \varepsilon_{i+1} + D\mbox{*} g(h_i) \cdot e_i$ \COMMENT{BPTT update step from \eqref{eq:rnnlossm}}
        \STATE $\nabla_\theta \mathcal{J} \leftarrow \nabla_\theta \mathcal{J} + \nabla_\theta^{\mbox{*}} f_i(h_{i-1}) \cdot \varepsilon_i$ \COMMENT{Add accumulated gradient at each layer}
    \ENDFOR
    
    \STATE $\theta \leftarrow \theta - \eta \nabla_\theta \mathcal{J}$ \COMMENT{Parameter Update Steps}
    \STATE $\zeta \leftarrow \zeta - \eta \nabla_\zeta \mathcal{J}$
    \RETURN $\theta, \zeta$
   \end{ALC@g}
  \end{algorithmic}
\end{algorithm}

\clearpage

\subsection{Vanilla RNNs}
We will now formulate basic \emph{vanilla} RNN in the framework of the previous subsection. We first need to specify the hidden, input, output and parameter spaces, the layerwise function $f$, and the prediction function $g$. We will also take the derivatives of $f$ and $g$ to develop the BPTT method for vanilla RNNs. 

\subsubsection{Formulation}
Let us assume the hidden state is a vector of length $n_h$, i.e. $E_h = \mathbb{R}^{n_h}$. Suppose also that $E_x = \mathbb{R}^{n_x}$ and $E_y = \mathbb{R}^{n_y}$. We will evolve the hidden state $h \in \mathbb{R}^{n_h}$ according to a hidden-to-hidden weight matrix $W \in \mathbb{R}^{n_h \times n_h}$, an input-to-hidden weight matrix $U \in \mathbb{R}^{n_h \times n_x}$, and a bias vector $b \in \mathbb{R}^{n_h}$. We can then describe the hidden state evolution as

$$f(h;x;W,U,b) = \Psi(W \cdot h + U \cdot x + b),$$

where $\Psi : \mathbb{R}^{n_h} \to \mathbb{R}^{n_h}$ is the elementwise nonlinearity as defined in subsection \ref{subsec:elemwise funcs}. The tanh function is a particularly popular choice of elementwise nonlinearity for RNNs. If we employ the parameter and input suppression convention for each layer $i \in [L]$, we can write the layerwise function $f_i$ as

\begin{equation} \label{eq:van-f}
    f_i(h_{i-1}) = \Psi (W \cdot h_{i-1} + U \cdot x_i + b).    
\end{equation}

The prediction function $g$ is also parametrized by matrix-vector multiplication as follows for any $h \in \mathbb{R}^{n_h}:$

\begin{equation} \label{eq:van-g}
    g(h) = V \cdot h + c,
\end{equation}

where $V \in \mathbb{R}^{n_y \times n_h}$ is the hidden-to-output weight matrix, and $c \in \mathbb{R}^{n_y}$ is the output bias vector. We assume in this subsection that each vector space is equipped with the standard Euclidean inner product $\langle A,B \rangle = tr(A B^T) = tr(A^T B)$. \\

\subsubsection{Single-Layer Derivatives}
We will first derive the maps $Df$ and $\nabla_\theta f,$ for $\theta \in \{W, U, b\},$ and their adjoints. Then, we will derive $Dg$ and $\nabla_\zeta g,$ for $\zeta \in \{V, c\},$ and the adjoints of those as well. 

\begin{theorem}
For any $h_{i-1} \in \mathbb{R}^{n_h}, x_i \in \mathbb{R}^{n_x}, \widetilde{W} \in \mathbb{R}^{n_h \times n_h},$ and $\widetilde{U} \in \mathbb{R}^{n_h \times n_x},$ with $f_i$ defined as in \eqref{eq:van-f},

\begin{subequations}
\begin{align}
D f_i(h_{i-1}) & = D \Psi (z_i) \cdot W, \label{subeq:51}\\
\nabla_W f_i(h_{i-1}) \cdot \widetilde{W} & = D \Psi (z_i) \cdot \widetilde{W} \cdot h_{i-1}, \label{subeq:52} \\
\nabla_U f_i(h_{i-1}) \cdot \widetilde{U} & = D \Psi (z_i) \cdot \widetilde{U} \cdot x_i, \label{subeq:53} \\
\nabla_b f_i(h_{i-1}) & = D \Psi (z_i), \label{subeq:54},
\end{align}
\end{subequations}

where $z_i = W \cdot h_{i-1} + U \cdot x_i + b.$ Furthermore, for any $v \in \mathbb{R}^{n_h}$,

\begin{subequations}
\begin{align}
D\mbox{*} f_i(h_{i-1}) & = W^T \cdot D \Psi (z_i) , \label{subeq:61}\\
\nabla_W^{\mbox{*}} f_i(h_{i-1}) \cdot v & = (D \Psi (z_i) \cdot v) h_{i-1}^T \cdot , \label{subeq:62} \\
\nabla_U^{\mbox{*}} f_i(h_{i-1}) \cdot v & = (D \Psi (z_i) \cdot v) \cdot {x_i}^T, \label{subeq:63} \\
\nabla_b^{\mbox{*}} f_i(h_{i-1}) & = D \Psi (z_i), \label{subeq:64},
\end{align}
\end{subequations}

\end{theorem}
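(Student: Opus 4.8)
The plan is to write $f_i$ as the composition $f_i = \Psi \circ \phi_i$, where $\phi_i(h;W,U,b) = W \cdot h + U \cdot x_i + b$ is affine in each of its arguments separately, and then to differentiate through the composition using the chain rule \eqref{eq:cr} together with the elementwise-derivative formula of Theorem \ref{thm:elemfunc1stder}. Since $\phi_i$ is affine, its derivative in any single argument is just the linear part of that argument, which reduces every forward derivative to a one-line computation straight from the definitions \eqref{eq:def} and \eqref{eqpd:def}.

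Concretely, for the state derivative \eqref{subeq:51} I would compute $Df_i(h_{i-1}) \cdot v = \frac{d}{dt}\Psi(W(h_{i-1}+tv) + U x_i + b)\big|_{t=0}$; the inner argument is $z_i + tWv$, so the chain rule gives $D\Psi(z_i) \cdot (Wv)$, i.e.\ $Df_i(h_{i-1}) = D\Psi(z_i) \cdot W$. The three parameter derivatives \eqref{subeq:52}--\eqref{subeq:54} follow the identical pattern via \eqref{eqpd:def}: perturbing $W$ by $t\widetilde{W}$ perturbs $z_i$ by $t\widetilde{W}h_{i-1}$, perturbing $U$ by $t\widetilde{U}$ perturbs $z_i$ by $t\widetilde{U}x_i$, and perturbing $b$ by $tu$ perturbs $z_i$ by $tu$; in each case the chain rule isolates $D\Psi(z_i)$ acting on the linear part. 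This settles the first block of identities.

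For the adjoints, the state adjoint \eqref{subeq:61} is immediate from the direction-reversing property $(L_2 L_1)\mbox{*} = L_1\mbox{*}\, L_2\mbox{*}$ together with the self-adjointness of $D\Psi(z_i)$ (Theorem \ref{thm:elemfunc1stder}) and the fact that the adjoint of the linear map $W$ is $W^\intercal$, yielding $D\mbox{*}f_i(h_{i-1}) = W^\intercal \cdot D\Psi(z_i)$. The parameter adjoints \eqref{subeq:62}--\eqref{subeq:64} I would obtain by pairing against an arbitrary test vector $v \in \mathbb{R}^{n_h}$ and passing operators across the inner product. For $W$, starting from $\langle v, \nabla_W f_i(h_{i-1}) \cdot \widetilde{W}\rangle = \langle v, D\Psi(z_i) \cdot \widetilde{W} h_{i-1}\rangle$, self-adjointness of $D\Psi(z_i)$ produces $\langle D\Psi(z_i) \cdot v, \widetilde{W} h_{i-1}\rangle$; then the matrix inner product $\langle A,B\rangle = tr(A^\intercal B)$ together with cyclicity of the trace rewrites this as $\langle (D\Psi(z_i)\cdot v)\, h_{i-1}^\intercal, \widetilde{W}\rangle$, identifying the adjoint as the outer product $(D\Psi(z_i)\cdot v)\, h_{i-1}^\intercal$. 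The same trace manipulation with $x_i$ in place of $h_{i-1}$ handles $U$, while $b$ just reuses the self-adjointness of $D\Psi(z_i)$.

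The main obstacle is bookkeeping rather than anything conceptual: correctly shuttling between the Euclidean inner product on $\mathbb{R}^{n_h}$ and the trace inner product on the matrix parameter spaces, and applying the cyclic property of the trace in the right order so as to recover the rank-one outer-product form $(D\Psi(z_i)\cdot v)\, h_{i-1}^\intercal$ for $\nabla_W^{\mbox{*}}$ (and the analogous $x_i^\intercal$ form for $\nabla_U^{\mbox{*}}$). Beyond that single step, every identity collapses to one application of the chain rule and the already-established properties of $\Psi$.
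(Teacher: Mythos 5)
Your proposal is correct and follows essentially the same route as the paper: the forward derivatives come from the chain rule applied to the affine pre-activation, the state adjoint from the reversing property plus self-adjointness of $D\Psi(z_i)$, and the parameter adjoints from pairing with a test vector and invoking the cyclic property of the trace inner product to extract the rank-one outer products $(D\Psi(z_i)\cdot v)\,h_{i-1}^\intercal$ and $(D\Psi(z_i)\cdot v)\,x_i^\intercal$. The only difference is that you spell out the forward derivatives explicitly where the paper simply cites the chain rule.
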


\begin{proof}
Equations \eqref{subeq:51} to \eqref{subeq:54} are all direct consequences of the chain rule. Equations \eqref{subeq:61} and \eqref{subeq:64} follow directly from the reversing property of the adjoint and the self-adjointness of $D\Psi$ (Theorem \ref{thm:elemfunc1stder}).

To prove Equation \eqref{subeq:62}, By \eqref{subeq:52}, for any $v \in \mathbb{R}^{n_h}$, $h_{i-1} \in \mathbb{R}^{n_h}$, and $\widetilde{W} \in \mathbb{R}^{n_h \times n_h}$,

\begin{equation}
\begin{split}
\langle \nabla_W^{\mbox{*}} f_i(h_{i-1}) \cdot v, \widetilde{W} \rangle &  = \langle v, \nabla_W f_i(h_{i-1}) \cdot \widetilde{W} \rangle \\
& = \langle v, D\Psi(z_i) \cdot \widetilde{W} \cdot h_{i-1} \rangle \\
& = \langle D\Psi(z_i) \cdot v, \widetilde{W} \cdot h_{i-1} \rangle \\
& = \langle (D\Psi(z_i) \cdot v){h_{i-1}^\intercal}, \widetilde{W} \rangle,
\end{split}
\end{equation}

where the forth quality arises from cyclic property of the trace. Since this is true for all $\widetilde{W} \in \mathbb{R}^{n_h \times n_h}$, 

$$\nabla_W^{\mbox{*}} f_i(h_{i-1}) \cdot v = (D\Psi(z_i) \cdot v) h_{i-1}^\intercal = (\Psi'_i(z_i) \odot v) h_{i-1}^\intercal .$$

The same approach can be used to obtain \eqref{subeq:63}: By \eqref{subeq:52}, for any $v \in \mathbb{R}^{n_h}$, $h_{i-1} \in \mathbb{R}^{n_h}$, and $\widetilde{U} \in \mathbb{R}^{n_h \times n_x}$,

\begin{equation}
\begin{split}
\langle \nabla_U^{\mbox{*}} f_i(h_{i-1}) \cdot v, \widetilde{U} \rangle &  = \langle v, \nabla_U f_i(h_{i-1}) \cdot \widetilde{U} \rangle \\
& = \langle v, D\Psi(z_i) \cdot \widetilde{U} \cdot x_i \rangle \\
& = \langle D\Psi(z_i) \cdot v, \widetilde{U} \cdot x_i \rangle \\
& = \langle (D\Psi(z_i) \cdot v){x_i}^{\intercal}, \widetilde{U} \rangle.
\end{split}
\end{equation}

 Since this is true for all $\widetilde{U} \in \mathbb{R}^{n_h \times n_x}$,
 
 $$ \nabla_U^{\mbox{*}} f_i(h_{i-1}) \cdot v = (D\Psi(z_i) \cdot v){x_i}^{\intercal} = (\Psi'_i(z_i) \odot v) x_i^\intercal .$$
 
\end{proof}

\begin{theorem} \label{thm:van-ders}
For any $h \in E_h$ and $\widetilde{V} \in \mathbb{R}^{n_y \times n_h}$, with $g$ defined as in \eqref{eq:van-g},
\end{theorem}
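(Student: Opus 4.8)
The plan is to follow the structure of the immediately preceding theorem on the single-layer derivatives of $f_i$, since the prediction function $g(h) = V \cdot h + c$ from \eqref{eq:van-g} is affine and therefore even easier to differentiate than $f_i$. I expect the statement to record the state derivative $Dg(h)$, the two parameter derivatives $\nabla_V g(h)$ and $\nabla_c g(h)$ for $\zeta \in \{V,c\}$, and the three corresponding adjoints $D\mbox{*}g(h)$, $\nabla_V^{\mbox{*}}g(h)$, and $\nabla_c^{\mbox{*}}g(h)$. The proof will be a sequence of direct computations; no elementwise-nonlinearity factor $D\Psi(z_i)$ intervenes here, which is what makes this case lighter than the $f_i$ case.

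First I would compute the three forward derivatives straight from the defining limits \eqref{eq:def2} and \eqref{eqpd:def}. Differentiating $g$ along $t \mapsto V\cdot(h+tv)+c$ yields $Dg(h)\cdot v = V\cdot v$, so $Dg(h)=V$ as a linear map; differentiating along $t \mapsto (V+t\widetilde{V})\cdot h + c$ yields $\nabla_V g(h)\cdot \widetilde{V} = \widetilde{V}\cdot h$; and differentiating along $t \mapsto V\cdot h + (c+t\widetilde{c})$ yields $\nabla_c g(h) = id$ on $\mathbb{R}^{n_y}$. Each is a single line because $g$ is affine in both the state and the parameters.

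Next I would pass to the adjoints. For $D\mbox{*}g(h)$ I would invoke the adjoint–transpose correspondence established in the preliminaries, giving $D\mbox{*}g(h)=V^\intercal$, and $\nabla_c^{\mbox{*}}g(h)=id$ since the identity is self-adjoint. For $\nabla_V^{\mbox{*}}g(h)$ I would pair against a test matrix exactly as in the proof of \eqref{subeq:62}: for any $w\in\mathbb{R}^{n_y}$ and $\widetilde{V}\in\mathbb{R}^{n_y\times n_h}$,
\begin{equation*}
\langle \nabla_V^{\mbox{*}}g(h)\cdot w,\, \widetilde{V}\rangle = \langle w,\, \widetilde{V}\cdot h\rangle = tr\big(w^\intercal \widetilde{V} h\big) = tr\big((w\,h^\intercal)\,\widetilde{V}^\intercal\big) = \langle w\,h^\intercal,\, \widetilde{V}\rangle,
\end{equation*}
using the standard Euclidean product $\langle A,B\rangle = tr(AB^\intercal)$ together with the cyclic and transpose-invariance properties of the trace. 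Since this holds for every $\widetilde{V}$, I conclude $\nabla_V^{\mbox{*}}g(h)\cdot w = w\,h^\intercal$, in complete parallel with \eqref{subeq:63}.

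The only genuinely delicate point—and the step I would double-check—is the trace bookkeeping in the $\nabla_V^{\mbox{*}}$ computation: one must verify that the outer product $w\,h^\intercal$ lands in the correct space $\mathbb{R}^{n_y\times n_h}$ and that the transposes in $\langle A,B\rangle = tr(AB^\intercal)$ are tracked consistently, since a single misplaced transpose would produce $h\,w^\intercal$ with the wrong shape. Everything else follows mechanically from the affine form of $g$ and the reversing/self-adjointness facts already in hand, so I anticipate no obstacle beyond this clerical care.
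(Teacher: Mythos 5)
Your proposal is correct and follows essentially the same route as the paper: the paper's proof simply notes that \eqref{subeq:71}--\eqref{subeq:73} follow from the chain rule and that the adjoint identities are simpler versions of their counterparts for $f_i$, and your explicit trace computation for $\nabla_V^{\mbox{*}}g(h)\cdot w = w\,h^\intercal$ is exactly the argument used there for \eqref{subeq:62} and \eqref{subeq:63}, carried out with the correct shapes. No gap to report.
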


\begin{subequations}
\begin{align}
Dg(h) & = V, \label{subeq:71}\\
\nabla_V g(h) \cdot \widetilde{V} & = \widetilde{V} \cdot h, \label{subeq:72} \\
\nabla_c g(h) & = id. \label{subeq:73}
\end{align}
\end{subequations}

Furthermore, for any $v \in \mathbb{R}^{n_y},$

\begin{subequations}
\begin{align}
D\mbox{*}g(h) & = V^T, \label{subeq:81}\\
\nabla_V^{\mbox{*}} g(h) \cdot v & = v h^\intercal, \label{subeq:82} \\
\nabla_c^{\mbox{*}} g(h) & = id. \label{subeq:83}
\end{align}
\end{subequations}

\begin{proof}
Equations \eqref{subeq:71}-\eqref{subeq:73} are consequences of the chain rule and Eqs. \eqref{subeq:81}-\eqref{subeq:83} are simpler versions of their counterparts in Theorem \ref{thm:van-ders}.
\end{proof}

\subsubsection{Backpropagation Through Time}
In this subsection, we will explicitly write out the BPTT recurrence \eqref{eq:rnnlossm} and full gradient \eqref{eq:rnnlossm} for the case of vanilla RNNs. Then, we can conveniently insert these into Algorithm \ref{alg:bptt-rnn} to perform BPTT. The equations that we will bring from \cite{dnnmf} derive bear a strong resemblance to those found in [\cite{dl}, Chapter 10]; however, \cite{dnnmf} have explicitly shown the derivation here and have carefully defined the maps and vectors that the book is using.

\begin{theorem} \label{thm:van-derj}
For any $i \in [L]$,

\begin{equation}
    D_{h_i}\mathcal{J} = W^T \cdot D\Psi(z_{i+1}) \cdot D_{h_{i+1}}\mathcal{J} + V^T \cdot e_i,    
\end{equation}

where $\mathcal{J}$ is defined in \eqref{eq:totalloss}, $z_{i+1} = W \cdot h_i + U \cdot x_{i+1} + b, e_i$ is $\widehat{y}_i - y_i,$ and we set $D_{h_{L+1}}\mathcal{J}$ to be the zero vector in $\mathbb{R}^{n_h}.$
\end{theorem}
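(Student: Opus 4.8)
The plan is to obtain this result as a direct specialization of the generic backpropagation-through-time recurrence established in Theorem \ref{thm:bptt-rec}, substituting the vanilla-RNN forms of the single-layer adjoint derivatives that have already been computed in this subsection. Recall that Theorem \ref{thm:bptt-rec} gives, for every $i \in [L]$,
$$D_{h_i}\mathcal{J} = D\mbox{*}f_{i+1}(h_i) \cdot D_{h_{i+1}}\mathcal{J} + D\mbox{*}g(h_i) \cdot e_i,$$
with $D_{h_{L+1}}\mathcal{J}$ set to the zero vector in $E_h$ and $e_i = \widehat{y}_i - y_i$. This recurrence is architecture-independent, so all of the vanilla-RNN content lives in the two adjoint maps $D\mbox{*}f_{i+1}(h_i)$ and $D\mbox{*}g(h_i)$, and the proof amounts to plugging their closed forms into this template.

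First I would insert the expression for $D\mbox{*}f_{i+1}(h_i)$. Equation \eqref{subeq:61} states that $D\mbox{*}f_i(h_{i-1}) = W^T \cdot D\Psi(z_i)$, where $z_i = W \cdot h_{i-1} + U \cdot x_i + b$. Shifting the layer index up by one, this reads $D\mbox{*}f_{i+1}(h_i) = W^T \cdot D\Psi(z_{i+1})$ with $z_{i+1} = W \cdot h_i + U \cdot x_{i+1} + b$, which is exactly the preactivation $z_{i+1}$ appearing in the statement. The one point requiring care is precisely this index bookkeeping: the adjoint is evaluated at the layer-$(i+1)$ preactivation $z_{i+1}$, not at $z_i$, because the recurrence propagates the error from layer $i+1$ back to layer $i$ through $f_{i+1}$.

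Next I would substitute $D\mbox{*}g(h_i) = V^T$, which is immediate from \eqref{subeq:81} since $g$ is affine and its state-derivative $Dg(h) = V$ is constant in $h$. Combining the two substitutions into the recurrence above yields
$$D_{h_i}\mathcal{J} = W^T \cdot D\Psi(z_{i+1}) \cdot D_{h_{i+1}}\mathcal{J} + V^T \cdot e_i,$$
and the boundary condition $D_{h_{L+1}}\mathcal{J} = 0$ is inherited verbatim from Theorem \ref{thm:bptt-rec}. Because the self-adjointness of $D\Psi$ (Theorem \ref{thm:elemfunc1stder}) was already exploited in deriving \eqref{subeq:61}, no further manipulation of the elementwise nonlinearity is needed. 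There is no genuine obstacle here; the statement is a verification that the generic recurrence instantiates correctly, and the only place to slip is in matching the preactivation index $z_{i+1}$ and the roles of the weight matrices $W$ and $V$ to their correct factors.
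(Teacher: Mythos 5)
Your proposal is correct and follows exactly the paper's own route: the paper likewise proves this by inserting $D\mbox{*}f_{i+1}$ from \eqref{subeq:61} and $D\mbox{*}g$ from \eqref{subeq:81} into the generic BPTT recurrence \eqref{eq:rnnlossm}. Your additional care with the index shift to $z_{i+1}$ is a welcome clarification but does not change the argument.
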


\begin{proof}
We can prove this simply by inserting the definitions of $D\mbox{*}f_i$ and $D\mbox{*}g$ from \eqref{subeq:61} and \eqref{subeq:81}, respectively, into \eqref{eq:rnnlossm}.
\end{proof}

\begin{theorem} \label{thm:van-nabj}
For $\mathcal{J}$ defined as in \eqref{eq:totalloss},

\begin{equation}
\begin{split}
\nabla_W \mathcal{J} &  = \sum_{i=1}^{L} \big( D\Psi(z_i) \cdot D_{h_i} \mathcal{J} \big) h_{i-1}^\intercal, \\
\nabla_U \mathcal{J} &  = \sum_{i=1}^{L} \big( D\Psi(z_i) \cdot D_{h_i} \mathcal{J} \big) x_i^\intercal, \\
\nabla_b \mathcal{J} &  = \sum_{i=1}^{L} D\Psi(z_i) \cdot D_{h_i} \mathcal{J},
\end{split}
\end{equation}

where $h_i = \alpha_i(h)$ for all $i \in [L],$ and $D_{h_i}\mathcal{J}$ can be calculated recursively according to Theorem (\ref{thm:van-derj}).

\end{theorem}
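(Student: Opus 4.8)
The plan is to obtain all three gradients as immediate specializations of the generic BPTT formula proved in Theorem~\ref{thm:derj-rec}, namely
$$\nabla_\theta \mathcal{J} = \sum_{i=1}^{L} \nabla_\theta^{\mbox{*}} f_i(h_{i-1}) \cdot D_{h_i} \mathcal{J},$$
which holds for the transition parameter $\theta$ of any RNN built in the style of Section~\ref{sec:nnmf-rnn}. For the vanilla RNN the transition parameter is the composite $\theta = (W, U, b)$ living in the product space $\mathbb{R}^{n_h \times n_h} \times \mathbb{R}^{n_h \times n_x} \times \mathbb{R}^{n_h}$ equipped with the product inner product, so $\nabla_\theta \mathcal{J}$ splits block-wise into the three components $\nabla_W \mathcal{J}$, $\nabla_U \mathcal{J}$, and $\nabla_b \mathcal{J}$, while the adjoint $\nabla_\theta^{\mbox{*}} f_i(h_{i-1})$ splits correspondingly into the block adjoints $\nabla_W^{\mbox{*}} f_i$, $\nabla_U^{\mbox{*}} f_i$, and $\nabla_b^{\mbox{*}} f_i$. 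First I would record this decomposition explicitly, so that the generic gradient formula can be read off one parameter block at a time.

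Next I would substitute the single-layer adjoint derivatives already computed for the vanilla layerwise map $f_i$ in Eqs.~\eqref{subeq:62}, \eqref{subeq:63}, and \eqref{subeq:64}. Evaluating each of these on the backpropagated vector $v = D_{h_i}\mathcal{J}$ gives at once
$$\nabla_W^{\mbox{*}} f_i(h_{i-1}) \cdot D_{h_i}\mathcal{J} = \big(D\Psi(z_i) \cdot D_{h_i}\mathcal{J}\big) h_{i-1}^\intercal,$$
and analogously with the trailing factor $x_i^\intercal$ for $U$ and with no trailing factor at all for $b$. Summing over $i \in [L]$ then produces the three stated identities term by term. This mirrors exactly the strategy used to prove Theorem~\ref{thm:van-derj}, where the BPTT recurrence was obtained simply by inserting the explicit forms of $D\mbox{*}f_i$ and $D\mbox{*}g$ into \eqref{eq:rnnlossm}.

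The computation is routine once these pieces are assembled, so there is no genuinely hard step; the one point that deserves care is the bookkeeping of the backpropagated quantity $D_{h_i}\mathcal{J}$. I would stress that this vector is the \emph{same} object for all three parameters---it depends only on the state $h_i$ and the downstream losses through the recurrence of Theorem~\ref{thm:van-derj}, not on which transition parameter is being differentiated---so it is computed once and then reused in each of the three summations. I would close by observing that the resulting expressions are precisely what feeds the BPTT update of Algorithm~\ref{alg:bptt-rnn}, under the identification $\varepsilon_i \equiv D_{h_i}\mathcal{J}$.
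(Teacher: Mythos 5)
Your proposal is correct and follows essentially the same route as the paper: the paper's proof likewise consists of inserting the block adjoints $\nabla_W^{\mbox{*}} f_i$, $\nabla_U^{\mbox{*}} f_i$, and $\nabla_b^{\mbox{*}} f_i$ from Eqs.~\eqref{subeq:62}--\eqref{subeq:64} into the generic BPTT gradient formula \eqref{eq:nabjtheta} of Theorem~\ref{thm:derj-rec}. Your additional remarks on the product-space decomposition of $\theta = (W,U,b)$ and on the reuse of the single backpropagated vector $D_{h_i}\mathcal{J}$ are accurate and only make explicit what the paper leaves implicit.
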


\begin{proof}
As with Theorem (\ref{thm:van-derj}), we can prove this by inserting $\nabla_W^{\mbox{*}} f_i(h_{i-1})$ from \eqref{subeq:62}, $\nabla_U^{\mbox{*}} f_i(h_{i-1})$ from \eqref{subeq:63}, and $\nabla_b^{\mbox{*}} f_i(h_{i-1})$ from \eqref{subeq:64} into \eqref{eq:nabjtheta}.
\end{proof}

We can use the results from Theorems \ref{thm:van-derj} and \ref{thm:van-nabj} to create a specific BPTT algorithms for vanilla RNNs, which we present in Algorithm \ref{alg:bptt-van}. We have the same inputs and outputs as Algorithm \ref{alg:bptt-rnn}, although our transition parameters $\theta$ are now $\theta = \{W,U,b\}$, and our prediction parameters $\zeta$ are now $\zeta = \{V,c\}$.

\begin{algorithm}[H]
  \caption{One iteration of gradient descent for for a vanilla RNN via BPTT}
   \label{alg:bptt-van}
  \begin{algorithmic}[1]
    \STATE \textbf{function} \textsc{GradDescVanillaBPTT}%
$(\mathbf{x}, \mathbf{y}, h, \theta, \zeta, \eta)$
    %\bindent
    \INDSTATE $h_0 \leftarrow h$ 
    \begin{ALC@g}
    %\bindent
    %\eindent
    \STATE $\nabla_W \mathcal{J}, \nabla_U \mathcal{J}, \nabla_b \mathcal{J} \leftarrow 0$ \COMMENT{0 in their respective spaces}
    \STATE $\nabla_V \mathcal{J}, \nabla_c \mathcal{J} \leftarrow 0$
    \FOR{$i \in \{1,...,L\}$} 
        \STATE $z_i \leftarrow W \cdot h_{i-1} + U \cdot x_i + b$ 
        \STATE $h_i \leftarrow \Psi(z_i)$ \COMMENT{Specific defnition of $f_i$}
        \STATE $\widehat{y}_i \leftarrow V \cdot h_i + c$ \COMMENT{Specific defnition of $g$}
        \STATE $e_i \leftarrow \widehat{y}_i - y_i$
        \STATE $\nabla_c \mathcal{J} \leftarrow \nabla_c \mathcal{J} + e_i$ \COMMENT{Inserted \eqref{subeq:83} into \eqref{eq:rnn-nabzetj} to accumulate gradient}
        \STATE $\nabla_V \mathcal{J} \leftarrow \nabla_V \mathcal{J} + e_i \cdot h_i^\intercal$ \COMMENT{Inserted \eqref{subeq:82} into \eqref{eq:rnn-nabzetj} to accumulate gradient}
    \ENDFOR
    \STATE $\varepsilon_{L+1} \leftarrow 0$ \COMMENT{0 in $E_h$; Initialization of $D_{h_{L+1}} \mathcal{J}$}
    \FOR{$i \in \{1,...,L\}$} 
        \STATE $\varepsilon_i \leftarrow W^T \cdot D\Psi(z_{i+1}) \cdot \varepsilon_{i+1} + V^T \cdot e_i$ \COMMENT{BPTT Update Step with \eqref{subeq:61} and \eqref{subeq:81}}
        
        \STATE $\nabla_b \mathcal{J} \leftarrow \nabla_b \mathcal{J} + D\Psi(z_i) \cdot \varepsilon_i$ \COMMENT{Inserted \eqref{subeq:64} into \eqref{eq:nabjtheta}}
        \STATE $\nabla_W \mathcal{J} \leftarrow \nabla_W \mathcal{J} + \big(D\Psi(z_i) \cdot \varepsilon_i \big) h_{i-1}^\intercal$ \COMMENT{Inserted \eqref{subeq:62} into \eqref{eq:nabjtheta}}
        \STATE $\nabla_U \mathcal{J} \leftarrow \nabla_U \mathcal{J} + \big(D\Psi(z_i) \cdot \varepsilon_i \big) x_i^\intercal$  \COMMENT{Inserted \eqref{subeq:63} into \eqref{eq:nabjtheta}}
    \ENDFOR
    
    \STATE $\theta \leftarrow \theta - \eta \nabla_\theta \mathcal{J}$ \COMMENT{Parameter Update Steps for all $\theta, \zeta$}
    \STATE $\zeta \leftarrow \zeta - \eta \nabla_\zeta \mathcal{J}$
    \RETURN $\theta, \zeta$
   \end{ALC@g}
  \end{algorithmic}
\end{algorithm}

\clearpage
\subsection{Gated RNNs}
Beyond just the vanilla RNN, there exist numerous variants. The challenge of long-term dependency leads to vanishing and exploding gradients which are prevalent in vanilla RNNs, necessitating the development of \emph{gated} RNN architectures. This issue is explained in \ref{subsec:ann-rnn}.
 The standard techniques of BPTT and RTRL can be applied in gated RNNs. \\

We can understand the success of the LSTM by referring to \cite{lstm3}, particularly section 2, where the transition and prediction equations are defined. We notice that the \emph{cell state} at layer $t$, denoted $c^t$---one of the hidden states in the LSTM---is updated such that the norm of the Jacobian of the evolution from layer $t - 1$ is close to 1. This adds stability to the calulation of gradients, allowing longer-term dependencies to propagate further backwards through the network and forgoing the need for truncated BPTT.\\

We notice from \cite{lstm3} that he update and prediction steps for the LSTM are quite complicated, requiring six equations in total. Thus, a simpler gating mechanism requiring fewer parameters and update equations than the LSTM---now referred to as the Gated Recurrent Unit (GRU) \cite{lstm2}--- was introduced in \cite{lstm5}. The GRU state update still maintains an additive component, as in the LSTM, but does not explicitly contain a memory state. Introducing a GRU has been shown to be at least as effective as the LSTM on certain tasks while converging faster \cite{lstm2}. Another interesting comparison between LSTM and GRU is given in \cite{lstm6}, where the authors demonstrate empirically that the performance between the two is almost equal.

\subsubsection{Functions of LSTM}
The most important component of LSTM is the cell state $c_i^{(t)},$ which is illustrated in \ref{subsec:ann-rnn}. $c_i^{(t)}$ has a linear self-loop similar to the leaky units described in \ref{subsec:ann-rnn}. Here, however, the self-loop weight (or the associated time constant) is controlled by a foget gate unit $f_t^{t}$ (for time step $t$ and cell $i$), which sets this weight to a value between $0$ and $1$ via a sigmoid unit:

$$f_i^{(t)} = \sigma \Bigg( b_i^f + \sum_{j}^{} U_{i,j}^f x_j^{(t)} + \sum_{j}^{} W_{i,j}^f h_j^{(t-1)} \Bigg),$$

where $x^{(t)}$ is the current input vector $h^{(t)}$ is the current hidden layer vector, containing the outputs of all LSTM cells, and $b^f, U^f, W^f$ are respectively biases, input weights, and recurrent weights for the forget gates. The LSTM cell internal state is thus updates as follows, but with a conditional self-loop weight $f_i^{(t)}$:

$$c_i^{(t)} = f_i^{(t)} c_i^{(t-1)} + g_i^{(t)} \sigma \Bigg( b_i + \sum_{j}^{} U_{i,j} x_j^{(t)} + \sum_{j}^{} W_{i,j} h_j^{(t-1)} \Bigg),$$

where $b, U and W$ respectively denote the biases, input weights, and recurrent weights into the LSTM cell. The \textit{external input gate} unit $g_i^{(t)}$ is computed similarly to forget gate (with a sigmoid unit to obtain a gating value between 0 and 1), but with its own parameters: 

$$g_i^{(t)} = \sigma \Bigg( b_i^g + \sum_{j}^{} U_{i,j}^g x_j^{(t)} + \sum_{j}^{} W_{i,j}^g h_j^{(t-1)} \Bigg).$$

The output $h_i^{(t)}$ of the LSTM cell can also be shut off, via the \textit{output gate} $q_i^{(t)}$, which also uses a sigmoid unit for gating:

$$h_i^{(t)} = tanh \Big(c_i^{(t)}\Big) q_i^{(t)},$$
$$q_i^{(t)} = \sigma \Bigg( b_i^o + \sum_{j}^{} U_{i,j}^o x_j^{(t)} + \sum_{j}^{} W_{i,j}^o h_j^{(t-1)} \Bigg),$$

which has parameters $b^o, U^o, W^o$ for its biases, input weights and recurrent weights, respectively. Among the variants, one can choose to use the cell state $c_i^{(t)}$ as an extra input (with its weight) into the three gates of the i-th unit. This would require three additional parameters.

\section{Conclusion}
In final chapter, we introduced a mathematical framework on which neural networks can be defined. By dint of this framework, neural networks can be formalized and gradient descent (GD) algorithm (which incorporates backpropagation) is expressed through the framework. At first, the framework was developed for a generic NN and GD was expressed through it. Afterward, we concentrated on specific structures, from which we chose RNNs. Finally, we narrow our scope to specific RNN structures, which are Vanilla RNN and LSTM.

%%%%%%%%%%%%%%%%%%%%%%%%%%%%%%%%%%%%%%%%%%%%%%%%%%%%%%%%%%%%%
%APPENDICES
%%%%%%%%%%%%%%%%%%%%%%%%%%%%%%%%%%%%%%%%%%%%%%%%%%%%%%%%%%%%%

\appendix
\renewcommand*{\thesection}{\Alph{section}}\textbf{}

% APPENDIX A
%\input{Appendices/Appendix_1.tex}

%%%%%%%%%%%%%%%%%%%%%%%%%%%%%%%%%%%%%%%%%%%%%%%%%%%%%%%%%%%%%
%BIBLIOGRAPHY
%%%%%%%%%%%%%%%%%%%%%%%%%%%%%%%%%%%%%%%%%%%%%%%%%%%%%%%%%%%%%

%\clearpage
%\renewcommand*{\thesection}{}\textbf{}
%\bibliographystyle{apacite}
%\bibliography{Bibliography.bib}

%\printbibliography
\bibliographystyle{ieeetr}
\bibliography{references}
%\addcontentsline{toc}{chapter}{Bibliography.bib}
\addcontentsline{toc}{chapter}{Bibliography}

\end{document}